\newtheorem{proposition}{Proposition}
\newtheorem{theorem}{Theorem}
\newtheorem{remark}{Remark}
\newtheorem{conjecture}{Conjecture}
\def\ub {\underline{u}}
\def\th {\theta}
\def\Lb {\underline{L}}
\def\Hb {\underline{H}}
\def\chib {\underline{\chi}}
\def\chih {\hat{\chi}}
\def\chibh {\hat{\underline{\chi}}}
\def\omegab {\underline{\omega}}
\def\etab {\underline{\eta}}
\def\betab {\underline{\beta}}
\def\alphab {\underline{\alpha}}
\def\hot{\widehat{\otimes}}
\def\rhoc{\check{\rho}}
\def\sigmac{\check{\sigma}}
\def\kappab{\underline{\kappa}}
\def\a {\alpha}
\def\b {\beta}
\def\ab {\alphab}
\def\bb {\betab}
\def\nab {\nabla}
\def\ep {\epsilon}
\def\om {\omega}
\def\omb {\omegab}
\def\Om {\Omega}
\def\rd {\partial}
\def\f {\frac}
\newcommand\pr[1]{\frac{\partial}{\partial #1}}
\newcommand{\bea}{\begin{eqnarray}}
\newcommand{\eea}{\end{eqnarray}}
\def\beaa{\begin{eqnarray*}}
\def\eeaa{\end{eqnarray*}}
\renewcommand{\div}{\mbox{div }}
\newcommand{\curl}{\mbox{curl }}
\newcommand{\trchb}{\mbox{tr} \chib}
\def\trch{\mbox{tr}\chi}
\def\ls{\lesssim}
\newcommand{\eps}{{\epsilon} \mkern-8mu /\,}
\newcommand{\Ls}{{\mathcal L} \mkern-10mu /\,}
\begin{document}

\title{Weak null singularities in general relativity}

\author{Jonathan Luk}
\address{Department of Mathematics, Stanford University,~Stanford~CA~94305-2125,~USA}
\email{jluk@stanford.edu}

\begin{abstract}
We construct a class of spacetimes (without symmetry assumptions) satisfying the vacuum Einstein equations with singular boundaries on two null hypersurfaces intersecting in the future on a 2-sphere. The metric of these spacetimes extends continuously beyond the singularities while the Christoffel symbols fail to be square integrable in a neighborhood of any point on the singular boundaries. The construction shows moreover that the singularities are stable in a suitable sense. These singularities are stronger than the impulsive gravitational spacetimes considered by Luk-Rodnianski and {conjecturally they are present} in the interior of {generic} black holes arising from gravitational collapse. 
\end{abstract}

\maketitle

\tableofcontents

\section{Introduction}

In this paper, we study the existence and stability of weak null singularities in general relativity without symmetry assumptions. More precisely, a weak null singularity is a singular null boundary of a spacetime $(\mathcal M, g)$ solving the Einstein equations
$$Ric_{\mu \nu}-\frac 12 R g_{\mu \nu} = T_{\mu \nu}$$
such that the Christoffel symbols blow up and are not square integrable while the metric is continuous up to the boundary. This can be interpreted as a terminal singularity of the spacetime as it cannot be made sense of as a weak solution\footnotemark\label{footnote.weak}\footnotetext{One can define a weak solution to the Einstein equations by requiring $\int_{\mathcal M} Ric(X,Y) -\frac 12 R g(X,Y)-T(X,Y) dVol =0$ in the weak sense for all compactly supported smooth vector fields $X$ and $Y$. After integration by parts, the minimal regularity required for the spacetime for this to be defined is that the Christoffel symbols are square-integrable. See the discussion {on p.13} of \cite{Chr}.} to the Einstein equations along the singular boundary. While the singularity is sufficiently strong to be terminal, {it is at the same time} sufficiently weak such that the metric in an appropriate coordinate system is continuous up to the boundary.

The study of weak null singularities began with the attempts to understand the (in)stability of the Cauchy horizon in the black hole interior of Reissner-Nordstr\"om spacetimes. Reissner-Nordstr\"om spacetimes are the unique two-parameter family of asymptotically flat (with two ends), spherically symmetric, static solutions to the Einstein-Maxwell equations. Their Penrose diagrams\footnote{for $0<{|Q|}<M$} are given by Figure 1. As seen in the Penrose diagram, the Reissner-Nordstr\"om solution possesses a smooth Cauchy horizon $\mathcal C\mathcal H^+$ in the interior of the black hole such that the spacetime can be extended non-uniquely as a smooth solution to the Einstein-Maxwell system. This feature is also shared\footnote{for $0<|a|<M$} by the Kerr family of solutions to the vacuum Einstein equations, which can also be {depicted} by a Penrose diagram given by Figure 1. According to the strong cosmic censorship conjecture (see Section \ref{SCCC} below), the Reissner-Nordstr\"om and Kerr spacetimes are expected to be non-generic and the smooth Cauchy horizons are expected to be unstable.
\begin{figure}[htbp]
\begin{center}
 
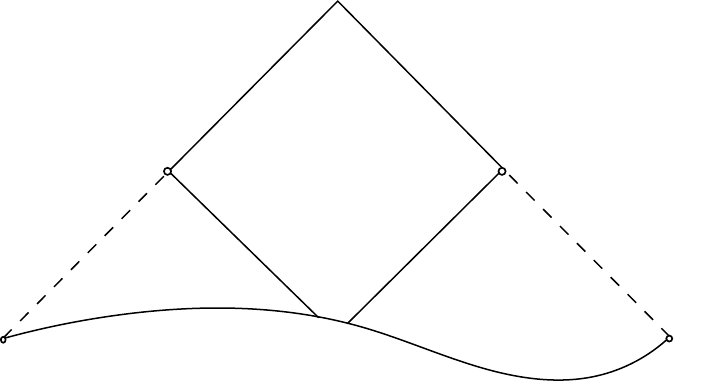
 
\caption{The Penrose diagram of Reissner-Nordstr\"om spacetimes}
\end{center}
\end{figure}

In a seminal work, Dafermos \cite{D1, D2} showed that for a spacetime solution to the spherically symmetric Einstein-Maxwell-real scalar field system, if an appropriate upper and lower bound for the scalar field is \emph{assumed} on the event horizon, then in a neighborhood of timelike infinity, the black hole terminates in a weak null singularity. The necessary upper bound was shown to hold for non-extremal black hole spacetimes arising from asymptotically flat initial data by Dafermos-Rodnianski \cite{DRPrice}. In particular, this implies that near timelike infinity, the terminal boundary of the Cauchy development does not contain a spacelike portion.

In a more recent work \cite{D3}, Dafermos showed that if, in addition to assuming the two black hole exterior regions settle to Reissner-Nordstr\"om with appropriate rates, the initial data are moreover globally close to that of Reissner-Nordstr\"om, then the maximal Cauchy development of the data possesses the same Penrose diagram as Reissner-Nordstr\"om. In particular, the spacetime terminates in a global bifurcate weak null singularity and the singular boundary does not contain any spacelike portion.

The works {\cite{D1, D2, D3} were} in part motivated by the physics literature on the instability of Cauchy horizons, weak null singularities and the strong cosmic censorship conjecture. It will be discussed below in Section \ref{SCCC}.

While the works of Dafermos \cite{D1, D2, D3} are restricted to the class of spherically symmetric spacetimes, they nonetheless suggest the genericity of weak null singularities in the black hole interior{, at least ``in a neighborhood of timelike infinity''}. In particular, they motivate the following conjecture for the vacuum Einstein equations
\begin{equation}\label{E.Eqn}
Ric_{\mu \nu}=0:
\end{equation}
\begin{conjecture}\label{Ori.conj}
\begin{enumerate}
\item {Consider the characteristic initial value problem with smooth characteristic initial data on a pair of null hypersurfaces $H_0$ and $\Hb_0$ intersecting on a $2$-sphere. Suppose that $H_0$ is an affine complete null hypersurface on which the data approach that of the event horizon of a Kerr solution (with $0<|a|<M$) at a sufficiently fast polynomial rate}\footnote{{In particular, this applies if an asymptotically flat spacetime has an exterior region which approaches a subextremal Kerr solution at a sufficiently fast polynomial rate. This also holds in the case where the Cauchy hypersurface has only one asymptotically flat end. In that case, numerical work in spherical symmetry \cite{HodPiran} suggests that the singular boundary may also contain a non-empty spacelike portion, in addition to the null portion.}}, then the development $(\mathcal M, g)$ of the initial data possesses a null boundary ``emanating from timelike infinity $i_+$'' through which the spacetime is extendible with a continuous metric {(see shaded region in Figure 2)}. Moreover, {given an appropriate ``lower bound'' on the $H_0$,} this piece of null boundary is generically a weak null singularity with non-square-integrable Christoffel symbols.
\item (Ori, see discussion in \cite{D3}) If the data for $(\mathcal M, g)$ {on a complete $2$-ended asymptotically flat Cauchy hypersurface are} globally a small perturbation of two-ended Kerr initial data (with $0<{|a|}<M$), then the maximal Cauchy development possesses a global bifurcate future null boundary $\partial \mathcal M$. Moreover, for generic such perturbations of Kerr, $\partial \mathcal M$ is a global bifurcate weak null singularity which intersects every futurely {causally} incomplete geodesic.
\end{enumerate}
\end{conjecture}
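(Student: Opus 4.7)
The plan is to work in a double null foliation $(u,\ub)$ adapted to the event horizon $H_0=\{u=0\}$ and a transversal null hypersurface $\Hb_0=\{\ub=0\}$, reducing both parts of the conjecture to characteristic initial value problems in this gauge. The central technical task is to propagate weighted $L^2$ estimates for the null Ricci coefficients $\chih,\chibh,\trch,\trchb,\eta,\etab,\om,\omb$ and the null curvature components $\a,\b,\rho,\sigma,\bb,\ab$ all the way up to the conjectured Cauchy horizon $\{\ub=\ub_\infty\}$, taking the local existence-and-continuous-extension theorem for weak null singularities proved in the body of this paper as the basic input.

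For part (1) the plan is the following. First, use the polynomial decay on $H_0$ to absorb, via Gronwall-type arguments in the $\Lb$-direction, the ``blue-shift'' amplification factors that would otherwise destroy any estimate as $u\to\infty$. Second, allow top-order weighted energies to diverge (at worst logarithmically) as $\ub\to\ub_\infty$, and verify that this divergence is still compatible with $C^0$ control of the metric through integration of the null structure equations against the $\ub$-weights. Third, combine these two steps to extend the metric continuously across $\{\ub=\ub_\infty\}$, matching the Penrose-diagram picture near $i_+$. Fourth, establish a pointwise lower bound on a specific $\chibh$- or $\ab$-type component, forced by a non-degeneracy (``lower bound'') condition on the $\Hb_0$ data, so that the Christoffel symbols fail to be in $L^2$ in any neighborhood of a boundary point and the null hypersurface is a genuine weak null singularity rather than merely a non-smooth boundary. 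The present paper supplies the model for steps three and four at the local scale; the semi-global statement is obtained by combining that model with a hierarchy of $u$-weighted decaying estimates on the Ricci coefficients and curvature propagated from the Kerr asymptotics on $H_0$.

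Part (2) requires in addition: to invoke nonlinear stability of the subextremal Kerr exterior, in order to deduce that each of the two exterior regions of a small perturbation of two-ended Kerr data converges to a nearby Kerr solution at a sufficiently fast polynomial rate; to apply part (1) on each side, obtaining the two null branches of $\partial\mathcal M$ emanating from the two timelike infinities; to close the ``top corner'' by a characteristic initial value problem anchored on long initial pieces of the two weak null singularities, meeting at a bifurcation $2$-sphere $S_{\infty,\infty}$; and to rule out any additional spacelike portion of $\partial\mathcal M$, using a Raychaudhuri/trapped-surface argument in the spirit of Dafermos, thereby matching the Kerr Penrose diagram and ensuring that every futurely causally incomplete geodesic terminates on the null boundary.

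The principal obstacle is the first step of part (2): the nonlinear stability of the subextremal Kerr family is a central open problem, without which the global hypothesis cannot be verified in any concrete perturbative example. A subsidiary but still substantial difficulty, present already in part (1) and amplified in part (2), is the behavior at the ``top corner'' $S_{\infty,\infty}$: both null directions degenerate there simultaneously, so neither side's one-sided double null scheme controls the other, and gluing the two weak null singularities into a single globally bifurcate boundary will require a genuinely new two-sided estimate that is not addressed by the one-sided semi-global analysis outlined above. Finally, formulating a precise, measure-theoretically meaningful notion of ``generic'' for the scattering data on the event horizon, at which the sharp lower bound of step four truly holds, is itself a nontrivial point that must be settled before the genericity clauses in either part of the conjecture can be made rigorous.
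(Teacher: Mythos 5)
This is stated in the paper as a \emph{conjecture}, not a theorem; the paper does not claim, and does not supply, a proof of it. The paper's actual contribution (Theorems \ref{extthm}, \ref{C0extthm}, \ref{blowupthm}) is a local existence, continuous-extension, and non-square-integrability result for a class of \emph{singular} characteristic data, which the author explicitly describes only as ``a first step towards Conjecture \ref{Ori.conj}.'' So there is no in-paper argument against which your proposal can be checked for correctness.

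Read as a research roadmap rather than a proof, your sketch is sensible and correctly identifies the principal missing ingredients. In particular you are right that: (i) the local model supplied by this paper handles only small $u_*,\ub_*$ with singular data already imposed, whereas part (1) requires the \emph{formation} of the weak null singularity from smooth data decaying to Kerr along an affine-complete $H_0$, a semi-global redshift/blueshift competition that is not addressed here; (ii) part (2) presupposes nonlinear stability of the subextremal Kerr exterior, which remains open, and also a decay-\emph{lower}-bound statement (the Price-law lower bound analogue in vacuum) needed for genericity; and (iii) the bifurcation sphere, where both null directions degenerate, is genuinely two-sided and is not captured by the one-sided semi-global estimates, nor by the local bifurcate analysis of this paper, which requires both $u_*$ and $\ub_*$ small. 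One additional caveat: you list ``integration of the null structure equations against the $\ub$-weights'' and ``allowing top-order energies to diverge logarithmically'' as steps two and three, but the mechanism in this paper is specifically a \emph{renormalization} that removes $\alpha$ and $\alphab$ from the scheme entirely, combined with degenerate weights chosen so that the boundary terms generated by differentiating the weight have a favorable sign. Your outline estimates $\alpha,\ab$ directly, which the present framework deliberately avoids because those components are too singular; any semi-global extension would presumably need to retain that renormalized structure rather than work with the full curvature decomposition.
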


\begin{figure}[htbp]
\begin{center}
 
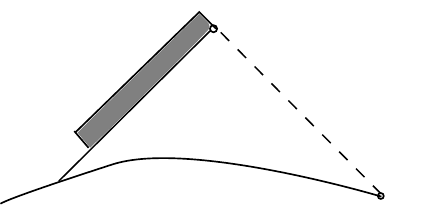
 
\caption{{Region of existence in Conjecture \ref{Ori.conj}.1}}
\end{center}
\end{figure}

If Conjecture \ref{Ori.conj} is true, then in particular there exist \emph{local stable} weak null singularities for the \emph{vacuum} Einstein equations \emph{without symmetry assumptions}. We show in this paper that there is in fact a large class of such singularities, parameterized by {singular} initial data. More specifically, we solve a characteristic initial value problem with singular initial data and construct a class of stable bifurcate weak null singularities. 

To motivate the strength of the singularity considered in this paper, we first recall the strength of the spherically symmetric weak null singularities in a neighborhood of Reissner-Nordstr\"om studied in \cite{D2}. The instability of the Reissner-Nordstr\"om Cauchy horizon is in fact already suggested by a linear analysis (see {\cite{CH, McN, Penrose.2}}). For a spherically symmetric solution to the linear wave equation which has a polynomially decaying (in the Eddington-Finkelstein coordinates) tail\footnote{with upper and lower bounds} along the event horizon, there is a singularity in a ($C^0$)-regular coordinate system near the Cauchy horizon of the strength\footnote{{This statement regarding the \emph{linear} wave equation can be inferred using the methods in \cite{D1} for the nonlinear coupled Einstein-Maxwell-scalar field system.}}
\bea
|\partial_{\ub} \phi|\sim (\ub_*-\ub)^{-1}{\log^{-p}(\frac{1}{\ub_*-\ub})},\label{D-strength}
\eea
for some $p>1$ as $\ub\to \ub_*$. In particular, along an outgoing null curve, $\partial_{\ub}\phi$ is integrable but not $L^q$-integrable for any $q>1$. In the spacetimes constructed by Dafermos \cite{D1, D2}, it was shown moreover that even in the nonlinear setting, $\partial_{\ub}\phi$ is also singular but remains integrable. A more precise analysis will show that in fact the spherically symmetric scalar field in the nonlinear setting of \cite{D2} also blows up at a rate given by \eqref{D-strength}.

Returning to the problem of constructing stable weak null singularities in vacuum, our construction is based on solving a characteristic initial value problem with singular data. We will in fact construct spacetimes not only with one weak null singularity, but instead contain two weak null singularities terminating at a bifurcate sphere. More precisely, the data on the initial characteristic hypersurface $H_0$ (resp. $\Hb_0$) is determined by the traceless part of the null second fundamental form $\chih$ (resp. $\chibh$). We consider singular initial data satisfying in particular
$$|\chih|\sim (\ub_*-\ub)^{-1}{\log^{-p}(\frac{1}{\ub_*-\ub})},\quad\mbox{for some }p>1,$$
and
$$|\chibh|\sim (u_*-u)^{-1}{\log^{-p}(\frac{1}{u_*-u})},\quad\mbox{for some }p>1.$$
This singularity is consistent with the strength of the weak null singularities in \eqref{D-strength}.

The following is a first version of the main result of this paper (see Figure 3). We refer the readers to the statement of Theorems \ref{extthm}, \ref{C0extthm} and \ref{blowupthm} for a more precise formulation of the theorem.
\begin{theorem}[Main theorem, first version]\label{thm.first.ver}
For a class of singular characteristic initial data {without any symmetry assumptions} for the vacuum Einstein equations 
$$Ric_{\mu\nu}=0$$
with the singular profile as above (see precise requirements on the data in Section \ref{sec.main.result}) and for $\epsilon$ sufficiently small and $u_*$, $\ub_*\leq \ep$, there exists a unique smooth spacetime $(\mathcal M, g)$ endowed with a double null foliation $(u,\ub)$ in $0\leq u<u_*$, $0\leq \ub< \ub_*$, which satisfies the vacuum Einstein equations with the given data. Associated to $(\mathcal M, g)$, there exists a coordinate system $(u,\ub,\th^1,\th^2)$ such that the metric extends continuously to the boundary but the Christoffel symbols are not in $L^2$.
\end{theorem}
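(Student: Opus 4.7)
The plan is to construct the spacetime via a double null foliation argument on the rectangle $0 \le u < u_*$, $0 \le \ub < \ub_*$, organized as a bootstrap on weighted norms of the Ricci coefficients and of a renormalized subsystem of null curvature components. The standard Christodoulou--Klainerman framework does not apply directly because the singular profile of $\chih, \chibh$ forces the extremal components $\a$ and $\ab$ of the Riemann tensor to fail to lie in $L^2$ of the spacetime, and hence any $L^2$-based energy estimate involving them is doomed. I will therefore follow the Luk--Rodnianski strategy introduced for impulsive gravitational waves and work only with the reduced Bianchi system for $\Psic = (\bb, \rhoc, \sigmac, \b)$, in which the bad derivatives $\nab_4 \ab$ and $\nab_3 \a$ are eliminated in favor of angular derivatives $\nab \bb$ and $\nab \b$ through two of the Bianchi identities.

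The first step is to fix the gauge: $\Om = 1$ on $H_0 \cup \Hb_0$; angular coordinates $\th^A$ Lie-transported along $L$ from $\Hb_0$, themselves transported along $\Lb$ on $\Hb_0$ from the bifurcation sphere; and shift $b^A = 0$ initially. The free data then reduce to $\chih$ on $H_0$, $\chibh$ on $\Hb_0$, and data for $\trch, \trchb, \eta, \etab, \om, \omb$ on the bifurcation sphere, the latter determined by the null constraints. I would then introduce anisotropic norms: $\chih$ and a fixed number of its angular derivatives are measured in $L^\infty_u L^\infty_\ub L^\infty(S)$ with weight $(\ub_*-\ub)\log^p(\frac{1}{\ub_*-\ub})$, and analogously for $\chibh$ with $(u_*-u)\log^p(\frac{1}{u_*-u})$; the remaining Ricci coefficients live in $L^\infty$ with at most logarithmic weights; and $\Psic$ lives in $L^2(H_u) \cap L^2(\Hb_\ub)$ with $\log^p$ weights and a small constant $\ep$.

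The bootstrap then follows the familiar order: (i) assuming curvature bounds, integrate the null transport equations for $\chih, \chibh, \trch, \trchb, \eta, \etab, \om, \omb$ along the outgoing and incoming null hypersurfaces, closing top-order angular derivatives via the Codazzi, Gauss and elliptic Hodge systems on $S_{u,\ub}$; (ii) derive spacetime $L^2$-estimates for $\Psic$ from the renormalized Bianchi identities, contracted with signature-respecting multipliers and integrated by parts on rectangles in $(u,\ub)$. The nonlinear error terms have schematic form $\chih \cdot \Psic$, $\chibh \cdot \Psic$, and $\chih \cdot \chibh$, and are controlled by pairing an $L^\infty$-bound for the singular factor with an $L^2$-bound for the other in the appropriate null variable; the hypothesis $p > 1$ enters precisely to ensure that $\chih \in L^1_\ub$ and that cross-products of the two singular coefficients are integrable in both $u$ and $\ub$ with a small constant.

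The main obstacle will be controlling the top-of-hierarchy interactions $\chih \cdot \chibh$ and $\chih \cdot \bb$ together with their conjugates, which appear in the transport equations for $\trch, \trchb$ and in the renormalized Bianchi system and feed into every subsequent bound. The logarithmic weights with $p > 1$ are calibrated precisely for this: the pointwise product
\[
|\chih|\,|\chibh| \ls \ep^2 (\ub_*-\ub)^{-1}(u_*-u)^{-1}\log^{-2p}
\]
integrates in both $u$ and $\ub$ with a small constant, and a Cauchy--Schwarz in $\ub$ controls $\chih \cdot \bb$ analogously. Once the bootstrap closes, uniqueness follows from the standard energy identity applied to the difference of two solutions in the same gauge. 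Continuity of $g$ up to the boundary follows from $\chih \in L^1_\ub$ and $\chibh \in L^1_u$, which force $g_{AB}$ to have uniform limits as $\ub \to \ub_*$ or $u \to u_*$, together with the uniform bounds on $\Om$ and the shift. Finally, the non-$L^2$ character of the Christoffel symbols in $(u,\ub,\th^1,\th^2)$ coordinates is inherited from the initial profile, since $\Gamma^A_{\ub B}$ agrees with the components of $\chih$ up to bounded lower-order terms and
\[
\int_0^{\ub_*} |\chih|^2 \, d\ub \sim \int_0^{\ub_*} (\ub_*-\ub)^{-2}\log^{-2p}(\frac{1}{\ub_*-\ub}) \, d\ub = +\infty.
\]
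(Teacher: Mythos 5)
Your overall architecture (double null gauge, weighted bootstrap, Luk--Rodnianski renormalization to kill $\alpha$ and $\alphab$, transport/elliptic estimates, weighted energy estimates) is the right one, and your conclusions about continuity and non-$L^2$ Christoffel symbols are correct. But there is a genuine gap in the choice of renormalized curvature: you propose to work with the subsystem $\Psic = (\bb, \rhoc, \sigmac, \b)$ where $\rhoc = \rho - \frac{1}{2}\chih\cdot\chibh$, exactly as in \cite{LR,LR2}. This renormalization successfully removes $\alpha$ from the $\nab_4$-equation and $\alphab$ from the $\nab_3$-equation, but it introduces the term $\frac{1}{4}\trchb|\chih|^2$ on the right side of the $\nab_4\rhoc$ equation (and its conjugate $\frac{1}{4}\trch|\chibh|^2$ in $\nab_3\rhoc$). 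These terms have the schematic form $\psi_{\Hb}\psi_H\psi_H$: each factor $\psi_H \sim f(\ub)^{-2}$, so the product $\sim f(\ub)^{-4}$, which is not in $L^1_{\ub}$ (the available weighted norm only puts $f(\ub)\psi_H$ in $L^2_{\ub}$, hence $\psi_H\psi_H$ only in $L^1_{\ub}$ with a logarithmic divergence, not $\psi_{\Hb}\psi_H\psi_H$). In the impulsive-wave setting of \cite{LR,LR2} the singular coefficients are merely $L^2$ in $\ub$, so this term is controllable; at the strength of singularity considered here it is not.

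The paper's fix, which is essential, is to replace $\rhoc$ by the Gauss curvature $K = -\rhoc - \frac{1}{4}\trch\trchb$. The extra trace--trace correction cancels the dangerous quadratic-in-$\psi_H$ terms, so that the Bianchi equation for $K$ (see \eqref{eq:null.Bianchi2}) contains \emph{no} terms of the form $\psi_{\Hb}\psi_H\psi_H$ or $\psi_H\psi_{\Hb}\psi_{\Hb}$, only terms linear in $\psi_H$ times bounded quantities. The same issue recurs in the elliptic estimates: the paper's $\mu = -\div\eta + K$, rather than $-\div\eta - \rho + \frac{1}{2}\chih\cdot\chibh$, is forced by exactly this obstruction (see the footnote in Proposition \ref{mu}). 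A secondary point you gloss over is why the weighted energy identities close: upon integration by parts the weight $f(u)^2$ (or $f(\ub)^2$) produces a spacetime term $-2\int f\,\nab_3 f\,|\nab^i K|^2$ that cannot be absorbed by Gronwall; the paper closes the estimate because $f$ is monotone decreasing toward the singular boundary, so this uncontrollable term has a favorable sign. Without specifying this monotonicity and the sign argument, the energy estimate does not go through.
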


\begin{figure}[htbp]
\begin{center}
 
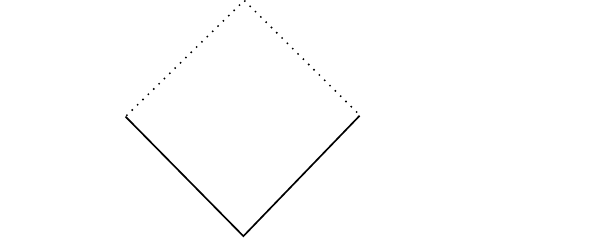
 
\caption{{Region of existence in Theorem \ref{thm.first.ver}}}
\end{center}
\end{figure}

\begin{remark}
This class of stable local weak null singularities that we construct in particular provides the first construction of weak null singularities {of such strength} for the vacuum Einstein equations.\footnote{We recall the Birkhoff's Theorem which states that the only spherically symmetric vacuum spacetimes are the Minkowski and Schwarzschild solutions. Thus to construct stable examples of weak null singularities in vacuum, one necessarily works outside the class of spherically symmetric spacetimes.}
\end{remark}

Theorem \ref{thm.first.ver} allows singularities on both initial null hypersurface and is valid in the region where $u_*$ and $\ub_*$ are sufficiently small. In the context of the interior of black holes, this corresponds to the darker shaded region in Figure 4. The existence theorem clearly implies an existence result when the data are only singular on one of the initial null hypersurfaces. In that context, we can in fact combine the methods in this paper with that in \cite{L} to show that the domain of existence can be extended so that only one of the characteristic length scales is required to be small. More precisely, we allow that data on $H_0$ such that
$$|\chih| \sim (\ub_*-\ub)^{-1}{\log^{-p}(\frac{1}{\ub_*-\ub})},\quad\mbox{for some }p>1,$$
on $0\leq \ub <\ub_*\leq C$  and the data on $\Hb_0$ are smooth on $0\leq u\leq u_*\leq \ep$. Then for $\epsilon$ sufficiently small, the spacetime $(\mathcal M, g)$ remains smooth in $0\leq u<u_*$, $0\leq \ub< \ub_*$ (see for example the lightly shaded region in Figure 4). We will omit the details of the proof of this result.

Theorem \ref{thm.first.ver}, which proves the existence and stability of the conjecturally generic weak null singularities, can be viewed as a first step towards Conjecture \ref{Ori.conj}. A next step is an analogue of \cite{D2} for the vacuum Einstein equations without symmetry assumptions, i.e., to solve the characteristic initial value problem inside the black hole with data prescribed on the event horizon that is approaching Kerr at appropriate rates. This requires an understanding of the \emph{formation} of weak null singularities from smooth data on the event horizon {(see part (1) of Conjecture \ref{Ori.conj})}. A full resolution of Conjecture \ref{Ori.conj}.2, however, requires in addition an understanding of the decay rates of gravitational radiation along the event horizon for generic perturbations of Kerr spacetime. This latter problem is intimately tied to the problem of the nonlinear stability of Kerr spacetimes, which continues to be one of the most important and challenging open problems in mathematical general relativity. Nevertheless, significant progress has been made for the corresponding \emph{linear} problem in the past decade. We refer the readers to the survey of Dafermos-Rodnianski \cite{DRsurvey} for more about this linear problem.

\begin{figure}[htbp]
\begin{center}
 
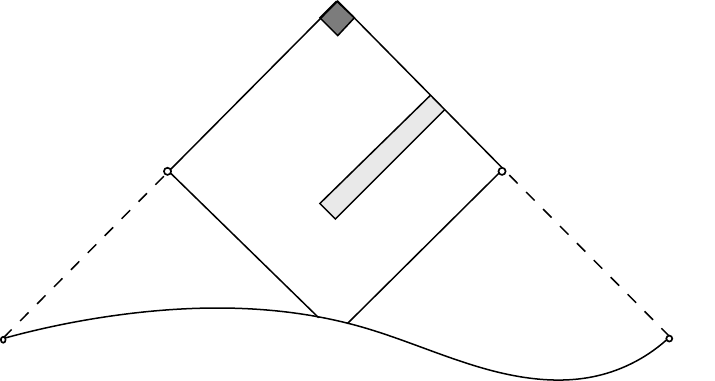
 
\caption{Domains of existence}
\end{center}
\end{figure}

The approach for the main theorem applies equally well to the Einstein-Maxwell-scalar field system without symmetry assumptions\footnote{This can be easily seen by decomposing the Maxwell field and the gradient of the scalar field in terms of the null frame below. The components in this decomposition obey equations that can be put in the same schematic form as in Section \ref{schnot}. Therefore, the Maxwell field and the scalar field and their derivatives satisfy estimates similar to those for the Ricci coefficients and curvature components.}. Thus, we show that the weak null singularity of Dafermos \cite{D2}, which arises from appropriately decaying data on the event horizon, is stable against \emph{non-spherically symmetric} perturbations on the hypersurface $\Sigma$ sufficiently far within the black hole region (see Figure 5). 

\begin{figure}[htbp]
\begin{center}
 
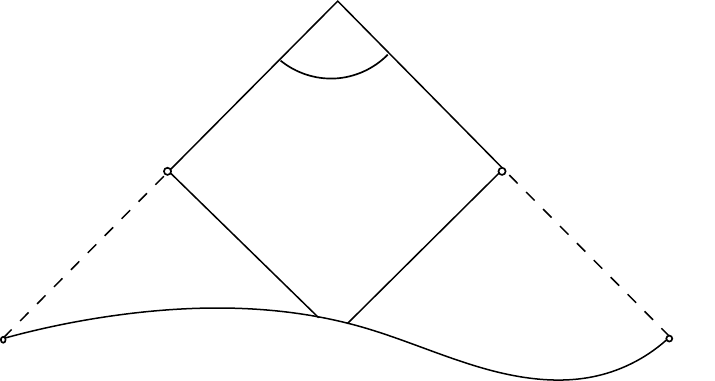
 
\caption{Perturbations in the black hole interior of Dafermos spacetimes}
\end{center}
\end{figure}

\subsection{Weak Null Singularities and Strong Cosmic Censorship Conjecture}\label{SCCC}

The study of weak null singularities can be viewed in the larger context of Penrose's celebrated strong cosmic censorship conjecture in general relativity. The conjecture states that for \emph{generic} asymptotically flat initial data for ``reasonable'' Einstein-matter systems, the maximal Cauchy development is future inextendible as a suitably regular Lorentzian manifold. This would guarantee general relativity to be a deterministic theory.

As pointed out above, the Kerr and Reissner-Nordstr\"om families of solutions (of the Einstein vacuum and Einstein-Maxwell equations respectively) have maximal Cauchy developments that are extendible as larger smooth spacetimes unless the angular momentum or the charge vanishes. This is connected with the existence of a smooth Cauchy horizon in the black hole interior such that the spacetime can be extended beyond as a smooth solution. According to the strong cosmic censorship conjecture, this is expected to be non-generic.

On the other hand, the situation for the Schwarzschild spacetime is more preferable from the point of view of the deterministic nature of the theory. The maximal development of the Schwarzschild spacetime terminates with a spacelike singularity at which the Hawking mass and the curvature scalar invariants blow up. In particular, the spacetime cannot be extended in $C^2$.

The early motivation for the strong cosmic censorship conjecture besides the desirability of a deterministic theory is a linear heuristic argument by Penrose \cite{Penrose.2} suggesting that the Reissner-Nordstr\"om Cauchy horizon is unstable. This was also confirmed by the numerical work by Simpson and Penrose \cite{SP}. It is thus conjectured that a small global perturbation would lead to a singularity in the interior of the black hole in such a way that the maximal Cauchy development is future inextendible.

However, the nature of the singular boundary in the interior of black holes was not well-understood\footnote{In particular, it was believed that a perturbation of the Reissner-Nordstr\"om Cauchy horizon would lead to a Schwarzschild type singularity.} until the first study of weak null singularity carried out by Hiscock \cite{Hiscock}. In an attempt to understand the instability of the Reissner-Nordstr\"om Cauchy horizon, he considered the Vaidya model allowing for a self-gravitating ingoing null dust. In this model, an explicit solution can be found and he showed that various components of the Christoffel symbols blow up. This, however, was called a whimper singularity as the Hawking mass and the curvature scalar invariants remain bounded.

In subsequent works, Poisson-Israel \cite{PI1, PI2} added an outgoing null dust to the model considered by Hiscock. While explicit solutions were not available, they were able to deduce that the second outgoing null dust would cause the Hawking mass to blow up at the null singularity. It was then thought of as a stronger singularity than that of Hiscock. 

However, from the point of view of partial differential equations, it is more natural to view this singularity at the level of {the} non-square-integrability of the Christoffel symbols, which is exactly the threshold such that the spacetime cannot be defined as a weak solution to the Einstein equations. From this perspective, the singularity of Poisson-Israel is as strong as that of Hiscock and both singularities can be viewed as terminal boundaries for the spacetimes in question.

While the Christoffel symbols blow up {at the} Cauchy horizon, one can also think that the Cauchy horizon is ``stable'' in the sense that no singularity arises before the ``original Cauchy horizon''. In particular, there is no spacelike portion of the singular boundary in a neighborhood of timelike infinity. This is thus contrary to the case of the Schwarzschild spacetime. This weak null singularity picture has been further explored and justified in many numerical works (see \cite{BDIM, BS, Burko}).

As we described before, the aforementioned picture of the interior of black holes was finally established by Dafermos in the context of the spherically symmetric Einstein-Maxwell-scalar field system \cite{D1}. This is the main motivation for our present work in which we initiate the study of weak null singularities of similar strength in vacuum without any symmetry assumptions. 

{Finally, we note that a class of \emph{analytic} spacetimes with slightly weaker singularities have been previously constructed in \cite{FO}. While this class of spacetime is more restrictive, as discussed in \cite{FO}, it nonetheless admits the full ``functional degrees of freedom'' of the Einstein equations.}

\subsection{Comparison with Impulsive Gravitational Waves}

As pointed out by Dafermos \cite{D3}, the weak null singularities that we consider in this paper share many similarities with impulsive gravitational waves. The latter are vacuum spacetimes admitting null hypersurfaces which support delta function singularities in the Riemann curvature tensor. Explicit examples were first constructed by Penrose \cite{Penrose72}, Khan-Penrose \cite{KhanPenrose} and Szekeres \cite{Szekeres1}. In these spacetimes, while the Christoffel symbols are not continuous, they remain bounded. Therefore, in contrast with the weak null singularities that we consider here, these impulsive gravitational waves are not terminal singularities. In fact, the solution to the vacuum Einstein equation extends beyond the singularity and is smooth except across the singular hypersurface. Nevertheless, both scenarios represent singularities propagating along null hypersurfaces and from a mathematical point of view, the proofs of the existence theory for these singularities share many common features.

In recent joint works with Rodnianski \cite{LR, LR2}, we initiated the rigorous mathematical study for general impulsive gravitational waves without symmetry assumptions. We constructed the impulsive gravitational waves via solving the characteristic initial problem such that the initial data admit curvature delta singularities supported on an embedded 2-sphere. One of the new ideas in the proof is the use of renormalized energy estimates for the curvature components, i.e., instead of controlling the spacetime curvature components in $L^2$, we subtract off an $L^\infty$ correction from some curvature components. This allowed us to derive a closed system of $L^2$ estimates which is completely independent of the singular curvature components.

In \cite{LR2}, when the interaction of impulsive gravitational waves was studied, we also extended the analysis to include a class of spacetimes such that when measured in the worst direction, the Christoffel symbols are merely in $L^2$. {We proved an existence and uniqueness theorem for spacetimes with such low regularity and showed that the spacetime solution can be extended beyond the singularities. Notice that this result is in fact sharp: This is because if the Christoffel symbols fail to be square-integrable, the spacetime cannot be extended as a weak solution to the Einstein equations (see footnote \ref{footnote.weak}).}

By contrast, the {spacetimes} considered in this paper {have} Christoffel symbols {which} are\footnote{In fact, we allow initial data to be in $L^p$ only for $p=1${, but not for any $p>1$}.} not in $L^2$. Even though the {weak null} singularities {are terminal singularities in the sense} that there cannot be an existence theory beyond the{m}, the theory developed in \cite{LR, LR2} can be extended to control the spacetime \emph{up to the singularity}. Moreover, our main theorem, which allows for two weak null singularities terminating at their intersection, can be viewed as an extension of the result in \cite{LR2} on the interaction of two impulsive gravitational waves. In particular, the renormalized energy of \cite{LR, LR2} plays an important role in the proof of our main theorem. However, even after renormalization, the renormalized curvature is still singular (i.e., not in $L^2$) and has to be dealt with using an additional weighted estimate.

\subsection{Description of the Main Results}\label{sec.main.result}

Our setup is the characteristic initial value problem with initial data given on two null hypersurfaces $H_0$ and $\Hb_0$ intersecting at a 2-sphere $S_{0,0}$ (see Figure 6). We will follow the general notations in \cite{KN, Chr, KlRo}.

\begin{figure}[htbp]
\begin{center}
 
\input{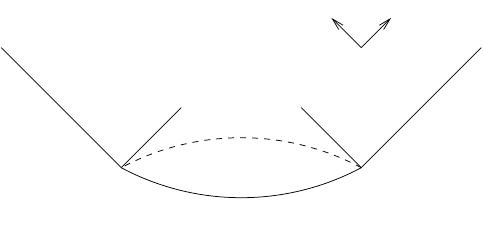_t}
 
\caption{The Basic Setup}
\end{center}
\end{figure}

We introduce a null frame $\{e_1,e_2,e_3,e_4\}$ adapted to a double null foliation $(u,\ub)$ (see Section \ref{secdnf}). Denote the constant $u$ hypersurfaces by $H_u$, the constant $\ub$ hypersurfaces by $\Hb_{\ub}$ and their intersections by $S_{u,\ub}=H_u\cap \Hb_{\ub}$. Decompose the Riemann curvature tensor with respect to the null frame $\{e_1,e_2,e_3,e_4\}$:
\begin{equation*}
\begin{split}
\a_{AB}&=R(e_A, e_4, e_B, e_4),\quad \, \,\,   \ab_{AB}=R(e_A, e_3, e_B, e_3),\\
\b_A&= \frac 1 2 R(e_A,  e_4, e_3, e_4) ,\quad \bb_A =\frac 1 2 R(e_A,  e_3,  e_3, e_4),\\
\rho&=\frac 1 4 R(e_4,e_3, e_4,  e_3),\quad \sigma=\frac 1 4  \,^*R(e_4,e_3, e_4,  e_3){.}
\end{split}
\end{equation*}
We define also the Gauss curvature of the 2-spheres associated to the double null foliation to be $K$.
\noindent Define also the following Ricci coefficients with respect to the null frame:
\begin{equation*}
\begin{split}
&\chi_{AB}=g(D_A e_4,e_B),\, \,\, \quad \chib_{AB}=g(D_A e_3,e_B),\\
&\eta_A=-\frac 12 g(D_3 e_A,e_4),\quad \etab_A=-\frac 12 g(D_4 e_A,e_3){,}\\
&\omega=-\frac 14 g(D_4 e_3,e_4),\quad\,\,\, \omegab=-\frac 14 g(D_3 e_4,e_3),\\
&\zeta_A=\frac 1 2 g(D_A e_4,e_3){.}
\end{split}
\end{equation*}
Let $\chih$ (resp. $\chibh$) be the traceless part of $\chi$ (resp. $\chib$). 

The data on $H_0$ are given on $0\leq \ub < \ub_*$ such that $\chi$ becomes singular as $\ub \to \ub_*$. Similarly, the data on $\Hb_0$ is given on $0\leq u < u_*$ such that $\chib$ becomes singular as $u \to u_*$.

More precisely, let $f_1:{[0,\ub_*)} \to \mathbb R$ be a {smooth} function such that ${f_1}(x)\geq 0$ is decreasing and 
$$\int_0^{{\ub_*}} \frac{1}{f_1(x)^2} dx < \infty.$$
(resp. let $f_2:[0,{u_*}) \to \mathbb R$ be a {smooth} function such that ${f_2}(x)\geq 0$ is decreasing and 
$$\int_0^{{u_*}} \frac{1}{f_2(x)^2} dx < \infty.)$$
For example, $f_1$ can be taken to be $f_1(x)=({\ub_*}-x)^{\frac 12}\log^p (\frac{{1}}{{\ub_*}-x})$ for $p>\frac 12$.

Our main theorem shows local existence for a class of singular initial data with\footnote{{We assume also bounds for the angular derivatives which are consistent with this singular profile (see the precise statement in Theorem \ref{extthm}).}} 
$$|\chi(0,\ub)|\ls f_1(\ub)^{-2},\quad |\chib(u,0)|\ls f_2(u)^{-2}.$$
We construct a (unique) solution $(\mathcal M,g)$ to the vacuum Einstein equations in the region $u<u_*$, $\ub<\ub_*$, where $u_*$, $\ub_*\leq \ep$, and
\bea
\int_0^{\ub_*} f_1(\ub)^{-2} d\ub,\mbox{ }\int_0^{u_*} f_2(u)^{-2} du\leq \ep^2.\label{fsmall}
\eea
Here, $(u,\ub)$ is a double null foliation for $(\mathcal M,g)$ and the metric $g$ takes the form
$$g=-2\Omega^2(du\otimes d\ub+d\ub \otimes du)+\gamma_{AB} (d\th^A-b^A du)\otimes (d\th^B-b^B du)$$
in the $(u,\ub,\th^1,\th^2)$ coordinate system (to be defined in Section \ref{coordinates}). Define also $\nab$ to be the induced Levi-Cevita connection on the $2$-spheres of constant $u$ and $\ub$, i.e., $S_{u,\ub}$, and $\nab_3$, $\nab_4$ to be the projections of the covariant derivatves $D_3$, $D_4$ to the tangent space of $S_{u,\ub}$. Our main theorem (Theorem \ref{thm.first.ver}) can be stated precisely as a combination of Theorems \ref{extthm}, \ref{C0extthm} and \ref{blowupthm}. The first main result is the following theorem, which shows the existence of a spacetime up to the (potentially singular) null boundaries:
\begin{theorem}\label{extthm}
Consider the characteristic initial value problem for
\begin{equation}\label{v.EE}
Ric_{\mu\nu}=0
\end{equation}
with data that are smooth on $H_0\cap\{0\leq \ub< \ub_*\}$ and $\Hb_0\cap\{0\leq u< u_*\}$ such that
\begin{itemize}
\item There exists an atlas such that in each coordinate chart with local coordinates $(\th^1,\th^2)$, the initial metric $\gamma_0$ on $S_{0,0}$ obeys
$$d\leq \det\gamma_0\leq D, $$
and
$${\sum_{i_1+i_2\leq 6}|(\frac{\partial}{\partial\th^1})^{i_1}(\frac{\partial}{\partial\th^2})^{i_2}} \gamma_{BC}|\leq D.$$

\item The metric on $H_0$ and $\Hb_0$ satisfies the gauge conditions
$$\Omega=1\quad\mbox{on }H_0\mbox{ and }\Hb_0$$
and
{$$b^A=0\quad\mbox{on }\Hb_0.$$}

\item The Ricci coefficients on the initial hypersurface $H_0$ verify
$$\sum_{i\leq 5}\sup_{\ub}||{f_1}^2(\ub)\nab^i\chi||_{L^2(S_{0,\ub})}\leq D,$$
$$\sum_{i\leq 4}\sup_{\ub}||\nab^i\zeta||_{L^2(S_{0,\ub})}\leq D,$$
$$\sum_{i\leq 4}\sup_{\ub}||\nab^i\trchb||_{L^2(S_{0,\ub})}\leq D.$$

\item The Ricci coefficients on the initial hypersurface $\Hb_0$ verify
$$\sum_{i\leq 5}\sup_{u}||{f_2}^2(u)\nab^i\chib||_{L^2(S_{u,0})}\leq D,$$
$$\sum_{i\leq 4}\sup_{u}||\nab^i\zeta||_{L^2(S_{u,0})}\leq D,$$
$$\sum_{i\leq 4}\sup_{u}||\nab^i\trch||_{L^2(S_{u,0})}\leq D.$$

\end{itemize}
Then for $\ep$ sufficiently small (depending only on $d$ and $D$) and $u_*$, $\ub_*\leq \ep$, $||{f_1}(\ub)^{-1}||_{L^2_{\ub}}$, $||{f_2}(u)^{-1}||_{L^2_{u}}<\ep$, there exists a unique spacetime $(\mathcal M, g)$ endowed with a double null foliation $(u,\ub)$ in $0\leq u< u_*$ and $0\leq \ub <\ub_*$, which is a solution to the vacuum Einstein equations \eqref{v.EE} with the given data. Moreover, the spacetime remains smooth in $0\leq u< u_*$ and $0\leq \ub <\ub_*$.
\end{theorem}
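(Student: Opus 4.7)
I would run a bootstrap/continuity argument in the double null gauge. Fix $u', \ub'$ with $0 < u' < u_*$, $0 < \ub' < \ub_*$, assume smooth existence on the rectangle $[0,u']\times[0,\ub']$ together with a priori bounds on weighted norms of the Ricci coefficients and of the (renormalized) null curvature components, and aim to improve these bootstrap assumptions strictly. Showing that the set of $(u',\ub')$ on which the estimates hold is both open and closed in $[0,u_*)\times[0,\ub_*)$ then yields smooth existence on the full rectangle, while uniqueness follows by running the same estimates on the difference of two solutions after a gauge-matching step.

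The weights have to reflect the singular profile of the data: near $\ub = \ub_*$ one weights $\chih$ and the curvature component $\a$ by ${f_1}(\ub)^2$, and symmetrically near $u = u_*$ one weights $\chibh$ and $\ab$ by ${f_2}(u)^2$. The remaining Ricci coefficients ($\eta, \etab, \om, \omb, \trch, \trchb, \zeta$) and the ``milder'' curvature components ($\b, \bb, \rho, \sigma$) should stay bounded in the natural unweighted $L^\infty_u L^2(S)$ or $L^\infty_\ub L^2(S)$ spaces, since whenever a singular quantity appears in a nonlinear term it is integrated against a factor of $f_1^{-2}$ or $f_2^{-2}$, and the hypotheses $\|{f_1}^{-1}\|_{L^2_\ub}, \|{f_2}^{-1}\|_{L^2_u} < \ep$ then produce the $\ep^2$ factor needed to absorb it.

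The heart of the argument is the renormalized energy estimate of Luk--Rodnianski \cite{LR, LR2}: one never uses a Bianchi identity involving $\a$ or $\ab$ directly, but instead works with the pairs $(\b, \rhoc, \sigmac)$ and $(\rhoc, \sigmac, \bb)$, where $\rhoc, \sigmac$ are $\rho, \sigma$ minus an explicit quadratic correction in the Ricci coefficients. This produces a closed system whose source terms involve only renormalized curvature and products of Ricci coefficients, with no $\a$ or $\ab$ appearing. Here, however, the renormalized curvature itself fails to be in $L^2$ near the singular boundaries, so I would run the energy identities with an additional multiplier ${f_1}(\ub)^2$ or ${f_2}(u)^2$ placed on the dangerous curvature terms, choosing the weight to lie in whichever of $u, \ub$ is being integrated in that particular identity. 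Once weighted $L^2$ curvature bounds are established up to top order, the Ricci coefficients are recovered by integrating the null structure transport equations along $e_3$ or $e_4$ from $H_0$ or $\Hb_0$, with the top-order angular derivative supplied by elliptic estimates on each $S_{u,\ub}$ via Codazzi and Hodge systems.

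The main obstacle, beyond the intricate bookkeeping typical of a double null argument at this regularity, is controlling the ``corner'' quadratic terms such as $\chih \cdot \chibh$: each factor is singular in a different variable, so the estimates must be \emph{doubly} weighted, exploiting the smallness of $\|{f_1}^{-1}\|_{L^2_\ub}$ and $\|{f_2}^{-1}\|_{L^2_u}$ simultaneously. This is the new analytical ingredient beyond \cite{LR, LR2}, where only a single singular direction had to be handled, and it is where the hypothesis that $u_*, \ub_*$ and these $L^2$ norms are all bounded by $\ep$ enters essentially rather than as mere smallness of the domain.
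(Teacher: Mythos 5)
There are several concrete gaps in the proposed scheme, and two of them are fatal.

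First, your bookkeeping of which quantities are singular is off. You place $\trch$, $\trchb$, $\om$, $\omb$ and the curvature components $\b, \bb, \rho, \sigma$ in the ``mild'' category and propose to control them in unweighted $L^\infty L^2(S)$. In fact, in this setting $\trch$, $\chih$, $\om$ all blow up like $f_1(\ub)^{-2}$ near $\ub_*$ (these are exactly the paper's $\psi_H$; symmetrically $\trchb, \chibh, \omb$ form $\psi_{\Hb}$), and \emph{all six} curvature components $\a,\b,\rho,\sigma,\bb,\ab$ fail to be square-integrable. The Gauss equation $K = -\rho + \frac12\chih\cdot\chibh - \frac14\trch\trchb$ already forces $\rho$ to be non-$L^2$ once $\trch,\chih$ are; similarly, Codazzi forces $\b$ to be as singular as $\nabla\chih$. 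So one cannot get away with unweighted bounds for $\b,\bb,\rho,\sigma$ -- they must either be renormalized (for $\rho,\sigma$) or carry a weight in the energy (for $\b,\bb$).

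Second, and more seriously, you propose to weight $\a$ by $f_1(\ub)^2$ and $\ab$ by $f_2(u)^2$ and then estimate them. The paper's central observation is the opposite: $\a$ and $\ab$ are so singular that no tractable weighted energy can be expected for them, and the entire structure of the proof is arranged so that $\a$ and $\ab$ \emph{never appear}. This is precisely what the renormalization is designed to achieve -- the Bianchi pairs $(\b,K,\sigmac)$ and $(K,\sigmac,\bb)$ form a closed hyperbolic system with no $\a$ or $\ab$ on either side, and no transport or elliptic estimate for $\a,\ab$ is ever invoked.

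Third, your choice $\rhoc = \rho - \frac12\chih\cdot\chibh$ from Luk--Rodnianski does not close here. The $\nab_4$ equation for $\rhoc$ produces a source term $\sim \trchb|\chih|^2$, which is the product of \emph{three} singular factors (two singular in $\ub$), hence not integrable in $\ub$ after multiplying by any admissible weight. The correct renormalization in this paper is the intrinsic Gauss curvature $K$ and $\sigmac = \sigma + \frac12\chih\wedge\chibh$: the $\nab_4 K$ equation has source $-\div\b - \zeta\cdot\b - 2\etab\cdot\b + \frac12\chih\cdot\nab\hot\etab + \cdots$, in which every term contains at most one $\psi_H$ factor, and so is integrable in $\ub$. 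This is the specific null structure one needs and it is not visible from $\rhoc$.

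Finally, you do not articulate what makes the weighted energy identity close. The mechanism is a sign condition: running the $\nab_3$/$\nab_4$ Bianchi pair with the multiplier $f(u)^2$ (note: a function of $u$, not $\ub$, when pairing $\nab_4\bb$ with $\nab_3 K,\nab_3\sigmac$), the integration by parts in the $e_3$ direction produces the extra bulk term $-2\int f(u)\nab_3 f(u)\,|\nab^i(K,\sigmac)|^2$. Since $f$ is chosen \emph{decreasing}, $\nab_3 f(u)<0$, and this term has a favorable sign and can simply be dropped. Without this observation the weighted energy estimate has an uncontrollable term. Your proposal's only handle on the weights is that they supply $\ep$-smallness in nonlinear terms, which is true but insufficient -- the linear weighted energy identity itself needs the sign.

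Your description of the overall architecture (bootstrap on a rectangle, transport for Ricci coefficients, Hodge/elliptic estimates at top order, simultaneous use of $\|f_1^{-1}\|_{L^2_\ub}$ and $\|f_2^{-1}\|_{L^2_u}$ smallness for the cross terms $\chih\cdot\chibh$) does match the paper, so the skeleton is sound; but as written, the core energy step would fail both because you are trying to estimate $\a,\ab$ and because your choice of renormalized curvature leaves a non-integrable source.
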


\begin{remark}
In the following, we will only prove a priori estimates for spacetimes arising from these initial data (see Theorem \ref{aprioriestimates} below). The existence of a spacetime and the propagation of regularity follow from standard arguments. (For an example of this argument in low regularity, see Sections 4 and 5 in \cite{LR}. See also Chapter 16 in \cite{Chr}.)
\end{remark}

\begin{remark}
In order to simplify notations, we will omit the subscripts $1$ and $2$ in the weight functions $f_1$ and $f_2$. They can be inferred from whether $f$ is a function of $u$ or $\ub$.
\end{remark}

\begin{remark}
In Section \ref{sec.data}, we will construct a class of characteristic initial data which satisfies the assumptions of Theorem \ref{extthm}.
\end{remark}

While the weight $f$ in the spacetime norms allows the spacetime to be singular, the spacetime metric can be extended beyond the singular hypersurfaces $H_{u_*}$ and $\Hb_{\ub_*}$ continuously.
\begin{theorem}\label{C0extthm}
Under the assumptions of Theorem \ref{extthm}, the spacetime $(\mathcal M, g)$ can be extended continuously up to and beyond the singular boundaries $\Hb_{\ub_*}:=\{\ub=\ub_*\}$, $H_{u_*}:=\{u=u_*\}$. Moreover, the induced metric and null second fundamental form on the interior of the limiting hypersurfaces $\Hb_{\ub_*}$ and $H_{u_*}$ are regular. More precisely, {for any coordinate chart $U_i$ on $S_{0,0}$, the metric components $\gamma$, $b$, $\Omega$ satisfy the following estimates in the coordinate chart given by $U_i(u,\ub):=\underline{\Phi}_{\ub}\circ\Phi_u(U_i)$, where $\Phi_u$ and $\underline{\Phi}_{\ub}$ are the diffeomorphisms generated by $\Lb$ and $L$ respectively\footnote{See definition of $L$ and $\Lb$ in Section \ref{secdnf}.}:}
$${\sum_{i_1+i_2\leq 4}} \sup_{0\leq u\leq {u_*}}{\|(\pr{\th^1})^{i_1}(\pr{\th^2})^{i_2}} (\gamma,b,\Omega){\|}_{L^2(U_i(u,\ub_*))} \leq C{.} $$
{Moreover, for any fixed $U<u_*$, we have the following bounds for} the Ricci coefficients $\chibh,\trchb,\omb,\eta,\etab${:}
$${\sum_{j\leq 1}\sum_{i\leq 3-j}}\sup_{0\leq u\leq U}\|\nab_3^j\nab^i(\chibh,\trchb,\omb,\eta,\etab) \|_{L^2(S_{u,\ub_*})}\leq C_{U}.$$
Similar regularity statements hold on $H_{u_*}$.
\end{theorem}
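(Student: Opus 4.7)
}
The plan is to derive the theorem from the a priori estimates of Theorem~\ref{aprioriestimates} by exploiting two features of the setup: the weighted bounds $\|f_1^2\nab^i\chi\|_{L^\infty_{\ub}L^2(S)}\leq D$ on $H_0$ (and the symmetric ones for $\chib$ with weight $f_2^2$), and the integrability $\int_0^{\ub_*}f_1(\ub)^{-2}\,d\ub\leq\ep^2$ assumed in \eqref{fsmall}. Combining these via Cauchy--Schwarz yields
\beaa
\|\nab^i\chi\|_{L^1_{\ub}L^2(S)}\leq \|f_1^{-2}\|_{L^1_{\ub}}\cdot\|f_1^2\nab^i\chi\|_{L^\infty_{\ub}L^2(S)}\leq D\ep^2,
\eeaa
so although $\chi$ is pointwise singular as $\ub\to\ub_*$, its angular derivatives are integrable in $\ub$ up to the boundary. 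This ``Cauchy--Schwarz trick'' is the key mechanism that permits extension through the singular hypersurfaces.

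First I would establish the metric extension. In the coordinate system $(u,\ub,\th^1,\th^2)$ built by flowing the patches $U_i\subset S_{0,0}$ first along $\Lb$ and then along $L$, the metric components satisfy schematic transport equations $\rd_{\ub}\gamma=2\Om\chi$, $\rd_{\ub}\log\Om=-2\omb$, and an equation for $\rd_{\ub}b^A$ in terms of $\Om,\zeta$ and lower-order terms. Commuting with up to four angular coordinate vector fields $(\rd/\rd\th^1)^{i_1}(\rd/\rd\th^2)^{i_2}$ and integrating from $\ub=0$, the right-hand sides are controlled in $L^1_{\ub}L^2(U_i(u,\ub))$: directly from the a priori bounds for $\omb,\zeta$, and by the Cauchy--Schwarz trick above for $\chi$. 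Hence $(\gamma,b,\Om)$ together with four angular derivatives are Cauchy in $\ub$, uniformly in $u\in[0,u_*]$, producing the claimed continuous extension and Sobolev bound on $\Hb_{\ub_*}$. The extension to $H_{u_*}$ is symmetric under $(u,\ub)\leftrightarrow(\ub,u)$.

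Second, for the interior regularity of $\chibh,\trchb,\omb,\eta,\etab$ on $\Hb_{\ub_*}$, I fix $U<u_*$ and work on the slab $\{0\leq u\leq U,\ 0\leq\ub<\ub_*\}$. There we stay away from the $u$-direction singularity, so the $u$-singular objects $\chib,\omb,\alphab,\betab$ are uniformly bounded with a constant $C_U$. The a priori estimates then give $L^\infty_u L^\infty_{\ub} L^2(S)$ control on up to four angular derivatives of these $e_3$-associated coefficients, and a second Cauchy in $\ub$ argument extracts the $L^2(S_{u,\ub_*})$ trace for $i\leq 3$ angular derivatives. For the single $\nab_3$ derivative I would integrate the corresponding null structure equations (the Codazzi equation for $\nab_3\chib$, the $\nab_3\omb$ equation in terms of $\rho$, and the $\nab_3\eta,\nab_3\etab$ equations) along $e_3$: the source terms involve only Ricci coefficients and curvature components uniformly bounded on $\{u\leq U\}$, so the transport argument yields the desired bound with constant $C_U$.

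The principal technical obstacle is to ensure that no hidden singular source terms involving $\chih,\om,\alpha$ survive in the $\nab_3$-transport equations governing these ``good'' Ricci coefficients. This is handled by the renormalization scheme introduced in \cite{LR,LR2}, designed precisely to absorb the most singular curvature contributions into corrections of $\chi$ and $\om$ so that the remaining structure equations close on quantities with non-singular $L^2$ bounds. With that mechanism in place, the transport estimates above produce the $\nab_3$ bound with constant $C_U$, completing the argument.
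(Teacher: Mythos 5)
Your plan for the metric extension and for the $i\leq 3$ angular derivative trace of $\psi_{\underline H}$ on $\underline H_{\underline u_*}$ is essentially the paper's: integrate the $\partial_{\underline u}$ transport equations for $(\gamma, b, \Omega)$ commuted with coordinate angular derivatives, using the $L^1_{\underline u}$ integrability of $\chi$ that follows from Cauchy--Schwarz and \eqref{fsmall}; then, on a slab $\{u\leq U\}$ where $f(u)^{-1}$ is bounded, re-run the transport estimate for $\nabla^i\psi_{\underline H}$ (Proposition~\ref{psiHb}) with $L^\infty_u$ replacing the $f(u)$-weighted $L^2_u$, and close by Gronwall. One small imprecision: your displayed Cauchy--Schwarz uses the initial data bound $\|f_1^2\nabla^i\chi\|_{L^\infty_{\underline u}L^2(S)}\leq D$, which only lives on $H_0$; in the spacetime region the mechanism is the a priori bound $\|f(\underline u)\nabla^i\psi_H\|_{L^2_{\underline u}L^\infty_u L^2(S)}$ combined with $\|f^{-1}\|_{L^2_{\underline u}}\leq\ep$. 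Same trick, different weight placement.

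The genuine gap is in your treatment of the $\nabla_3$ derivative. You propose to integrate ``the null structure equations'' for $\nabla_3\chibh$, $\nabla_3\omb$, $\nabla_3\eta$, $\nabla_3\etab$ along $e_3$. But three of these four objects do not have usable $\nabla_3$ transport equations. The $\nabla_3\chibh$ equation in \eqref{null.str1} has $\alphab$ as a source, and $\alphab$ is precisely one of the components the renormalization is designed to \emph{never} touch: there is no a priori estimate for $\alphab$, and even on the slab $\{u\leq U\}$ it would require an additional argument (integrating its own $\nabla_4$ Bianchi equation, with singular coefficients $\trch,\omega$) that neither you nor the paper carries out. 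Moreover $\omb$ satisfies a $\nabla_4$ equation, not a $\nabla_3$ one, and likewise $\eta$ satisfies only a $\nabla_4$ equation; there is no ``$\nabla_3\omb$ equation'' or ``$\nabla_3\eta$ equation'' in \eqref{null.str1}--\eqref{null.str2} to integrate. Your appeal to the ``renormalization scheme'' does not repair this: the renormalization replaces $(\rho,\sigma)$ by $(K,\sigmac)$ in the Bianchi system, it does not absorb $\alphab$ out of the $\nabla_3\chibh$ transport equation nor manufacture $\nabla_3$ equations for $\omb$ or $\eta$. The paper's actual route is structurally different: for $\nabla_3\nabla^i\psi_{\underline H}$ it commutes the \emph{$\nabla_4$} transport equation for $\psi_{\underline H}$ with $\nabla_3\nabla^i$, integrates in $\underline u$ from $\underline H_0$, and applies Gronwall, so that $\alphab$ never appears; for $\nabla_3\nabla^i\eta$ it uses the algebraic identity $\eta = -\etab + 2\nabla\log\Omega$ together with the $\nabla_3\etab$ equation and $\nabla_3\log\Omega=-2\omb$, expressing $\nabla_3\eta$ in terms of quantities already controlled. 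Without one of these devices your argument for $\nabla_3\chibh$, $\nabla_3\omb$, $\nabla_3\eta$ does not close.
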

\begin{remark}
If we assume in addition that the higher angular derivatives of $\chi$ are bounded in $L^1_{\ub}L^\infty(S)$, then the metric and the second fundamental form also inherit higher regularity in the interior of $\Hb_{\ub_*}$. In particular, if all angular derivatives of $\chi$ are bounded in $L^1_{\ub}L^\infty(S)$, then the metric restricted to $\Hb_{\ub_*}\cap\{0\leq u\leq U\}$ is smooth along the directions tangential to $\Hb_{\ub_*}$. Similar statements hold on $H_{u_*}$. We will omit the details.
\end{remark}
Moreover, we show that if initially the data are indeed singular, then $H_{u_*}$ and $\Hb_{\ub_*}$ are terminal singularities of the spacetime in the following sense:
\begin{theorem}\label{blowupthm}
If, in addition to the assumptions of Theorem \ref{extthm}, we also have the following for the initial data
$$\int_0^{\ub_*} |\chih(0,\ub)|^2 d\ub =\infty$$
along Lebesgue-almost every null generator on $H_0$, 
then the Christoffel symbols in the coordinate system $(u,\ub,\th^1,\th^2)$ do not belong to $L^2$ in a neighborhood of any point on $\Hb_{\ub_*}$.

Similarly if the initial data satisfy
$$\int_0^{u_*} |\chibh(u,0)|^2 du =\infty$$
along Lebesgue-almost every null generator on $\Hb_0$, 
then the Christoffel symbols in the coordinate system $(u,\ub,\th^1,\th^2)$ do not belong to $L^2$ in a neighborhood of any point on $H_{u_*}$.
\end{theorem}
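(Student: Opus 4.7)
The strategy is to isolate a specific Christoffel symbol in the $(u,\ub,\th^1,\th^2)$ coordinates that, up to bounded multiplicative factors and a bounded additive term, coincides with $\chih$, and then to show that the $L^2$ non-integrability of $\chih$ along $H_0$ persists on all $H_u$ for $u$ in a small interval. We treat the statement concerning $\Hb_{\ub_*}$; the other case is symmetric.

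\textbf{Step 1: Coordinate computation.} In the coordinates from the main theorem, the metric components satisfy $g_{\ub\ub}=0$ and $g_{\ub A}=0$, so the Koszul formula gives
\[
\Gamma^{A}_{B\,\ub} \;=\; \tfrac{1}{2}\gamma^{AC}\,\partial_{\ub}\gamma_{CB}.
\]
On the other hand, using the relation between $e_4$ and $\partial/\partial\ub$ one checks that
\[
\chi_{AB} \;=\; \tfrac{1}{2\Omega}\,\partial_{\ub}\gamma_{AB},
\]
so that $\Gamma^{A}_{B\,\ub} = \Omega\,\chi_{A}{}^{\,}{}_{B}$. Since $\Omega$ is bounded above and below by the a priori estimates of Theorem~\ref{extthm}, and since $\tr\chi$ is bounded in $L^{\infty}$, the traceless part $\chih_{AB}$ and $\Gamma^{A}_{B\,\ub}$ differ by a bounded, hence $L^2_{\mathrm{loc}}$, tensor. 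Consequently it suffices to prove that $\chih$ fails to be $L^2$ in any neighborhood of a point of $\Hb_{\ub_*}$.

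\textbf{Step 2: Propagation of non-integrability.} The relevant null structure equation has the schematic form
\[
\nab_{3}\chih \;+\; \tfrac{1}{2}\trchb\,\chih \;=\; \nab\hot\eta \;+\; 2\omb\,\chih \;-\; \tfrac{1}{2}\trch\,\chibh \;+\; \eta\hot\eta,
\]
all of whose coefficients and right-hand side terms are controlled (uniformly in $\ub$) by the a priori estimates of Theorem~\ref{extthm}: the coefficients of $\chih$ on the left and right are bounded in $L^{\infty}$, and the remaining forcing terms are bounded in $L^{\infty}_{u}L^{2}_{\ub}L^{\infty}(S)$ on any slab $0\leq u\leq U<u_*$. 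Absorbing the $\omb\chih$ term into the left-hand side and integrating along the integral curves of $e_3$ (parametrized by $u$) with an integrating factor $e^{F(u)}$, where $F$ is uniformly bounded, yields
\[
\chih(u,\ub,\th) \;=\; e^{-F(u)}\chih(0,\ub,\Phi_{u}^{-1}\th) \;+\; G(u,\ub,\th),
\]
where $G$ is bounded in $L^{\infty}_{u,\th}L^{2}_{\ub}$ by a finite constant, and $\Phi_u$ is the diffeomorphism on $S$ induced by the $e_3$ flow. By Minkowski's inequality in $\ub$,
\[
\|\chih(u,\cdot,\th)\|_{L^{2}_{\ub}} \;\geq\; c\,\|\chih(0,\cdot,\Phi_{u}^{-1}\th)\|_{L^{2}_{\ub}} \;-\; C.
\]
The hypothesis that $\int_{0}^{\ub_{*}}|\chih(0,\ub)|^{2}d\ub=\infty$ along \emph{every} generator of $H_0$, combined with the fact that $\Phi_{u}^{-1}$ is a well-defined diffeomorphism of $S$, forces $\|\chih(u,\cdot,\th)\|_{L^{2}_{\ub}}=\infty$ for every $(u,\th)$ with $0\leq u\leq U$.

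\textbf{Step 3: Conclusion.} Let $p\in\Hb_{\ub_*}$ have coordinates $(u_0,\ub_*,\th_0)$ with $u_0<u_*$, and let $V$ be any neighborhood of $p$ containing a coordinate box $(u_0-\delta,u_0+\delta)\times(\ub_*-\delta,\ub_*)\times U_0$. Since $\Gamma^{A}_{B\,\ub}=\Omega\chi_{A}{}^{\,}{}_{B}$ and $\Omega$, $\gamma$ are bounded above and below,
\[
\int_{V\cap\mathcal{M}}|\Gamma|^{2}\,dVol_{g} \;\geq\; c\int_{U_0}\!\int_{u_0-\delta}^{u_0+\delta}\!\int_{\ub_*-\delta}^{\ub_*}\!|\chih(u,\ub,\th)|^{2}\,d\ub\,du\,d\th.
\]
By Step~2 the inner $\ub$-integral is infinite for a.e.\ $(u,\th)$ in the relevant box, so the whole integral is $+\infty$, as desired.

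\textbf{Main obstacle.} The delicate point is Step~2: the right-hand side of the $\nab_3$ equation for $\chih$ involves terms such as $\nab\hot\eta$ and $\trch\,\chibh$ that are only controlled through the weighted estimates of Theorem~\ref{extthm}, and one must verify that these control a genuine $L^{\infty}_{u,\th}L^{2}_{\ub}$ bound for the forcing $G$ without the weight $f$, so that the singular part of $\chih(0,\ub,\th)$ survives. The absorption of $2\omb\chih$ into the left and the precise accounting of the diffeomorphism $\Phi_u$ on $\th$ (so that genericity of the non-integrability transfers from $H_0$ to $H_u$) is what makes the argument work.
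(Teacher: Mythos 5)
Your Steps 1 and 3 identify the correct Christoffel component to track, and your Step 2 correctly identifies the transport equation that propagates the singularity. However there are two genuine gaps, both of which the paper resolves with a specific idea.

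\textbf{Gap in Step 2.} You write the forcing term as $G\in L^\infty_{u,\th}L^2_{\ub}$, but this is exactly where the a priori estimates fall short, as you yourself flag. The problematic contribution to $G$ is $\int_0^u \trch\,\chibh\,du'$: while $\chibh$ is uniformly bounded for $u\leq U<u_*$, the only control on $\trch$ from Theorem~\ref{aprioriestimates} is a \emph{weighted} $L^2_{\ub}$ bound, $\|f(\ub)\trch\|_{L^2_{\ub}L^\infty_u L^\infty(S)}\leq C$. Dividing out the weight $f(\ub)^{-1}\in L^2_{\ub}$ gives at best an $L^1_{\ub}$ product (two $L^2_{\ub}$ factors), not $L^2_{\ub}$. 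So $G$ need not be square-integrable in $\ub$ and the singular part of $\chih(0,\cdot)$ need not survive. The paper's resolution (Proposition~\ref{prop.blow.up}) is a dichotomy: if $\int_0^{\ub_*}(\trch\restriction_{\Phi_U(\gamma)})^2\,d\ub=\infty$ one is already done (since $|\chi|^2\geq\tfrac12(\trch)^2$); otherwise one sets $h(\ub)=|\trch\restriction_{\Phi_U(\gamma)}(\ub)|\in L^2_{\ub}$ and uses the $\nab_3\trch$ transport equation, whose inhomogeneity $-2K+2\div\eta+2|\eta|^2$ is uniformly bounded, to propagate the pointwise bound $|\trch\restriction_{\Phi_u(\gamma)}(\ub)|\leq Ch(\ub)$ to \emph{all} $u\leq U$. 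Only with this conditional pointwise control on $\trch$ does $\trch\chibh$ become $\lesssim C_U h(\ub)\in L^2_{\ub}$, so that $G$ is genuinely $L^2_{\ub}$-bounded and the singular part survives. Relatedly, your Step~1 claim that ``$\trch$ is bounded in $L^\infty$'' is false (it is a $\psi_H$ quantity, controlled only with the weight $f(\ub)$); it does not harm the final conclusion because $|\chi|^2\geq|\chih|^2$, but it should not be invoked.

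\textbf{Gap in Step 3.} In the coordinate system of Theorem~\ref{extthm}, $e_3=\Omega^{-1}(\partial_u+b^A\partial_{\th^A})$ with $b^A\neq 0$ off $\Hb_0$, so the $\Lb$-flow does \emph{not} preserve the angular coordinates $\th^A$. Consequently the curve $\Phi_u(\gamma)$ along which Step~2 gives $L^2_{\ub}$-non-integrability is not a constant-$\th$ curve, and the inner $\ub$-integral in your Step~3, taken at fixed coordinate $\th$, is not the quantity Step~2 controls. Your notation $\Phi_u^{-1}\th$ in fact conceals a $\ub$-dependence. The paper handles this in Proposition~\ref{prop.blow.up.2}: one introduces new angular coordinates $\tilde\th^A$ obtained by solving $\frac{d}{du}\tilde\th^A=-b^A$, in which $\Lb=\partial_u$, verifies using the bounds on $b^A$, $\partial_{\ub}b^A$, $\partial_{\th}b^A$ that the Jacobian of $(u,\ub,\th)\mapsto(u,\ub,\tilde\th)$ is uniformly bounded above and below, and only then applies Fubini over the neighborhood. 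Without the coordinate change and Jacobian estimate, the final integral over the box cannot be deduced from the flow-line non-integrability.

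Once these two points are supplied, your argument agrees with the paper's proof of Theorem~\ref{blowupthm} (Propositions~\ref{prop.blow.up}, \ref{prop.blow.up.2} and the subsequent computation of $\Gamma^u_{AB}=\frac{1}{2\Omega}\chi_{AB}$).
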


\begin{remark}
Theorem \ref{blowupthm} guarantees that if we extend the spacetime metric continuously in the obvious differentiable structure given by the coordinate system $(u,\ub,\th^1,\th^2)$, {then} the Christoffel symbols are non-square-integrable in the extension. However, it is an open problem whether the spacetime admits any continuous extensions with square integrable Christoffel symbols.
\end{remark}

\subsection{Main Ideas of the Proof}

All the known proofs of regularity for the Einstein equations without symmetry assumptions rely on $L^2$ estimates on the metric and its derivatives or the Riemann curvature tensor and its derivatives. Let us denote schematically by $\Gamma$ a general Ricci coefficient and by $\Psi$ a general curvature component decomposed with respect to a null frame adapted to the double null foliation. In the double null foliation gauge (see for example, \cite{KN, Chr}), the standard approach to obtain a priori bounds is to couple the $L^2$ estimates for the curvature components
$$\int_H \Psi^2 +\int_{\Hb} \Psi^2 \leq \mbox{ Data }+\iint \Gamma\Psi\Psi$$
with the estimates for the Ricci coefficients obtained using the transport equations
$$\nab_3\Gamma=\Psi+\Gamma\Gamma,$$ 
$$\nab_4\Gamma=\Psi+\Gamma\Gamma.$$
However, in the setting of two weak null singularities, \emph{none} of the spacetime curvature components $\alpha, \beta, \rho, \sigma, \betab, \alphab$ are in $L^2$! 

Nevertheless, while these curvature components are singular, the nature of their singularity is specific. More precisely, while {the spacetime curvature components} $\rho$ and $\sigma$ are not in $L^2$, {they can be written as a sum of some regular intrinsic curvature components $K$ and $\sigmac$ (see further discussion in Section \ref{sec.ree}) which belong to $L^2$ and terms which are} quadratic in $\Gamma$. {We therefore prove $L^2$ estimates for $K$ and $\sigmac$, which we will call the ``renormalized curvature components'' (see \cite{LR, LR2}).} Moreover, {by considering $(K,\sigmac)$ instead of $(\rho,\sigma)$, we} remove all appearances of $\alpha$ and $\alphab$ in the estimates and so that we do not have to deal with the singularities of $\alpha$ and $\alphab$! It still remains to control the singular curvature components $\beta$ and $\betab$. Here, we make use of the fact that $\beta$ and $\betab$ are singular in a specific manner towards the singular boundary $\Hb_{\ub_*}$ and $H_{u_*}$ respectively. We therefore introduce degenerate $L^2$ norms that incorporate these singularities. We will explain the renormalization and the degenerate estimates in more detail below.

\subsubsection{Renormalized Energy Estimates}\label{sec.ree}
As described above, a main ingredient of the proof of the main theorem is the renormalized energy estimates introduced in \cite{LR, LR2} in the study of impulsive gravitational waves. This can be seen as follows. For the class of weak null singularities that we consider, while the $\Ls_L$ derivative of the spacetime metric blows up, the metric restricted to the $2$-sphere remains regular in the angular directions. Since the Gauss curvature $K$ is intrinsic to the $2$-spheres, it remains bounded. {On the other hand, by} the Gauss equation:
$$K=-{\rho}+\frac 12 \chih\cdot\chibh-\frac 14\trch\trchb$$
{and the fact that} $\trch$ and $\chih$ blow up {at $\ub=\ub_*$}, {$\rho$} also blows up {at $\ub=\ub_*$}. In view of this, we estimate the Gauss curvature {$K$} instead of the spacetime curvature component $\rho$.

Indeed, we {see} that the Gauss curvature $K$ satisfies equations such that the right hand side contains terms that are less singular than the terms in the corresponding equation for $\rho$. More precisely, for the curvature component $\rho$, we have {(up to lower order terms)} the Bianchi equation
$$\nab_4\rho+\frac 32\trch\rho= \div\beta -\frac 12 \chibh\cdot\alpha +... ,$$
{which contains the non-integrable curvature component $\alpha$. On the other hand, the Gauss curvature obeys the equation (see \eqref{eq:null.Bianchi2})}
$$\nab_4 K+\trch K= -\div\beta +...,$$
where {there are no terms containing $\alpha$ or are quadratic in $\trch$, $\chih$ and $\om$, i.e. every term on t}he right hand side of the equation {is} integrable in the $\ub$ direction\footnote{The can be compared with the renormalization introduced in \cite{LR} and \cite{LR2}, where we estimated $\rhoc=\rho-\frac 12 \chih\cdot\chibh$ instead of $\rho$. Whereas the renormalization using $\rhoc$ allows one to eliminate $\alpha$ in the estimates, it nonetheless introduces a term $\frac 14\trchb|\chih|^2$, which is not integrable in the $\ub$ direction in the setting of the present paper. Instead, by studying the equation for $K$, we see none of these terms which are quadratic in $\trch$, $\chih$ or $\om$! This fact can also be derived directly by considering the equations for $\nab_4 K$ using the intrinsic definition of the Gauss curvature.}. 

In a similar fashion, {by considering} the renormalized curvature component\footnote{{This is in fact related to the intrinsic curvature of the normal bundle to $S_{u,\ub}$.}}
$$\sigmac:=\sigma+\frac 12\chibh\wedge\chih$$
instead of $\sigma$, we see that it satisfies an equation such that all the terms on the right hand side are integrable in the $\ub$ direction.

{One} consequence of the renormalization is that we have completely removed the appearances of the curvature component $\alpha$ in the equations. In fact, as in \cite{LR, LR2}, this allows us to derive a set of estimates for the renormalized curvature component without requiring any information on the curvature component $\alpha$. 

Moreover, when considering the equations for $\nab_3 K$ and $\nab_3\sigmac$ for the renormalized curvature components, one sees that $\alphab$ does not appear and all the terms are integrable in the $u$ direction. Therefore, although $\alpha$ or $\alphab$ can be very singular near one of the singular boundaries, we do not need to derive any estimates for them!

\subsubsection{Degenerate $L^2$ Estimates}
Since the renormalization above deals with the singularity in the $\rho$ and $\sigma$ components and avoids any information on $\alpha$ and $\alphab$, it remains to derive appropriate $L^2$ estimates for $\beta$ and $\betab$. 

The main observation is that while $\beta$ and $\betab$ are both singular and fail to be in $L^2$, their {singularities} can be captured quantitatively. Consider the curvature component $\beta$. Since {the blow up rate of} $\trch$ and $\chih$ {can be bounded above by} $f(\ub)^{-2}$, in view of the Codazzi equations {in \eqref{null.str3}}, $\beta$ {is also bounded above by} $f(\ub)^{-2}${. In particular, while $\beta$} is only in $L^1_{\ub}$ but not in $L^p_{\ub}$ for any $p>1${,} the assumptions on the initial data allow us to control $f(\ub)\beta$ in $L^2_{\ub}$. We will thus incorporate this blow up in the norms and will be able to still use an $L^2$ based estimate.

The energy estimates will be obtained directly from two sets of Bianchi equations instead of using the Bel-Robinson tensor. Notice that since the energy estimates for $K,\sigmac$ are obtained either together with that for $\beta$ or that for $\betab$, even though $K$ and $\sigmac$ are regular, their energy estimates degenerate. Therefore, at the highest level of derivatives, we have to contend with the weaker $L^2$ estimates for these curvature components.

A potentially more serious challenge is that the introduction of the degenerate weights in $u$ and $\ub$ would create terms that cannot be estimated by the energy estimates themselves. Nevertheless, {since the weights are chosen to be decreasing towards the future, these uncontrollable terms in fact possess} a good sign{.}

\subsubsection{Estimates for the Ricci Coefficients}

As indicated above, the Ricci coefficients enter as error terms in the energy estimates. Thus, to close all the estimates, we need to control the Ricci coefficients $\Gamma$ by using the transport equations which in turn have the curvature components in the source terms. Since the various Ricci coefficients have different singular behavior, we separate them according to the bounds that they obey. More precisely, denote by $\psi_H$ the components that behave like $f(\ub)^{-2}$ as $\ub\to \ub_*$; by $\psi_{\Hb}$ the components that behave like $f(u)^{-2}$ as $u\to u_*$; and by $\psi$ the components that are bounded.

For the singular Ricci coefficients $\psi_H$, we have the following schematic transport equations:
$$\nab_3\psi_H= K+\nab\psi+\psi\psi+\psi_{\Hb}\psi_H.$$
The first three terms on the right hand side of this equation are bounded while the last term is singular. Nevertheless, the singularity of $\psi_{\Hb}$ still allows it to be controlled in $L^1$ along the $e_3$ direction. Thus, this equation can be integrated to show that the initial (singular) bounds for $\psi_H$ can be propagated. It is important that the terms of the form $\psi_H\psi_H$ and $\psi_{\Hb}\psi_{\Hb}$ do not appear in the equations. A similar structure can also be seen in the equation for the other singular Ricci coefficients $\psi_{\Hb}$, which takes the form
$$\nab_4\psi_{\Hb}= K+\nab\psi+\psi\psi+\psi_{\Hb}\psi_H.$$

For the regular Ricci coefficients $\psi$, we have transport equations of the form 
$$\nab_4\psi=\beta+\psi\psi_H,\quad\mbox{or}\quad\nab_3\psi=\betab+\psi\psi_{\Hb}.$$
The bounds that we prove show that the right hand side is integrable and therefore $\psi$ remains bounded. For example, in the $\nab_4$ equation, it is important that we do not have terms of the form $\psi_H\psi_H$, $\psi\psi_{\Hb}$, $\psi_H\psi_{\Hb}$ and $\psi_{\Hb}\psi_{\Hb}$, which are not uniformly bounded after integrating along the $e_4$ direction. 

\subsubsection{Null Structure in the Energy Estimates}

A priori, the degenerate $L^2$ estimates that we introduce may not be sufficient to control the error terms. Nevertheless, the vacuum Einstein equations possess a remarkable null structure which allows one to close the estimates using only the degenerate $L^2$ estimates.

For example, in the energy estimates for the singular component $\beta$, we have
$$||f(\ub)\beta||_{L^2(H)}^2\leq \mbox{ Data }+ ||f^2(\ub)(\beta\psi_{\Hb}\beta+\beta\psi_{H}\betab+\beta\psi K)||_{L^1_u L^1_{\ub} L^1(S)}.$$
To estimate the first term, it suffices to note that $\psi_{\Hb}$, while singular, can be shown to be small after integrating along the $u$ direction. Thus the first term can be controlled using Gronwall's inequality. For the second term, since the singularity for $\beta$ has the same strength as that for $\psi_H$ (and similarly the singularity for $\betab$ has the same strength as that for $\psi_{\Hb}$), the singularity in this term is similar to that in the first term and can also be bounded. The final term is less singular since $\psi$ and $K$ are both uniformly bounded.\footnote{Although as pointed out before, the highest derivative estimates for $K$ in the energy norm suffer a loss as one approaches the singular boundaries, this term can nevertheless be controlled.} Notice that if other combinations of curvature terms and Ricci coefficients such as $\beta\psi_H\beta$, $\beta\psi_{\Hb}\betab$ or $\beta\psi_H K$ appear in the error terms, the degenerate energy will not be strong enough to close the bounds!

In order to close all the estimates, we need to commute also with higher derivatives. As in \cite{LR, LR2}, we will only commute with angular covariant derivatives. These commutations will not introduce terms that are more singular. Moreover, the null structure of the estimates indicated above is also preserved under these commutations.

Similar to \cite{LR, LR2}, the renormalization introduces error terms in the energy estimates such that the Ricci coefficients have one more derivative compared to the curvature components. These terms cannot be estimated via transport equations alone but are controlled using also elliptic estimates on the spheres. A form of null structure similar to that described above also makes an appearance in these elliptic estimates, allowing all the bounds to be closed.

\subsection{Outline of the Paper}

We end the introduction with an outline of the remainder of the paper. In Section \ref{secsetup}, we introduce the basic setup of the paper, including the double null foliation, the coordinate system and the Einstein vacuum equations recast in terms of the geometric quantities associated to the double null foliation. In Section \ref{secnorm}, we introduce the norms used in the paper and state a theorem on a priori estimates (Theorem \ref{aprioriestimates}) which imply our main existence theorem (Theorem \ref{extthm}). In Section \ref{sec.data}, we construct a class of characteristic initial data satisfying the assumptions of Theorem \ref{extthm}. In Sections \ref{secbasic}-\ref{seccurv}, we prove Theorem \ref{aprioriestimates}. In Section \ref{secbasic}, we obtain the estimates for the metric components and derive functional inequalities useful in our setting. Then in Sections \ref{secRicci} and \ref{secRicci32}, we prove bounds for the Ricci coefficients assuming control of the curvature components. In Section \ref{seccurv}, we close all the estimates by obtaining bounds for the curvature components. Finally, in Section \ref{secnature}, we discuss the nature of the singular boundary and prove Theorems \ref{C0extthm} and \ref{blowupthm}.

\section{Basic Setup}\label{secsetup}

\subsection{Double Null Foliation}\label{secdnf}
For a {smooth}\footnote{{The spacetimes considered in this paper are not smooth at $u=u_*$ or $\ub=\ub_*$. However, since we first construct the spacetime in the region $\{u<u_*\}\cap\{\ub<\ub_*\}$ in which the spacetime is smooth (see Theorem \ref{extthm}), it suffices to define the double null foliation for smooth spacetimes.}} spacetime in a neighborhood of $S_{0,0}$, we define a double null foliation as follows: Let $u$ and $\ub$ be solutions to the eikonal equation
$$g^{\mu\nu}\partial_\mu u\partial_\nu u=0,\quad g^{\mu\nu}\partial_\mu\ub\partial_\nu \ub=0,$$
{such that} $u=0$ on $H_0$ and $\ub=0$ on $\Hb_0$.
Let
$$L'^\mu=-2g^{\mu\nu}\partial_\nu u,\quad \Lb'^\mu=-2g^{\mu\nu}\partial_\nu \ub.$$ 
These are null and geodesic vector fields. Let
$$2\Omega^{-2}=-g(L',\Lb').$$
Define
$$e_3=\Omega\Lb'\mbox{, }e_4=\Omega L'$$
to be the normalized null pair such that 
$$g(e_3,e_4)=-2$$
and
$$\Lb=\Omega^2\Lb'\mbox{, }L=\Omega^2 L'$$
to be the so-called equivariant vector fields.

In this paper, we will consider spacetime solutions to the vacuum Einstein equations \eqref{E.Eqn} in the gauge such that 
$$\Omega=1,\quad\mbox{on $H_0$ and $\Hb_0$}.$$

The level sets of $u$ (resp. $\ub$) are denoted by $H_u$ (resp. $\Hb_{\ub}$). The eikonal equations imply {that} $H_u$ and $\Hb_{\ub}$ are null hypersurface. The intersections of the hypersurfaces $H_u$ and $\Hb_{\ub}$ are topologically 2-spheres, which we denote by $S_{u,\ub}$. Note that the integral flows of $L$ and $\Lb$ respect the foliation $S_{u,\ub}$.

\subsection{The Coordinate System}\label{coordinates}
We define a coordinate system $(u,\ub,\th^1,\th^2)$ in a neighborhood of $S_{0,0}$ as follows:
On the sphere $S_{0,0}$, we have an atlas such that in the local coordinate system $(\th^1,\th^2)$ in each coordinate chart, the metric $\gamma$ is smooth, bounded and positive definite. Recall that in a neighborhood of $S_{0,0}$, $u$ and $\ub$ are solutions to the eikonal equations:
$$g^{\mu\nu}\partial_\mu u\partial_\nu u=0,\quad g^{\mu\nu}\partial_\mu\ub\partial_\nu \ub=0.$$
We then require the coordinates {to satisfy 
$$\Ls_{\Lb} \th^A=0$$ 
on the initial hypersurface $\Hb_0$ and 
$$\Ls_L \th^A=0$$
in the spacetime region. Here, $\Ls_L$ and $\Ls_{\Lb}$ denote} the restriction of the Lie derivative to $TS_{u,\ub}$ (See \cite{Chr}, Chapter 1) and $L$ and $\Lb$ are defined as in the Section \ref{secdnf}.
Relative to the coordinate system $(u,\ub,\th^1,\th^2)$, the null pair $e_3$ and $e_4$ can be expressed as
$$e_3=\Omega^{-1}\left(\frac{\partial}{\partial u}+b^A\frac{\partial}{\partial \th^A}\right),\quad e_4=\Omega^{-1}\frac{\partial}{\partial \ub},$$
for some $b^A$ such that $b^A=0$ on $\Hb_0$, while the metric $g$ takes the form
$$g=-2\Omega^2(du\otimes d\ub+d\ub\otimes du)+\gamma_{AB}(d\th^A-b^Adu)\otimes (d\th^B-b^Bdu).$$ 

\subsection{Equations}\label{seceqn}
We will recast the Einstein equations as a system for Ricci coefficients and curvature components associated to a null frame $e_3$, $e_4$ defined above and an orthonormal frame\footnote{{Of course the orthonormal frame is only defined locally. Alternatively, the capital Latin indices can be understood as abstract indices.}} {$\{e_A\}_{A=1,2}$} tangent to the 2-spheres $S_{u,\ub}$. {W}e define the Ricci coefficients relative to the null fame:
 \begin{equation}
\begin{split}
&\chi_{AB}=g(D_A e_4,e_B),\, \,\, \quad \chib_{AB}=g(D_A e_3,e_B),\\
&\eta_A=-\frac 12 g(D_3 e_A,e_4),\quad \etab_A=-\frac 12 g(D_4 e_A,e_3){,}\\
&\omega=-\frac 14 g(D_4 e_3,e_4),\quad\,\,\, \omegab=-\frac 14 g(D_3 e_4,e_3),\\
&\zeta_A=\frac 1 2 g(D_A e_4,e_3),
\end{split}
\end{equation}
where $D_A=D_{e_{(A)}}$. We also introduce the  null curvature components,
 \begin{equation}
\begin{split}
\a_{AB}&=R(e_A, e_4, e_B, e_4),\quad \, \,\,   \ab_{AB}=R(e_A, e_3, e_B, e_3),\\
\b_A&= \frac 1 2 R(e_A,  e_4, e_3, e_4) ,\quad \bb_A =\frac 1 2 R(e_A,  e_3,  e_3, e_4),\\
\rho&=\frac 1 4 R(e_4,e_3, e_4,  e_3),\quad \sigma=\frac 1 4  \,^*R(e_4,e_3, e_4,  e_3).
\end{split}
\end{equation}
Here $\, ^*R$ denotes the Hodge dual of $R$.  We denote by $\nab$ the 
induced covariant derivative operator on $S_{u,\ub}$ and by $\nab_3$, $\nab_4$
the projections to $S_{u,\ub}$ of the covariant derivatives $D_3$, $D_4$ (see
precise definitions in Chapter 3.1 of \cite{KN}). 

Observe that,
\begin{equation}\label{RC.relation}
\begin{split}
&\omega=-\frac 12 \nab_4 (\log\Omega),\qquad \omegab=-\frac 12 \nab_3 (\log\Omega),\\
&\eta_A=\zeta_A +\nab_A (\log\Omega),\quad \etab_A=-\zeta_A+\nab_A (\log\Omega).
\end{split}
\end{equation}

Define the following contractions of the tensor product $\phi^{(1)}$ and $\phi^{(2)}$ with respect to the metric $\gamma$:
$$\phi^{(1)}\cdot\phi^{(2)}:=(\gamma^{-1})^{AC}(\gamma^{-1})^{BD}\phi^{(1)}_{AB}\phi^{(2)}_{CD} \quad\mbox{for symmetric $2$-tensors $\phi^{(1)}_{AB}$, $\phi^{(2)}_{AB}$,}$$
$$\phi^{(1)}\cdot\phi^{(2)}:=(\gamma^{-1})^{AB}\phi^{(1)}_{A}\phi^{(2)}_{B} \quad\mbox{for $1$-forms $\phi^{(1)}_{A}$, $\phi^{(2)}_{A}$,}$$
$$(\phi^{(1)}\cdot\phi^{(2)})_A:=(\gamma^{-1})^{BC}\phi^{(1)}_{AB}\phi^{(2)}_{C} \quad\mbox{for a symmetric $2$-tensor $\phi^{(1)}_{AB}$ and a $1$-form $\phi^{(2)}_{A}$,}$$
$$(\phi^{(1)}\hot\phi^{(2)})_{AB}:=\phi^{(1)}_A\phi^{(2)}_B+\phi^{(1)}_B\phi^{(2)}_A-{\gamma_{AB}}(\phi^{(1)}\cdot\phi^{(2)}) \quad\mbox{for one forms $\phi^{(1)}_A$, $\phi^{(2)}_A$,}$$
$$\phi^{(1)}\wedge\phi^{(2)}:=\eps^{AB}(\gamma^{-1})^{CD}\phi^{(1)}_{AC}\phi^{(2)}_{BD}\quad\mbox{for symmetric two tensors $\phi^{(1)}_{AB}$, $\phi^{(2)}_{AB}$},$$
where $\eps$ is the volume form associated to the metric $\gamma$. We also define by $^*$ for $1$-forms and symmetric $2$-tensors respectively as follows (note that on $1$-forms this is the Hodge dual on $S_{u,\ub}$):
\begin{align*}
^*\phi_A := & \gamma_{AC} \eps^{CB} \phi_B, \\
^*\phi_{AB} := & \gamma_{BD} \eps^{DC} \phi_{AC}.
\end{align*}
Define the operator $\nab\widehat{\otimes}$ on a $1$-form $\phi_{A}$ by
$$(\nab\widehat{\otimes}\phi)_{AB} :=  \nab_A \phi_B + \nab_B \phi_A - \gamma_{AB} \div \phi.$$
For totally symmetric tensors, define the $\div$ and $\curl$ operators as follows
$$(\div\phi)_{A_1...A_r}:=\nabla^B\phi_{BA_1...A_r},$$
$$(\curl\phi)_{A_1...A_r}:=\eps^{BC}\nabla_B\phi_{CA_1...A_r}.$$
Define also the trace of totally symmetric tensors to be
$$(\mbox{tr}\phi)_{A_1...A_{r-1}}:=(\gamma^{-1})^{BC}\phi_{BCA_1...A_{r-1}}.$$

We separate the trace and traceless part of $\chi$ and $\chib$. Let $\chih$ and $\chibh$ be the traceless parts of $\chi$ and $\chib$ respectively. Then $\chi$ and $\chib$ satisfy the following null structure equations:
\begin{equation}
\label{null.str1}
\begin{split}
\nab_4 \trch+\frac 12 (\trch)^2&=-|\chih|^2-2\omega \trch{,}\\
\nab_4\chih+\trch \chih&=-2 \omega \chih-\alpha{,}\\
\nab_3 \trchb+\frac 12 (\trchb)^2&=-2\omegab \trchb-|\chibh|^2{,}\\
\nab_3\chibh + \trchb\,  \chibh&= -2\omegab \chibh -\alphab{,}\\
\nab_4 \trchb+\frac1 2 \trch \trchb &=2\omega \trchb +2\rho- \chih\cdot\chibh +2\div \etab +2|\etab|^2{,}\\
\nab_4\chibh +\frac 1 2 \trch \chibh&=\nab\widehat{\otimes} \etab+2\omega \chibh-\frac 12 \trchb \chih +\etab\widehat{\otimes} \etab{,}\\
\nab_3 \trch+\frac1 2 \trchb \trch &=2\omegab \trch+2\rho- \chih\cdot\chibh+2\div \eta+2|\eta|^2{,}\\
\nab_3\chih+\frac 1 2 \trchb \chih&=\nab\widehat{\otimes} \eta+2\omegab \chih-\frac 12 \trch \chibh +\eta\widehat{\otimes} \eta{.}
\end{split}
\end{equation}
The other Ricci coefficients satisfy the following null structure equations:
\begin{equation}
\label{null.str2}
\begin{split}
\nabla_4\eta&=-\chi\cdot(\eta-\etab)-\b{,}\\
\nabla_3\etab &=-\chib\cdot (\etab-\eta)+\bb{,}\\
\nabla_4\omegab&=2\omega\omegab-\eta\cdot\etab+\f12|\eta|^2+\frac 12 \rho,\\
\nabla_3\omega&=2\omega\omegab-\eta\cdot\etab+\f12|\etab|^2+\frac 12 \rho.
\end{split}
\end{equation}
The Ricci coefficients also satisfy the following constraint equations
\begin{equation}
\label{null.str3}
\begin{split}
\div\chih&=\frac 12 \nabla \trch - \frac 12 (\eta-\etab)\cdot (\chih -\frac 1 2 \trch) -\beta,\\
\div\chibh&=\frac 12 \nabla \trchb + \frac 12 (\eta-\etab)\cdot (\chibh-\frac 1 2   \trchb) +\betab{,}\\
\curl\eta &=-\curl\etab=\sigma +\frac 1 2\chibh \wedge\chih{,}\\
K&=-\rho+\frac 1 2 \chih\cdot\chibh-\frac 1 4 \trch \trchb{,}
\end{split}
\end{equation}
with $K$ the Gauss curvature of the spheres $S_{u,\ub}$.
The null curvature components satisfy the following null Bianchi equations:
\begin{equation}
\label{eq:null.Bianchi}
\begin{split}
&\nab_3\alpha+\frac 12 \trchb \alpha={\nabla}\hot \beta+ 4\omegab\alpha-3(\chih\rho+^*\chih\sigma)+
(\zeta+4\eta)\hot\beta,\\
&\nab_4\beta+2\trch\beta = \div\alpha - 2\omega\beta +  (2\zeta+\etab)\cdot \alpha,\\
&\nab_3\beta+\trchb\beta={\nabla}\rho + 2\omegab \beta +^*{\nabla}\sigma +2\chih\cdot\betab+3(\eta\rho+^*\eta\sigma),\\
&\nab_4\sigma+\frac 32\trch\sigma=-\div^*\beta+\frac 12\chibh\wedge\alpha-\zeta\wedge\beta-2\etab\wedge\beta,\\
&\nab_3\sigma+\frac 32\trchb\sigma=-\div ^*\betab-\frac 12\chih\wedge\alphab+\zeta\wedge\betab-2\eta\wedge\betab,\\
&\nab_4\rho+\frac 32\trch\rho=\div\beta-\frac 12\chibh\cdot\alpha+\zeta\cdot\beta+2\etab\cdot\beta,\\
&\nab_3\rho+\frac 32\trchb\rho=-\div\betab- \frac 12\chih\cdot\alphab+\zeta\cdot\betab-2\eta\cdot\betab,\\
&\nab_4\betab+\trch\betab=-{\nabla}\rho +^*{\nabla}\sigma+ 2\omega\betab +2\chibh\cdot\beta-3(\etab\rho-^*\etab\sigma),\\
&\nab_3\betab+2\trchb\betab=-\div\alphab-2\omegab\betab-(-2\zeta+\eta) \cdot\alphab,\\
&\nab_4\alphab+\frac 12 \trch\alphab=-{\nabla}\hot \betab+ 4\omega\alphab-3(\chibh\rho-^*\chibh\sigma)+
(\zeta-4\etab)\hot \betab.
\end{split}
\end{equation}
where $^*$ denotes the Hodge dual on $S_{u,\ub}$.

We now rewrite the Bianchi equations in terms of the Gauss curvature $K$ of the spheres $S_{u,\ub}$ and the renormalized curvature component $\sigmac$ defined by
$$\sigmac=\sigma+\frac 12 \chibh\wedge\chih.$$
The Bianchi equations take the following form
\begin{equation}
\label{eq:null.Bianchi2}
\begin{split}
\nab_3\beta+\trchb\beta=&-\nabla K  +^*{\nabla}\sigmac + 2\omegab \beta+2\chih\cdot\betab-3(\eta K-^*\eta\sigmac)+\frac 1 2({\nabla}(\chih\cdot\chibh)+^*{\nabla}(\chih\wedge\chibh))\\
&+\f 32(\eta\chih\cdot\chibh+^*\eta\chih\wedge\chibh)-\frac 14 (\nab\trch \trchb+\trch\nab\trchb)-\frac 34 \eta\trch\trchb,\\
\nab_4\sigmac+\frac 32\trch\sigmac=&-\div^*\beta-\zeta\wedge\beta-2\etab\wedge
\beta-\frac 12 \chih\wedge(\nab\widehat{\otimes}\etab)-\frac 12 \chih\wedge(\etab\widehat{\otimes}\etab),\\
\nab_4 K+\trch K=&-\div\beta-\zeta\cdot\beta-2\etab\cdot\beta+\frac 12 \chih\cdot\nab\widehat{\otimes}\etab+\frac 12 \chih\cdot(\etab\widehat{\otimes}\etab)-\frac 12 \trch\div\etab-\frac 12\trch |\etab|^2,\\
\nab_3\sigmac+\frac 32\trchb\sigmac=&-\div ^*\betab+\zeta\wedge\betab-2\eta\wedge
\betab+\frac 12 \chibh\wedge(\nab\widehat{\otimes}\eta)+\frac 12 \chibh\wedge(\eta\widehat{\otimes}\eta),\\
\nab_3 K+\trchb K=&\div\betab-\zeta\cdot\betab+2\eta\cdot\betab+\frac 12 \chibh\cdot\nab\widehat{\otimes}\eta+\frac 12 \chibh\cdot(\eta\widehat{\otimes}\eta)-\frac 12 \trchb\div\eta-\frac 12 \trchb |\eta|^2,\\
\nab_4\betab+\trch\betab=&{\nabla} K +^*{\nabla}\sigmac+ 2\omega\betab +2\chibh\cdot\beta+3(\etab K+^*\etab\sigmac)-\frac 1 2({\nabla}(\chih\cdot\chibh)-^*{\nabla}(\chih\wedge\chibh))\\
&+\frac 14 (\nab\trch \trchb+\trch\nab\trchb)-\f 32(\etab\chih\cdot\chibh-^*\etab\chih\wedge\chibh)+\frac 34 \etab\trch\trchb.
\end{split}
\end{equation}
Notice that we have obtained a system for the renormalized curvature components in which the curvature components $\alpha$ and $\alphab$ do not appear.\footnote{Moreover, compared to the renormalization in \cite{LR}, this system do not contain the terms $\trch|\chibh|^2$ and $\trchb|\chih|^2$ which would be uncontrollable in the context of this paper.}

From now on, we will use capital Latin letters $A\in \{1,2\}$ for indices on the spheres $S_{u,\ub}$ and Greek letters $\mu\in\{1,2,3,4\}$ for indices in the whole spacetime.

\subsection{Schematic Notation}\label{schnot}

We define a schematic notation for the Ricci coefficients according to the estimates that they obey. Introduce the following conventions:\footnote{Notice that this definition is different form that in \cite{LR} since in the context of the present paper, $\trch$ and $\trchb$ verify different bounds compared to \cite{LR}.}
$$\psi\in\{\eta,\etab\},\quad \psi_H\in \{\trch,\chih,\om\},\quad \psi_{\Hb}\in\{\trchb,\chibh,\omegab\}. $$

We will use this schematic notation only in the situations where the exact constant in front of the term {is} irrelevant to the argument. We will denote by $\psi\psi$ (or $\psi\psi_H$, etc) an arbitrary contraction with respect to the metric $\gamma$ and by $\nab\psi$ an arbitrary angular covariant derivative. $\nab^i\psi^j$ will be used to denote the sum of all terms which are products of $j$ factors, such that each factor takes the form $\nab^{i_k}\psi$ and that the sum of all $i_k$'s is $i$, i.e., 
$$\nab^i\psi^j=\displaystyle\sum_{i_1+i_2+...+i_j{=i}}\underbrace{\nab^{i_1}\psi\nab^{i_2}\psi...\nab^{i_j}\psi}_\text{j factors}.$$

We will use brackets to denote terms with one of the components in the brackets. For instance, the notation $\psi(\psi,\psi_H)$ denotes the sum of all terms of the form $\psi\psi$ or $\psi\psi_H$.

In this schematic notation, the Ricci coefficients $\psi_H$ satisfy
$$\nab_3 \psi_H = K+ \nab\psi + \psi\psi+\psi_H \psi_{\Hb}.$$
The Ricci coefficients $\psi_{\Hb}$ similarly obey
$$\nab_4 \psi_{\Hb} = K+ \nab\psi + \psi\psi+\psi_H \psi_{\Hb}.$$
The Ricci coefficients $\psi$ obey either one of the following equations:
$$\nab_3 \psi = \betab + \psi\psi_{\Hb}$$
or
$$\nab_4 \psi = \beta + \psi\psi_H.$$

We also rewrite the Bianchi equations in the schematic notation:
\begin{equation}
\label{eq:null.Bianchi3}
\begin{split}
\nab_3\beta{+}\nabla K  -^*\nabla\sigmac =& \sum_{i_1+i_2=1}\psi_{\Hb}\psi^{i_1} \nab^{i_2}\psi_H+\psi K+\sum_{i_1+i_2=1} \psi^{i_1}\psi\nab^{i_2}\psi\\
\nab_4\sigmac+\div^*\beta=&\psi_H\sigmac+\psi\sum_{i_1+i_2+i_3=1}\psi^{i_1}\nab^{i_2}\psi\nab^{i_3}\psi_H,\\
\nab_4 K{+}\div\beta=&\psi_H K+\psi\sum_{i_1+i_2+i_3=1}\psi^{i_1}\nab^{i_2}\psi\nab^{i_3}\psi_H,\\
\nab_3\sigmac+\div ^*\betab =& \psi_{\Hb}\sigmac+\psi\sum_{i_1+i_2+i_3=1}\psi^{i_1}\nab^{i_2}\psi\nab^{i_3}\psi_{\Hb},\\
\nab_3 K{-}\div\betab=&\psi_{\Hb}K+\psi\sum_{i_1+i_2+i_3=1}\psi^{i_1}\nab^{i_2}\psi\nab^{i_3}\psi_{\Hb},\\
\nab_4\betab{-}\nabla K -^*\nabla\sigmac=&\sum_{i_1+i_2=1}\psi_{H}\psi^{i_1} \nab^{i_2}\psi_{\Hb}+\psi K+\sum_{i_1+i_2=1} \psi^{i_1}\psi\nab^{i_2}\psi.
\end{split}
\end{equation}

\section{Norms}\label{secnorm}
In this section, we define the norms that we will use to control the geometric quantities. We will in particular use the schematic notation defined in Section \ref{schnot}. Our norms will be of the form $L^p_u L^q_{\ub} L^r(S)$, where $L^p_u$ and $L^q_{\ub}$ are defined with respect to the measures $du$ and $d\ub$ respectively and $L^r(S)$ is defined for any tensors $\phi$ on $S_{u,\ub}$ by
$$\|\phi\|_{L^r(S_{u,\ub})}:=\left(\int_{S_{u,\ub}} (\phi_{A_1 A_2 ... A_n} \phi^{A_1 A_2 ...A_n})^{\f r2}\right)^{\f 1r},$$
where the integral is with respect to the volume form induced by $\gamma$.

We define the following norms for the Ricci coefficients $\psi$ {for $p\in [1,\infty]$, $i\in \mathbb N$}:
\begin{equation}\label{O.psi.def}
\mathcal O_{i,p}[\psi]:=||\nab^i\psi||_{L^\infty_uL^\infty_{\ub}L^p(S)}.
\end{equation}
Define the following norms for the Ricci coefficients $\psi_H$ {for $p\in [1,\infty]$, $i\in \mathbb N$}:
\begin{equation}\label{O.psiH.def}
\mathcal O_{i,p}[\psi_H]:=||f(\ub)\nab^i\psi_H||_{L^2_{\ub}L^\infty_uL^p(S)}.
\end{equation}
Similarly, we define the following norms for the Ricci coefficients $\psi_{\Hb}$ {for $p\in [1,\infty]$, $i\in \mathbb N$}:
\begin{equation}\label{O.psiHb.def}
\mathcal O_{i,p}[\psi_{\Hb}]:=||f(u)\nab^i\psi_{\Hb}||_{L^2_{u}L^\infty_{\ub}L^p(S)}.
\end{equation}
{As a shorthand, we define the following norm combining all of the norms above:
$$\mathcal O_{i,p}:=\sum_{\psi\in \{\eta,\etab\}}\mathcal O_{i,p}[\psi]+\sum_{\psi_H\in  \{tr\chi,\chih,\om\}}\mathcal O_{i,p}[\psi_H]+\sum_{\psi_{\Hb}\in \{tr \chib,\chibh,\omb\}}\mathcal O_{i,p}[\psi_{\Hb}].$$}

We make two remarks concerning these norms:
\begin{remark}
While the norms for $\psi_H$ and $\psi_{\Hb}$ are based on $L^2$ in $\ub$ and $u$ respectively, by virtue of the weights $f(\ub)$ and $f(u)$, they actually control the $L^1$ norms. More precisely, since $\int_0^{\ub_*} \frac{1}{f^2(\ub')} d\ub' <\ep^{{2}}$ and $\int_0^{u_*} \frac{1}{f^2(u')} du' <\ep^{{2}}$, by {the} Cauchy-Schwarz {inequality}, we have
$$||\nab^i\psi_H||_{L^1_{\ub}L^\infty_u L^p(S)}\leq C\ep \mathcal O_{i,p}[\psi_H],$$
and 
$$||\nab^i\psi_{\Hb}||_{L^1_{u}L^\infty_{\ub} L^p(S)}\leq C\ep \mathcal O_{i,p}[\psi_{\Hb}].$$
\end{remark}

\begin{remark}\label{order}
The norm $\mathcal O_{i,p}[\psi_H]$ (resp. $\mathcal O_{i,p}[\psi_{\Hb}]$) allows us to first take $L^\infty$ along {the} $u$ direction (resp. $\ub$ direction) before the $L^2$ norm in $\ub$ (resp. $u$) is taken. This is stronger than the norms such that the order is reversed, i.e., we have
$$||f(\ub)\nab^i\psi_H||_{L^\infty_uL^2_{\ub} L^p(S)}\leq C \mathcal O_{i,p}[\psi_H],$$
and 
$$||f(u)\nab^i\psi_{\Hb}||_{L^\infty_{\ub}L^2_{\ub} L^p(S)}\leq C \mathcal O_{i,p}[\psi_{\Hb}].$$
\end{remark}

In addition to the above norms, we need to define norms for the highest derivatives for the Ricci coefficients. Let
\begin{equation} \label{O.42.def}
\begin{split}
\tilde{\mathcal O}_{4,2}:=&||f(\ub)^2\nab^4\trch||_{L^\infty_u L^\infty_{\ub}L^2(S)}+||f(u)^2\nab^4\trchb||_{L^\infty_u L^\infty_{\ub}L^2(S)}\\
&+||f(\ub)\nab^4(\chih,\om)||_{L^\infty_u L^2_{\ub}L^2(S)}+||f(u)\nab^4(\eta,\etab)||_{L^\infty_u L^2_{\ub}L^2(S)}\\
&+||f(u)\nab^4(\chibh,\omb)||_{L^\infty_{\ub} L^2_uL^2(S)}+||f(\ub)\nab^4(\eta,\etab)||_{L^\infty_{\ub} L^2_{u}L^2(S)}.
\end{split}
\end{equation}
\begin{remark}
Here, note that for the norms for $\chih$, $\om$, $\eta$, $\etab$, $\chibh$ and $\om$, $L^\infty$ in $\ub$ (or $u$) is taken after $L^2$ in $u$ (or $\ub$). According to Remark \ref{order}, this is weaker than the $\mathcal O_{i,2}$ norms defined above.
\end{remark}
\begin{remark}
{Notice that the norms for the fourth derivatives of $\eta$ and $\etab$ come with a weight $f(u)$ or $f(\ub)$. This is in contrast to the lower order derivatives for $\eta$ and $\etab$, which can be estimated in $L^\infty_uL^\infty_{\ub}$ without any degeneration. The degeneration here arises from the fact that these higher order derivatives are recovered from the energy estimates for $\nab^3 K$. These energy estimates for $\nab^3 K$, which are derived simultaneously with the estimates for the singular components $\nab^3\beta$ or $\nab^3\betab$, have a degeneration either in $\ub$ or $u$.}
\end{remark}
We also define the curvature norms for the curvature components. {For $i\in \mathbb N$, l}et
\begin{equation}\label{R.def}
\begin{split}
\mathcal R_i:=&||f(\ub)\nab^i\beta||_{L^\infty_u L^2_{\ub}L^2(S)}+||f(u)\nab^i(K,\sigmac)||_{L^\infty_u L^2_{\ub}L^2(S)}\\
&+||f(\ub)\nab^i(K,\sigmac)||_{L^\infty_{\ub} L^2_{u}L^2(S)}+||f(u)\nab^i\betab||_{L^\infty_{\ub} L^2_{u}L^2(S)}.
\end{split}
\end{equation}
As a shorthand, we also let
$$\mathcal R:=\sum_{i\leq 3} \mathcal R_i.$$

Finally, let $\mathcal O_{ini}$ and $\mathcal R_{ini}$ denote the corresponding norms for the initial data, i.e.
\begin{equation*}
\begin{split}
\mathcal O_{ini}:=&\sum_{i\leq 3}\big(||\nab^i\psi||_{L^\infty_{\ub}L^{2}(S_{0,\ub})}+||\nab^i\psi||_{L^\infty_{u}L^{2}(S_{u,0})}\\
&\quad+||f(\ub)\nab^i\psi_H||_{L^2_{\ub}L^{{2}}(S_{0,\ub})}+||f(u)\nab^i\psi_{\Hb}||_{L^2_{u}L^{{2}}(S_{u,0})}\big)\\
&+||f(\ub)^2\nab^4\trch||_{L^\infty_{\ub}L^2(S_{0,\ub})}+||f(u)^2\nab^4\trchb||_{L^\infty_u L^2(S_{u,0})}\\
&+||\nab^4\trchb||_{L^\infty_{\ub}L^2(S_{0,\ub})}+||\nab^4\trch||_{L^\infty_u L^2(S_{u,0})}\\
&+||f(\ub)\nab^4(\chih,\om)||_{L^2_{\ub}L^2(S_{0,\ub})}+||\nab^4(\eta,\etab)||_{L^2_{\ub}L^2(S_{0,\ub})}\\
&+||f(u)\nab^4(\chibh,\omb)||_{L^2_uL^2(S_{u,0})}+||\nab^4(\eta,\etab)||_{L^2_{u}L^2(S_{u,0})}
\end{split}
\end{equation*}
and
\begin{equation*}
\begin{split}
\mathcal R_{ini}:=&{\sum_{i\leq 3}}\left(||f(\ub)\nab^i\beta||_{L^2_{\ub}L^2(S_{0,\ub})}+||\nab^i(K,\sigmac)||_{L^2_{\ub}L^2(S_{0,\ub})}\right.\\
&\left.+||\nab^i(K,\sigmac)||_{L^2_{u}L^2(S_{u,0})}+||f(u)\nab^i\betab||_{L^2_{u}L^2(S_{u,0})}\right).
\end{split}
\end{equation*}

In order to prove Theorem \ref{extthm}, we will establish a priori estimates for the geometric quantities in the above norms: 
\begin{theorem}\label{aprioriestimates}
Assume that the initial data for the characteristic initial value problem satisfy the assumptions of Theorem \ref{extthm} with $\ep$ sufficiently small. Then there exists $B$ depending only on $D$ and $d$ such that
$$\sum_{i\leq 3}\mathcal O_{i,2}+\tilde{\mathcal O}_{4,2}+\mathcal R\leq B.$$
\end{theorem}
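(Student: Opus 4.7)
The plan is to prove the estimate by a standard bootstrap/continuity argument on the region $\{0\leq u<u_*\}\times\{0\leq\ub<\ub_*\}$. I would introduce a bootstrap constant $B^\star\gg B$, assume all the stated norms are bounded by $B^\star$ on a subregion $\{u\leq u_0\}\times\{\ub\leq\ub_0\}$, and then improve these bounds to an absolute constant $B$ (depending only on $D$ and $d$) by choosing $\ep$ sufficiently small in terms of $D$, $d$, $B^\star$. The improvement is organized in four stages: metric components, lower order Ricci coefficients, top order Ricci coefficients via elliptic estimates, and finally the curvature energies $\mathcal R$.

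First, I would control the metric components $\gamma, b, \Omega$ by integrating the transport equations in the $e_3$ and $e_4$ directions. Although $\chi$ and $\chib$ are singular, the weights built into $\mathcal O_{i,p}[\psi_H]$ and $\mathcal O_{i,p}[\psi_{\Hb}]$ together with $\|f^{-1}\|_{L^2}\leq\ep$ give $L^1$ control in the appropriate null direction, which suffices for a coordinate background and for Sobolev/elliptic estimates on each $S_{u,\ub}$. Next, I would estimate the lower order Ricci coefficients ($i\leq 3$) by integrating the schematic transport equations of Section \ref{schnot}: for $\psi_H$ one integrates $\nab_3\psi_H=K+\nab\psi+\psi\psi+\psi_H\psi_{\Hb}$ in $u$, where the dangerous product $\psi_H\psi_{\Hb}$ is absorbed by Gronwall using $\|\psi_{\Hb}\|_{L^1_u}\lesssim\ep$; and for $\psi$ one integrates $\nab_{3/4}\psi=\betab/\beta+\psi\psi_{\Hb/H}$ using the $L^1$ integrability in the correct direction of $\betab$, $\beta$ provided by $\mathcal R$. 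Crucially, the null structure forbids any appearance of $\psi_H\psi_H$, $\psi_{\Hb}\psi_{\Hb}$, or $\psi\psi_{\Hb}$ in the $\nab_4$ equations, exactly as emphasized in the introduction.

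For $\tilde{\mathcal O}_{4,2}$ I would use elliptic Hodge systems on $S_{u,\ub}$: the Codazzi equations in \eqref{null.str3} recover $\nab^4(\chih,\chibh)$ from $\nab^3\beta,\nab^3\betab$ and lower order terms, while the div/curl system for $\eta,\etab$ coming from $\curl\eta=\sigma+\frac12\chibh\wedge\chih$ recovers $\nab^4(\eta,\etab)$ from $\nab^3(K,\sigmac)$. This explains and forces the weights $f(u)$ or $f(\ub)$ that appear on $\nab^4(\eta,\etab)$ in \eqref{O.42.def}. Finally, the heart of the argument is the curvature bound $\mathcal R\leq B$, proved by weighted energy estimates on the renormalized Bianchi pairs \eqref{eq:null.Bianchi3}. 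Pairing $\nab^i\beta$ with $\nab^i(K,\sigmac)$ and integrating against the weight $f(\ub)^2$ on a characteristic rectangle $\{u'\leq u\}\times\{\ub'\leq\ub\}$, and dually pairing $\nab^i\betab$ with $\nab^i(K,\sigmac)$ against $f(u)^2$, produces the energies of $\mathcal R$ on the two boundaries. The derivatives of the weight $f(\ub)^2$ or $f(u)^2$ have a good sign (since $f$ is decreasing toward the singularity), and can be discarded. The error terms split schematically as $f^2\,\Psi\,\Gamma\,\Psi$ and, by the null structure catalogued in Section \ref{schnot}, always appear in admissible combinations: either $f(\ub)^2\psi_{\Hb}\beta^2$ (closed by $\|\psi_{\Hb}\|_{L^1_u}\leq\ep$), $f(\ub)f(u)\psi_H\beta\betab$ with matched weights and singularities, or $f^2\psi\,\Psi\,K$ with $\psi,K$ regular.

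The main obstacle is the top-order ($i=3$) curvature estimate, for two interlinked reasons. First, after commuting \eqref{eq:null.Bianchi3} three times with $\nab$, the renormalization contributions such as $\nab^3(\chih\cdot\chibh)$ and $\nab^3(\trch\,\trchb)$ on the right-hand sides force the appearance of four angular derivatives of Ricci coefficients, which are not available from transport alone and require the elliptic Hodge estimates of $\tilde{\mathcal O}_{4,2}$; so Steps 3 and 4 must be coupled and closed simultaneously, and one must verify that the weights on the two sides match so that no additional degeneration is introduced. Second, one must carefully check that \emph{every} bilinear error term that arises after commutation factors into one of the admissible ``null'' combinations listed above — in particular, that products like $\psi_H\beta^2$ or $\psi_{\Hb}\betab^2$, which would be fatal for the degenerate norms, never appear. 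Once this combinatorial check is carried out and all six paired energies are combined into one Gronwall system, the final bound $\sum_{i\leq 3}\mathcal O_{i,2}+\tilde{\mathcal O}_{4,2}+\mathcal R\leq C(\mathcal O_{ini}+\mathcal R_{ini})+C\ep\,B^\star\leq B$ follows for $\ep$ small, closing the bootstrap.
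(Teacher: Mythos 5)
Your proposal follows essentially the same route as the paper: metric components and background geometry first (Section \ref{secbasic}); lower-order Ricci coefficients $\sum_{i\leq 3}\mathcal O_{i,2}$ from the schematic transport equations and their null structure (Section \ref{secRicci}, Proposition \ref{Ricci}); fourth-order Ricci coefficients $\tilde{\mathcal O}_{4,2}$ coupled to $\mathcal R$ via elliptic Hodge estimates (Section \ref{secRicci32}, Proposition \ref{Ricci32}); and degenerate weighted energy estimates for the renormalized Bianchi system with a good-signed weight derivative (Section \ref{seccurv}, Propositions \ref{ee1}--\ref{R.final}). One point where your sketch is slightly too terse: for $\nab^4(\eta,\etab)$ and $\nab^4(\om,\omb)$ a naive div-curl argument does not close, since $\div\eta$ at the top order itself carries the singular curvature; the paper first forms auxiliary quantities ($\mu=-\div\eta+K$, and $\kappab=-\nab\omb+{}^*\nab\omb^{\dagger}-\frac 12\betab$ with $\omb^{\dagger}$ defined by a transport equation) that satisfy transport equations free of $\psi_{\Hb}\psi_H\psi_H$-type sources, and only then invokes the elliptic estimates — this is the precise mechanism by which Steps 3 and 4 are coupled and is worth making explicit.
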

In the remainder of the paper, we will focus on the proof of Theorem \ref{aprioriestimates} (after constructing initial data sets in the next section). Standard methods show that Theorem \ref{aprioriestimates} implies Theorem \ref{extthm}. We will omit the details and refer the readers to \cite{Chr, LR} for a proof that the a priori estimates imply the existence theorem.

\begin{remark}\label{rmk.extthm}
The assumptions of {Theorem \ref{extthm}} imply the boundedness of {the following} weighted $L^2$ norms of the curvature components:
$$\sum_{i\leq 3}||f(\ub)\nab^i\beta||_{L^2_{\ub}L^2(S_{0,\ub})}+\sum_{i\leq 3}||{\nab^i(K,\sigmac)}||_{L^2_{\ub}L^2(S_{0,\ub})}\leq \tilde{D},$$
and
$$\sum_{i\leq 3}||f({u})\nab^i\betab||_{L^2_{u}L^2(S_{u,0})}+\sum_{i\leq 3}||{\nab^i(K,\sigmac)}||_{L^2_{{u}}L^2(S_{u,0})}\leq \tilde{D},$$
for some $\tilde{D}$ depending only on $D$ and $d${.} These estimates for $\beta$, $\sigmac$ and $\betab$ follow immediately from the constraint equations on the 2-spheres {(see \eqref{null.str3})}. The bound for $K$ follows after integrating the null Bianchi equations for $K$ on each of the initial null hypersurfaces {(see \eqref{eq:null.Bianchi2})}.\footnote{Notice that it is precisely for the initial bound for $K$ that we require an extra derivative for $\chi$ on $H_0$ (and $\chib$ on $\Hb_0$) in the assumptions of the theorem. This is related to the intrinsic loss of derivatives for the characteristic initial value problem for second order hyperbolic systems (see \cite{M}).} {In particular, the assumptions of Theorem \ref{extthm} imply that
$$\mathcal O_{ini}+\mathcal R_{ini}\leq \tilde{D}.$$
}
\end{remark}

\section{Construction of Initial Data Set}\label{sec.data}

In this section, we construct initial data sets satisfying the assumptions of Theorems \ref{extthm} and \ref{blowupthm}. In particular, we show that the constraint equations can be solved for $|\chih(0,\ub)|\sim (f(\ub))^{-2}$ and $|\chibh(u,0)| \sim (f(u))^{-2}$. Our approach in this section follows closely that of Christodoulou in Chapter 2 of \cite{Chr}.

Assume for simplicity that $S_{0,0}$ is a standard sphere of radius $1$.
Introduce\footnote{While we only write down one coordinate chart, it is implicit that we have two stereographic charts - the north pole chart and the south pole chart. In the following, when we derive the estimates for the geometric quantities, we only prove the bounds in a sufficiently large ball $B_\rho$ in each of these charts.} the standard stereographic coordinates $(\theta^1, \theta^2)$ such that the standard metric $\stackrel{\circ}{\gamma}$ on the sphere takes the form
$$\stackrel{\circ}{\gamma}_{AB}=\frac{\delta_{AB}}{(1+\frac{1}{4}|\theta|^2)^2}.$$

Clearly, it suffices to construct initial data on $H_0$ {(with $0\leq \ub< \ub_*$ for $\ub_*\leq \ep$)}. The construction on $\Hb_0$ is similar.
On $H_0$, {we set $\Om=1$ and therefore $e_4=\frac{\partial}{\partial \ub}$}.
We will construct a metric on $H_0$ in {the $(\ub,\th^1,\th^2)$} coordinates taking the form
\bea
{\gamma}_{AB}=\Phi^2\hat{\gamma}_{AB},{\quad \mbox{where }\hat{\gamma}_{AB}=\frac{m_{AB}}{(1+\frac{1}{4}|\theta|^2)^2}}\label{gamma.data}
\eea
{and} $\det m_{AB} =1 $ and $\Phi \restriction_{S_{0,0}}=1$.
In order to ensure that $m$ satisfies $\det m=1$, we write
$$m=\exp \Psi,$$ 
with $\Psi\in \hat{S}$, where $\hat{S}$ denotes the set of all matrices taking the form
\[ \left( \begin{array}{cc}
a & b  \\
b & -a \end{array} \right).\] 
We will impose upper and lower bounds on $\Psi$. Since there are no smooth globally non-vanishing $\Psi\in \hat{S}$ on the $2$-sphere, we use the convention that $\ls$ denotes that the quantity is bounded above by a uniform constant, while $\sim$ denotes that the quantity is bounded above by a uniform constant, and is bounded below at every $(\th^1,\th^2)$ by a constant depending on $(\th^1,\th^2)$ (where the constant is moreover allowed to vanish at finitely many isolated points). We require $\Psi\in \hat{S}$ {to satisfy}\footnote{{Here and in the rest of this section, we use the notation that $J=(j_1,j_2)\in (\mathbb N\cup\{0\})\times(\mathbb N\cup\{0\})$ is a multi-index and $(\frac{\partial}{\partial\theta})^J=(\frac{\partial}{\partial\theta^1})^{j_1}(\frac{\partial}{\partial\theta^2})^{j_2}$. We moreover denote $|J|=j_1+j_2$.}}
\bea
{\sum_{|J|\leq N} |\big(\frac{\partial}{\partial\theta}\big)^J\Psi | \lesssim 1,}\quad \sum_{|J|\leq N} |\big(\frac{\partial}{\partial\theta}\big)^J\frac{\partial}{\partial\ub} \Psi | \lesssim  f(\ub)^{-2}{,\quad |\frac{\partial}{\partial\ub} \Psi | \sim  f(\ub)^{-2}} \label{duPsibound}
\eea
{for some sufficiently large integer $N$.}
Following \cite{Chr}, we have
\bea
\chih_{AB}=\frac 12 \Phi^2 \frac{\partial}{\partial\ub}\hat{\gamma}_{AB},\quad \trch=\frac{2}{\Phi}\frac{\partial\Phi}{\partial\ub}.\label{chihdef}
\eea
We can also derive that
$$|\chih|_\gamma^2=\frac 14 ({\hat\gamma}^{-1})^{AC}({\hat\gamma}^{-1})^{BD}\frac{\partial}{\partial \ub}\hat\gamma_{AB}\frac{\partial}{\partial \ub}\hat\gamma_{CD}.$$
Thus by \eqref{duPsibound}, we have
\bea
|\chih|_\gamma^2 \sim f(\ub)^{-4}.\label{chihbound}
\eea
In particular, this implies that the requirement in Theorem \ref{blowupthm} is satisfied if $\int_0^{\ub_*} f(\ub)^{{-}4} d\ub = \infty$.
By the equation 
$$\Ls_{\frac{\partial}{\partial\ub}}\trch=-\frac 12 (\trch)^2 -|\chih|^2,$$
$\Phi$ can be solved from the ODE
\bea
\frac{\partial^2\Phi}{\partial\ub^2}+\frac 1 8 (({\hat\gamma}^{-1})^{AC}({\hat\gamma}^{-1})^{BD}\frac{\partial}{\partial \ub}\hat\gamma_{AB}\frac{\partial}{\partial \ub}\hat\gamma_{CD})\Phi=0.\label{PhiODE}
\eea
We prescribe $\trch$ on $S_{0,0}$ to obey the initial conditions
\bea
{\Phi\restriction_{S_{0,0}}=1,\quad\frac{\partial\Phi}{\partial\ub}\restriction_{S_{0,0}}=\frac 12 \trch\restriction_{S_{0,0}}\ls 1.\label{trch.IC}}
\eea
Finally, we prescribe $\zeta$ on $S_{0,0}$ such that
\bea
\sum_{|J|\leq N-1}|\big(\frac{\partial}{\partial \th}\big)^J\zeta|_\gamma^2 \ls 1.\label{zetabound}
\eea
We check that th{ese} initial data obey all the estimates required by Theorem \ref{extthm}:

\noindent{\bf Estimates for $\nab^i\chi$ and the metric}

To satisfy the upper bounds in Theorem \ref{extthm}, we need to show that
\bea
\sum_{i\leq N}|\nab^i\chi|_{\gamma}(0,\ub)\lesssim f(\ub)^{-2}\label{psiH.desired}
\eea
We will show the estimates separately for $\trch$ and $\chih$. By \eqref{chihbound}, \eqref{psiH.desired} {holds for $\chih$ when $i=0$}. To derive this bound for $\trch$, notice that by the ODE \eqref{PhiODE} for $\Phi$, the initial conditions {\eqref{trch.IC}}, and the bound {\eqref{chihbound}} for $|\chih|^2$, we have
\bea
\frac 12\leq \Phi\leq 1\label{Phi.upper.lower}
\eea
and 
$$|\frac{\partial\Phi}{\partial\ub}|\lesssim 1+\int_0^{\ub} f(\ub')^{-4} d\ub'\leq  1+f(\ub)^{-2}\int_0^{\ub_*} f(\ub')^{-2} d\ub'\leq 1+\ep^2 f(\ub)^{-2}$$
{for $\ep$ sufficiently small. In the above estimate, we have used $\int_0^{\ub_*} f(\ub')^{-2}\, d\ub'\leq \ep^2$.} By {\eqref{chihdef}}, we thus have
$$|\trch|\ls f(\ub)^{-2}.$$
We now move on to control the angular derivatives of $\chi$. By \eqref{duPsibound},
$$\sum_{|J|\leq N}| \big(\frac{\partial}{\partial\theta}\big)^J\frac{\partial}{\partial\ub} m_{AB} | \ls f(\ub)^{-2}.$$
Using this bound and commuting the ODE \eqref{PhiODE} with $\frac{\partial}{\partial \th}$, we also have that for up to $N$ coordinate angular derivatives $\frac{\partial}{\partial \th}$,
\bea
\sum_{|J|\leq N}|\big(\frac{\partial}{\partial \th}\big)^J\Phi| \lesssim 1.\label{phiderbound}
\eea
This implies {via \eqref{gamma.data} and \eqref{duPsibound}} that the metric $\gamma$ obeys the bounds
\bea\label{metricbound}
\displaystyle\sum_{|J|\leq N}|\big(\frac{\partial}{\partial \th}\big)^J\gamma_{AB}| \lesssim 1,\quad \sum_{|J|\leq N}|\big(\frac{\partial}{\partial \th}\big)^J(\gamma^{-1})^{AB}| \lesssim 1.
\eea
Together with \eqref{duPsibound} and \eqref{chihdef}, \eqref{phiderbound} implies
\bea
\sum_{|J|\leq N}|\big(\frac{\partial}{\partial \th}\big)^J\chih| \lesssim f(\ub)^{-2}.\label{chihderbound}
\eea
By {\eqref{chihdef}}, we also have
\bea
\sum_{|J|\leq N}|\big(\frac{\partial}{\partial \th}\big)^J\trch|\ls  f(\ub)^{-2}.\label{trchderbound}
\eea
Finally, we notice that by \eqref{metricbound}, the angular \emph{covariant} derivatives of $\trch$ and $\chih$ {can be controlled by} the angular coordinate derivatives of $\trch$ and $\chih$. Therefore, \eqref{psiH.desired} follows from \eqref{chihderbound} and \eqref{trchderbound}.

{\noindent{\bf Estimates for $\nab^i K$}}
{
To control $\nab^i K$, we simply notice that by \eqref{metricbound}, we have
$$\sum_{i\leq N-2}|\nab^i K|_{\gamma}\lesssim 1.$$
}

\noindent{\bf Estimates for $\nab^i\zeta$}

On $H_0$, since $\Omega=1$, $\eta=\zeta$. Thus combining the transport equation for $\zeta$ in \eqref{null.str2} and the Codazzi equation for $\beta$ in \eqref{null.str3} and rewriting in $\Ls$ (instead of $\nab_4$), we have
$$\Ls_{\frac{\partial}{\partial\ub}} \zeta+\trch\zeta=\div\chi-\nab\trch.$$
Recall from \eqref{zetabound} that the initial data for $\zeta$ and its angular derivatives are bounded. Therefore, by the estimates for $\trch$ and $\chih$ (and their angular derivatives) above, we have
$$\sum_{|J|\leq N-1}|\big(\frac{\partial}{\partial \th}\big)^J\zeta|\ls 1.$$
The bounds for the metric and Christoffel symbols on the sphere imply
$$\sum_{j\leq N-1}||\nab^j\zeta||_{L^\infty_{\ub}L^\infty(S_{0,\ub})}\ls 1$$
as desired.

\noindent{\bf Estimates for $\nab^i \trchb$}

Similar to $\zeta$, $\trchb$ obeys a transport equations along the null generators of $H_0$. More precisely, \eqref{null.str2} and the Gauss equation in \eqref{null.str3} imply that
$$\Ls_{\frac{\partial}{\partial\ub}} \trchb+\trch\trchb=-2K-2\div\zeta+2|\zeta|^2.$$
Thus, the previous estimates imply
$$\sum_{j\leq N-2}||\nab^j\trchb||_{L^\infty_{\ub}L^\infty(S_{0,\ub})}\ls 1$$

{Now, combining all the estimates that we have obtained so far, requiring $f$ to satisfy
$$\int_0^{\ub_*} f(\ub)^{-4} d\ub=\infty$$
and taking $N$ to sufficiently large, we have thus constructed initial data set on $H_0$ that obeys the assumptions of Theorems \ref{extthm} and \ref{blowupthm} on $H_0$. As mentioned above, it is easy to construct initial data set analogously on $\Hb_0$ so that the full set of assumptions of Theorems \ref{extthm} and \ref{blowupthm} are satisfied.}

\section{The Preliminary Estimates}\label{secbasic}

We now turn to the proof Theorem \ref{aprioriestimates}, which will form the content of Sections \ref{secbasic}-\ref{seccurv}. In this section, we derive the necessary preliminary estimates. In Section \ref{secRicci} (see Proposition \ref{Ricci}), we will prove the bound
$$\sum_{i\leq 3}\mathcal O_{i,2}\leq C(\mathcal O_{ini});$$
in Section \ref{secRicci32} (see Proposition \ref{Ricci32}), we will prove 
$$\tilde{\mathcal O}_{4,2}\leq C(\mathcal O_{ini})(1+\mathcal R);$$
and in Section \ref{seccurv} (see Proposition \ref{R.final}), we will derive the estimate
$$\mathcal R\leq C(\mathcal O_{ini},\mathcal R_{ini}).$$
Combining these estimates then imply the conclusion of Theorem \ref{aprioriestimates}.

{We now begin with the preliminary estimates.} All estimates in this section will be proved under the following bootstrap assumption:

\begin{equation}\tag{A1}\label{BA1}
\sum_{i\leq 1}\mathcal O_{i,\infty}+\sum_{i\leq 2}\mathcal O_{i,4}+\sum_{i\leq 3}\mathcal O_{i,2}\leq \Delta_1,
\end{equation}
where $\Delta_1$ is a constant that will be chosen later.

\subsection{Estimates for Metric Components}\label{metric}
We first show that we can control $\Omega$ under the bootstrap assumption (\ref{BA1}):
\begin{proposition}\label{Omega}
There exists $\epsilon_0=\epsilon_0(\Delta_1)$ such that for every $\epsilon\leq\epsilon_0$,
$$\frac 12\leq \Omega\leq 2.$$
Moreover, $\Omega$ is continuous up to $u=u_*$ and $\ub=\ub_*$.
\end{proposition}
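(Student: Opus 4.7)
The plan is to derive $\Omega$ from a first-order transport equation in $\ub$ along $L=\partial_{\ub}$, with $\Omega=1$ as initial data on $\Hb_0$, and then close pointwise bounds via the bootstrap assumption \eqref{BA1} combined with the smallness of $\|f(\ub)^{-1}\|_{L^2_{\ub}}$.

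First I would use the relation \eqref{RC.relation}, namely $\omega=-\tfrac12\nabla_4\log\Omega$, together with the coordinate expression $e_4=\Omega^{-1}\partial_{\ub}$, to obtain
\[
\frac{\partial}{\partial\ub}\log\Omega=-2\Omega\,\omega.
\]
Since $\Omega=1$ on $\Hb_0=\{\ub=0\}$, integration yields, for fixed $(u,\th^1,\th^2)$,
\[
\log\Omega(u,\ub,\th)=-2\int_{0}^{\ub}\Omega(u,\ub',\th)\,\omega(u,\ub',\th)\,d\ub'.
\]
Since $\omega\in\{\psi_H\}$, the bootstrap assumption gives $\|f(\ub)\omega\|_{L^{2}_{\ub}L^{\infty}_{u}L^{\infty}(S)}\leq \Delta_1$, and by Cauchy--Schwarz together with $\|f(\ub)^{-1}\|_{L^{2}_{\ub}}<\ep$ we obtain
\[
\sup_{u,\th}\int_0^{\ub_*}|\omega(u,\ub',\th)|\,d\ub'\leq \ep\,\Delta_1.
\]

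Next I would close the pointwise bound by a standard continuity/bootstrap argument: starting from $\Omega=1$ on $\Hb_0$, assume on some relatively open set containing $\Hb_0$ that $|\log\Omega|\leq 1$, which gives $\Omega\leq e$. The integral identity then yields $\sup|\log\Omega|\leq 2e\,\ep\,\Delta_1$, which for $\ep\leq \ep_0(\Delta_1)$ is strictly smaller than $\log 2$. This improves the assumption and, since the spacetime is smooth in $\{u<u_*,\ \ub<\ub_*\}$, propagates the bound throughout the bootstrap region, yielding $\tfrac12\leq\Omega\leq 2$.

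For the continuity statement, the same integral formula shows that $\log\Omega(u,\cdot,\th)$ is the indefinite integral in $\ub$ of an $L^{1}_{\ub}$ function (uniformly in $(u,\th)$), so $\Omega$ extends continuously to $\ub=\ub_*$. For the continuity up to $u=u_*$, I would derive the analogous transport equation along $\Lb=\partial_u+b^A\partial_{\th^A}$, using $\omegab=-\tfrac12\nabla_3\log\Omega$ and $e_3=\Omega^{-1}(\partial_u+b^A\partial_{\th^A})$, to obtain
\[
\big(\partial_u+b^A\partial_{\th^A}\big)\log\Omega=-2\Omega\,\omegab.
\]
Integrating along the $\Lb$--flow (which preserves $\ub$) from $H_0$, where $\Omega=1$, and using the analogous Cauchy--Schwarz estimate with the weight $\|f(u)^{-1}\|_{L^{2}_{u}}<\ep$ on $\omegab\in\{\psi_{\Hb}\}$, gives continuity up to $u=u_*$. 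The only mild subtlety is that the $\Lb$--flow is defined via $b^A$, which will also need to be controlled in subsequent estimates, but a minimal $L^{\infty}$ bound on $b^A$ (coming from integrating $\partial_{\ub}b^A$) is enough for this continuity argument; no step here poses a genuine obstacle.
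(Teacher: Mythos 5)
Your proposal is correct and follows essentially the same route as the paper: transport from $\Hb_0$ along $e_4$, Cauchy--Schwarz against $\|f(\ub)^{-1}\|_{L^2_{\ub}}<\ep$, and the bootstrap bound on $\om$. Two minor remarks: the paper streamlines the first step by integrating $\partial_{\ub}\Omega^{-1}=2\om$ directly (the variable $\Omega^{-1}$ makes the ODE \emph{linear}, so no auxiliary continuity/bootstrap argument on $\Omega$ is needed to close the pointwise bound); and for the continuity assertion your argument handles the $u$ and $\ub$ directions but should also invoke the boundedness of $\nab\log\Omega=\tfrac12(\eta+\etab)$ (available from \eqref{BA1}) to control variation in the angular variables, which is what the paper does in its Cauchy-sequence argument to obtain continuity jointly in all four coordinates.
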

\begin{proof}
Consider the equation
\begin{equation}\label{Omegatransport}
 \omega=-\frac{1}{2}\nabla_4\log\Omega=\frac{1}{2}\Omega\nabla_4\Omega^{-1}=\frac{1}{2}\frac{\partial}{\partial \ub}\Omega^{-1}.
\end{equation}
Fix $\ub$. Notice that both $\omega$ and $\Omega$ are scalars and therefore the $L^\infty$ norm is independent of the metric. We can integrate equation (\ref{Omegatransport}) using the fact that $\Omega^{-1}=1$ on $\Hb_0$ to obtain
$$||\Omega^{-1}-1||_{L^\infty(S_{u,\ub})}\leq C\int_0^{\ub}||\omega||_{L^\infty(S_{u,\ub'})}d\ub'\leq C||f(\ub)^{-1}||_{L^2_{\ub}}||f(\ub)\omega||_{L^\infty_uL^2_{\ub}L^\infty(S)}\leq C\Delta_1\epsilon.$$
This implies both the upper and lower bounds for $\Omega$ for sufficiently small $\epsilon$. To show continuity, take a sequence of points $(u_n, \ub_n, \th^1_n, \th^2_n)$ such $u_n \to u_\infty$, $\ub_n\to \ub_\infty$, $\th_n^1\to \th_\infty^1$ and $\th_n^2\to \th_\infty^2$. Then
\begin{equation*}
\begin{split}
&|\Omega^{-1}(u_n,\ub_n,\th_n^1,\th_n^2)-\Omega^{-1}(u_m,\ub_m,\th_m^1,\th_m^2)|\\
\leq & |\Omega^{-1}(u_n,\ub_n,\th_n^1,\th_n^2)-\Omega^{-1}(u_n,\ub_n,\th_m^1,\th_m^2)|+|\Omega^{-1}(u_n,\ub_n,\th_m^1,\th_m^2)-\Omega^{-1}(u_n,\ub_m,\th_m^1,\th_m^2)|\\
&+|\Omega^{-1}(u_n,\ub_m,\th_m^1,\th_m^2)-\Omega^{-1}(u_m,\ub_m,\th_m^1,\th_m^2)|\\
\leq &C||\nab\log\Omega||_{L^\infty(S_{u_n,\ub_n})}\mbox{dist}_{S_{u_n,\ub_n}}(\th_n,\th_m)+2|\int_{\ub_n}^{\ub_m} ||\om||_{L^\infty(S_{u_n,\ub'})} d\ub'|\\
&+2|\int_{u_n}^{u_m} ||\omb||_{L^\infty(S_{u',\ub_m})} du'|.
\end{split}
\end{equation*}
Since by the bootstrap assumption \eqref{BA1}, $\nab\log\Omega=\frac 12(\eta+\etab)$ is uniformly bounded, $||\om||_{L^\infty(S_{u,\ub})}$ is uniformly integrable in $\ub$ for all $u$ and $||\omb||_{L^\infty(S_{u,\ub})}$ is uniformly integrable in $u$ for all $\ub$, the right hand side can be made arbitrarily small by taking $n,m \geq N$ for $N$ sufficiently large. The conclusion thus follows.
\end{proof}

We then show that we can control $\gamma$ under the bootstrap assumption (\ref{BA1}):
\begin{proposition}\label{gamma}
There exists $\ep_0=\ep_0(\Delta_1)$ such that for $\ep\leq \ep_0$, in the $(u,\ub,\th^1,\th^2)$ coordinate system, we have
$$c\leq \det\gamma\leq C,\quad |\gamma_{AB}|,|(\gamma^{-1})^{AB}|\leq C{,}$$
{where the constants depend only on $d$ and $D$.}
Moreover, $\gamma$ remains continuous up to $u=u_*$ and $\ub=\ub_*$.
\end{proposition}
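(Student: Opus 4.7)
The plan is to exploit the fact that, in the $(u,\ub,\th^1,\th^2)$ coordinate system with $\Ls_L\th^A=0$ in the interior and $\Ls_{\Lb}\th^A=0$ on $\Hb_0$, the components of $\gamma$ satisfy simple first-order transport equations along the two null directions. Specifically, since the coordinate basis vectors $\partial_{\ub}$ and $\partial_A$ commute and $g(\partial_{\ub},\partial_B)=0$, the Koszul formula together with $e_4=\Omega^{-1}\partial_{\ub}$ yields
\[
\frac{\partial}{\partial\ub}\gamma_{AB}=2\Omega\,\chi_{AB},\qquad \frac{\partial}{\partial\ub}\log\det\gamma=2\Omega\,\trch,
\]
and, analogously on $\Hb_0$ where $b^A=0$,  $\tfrac{\partial}{\partial u}\gamma_{AB}=2\Omega\,\chib_{AB}$. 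We will integrate these equations in two stages: first along $\Hb_0$ starting from the data on $S_{0,0}$ (using the initial assumption on $\chib$ in Theorem \ref{extthm}), and then along each generator of $H_u$ starting from $\Hb_0$ (using the bootstrap \eqref{BA1}). The same kind of argument as in Proposition \ref{Omega} will then deliver continuity up to the singular boundaries.

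The key quantitative input is that the weight $f$ is designed so that $\|f(\ub)^{-1}\|_{L^2_{\ub}}\leq \ep$ and $\|f(u)^{-1}\|_{L^2_{u}}\leq \ep$. Combined with the bootstrap bound
$\|f(\ub)\chi\|_{L^2_{\ub}L^\infty_u L^\infty(S)}\lesssim \Delta_1$ (coming from $\mathcal O_{0,\infty}[\chih]+\mathcal O_{0,\infty}[\trch]\leq \Delta_1$, together with Proposition \ref{Omega} for the $\Omega$ factor) and Cauchy–Schwarz, this yields
\[
\|\chi\|_{L^1_{\ub}L^\infty_u L^\infty(S)}\leq C\Delta_1\ep,
\]
and similarly $\|\chib\|_{L^1_u L^\infty_{\ub} L^\infty(S)}\leq C\Delta_1\ep$ on $\Hb_0$ from the initial data. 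Integrating the scalar ODE for $\log\det\gamma$ from $\ub=0$ and applying the analogous estimate on $\Hb_0$ gives $\log\det\gamma(u,\ub)=\log\det\gamma(0,0)+O(\Delta_1\ep)$, which delivers the two-sided bound on $\det\gamma$ once $\ep=\ep_0(\Delta_1)$ is taken small enough. Integrating the matrix ODE componentwise gives
\[
|\gamma_{AB}(u,\ub,\th)-\gamma_{AB}(0,0,\th)|\leq C\Delta_1\ep,
\]
which, combined with the upper bound $|\gamma_{AB}|_{S_{0,0}}\leq D$ from the hypotheses of Theorem \ref{extthm}, produces the uniform bound on $\gamma_{AB}$. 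The bound on $(\gamma^{-1})^{AB}$ then follows from Cramer's rule using the already-established lower bound on $\det\gamma$ and upper bound on $\gamma_{AB}$.

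For continuity up to $u=u_*$ and $\ub=\ub_*$, the argument mimics that of Proposition \ref{Omega}. For a sequence $(u_n,\ub_n,\th_n)\to(u_\infty,\ub_\infty,\th_0)$ we split the difference $\gamma_{AB}(u_n,\ub_n,\th_n)-\gamma_{AB}(u_m,\ub_m,\th_m)$ through intermediate points, controlling the $\th$-variation by $\|\nab\gamma\|_{L^\infty}$ (bounded in terms of the Christoffel symbols on $S_{u,\ub}$, hence in terms of $\Gamma\cdot\gamma$ which is harmless), the $\ub$-variation by $\int_{\ub_n}^{\ub_m}\|\Omega\chi\|_{L^\infty(S)}\,d\ub'$, and the $u$-variation by $\int_{u_n}^{u_m}\|\Omega\chib\|_{L^\infty(S)}\,du'$. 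The uniform integrability of $\chi$ in $\ub$ and of $\chib$ in $u$ supplied by the Cauchy–Schwarz estimate above makes each piece arbitrarily small for $n,m$ large. The main subtle point in the whole argument is ensuring that the norms controlled by \eqref{BA1} really do give $L^1$-in-the-singular-direction smallness via the $f$-weights; once this observation is in place, everything reduces to straightforward ODE integration and the Cauchy criterion.
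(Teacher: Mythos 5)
Your proposal is correct and follows essentially the same route as the paper: integrate the first-variation equations $\partial_{\ub}\gamma_{AB}=2\Omega\chi_{AB}$ (and $\partial_u\gamma_{AB}=2\Omega\chib_{AB}$ on $\Hb_0$) in two stages — first along $\Hb_0$ from $S_{0,0}$ using the initial data assumptions, then in the $\ub$-direction using \eqref{BA1} — converting the $L^2_{\ub}$ (resp.\ $L^2_u$) bootstrap bound into $L^1$ smallness via the $f$-weight and Cauchy–Schwarz, and then deducing continuity exactly as in Proposition \ref{Omega}. The one small detail you elide is that the coordinate components $|\chi_{AB}|$ appearing on the right-hand side of the matrix ODE are controlled by $|\chi|_\gamma$ only up to the largest eigenvalue $\Lambda$ of $\gamma$ (which is what you are trying to bound), so the componentwise integration actually yields $|\gamma_{AB}(u,\ub)-\gamma_{AB}(u,0)|\leq C(\sup\Lambda)\Delta_1\ep$, after which one absorbs $\sup\Lambda$ by a continuity/bootstrap argument for $\ep$ small; the paper makes this explicit with the bookkeeping quantity $\Lambda$, but it is a routine step.
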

\begin{proof}
{We first prove the bound for $\gamma$ on the initial hypersurface $\Hb_0$. Using
$$\Ls_{\Lb}\gamma=2\Omega\chib,$$
we get\footnote{Note that $b^A=0$ on $\Hb_0$.} 
$$\frac{\partial}{\partial u}\gamma_{AB}=2\Omega\chib_{AB},\quad \frac{\partial}{\partial u}\log(\det\gamma)=\Omega\trchb$$
on $\Hb_0$.
We therefore have
\begin{equation}\label{detgaper}
|\frac{\det\gamma(u,0)}{\det\gamma(0,0)}|\leq C\exp(\int_0^{u}|\trchb|(u',0)\,du')\leq C(D).
\end{equation}
This implies that the $\det \gamma$ is bounded above and below. Let $\Lambda$ be the larger eigenvalue of $\gamma$. Clearly,
\begin{equation}\label{La}
\Lambda\leq C\sup_{A,B=1,2}|\gamma_{AB}|,\quad \sum_{A,B=1,2}|\chib_{AB}|^2\leq C\Lambda^2 ||\chib||_{L^\infty(S_{u,\ub})}^2.
\end{equation}
Then
\begin{equation}\label{gamma.diff}
|\gamma_{AB}(u,0)-(\gamma)_{AB}(0,0)|\leq C\ep(\sup_{u'\leq u}\Lambda)(\int_0^{u}f(u')^2||\chib||_{L^\infty(S_{u',0})}^2du')^{\frac 12}\leq C(D)(\sup_{u'\leq u}\Lambda)\ep.
\end{equation}
Using the first upper bound in (\ref{La}), we thus obtain the upper bound for $|\gamma_{AB}|$ after choosing $\ep$ to be sufficiently small. The upper bound for $|(\gamma^{-1})^{AB}|$ follows from the upper bound for $|\gamma_{AB}|$ and the lower bound for $\det\gamma$.

Now, in order to obtain the bounds for $\gamma_{AB}$ in the spacetime, we argue similarly but using the propagation equation in the $\ub$ direction and compare $\gamma(u,\ub)$ with $\gamma(u,0)$. Here, we use bootstrap assumption \eqref{BA1} instead of the assumptions on the initial data. More precisely, we have
\begin{equation}\label{1st.var}
\frac{\partial}{\partial \ub}\gamma_{AB}=2\Omega\chi_{AB},\quad \frac{\partial}{\partial \ub}\log(\det\gamma)=\Omega\trch.
\end{equation}
We then derive as above that
\begin{equation*}
|\frac{\det\gamma(u,\ub)}{\det\gamma(u,0)}|\leq C\exp(C\Delta_1\epsilon),\quad |\gamma_{AB}(u,\ub)-\gamma_{AB}(u,0)|\leq  C(\sup_{\substack{u'\leq u\\ \ub'\leq \ub}}\Lambda)\Delta_1\epsilon,
\end{equation*}
where $\Lambda$ is the larger eigenvalue for $\gamma_{AB}$. As before, we thus obtain the upper bounds for $|\gamma_{AB}|$ and $|(\gamma^{-1})^{AB}|$. Finally, the continuity of $\gamma$ up to the boundary follows as in the proof of continuity for $\Omega$ in Proposition \ref{Omega}.
} 
\end{proof}

With the estimates on $\gamma$, it follows that the $L^p$ norms defined with respect to the metric and the $L^p$ norms defined with respect to the coordinate system are equivalent.
\begin{proposition}\label{eqnorm}
Given a covariant tensor $\phi_{A_1...A_r}$ on $S_{u,\ub}$, we have
$$\int_{S_{u,\ub}} \langle\phi,\phi\rangle_{\gamma}^{p/2} \sim \sum_{A_i=1,2}\iint |\phi_{A_1...A_r}|^p \sqrt{\det\gamma} d\th^1 d\th^2.$$
\end{proposition}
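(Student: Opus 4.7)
The plan is to reduce the claimed equivalence to pointwise bounds on the metric components, which have already been established in Proposition \ref{gamma}, and on the volume element, which is controlled by the bounds on $\det \gamma$.

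First, I would expand the inner product in coordinates:
\begin{equation*}
\langle \phi,\phi\rangle_\gamma = (\gamma^{-1})^{A_1 B_1}\cdots (\gamma^{-1})^{A_r B_r}\phi_{A_1\ldots A_r}\phi_{B_1\ldots B_r}.
\end{equation*}
By Proposition \ref{gamma}, the matrix $(\gamma^{-1})^{AB}$ is symmetric positive-definite with eigenvalues bounded above and below by constants depending only on $d$ and $D$. Consequently each factor of $\gamma^{-1}$ is comparable, as a quadratic form, to the Kronecker $\delta^{AB}$, and taking the tensor product of these bounds over $r$ factors yields pointwise inequalities of the form
\begin{equation*}
c(d,D,r)\sum_{A_1,\ldots,A_r}|\phi_{A_1\ldots A_r}|^2 \;\leq\; \langle \phi,\phi\rangle_\gamma \;\leq\; C(d,D,r)\sum_{A_1,\ldots,A_r}|\phi_{A_1\ldots A_r}|^2.
\end{equation*}
Raising both sides to the power $p/2$ (which preserves the equivalence with new constants) gives a pointwise comparison between $\langle \phi,\phi\rangle_\gamma^{p/2}$ and $\bigl(\sum |\phi_{A_1\ldots A_r}|^2\bigr)^{p/2}$, and by equivalence of $\ell^2$ and $\ell^p$ norms on the finite index set $\{1,2\}^r$, the latter is comparable to $\sum |\phi_{A_1\ldots A_r}|^p$.

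Next I would handle the volume form. The Riemannian volume element on $S_{u,\ub}$ is $\sqrt{\det\gamma}\, d\theta^1 d\theta^2$, and by Proposition \ref{gamma} we have $c\leq \det\gamma \leq C$, so the volume element is everywhere comparable to the coordinate volume element (and in particular nothing is lost by integrating against $\sqrt{\det\gamma}\,d\theta^1 d\theta^2$ on either side). Integrating the pointwise equivalence over each coordinate chart of the atlas on $S_{u,\ub}$ and summing (using a partition of unity subordinate to the finite atlas, whose sizes are controlled by $d$ and $D$) produces
\begin{equation*}
\int_{S_{u,\ub}} \langle \phi,\phi\rangle_\gamma^{p/2} \;\sim\; \sum_{A_i=1,2}\iint |\phi_{A_1\ldots A_r}|^p\sqrt{\det\gamma}\, d\theta^1 d\theta^2,
\end{equation*}
which is the claim. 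There is no real obstacle here: the entire content of the proposition is that the quantitative bounds on $\gamma_{AB}$, $(\gamma^{-1})^{AB}$, and $\det\gamma$ proved in Proposition \ref{gamma} upgrade trivially to equivalence of the two $L^p$ norms, uniformly in $(u,\ub)$.
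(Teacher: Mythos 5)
The paper states Proposition \ref{eqnorm} without an explicit proof, treating it as an immediate consequence of Proposition \ref{gamma}. Your argument fills in exactly the details the paper leaves implicit, and it is correct: the componentwise bound $|(\gamma^{-1})^{AB}|\leq C$ gives an upper bound on the eigenvalues of $\gamma^{-1}$, while combined with $\det\gamma\leq C$ (so $\det\gamma^{-1}\geq 1/C$) it gives a lower bound; the tensor-power argument, the $\ell^2$--$\ell^p$ equivalence on the finite index set $\{1,2\}^r$, and the observation that the measure on both sides is the same $\sqrt{\det\gamma}\,d\theta^1\,d\theta^2$ are all exactly as intended. This is the same approach the paper has in mind, just spelled out.
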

We can also bound $b$ under the bootstrap assumption, thus controlling the full spacetime metric: 
\begin{proposition}\label{b}
In the {coordinate} system $(u,\ub,\th^1,\th^2)$,
$$|b^A|\leq C\Delta_1\epsilon.$$
Moreover, $b^A$ is continuous up to $u=u_*$ and $\ub=\ub_*$.
\end{proposition}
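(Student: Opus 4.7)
The plan is threefold: derive a first-order transport equation for $b^A$ in the $\ub$-direction by computing $[L,\Lb]$ in two ways, integrate from $\Hb_0$ where $b^A\equiv 0$, and close the bound using Propositions \ref{Omega}, \ref{gamma}, and the bootstrap \eqref{BA1}. The continuity claim then follows by the Cauchy sequence argument of Proposition \ref{Omega}, with the new difficulty being that $b^A$ satisfies a transport equation only in the $\ub$-direction.

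For the transport equation, in coordinates $L=\partial_\ub$ and $\Lb=\partial_u+b^A\partial_{\theta^A}$, so $[L,\Lb]=(\partial_\ub b^A)\partial_{\theta^A}$. On the other hand, writing $L=\Om e_4$ and $\Lb=\Om e_3$, invoking the identities $e_4\Om=-2\Om\om$, $e_3\Om=-2\Om\omegab$ from \eqref{RC.relation} and the null-frame decomposition
\begin{equation*}
[e_4,e_3]=-2(\eta-\etab)^A e_A+2\om e_3-2\omegab e_4
\end{equation*}
(which one reads off by pairing $[e_4,e_3]$ with $e_3$, $e_4$, $e_B$ and using the definitions of the Ricci coefficients), the $e_3$- and $e_4$-parts cancel and one is left with $[L,\Lb]=-4\Om^2\zeta^A e_A$, after using $\eta-\etab=2\zeta$. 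Equating the two expressions and integrating in $\ub$ from the gauge condition $b^A\restriction_{\Hb_0}=0$ yields
\begin{equation*}
b^A(u,\ub,\theta)=-4\int_0^{\ub}\Om^2\zeta^A(u,\ub',\theta)\,d\ub'.
\end{equation*}
The bound $|b^A|\leq C\Delta_1\epsilon$ is then immediate from $|\Om|\leq 2$ (Proposition \ref{Omega}), the equivalence of coordinate and intrinsic components of $\zeta$ (Proposition \ref{gamma}), the bootstrap bound $\|\zeta\|_{L^\infty_uL^\infty_{\ub}L^\infty(S)}\leq C\Delta_1$ via $\zeta=\frac 12(\eta-\etab)$ and \eqref{BA1}, and the smallness $\ub\leq\ub_*\leq\epsilon$.

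For continuity, I split $|b^A(u_n,\ub_n,\theta_n)-b^A(u_m,\ub_m,\theta_m)|$ by triangle inequality into $\theta$-, $\ub$-, and $u$-contributions. The $\theta$- and $\ub$-contributions are immediate from the integral representation and the bootstrap: they are bounded by $C(\|\nab b\|_{L^\infty(S)}\,\mathrm{dist}(\theta_n,\theta_m)+\Delta_1|\ub_m-\ub_n|)$, where the first factor is itself controlled by commuting the integral representation with $\nab$ and invoking the bootstrap bounds on $\nab\zeta$ and $\nab\Om$. The main obstacle is continuity in $u$ up to $u_*$, since $b^A$ does not satisfy a direct transport equation in $u$. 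The plan is to differentiate the transport equation in $u$, use $[\partial_u,\partial_\ub]=0$ and the initial condition $\partial_u b^A\restriction_{\Hb_0}=0$ to obtain
\begin{equation*}
\partial_u b^A(u,\ub,\theta)=-4\int_0^{\ub}\partial_u(\Om^2\zeta^A)(u,\ub',\theta)\,d\ub',
\end{equation*}
and then expand $\partial_u=\Om e_3-b^B\partial_{\theta^B}$. The resulting integrand decomposes into terms in $\omegab$, angular derivatives of $\Om$ and $\zeta$, and $\nab_3\eta$, $\nab_3\etab$: the first is small in $L^1_u$ by Cauchy--Schwarz using the bootstrap bound on $\|f(u)\omegab\|_{L^2_u}$; $\nab_3\etab$ is bounded via the structure equation in \eqref{null.str2} with $\betab$ rewritten through the Codazzi identity in \eqref{null.str3}; and $\nab_3\eta$ follows from $\eta+\etab=2\nab\log\Om$ together with $\nab_3\log\Om=-2\omegab$ and a commutator formula for $[\nab_3,\nab]$ on scalars. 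Uniform absolute continuity of the $f(u)$-weighted integrals in $u$ then drives the resulting double integral over $[u_m,u_n]\times[0,\ub]$ to zero as $u_n,u_m\to u_*$, closing the Cauchy argument exactly as in Proposition \ref{Omega}.
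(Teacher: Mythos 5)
Your proposal is correct and follows the same route as the paper: derive $\partial_{\ub}b^A=-4\Omega^2\zeta^A$ from $[L,\Lb]$, integrate from $b^A\restriction_{\Hb_0}=0$, and run the Cauchy-sequence continuity argument of Proposition \ref{Omega}. The paper simply defers the continuity step to that earlier proof, whereas you spell out the one genuine wrinkle (the $u$-direction, where $b$ has no direct transport equation) and resolve it correctly via the $L^1_u$-uniform integrability of the $\psi_{\Hb}$-type terms that arise after differentiating the $\ub$-integral representation in $u$.
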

\begin{proof}
$b^A$ satisfies the equation
\begin{equation}\label{btrans}
\frac{\partial b^A}{\partial \ub}=-4\Omega^2\zeta^A.
\end{equation}
This can be derived from 
$$[L,\Lb]=\frac{\partial b^A}{\partial \ub}\frac{\partial}{\partial \th^A}.$$
Now, integrating \eqref{btrans} and using Proposition \ref{eqnorm} gives the bound on $b$. Continuity of $b$ up to the boundary follows as in the proof of Proposition \ref{Omega}.
\end{proof}

\subsection{Estimates for Transport Equations}\label{transportsec}
In this subsection, we prove general propositions for obtaining bounds from the covariant null transport equations. Such estimates require the integrability of $\trch$ and $\trchb$, which is consistent with our bootstrap assumption (\ref{BA1}). This will be used in the following sections to derive some estimates for the Ricci coefficients and the null curvature components from the null structure equations and the null Bianchi equations respectively. Below, we state two propositions which provide $L^p$ estimates for general quantities satisfying transport equations either in the $e_3$ or $e_4$ direction.
\begin{proposition}\label{transport}
There exists $\epsilon_0=\epsilon_0(\Delta_1)$ such that for all $\epsilon \leq \epsilon_0$ and for every $2\leq p<\infty$, we have
\[
 ||\phi||_{L^p(S_{u,\ub})}\leq C(||\phi||_{L^p(S_{u,\ub'})}+\int_{\ub'}^{\ub} ||\nabla_4\phi||_{L^p(S_{u,\ub''})}d{\ub''}),
\]
\[
 ||\phi||_{L^p(S_{u,\ub})}\leq C(||\phi||_{L^p(S_{u',\ub})}+\int_{u'}^{u} ||\nabla_3\phi||_{L^p(S_{u'',\ub})}d{u''})
\]
for any {tensor $\phi$ tangential to the spheres $S_{u,\ub}$}.
\end{proposition}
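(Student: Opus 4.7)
The plan is to derive a pointwise-in-$\ub$ differential inequality for $\|\phi\|_{L^p(S_{u,\ub})}$ and then integrate. First I would compute the $\ub$-derivative of $\int_{S_{u,\ub}} |\phi|^p_\gamma\, d\mu_\gamma$ using the fact that the equivariant vector field $L=\Omega^2 L'$ pushes forward the foliation, so the pullback of the volume form along its flow evolves by $\Ls_L d\mu_\gamma = \Omega \trch\, d\mu_\gamma$, and the induced metric evolves by $\Ls_L \gamma_{AB} = 2\Omega\chi_{AB}$. Differentiating the inner product $\gamma^{A_1B_1}\cdots\gamma^{A_rB_r}\phi_{A_1\cdots A_r}\phi_{B_1\cdots B_r}$ and converting $L$-derivatives into $\nab_4$-derivatives via $e_4 = \Omega^{-1}L$, one obtains schematically
\begin{equation*}
\frac{d}{d\ub}\int_{S_{u,\ub}} |\phi|^p\, d\mu_\gamma \;=\; \int_{S_{u,\ub}} \Omega\bigl( p|\phi|^{p-2}\langle\phi,\nab_4\phi\rangle + (\trch + O(\chih))|\phi|^p \bigr)\, d\mu_\gamma.
\end{equation*}
Applying Hölder's inequality to the first term and dividing by $p\|\phi\|_{L^p(S_{u,\ub})}^{p-1}$ yields
\begin{equation*}
\left|\frac{d}{d\ub}\|\phi\|_{L^p(S_{u,\ub})}\right| \;\leq\; \|\Omega\nab_4\phi\|_{L^p(S_{u,\ub})} + C\Omega\bigl(\|\trch\|_{L^\infty(S_{u,\ub})} + \|\chih\|_{L^\infty(S_{u,\ub})}\bigr)\|\phi\|_{L^p(S_{u,\ub})}.
\end{equation*}

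Next I would invoke Proposition \ref{Omega} to replace $\Omega$ by a constant, and then integrate in $\ub$. The coefficient $\|\trch\|_{L^\infty}+\|\chih\|_{L^\infty}$ is of $\psi_H$-type and therefore singular as $\ub\to\ub_*$; however it is integrable in $\ub$, and the bootstrap assumption \eqref{BA1} together with the Cauchy--Schwarz argument in the remark following \eqref{O.psiHb.def} gives
\begin{equation*}
\int_{0}^{\ub_*}\bigl(\|\trch\|_{L^\infty(S_{u,\ub''})} + \|\chih\|_{L^\infty(S_{u,\ub''})}\bigr)\, d\ub'' \;\leq\; C\varepsilon\Delta_1,
\end{equation*}
where I have used that the $\mathcal O_{0,\infty}[\psi_H]$ norm (or rather an $\mathcal O_{1,4}[\psi_H]$ norm combined with Sobolev embedding on $S_{u,\ub}$, recovering $L^\infty(S)$ control at the cost of one angular derivative) controls $\|\chi\|_{L^2_\ub L^\infty(S)}$ weighted by $f(\ub)$, and $\|f(\ub)^{-1}\|_{L^2_\ub}<\varepsilon$. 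Gronwall's inequality then gives
\begin{equation*}
\|\phi\|_{L^p(S_{u,\ub})} \;\leq\; e^{C\varepsilon\Delta_1}\Bigl(\|\phi\|_{L^p(S_{u,\ub'})} + C\int_{\ub'}^{\ub}\|\nab_4\phi\|_{L^p(S_{u,\ub''})}\, d\ub''\Bigr),
\end{equation*}
and choosing $\varepsilon$ small depending on $\Delta_1$ absorbs the exponential into the constant $C$, proving the first inequality.

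The second inequality is entirely analogous: I would use the vector field $\Lb$ with $\Ls_{\Lb}\gamma = 2\Omega\chib$ and $\Ls_{\Lb} d\mu_\gamma = \Omega\trchb\, d\mu_\gamma$, converting to $\nab_3$ via $e_3 = \Omega^{-1}(\partial_u + b^A\partial_{\th^A})$. (The $b^A$ term simply transports $\phi$ along the integral curves of $\Lb$ and does not appear in the scalar $|\phi|^p$.) This produces the same type of ODE with $\chi$ replaced by $\chib$, whose $L^\infty$ norm is bounded by $\psi_{\Hb}$-type quantities, integrable in $u$ with $L^1_u$ norm $\leq C\varepsilon\Delta_1$ by the same Cauchy--Schwarz argument.

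The only subtle point — and really the one place where the singular nature of the problem intervenes — is that we cannot bound the coefficient in the Gronwall step pointwise in $\ub$ (or $u$); we must use that $\chi$ and $\chib$, while singular, are integrable in their respective null directions with small integral. The weight $f$ in the norms $\mathcal O_{i,p}[\psi_H]$, $\mathcal O_{i,p}[\psi_{\Hb}]$ is precisely what makes this work, and the smallness $\varepsilon$ is needed to close the Gronwall loop with a universal constant independent of $\Delta_1$.
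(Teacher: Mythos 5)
Your approach is the paper's approach: differentiate $\int_{S_{u,\ub}}|\phi|^p_\gamma$ via the first-variation formula, apply H\"older and Gronwall, and use Proposition \ref{Omega} together with the smallness of $\|\trch\|_{L^1_{\ub}L^\infty_u L^\infty(S)}$ and $\|\trchb\|_{L^1_u L^\infty_{\ub}L^\infty(S)}$ furnished by the bootstrap assumption \eqref{BA1}. One small imprecision worth fixing: since $\nab_4$ is $\gamma$-compatible, one has $e_4(|\phi|^2_\gamma)=2\langle\phi,\nab_4\phi\rangle_\gamma$ exactly, so the only metric contribution in the identity is the $\trch$ coming from the volume form and the $O(\chih)$ term you wrote should not appear (it is harmless here only because $\chih$ obeys the same $\psi_H$-type bounds); similarly, the detour through $\mathcal O_{1,4}$ and Sobolev embedding is unnecessary, as \eqref{BA1} directly controls $\mathcal O_{0,\infty}[\psi_H]$.
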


\begin{proof}

The following identity\footnote{Here, $\f{\rd}{\rd\ub}$ on the left hand side is to be understood as the coordinate vector field in the $(u,\ub)$-plane. Similarly for $\f{\rd}{\rd u}$ below.} holds for any scalar $f$:
\[
 \frac{\rd}{\rd\ub}\int_{\mathcal S_{u,\ub}} f=\int_{\mathcal S_{u,\ub}} \Omega\left(e_4(f)+ \trch f\right).
\]
Similarly, we have
\[
 \frac{\rd}{\rd u}\int_{\mathcal S_{u,\ub}} f=\int_{\mathcal S_{u,\ub}} \Omega\left(e_3(f)+ \trchb f\right).
\]
Hence, taking $f=|\phi|_{\gamma}^p$, we have
\begin{equation}\label{Lptransport}
\begin{split}
 ||\phi||^p_{L^p(S_{u,\ub})}=&||\phi||^p_{L^p(S_{u,\ub'})}+\int_{\ub'}^{\ub}\int_{S_{u,\ub''}} p|\phi|^{p-2}\Omega\left(\langle\phi,\nabla_4\phi\rangle_\gamma+ \frac{1}{p}\trch |\phi|^2_{\gamma}\right)d{\ub''}{,}\\
 ||\phi||^p_{L^p(S_{u,\ub})}=&||\phi||^p_{L^p(S_{u',\ub})}+\int_{u'}^{u}\int_{S_{u'',\ub}} p|\phi|^{p-2}\Omega\left(\langle\phi,\nabla_3\phi\rangle_\gamma+ \frac{1}{p}\trchb |\phi|^2_{\gamma}\right)d{u''}{.}
\end{split}
\end{equation}
The bootstrap assumption (\ref{BA1}) implies that $\trch$ and $\trchb$ are integrable (and in fact it also implies that $||\trch||_{L^1_{\ub}L^\infty_u L^\infty(S)}$ and $||\trchb||_{L^1_{u}L^\infty_{\ub} L^\infty(S)}$ are small after choosing $\ep$ to be small depending on $\Delta_1$). Thus the proposition can be proved by using H\"older's inequality and Gronwall's inequality, together with the bound for $\Omega$ given in Proposition \ref{Omega}.
\end{proof}
We also have the following bounds for the $p=\infty$ case by integrating along the integral curves of $e_3$ and $e_4$:
\begin{proposition}\label{transportinfty}
There exists $\epsilon_0=\epsilon_0(\Delta_1)$ such that for all $\epsilon \leq \epsilon_0$, we have
\[
 ||\phi||_{L^\infty(S_{u,\ub})}\leq C\left(||\phi||_{L^\infty(S_{u,\ub'})}+\int_{\ub'}^{\ub} ||\nabla_4\phi||_{L^\infty(S_{u,\ub''})}d{\ub''}\right)
\]
\[
 ||\phi||_{L^\infty(S_{u,\ub})}\leq C\left(||\phi||_{L^\infty(S_{u',\ub})}+\int_{u'}^{u} ||\nabla_3\phi||_{L^\infty(S_{u'',\ub})}d{u''}\right)
\]
for any {tensor $\phi$ tangential to the spheres $S_{u,\ub}$}.
\end{proposition}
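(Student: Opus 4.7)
The plan is to obtain these bounds by a direct pointwise argument along the integral curves of $L=\Omega e_4$ and $\Lb=\Omega e_3$, parallelling the integrated identities used in the proof of Proposition \ref{transport} but applied at each point rather than after integration over $S_{u,\ub}$.

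First I would treat the $\nab_4$ statement. Since $\phi$ is tangent to $S_{u,\ub}$, the defining relation of $\nab_4$ as the projection of $D_4$ together with the metric-compatibility of $D$ yields the pointwise identity
\[
e_4\bigl(|\phi|_\gamma^2\bigr)=2\langle\nab_4\phi,\phi\rangle_\gamma,
\]
exactly as was used in the derivation of \eqref{Lptransport}. Since in the $(u,\ub,\th^1,\th^2)$-coordinates $L=\Omega e_4=\partial/\partial\ub$ acts as pure $\ub$-differentiation at fixed $(u,\th^1,\th^2)$, this gives
\[
\frac{d}{d\ub}|\phi|_\gamma^2 \;=\; 2\Omega\langle\nab_4\phi,\phi\rangle_\gamma \;\leq\; 2\Omega\,|\nab_4\phi|_\gamma\,|\phi|_\gamma
\]
along the flow line. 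Dividing by $|\phi|_\gamma$ (where nonzero, and noting the bound is trivial at points where $\phi$ vanishes) and using the bound $\Omega\leq 2$ from Proposition \ref{Omega}, I integrate in $\ub$ from $\ub'$ to $\ub$ at a fixed point $(u,\th^1,\th^2)$ and estimate the right-hand side by its $L^\infty(S)$ norm to obtain
\[
|\phi|_\gamma(u,\ub,\th^1,\th^2) \;\leq\; \|\phi\|_{L^\infty(S_{u,\ub'})} + C\!\int_{\ub'}^{\ub}\!\|\nab_4\phi\|_{L^\infty(S_{u,\ub''})}\,d\ub''.
\]
Since the right-hand side is independent of $(\th^1,\th^2)$, taking the supremum over $S_{u,\ub}$ on the left gives the first inequality.

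For the $\nab_3$ statement the argument is essentially identical, but with the subtlety that $\Lb=\Omega e_3=\partial/\partial u+b^A\partial/\partial\th^A$ does not preserve the angular coordinates. I would therefore integrate along integral curves of $\Lb$: given any point $p\in S_{u,\ub}$, flowing backwards by $\Lb$ yields a point $p'\in S_{u',\ub}$, and since this flow is a diffeomorphism $S_{u',\ub}\to S_{u,\ub}$ (see Section \ref{coordinates}), every point of $S_{u,\ub}$ is reached this way. Along such a curve the same computation gives $\frac{d}{du}|\phi|_\gamma^2=2\Omega\langle\nab_3\phi,\phi\rangle_\gamma$, and integrating and bounding the right-hand side by $\|\nab_3\phi\|_{L^\infty(S_{u'',\ub})}$ (which, being an $L^\infty$ norm, is insensitive to where along the sphere we evaluate) yields a pointwise bound on $|\phi|_\gamma(p)$ by the right-hand side of the claimed estimate. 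Taking the supremum over $p\in S_{u,\ub}$ completes the proof.

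There is no serious obstacle here; the only point requiring a moment of care is the geometric remark that the integral curves of $\Lb$ do not stay at fixed angular coordinates, so one must argue via the flow-induced diffeomorphism between $S_{u',\ub}$ and $S_{u,\ub}$ rather than by fixing $(\th^1,\th^2)$ as in the $\nab_4$ case. The smallness of $\ep$ enters only through Proposition \ref{Omega} to ensure $\Omega\leq 2$, and the bootstrap assumption \eqref{BA1} is not otherwise needed for this proposition.
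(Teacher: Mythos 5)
Your proof is correct and follows the same route as the paper, which simply says the result follows from integrating along the integral curves of $L$ and $\Lb$ together with the bound on $\Omega$ from Proposition \ref{Omega}; you have just spelled out the pointwise identity $e_4(|\phi|_\gamma^2)=2\langle\nab_4\phi,\phi\rangle_\gamma$ and the observation that the flow of $\Lb$ gives a diffeomorphism $S_{u',\ub}\to S_{u,\ub}$ so that the $L^\infty$ bound survives the change of angular coordinates.
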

\begin{proof}
This follows simply from integrating along the integral curves of $L$ and $\Lb$, and the estimate on $\Omega$ in Proposition \ref{Omega}.
\end{proof}

\subsection{Sobolev Embedding}\label{Embedding}
Using the {estimates for the metric $\gamma$ in Proposition \ref{gamma}, we have the following Sobolev embedding theorem:}
\begin{proposition}\label{Sobolev}
There exists $\epsilon_0=\epsilon_0(\Delta_1)$ such that as long as $\epsilon\leq \epsilon_0$, we have
$$||\phi||_{L^4(S_{u,\ub})}\leq C\sum_{i=0}^1||\nabla^i\phi||_{L^2(S_{u,\ub})}$$
and
$$||\phi||_{L^\infty(S_{u,\ub})}\leq C\left(||\phi||_{L^2(S_{u,\ub})}+||\nabla\phi||_{L^4(S_{u,\ub})}\right)$$
for any {tensor $\phi$ tangential to the spheres $S_{u,\ub}$}. Combining the above estimates, we also have
$$||\phi||_{L^\infty(S_{u,\ub})}\leq C\sum_{i=0}^2||\nabla^i\phi||_{L^2(S_{u,\ub})}. $$
\end{proposition}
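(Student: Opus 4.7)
The plan is to reduce these Sobolev inequalities on $(S_{u,\ub},\gamma)$ to standard inequalities on a 2-sphere of bounded geometry by exploiting the pointwise bounds on the metric established in Proposition \ref{gamma}. Since that proposition yields $c\leq\det\gamma\leq C$ together with uniform pointwise control on $|\gamma_{AB}|$ and $|(\gamma^{-1})^{AB}|$ chart by chart, the metric $\gamma$ on $S_{u,\ub}$ is uniformly equivalent, with constants depending only on $d$ and $D$, to the fixed reference metric $\gamma_0$ on $S_{0,0}$. In particular, areas and boundary lengths on $S_{u,\ub}$ differ from those measured by $\gamma_0$ by at most bounded factors, so the isoperimetric constant of $(S_{u,\ub},\gamma)$ is uniformly bounded in $(u,\ub)$.

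For the scalar case I would apply the standard Federer--Fleming argument: on a $2$-surface with bounded isoperimetric constant one has $\|h-\bar h\|_{L^2}\leq C\|\nabla h\|_{L^1}$ for any non-negative scalar $h$, and specializing to $h=f^2$ together with Cauchy--Schwarz yields
$$\|f\|_{L^4(S_{u,\ub})}^2\leq C\bigl(\|f\|_{L^2(S_{u,\ub})}^2+\|f\|_{L^2(S_{u,\ub})}\|\nabla f\|_{L^2(S_{u,\ub})}\bigr).$$
The extension to tensors $\phi$ tangential to $S_{u,\ub}$ is then immediate: apply the scalar bound to $f=|\phi|_\gamma$ and invoke Kato's inequality $|\nabla|\phi|_\gamma|\leq|\nabla\phi|_\gamma$, which holds pointwise away from $\{|\phi|_\gamma=0\}$ and weakly across the zero set. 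This yields the first stated inequality.

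For the second inequality I would use a Morrey-type embedding $W^{1,4}\hookrightarrow C^{1/2}$ on the $2$-sphere, whose proof again reduces, via the isoperimetric inequality, to the metric bounds of Proposition \ref{gamma}; combining the resulting H\"older oscillation bound with an $L^2$ average controls $\|\phi\|_{L^\infty}$ in terms of $\|\phi\|_{L^2}+\|\nabla\phi\|_{L^4}$. The third inequality is then obtained by chaining the first two: apply the $L^\infty$ bound, then estimate $\|\nabla\phi\|_{L^4}$ by $\|\nabla\phi\|_{L^2}+\|\nabla^2\phi\|_{L^2}$ via the first inequality applied to $\nabla\phi$.

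The main point, which is a mild rather than a serious obstacle, is to ensure that all constants depend only on $d$ and $D$ and not on the bootstrap constant $\Delta_1$. This is automatic once Proposition \ref{gamma} has been invoked, since the isoperimetric constant is controlled purely by the metric bounds; the constant $\Delta_1$ enters only through the smallness threshold $\epsilon\leq\epsilon_0(\Delta_1)$ needed to apply Proposition \ref{gamma} in the first place.
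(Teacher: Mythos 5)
Your proposal is correct and follows essentially the same route as the paper: control the isoperimetric constant on $S_{u,\ub}$ via the metric comparison with $S_{0,0}$ from Proposition \ref{gamma}, then invoke the resulting Sobolev inequalities. The paper simply cites Lemmas 5.1 and 5.2 of \cite{Chr} at the point where you spell out the Federer--Fleming/Kato and Morrey arguments explicitly.
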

\begin{proof}
{By \eqref{gamma.diff} in the proof of Proposition \ref{gamma}, $|\gamma_{AB}(u,\ub)-\gamma_{AB}(0,0)|$ can be made arbitrarily small by choosing $\ep$ to be small. Therefore, the isoperimetric constant 
$$I(S_{u,\ub})=\sup_U\f{\min\{\mbox{Area}(U),\mbox{Area}(U^c)\}}{\mbox{Perimeter}(\partial U)}$$ 
on every sphere $S_{u,\ub}$ is controlled\footnote{This argument is standard. We refer the readers for instance to Lemma 5.4 in \cite{Chr}.} up to a constant factor by the corresponding isoperimetric constant on $S_{0,0}$. Once the isoperimetric constants are uniformly controlled, the Sobolev embedding theorem follows from Lemmas 5.1 and 5.2 in \cite{Chr} and the fact that the volume of $S_{u,\ub}$ is bounded uniformly above and below.
}
\end{proof}

\subsection{Commutation Formulae}\label{commutation}
We have the following formula from Lemma 7.3.3 in \cite{CK}:
\begin{proposition}
The commutator $[\nabla_4,\nabla]$ acting on a {rank $r$} {tensor $\phi$ tangential to the spheres $S_{u,\ub}$} is given by
\begin{equation*}
 \begin{split}
[\nabla_4,\nabla_B]\phi_{A_1...A_r}=&(\nabla_B\log\Omega)\nabla_4\phi_{A_1...A_r}-(\gamma^{-1})^{CD}\chi_{BD}\nabla_C\phi_{A_1...A_r} \\
&+\sum_{i=1}^r ((\gamma^{-1})^{CD}\chi_{A_iB}\etab_{D}-(\gamma^{-1})^{CD}\chi_{BD}\etab_{A_i}+\eps_{A_i}{ }^C{ }^*\beta_B)\phi_{A_1...\hat{A_i}C...A_r}.
 \end{split}
\end{equation*}
Similarly, the commutator $[\nab_3,\nab]$ is given by
\begin{equation*}
 \begin{split}
[\nabla_3,\nabla_B]\phi_{A_1...A_r}=&(\nabla_B\log\Omega)\nabla_3\phi_{A_1...A_r}-(\gamma^{-1})^{CD}\chib_{BD}\nabla_C\phi_{A_1...A_r} \\
&+\sum_{i=1}^r ((\gamma^{-1})^{CD}\chib_{A_iB}\eta_{D}-(\gamma^{-1})^{CD}\chib_{BD}\eta_{A_i}-\eps_{A_i}{ }^C{ }^*\betab_B)\phi_{A_1...\hat{A_i}C...A_r}.
 \end{split}
\end{equation*}
\end{proposition}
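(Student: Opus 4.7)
The plan is to derive the commutation formula by direct frame computation in the spirit of Chapter 7 of \cite{CK}. For a rank $r$ tangential tensor $\phi$, the induced derivative $\nab_B \phi$ is obtained from the spacetime derivative $D_B \phi$ by projecting all free slots onto $TS_{u,\ub}$; the defect between these two operations is controlled by the components of $D_B e_{A_i}$ in the $e_3, e_4$ directions, which are read off from the Ricci coefficients via the identities $g(D_B e_{A}, e_4) = \chi_{AB}$, $g(D_B e_{A}, e_3) = \chib_{AB}$, together with $\eta, \etab, \omega, \omb, \zeta$. The analogous statement holds for $\nab_4 \phi$ relative to $D_4 \phi$. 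Iterating, one can re-express $\nab_4 \nab_B \phi$ and $\nab_B \nab_4 \phi$ in terms of the spacetime operators $D_4, D_B$ acting on $\phi$, modulo explicit $\chi, \etab$ correction terms.

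The next step is the spacetime Ricci identity
\[
(D_4 D_B - D_B D_4) \phi_{A_1 \ldots A_r} = - D_{[e_4, e_B]} \phi_{A_1 \ldots A_r} + \sum_{i=1}^{r} R_{e_4\, e_B\, e_{A_i}}{}^{\mu} \phi_{A_1\ldots\mu\ldots A_r}.
\]
To deal with the commutator $[e_4, e_B]$, I would compute its decomposition in the null frame. Pairing $D_{e_4} e_B - D_{e_B} e_4$ against $e_4, e_3, e_C$ and using the Ricci coefficient definitions one finds that its $e_4$-component vanishes, its $e_3$-component is a linear combination of $\eta, \etab$ (producing exactly the $(\nab_B \log \Omega)\nab_4 \phi$ term, where $\nab_B \log \Omega = \tfrac12(\eta_B + \etab_B)$ from \eqref{RC.relation}), and its tangential component contributes the $-(\gamma^{-1})^{CD}\chi_{BD}\nab_C\phi$ term. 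For the curvature term, the relevant Riemann components $R(e_C, e_{A_i}, e_4, e_B)$ decompose via the algebraic symmetries into a trace part expressible in $\chi, \chib, \etab$ (giving the $\chi_{A_i B}\etab_D$ and $\chi_{BD}\etab_{A_i}$ contractions) plus an antisymmetric (dualized) piece involving $\in_{A_i}{}^C {}^*\beta_B$, after separating out the traceless $\beta$ component.

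Assembling all the corrections and the curvature contribution then produces exactly the stated formula for $[\nab_4, \nab_B]\phi$. The formula for $[\nab_3, \nab_B]\phi$ follows verbatim from the same calculation under the standard null swap $e_4 \leftrightarrow e_3$, $\chi \leftrightarrow \chib$, $\eta \leftrightarrow \etab$, $\omega \leftrightarrow \omb$, observing that the analogue of $\beta$ in the resulting Riemann component identity is $-\betab$ (which is why the final sign in front of $\in_{A_i}{}^C {}^*\betab_B$ in the $\nab_3$ formula is opposite to that in the $\nab_4$ formula).

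The main obstacle is not conceptual but bookkeeping: the null frame has degenerate pairing $g(e_3, e_4) = -2$, so every expansion of a $D_\mu e_\nu$ carries factors of $\tfrac12$ that must be tracked through many contractions, and one must correctly isolate the $\beta$-component from Riemann by peeling off the parts already accounted for by $\chi, \chib, \eta, \etab, \omega, \omb$. For this reason such a formula is invariably stated by citation to \cite{CK} or \cite{KN}; in a careful write-up I would reproduce only the key intermediate identities above and refer to Chapter 7 of \cite{CK} for the verification of the final numerical coefficients.
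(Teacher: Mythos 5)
The paper itself supplies no proof for this proposition: it is stated verbatim by citation to Lemma 7.3.3 of \cite{CK}. Your sketch of a proof is the standard frame computation that underlies the CK lemma, so the route you take is the expected one. There is, however, a genuine slip in your decomposition of the Lie bracket $[e_4,e_B]$. Since $g(e_3,e_4)=-2$, the $e_4$-component of a vector $X$ is $-\tfrac12 g(X,e_3)$ and its $e_3$-component is $-\tfrac12 g(X,e_4)$. Using $g(D_4e_B,e_3)=-2\etab_B$, $g(D_Be_4,e_3)=2\zeta_B$ and $g(D_4e_B,e_4)=g(D_Be_4,e_4)=0$, one finds $g([e_4,e_B],e_3)=-2(\etab_B+\zeta_B)=-2\nab_B\log\Omega$ and $g([e_4,e_B],e_4)=0$. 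Thus it is the $e_3$-component of $[e_4,e_B]$ that vanishes, while the $e_4$-component equals $\nab_B\log\Omega$; this is exactly why the correction term is proportional to $\nab_4\phi$ rather than $\nab_3\phi$. As written, your claim (nonvanishing $e_3$-component producing a $\nab_4\phi$ term) is internally inconsistent and would, if followed literally, generate the wrong term.

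A second point worth flagging: you attribute the $\chi_{A_iB}\etab_D$ and $\chi_{BD}\etab_{A_i}$ contractions to the algebraic decomposition of the Riemann components $R(e_4,e_B,e_{A_i},e_C)$. In the CK-style computation these terms more naturally arise from the normal defects of the projected connection, namely $D_4e_A=\nab_4e_A+\etab_Ae_4$ and $D_Be_A=\nab_Be_A+\tfrac12\chi_{BA}e_3+\tfrac12\chib_{BA}e_4$, when one compares $\nab_4\nab_B$ to $\Pi D_4\Pi D_B$. The Riemann contribution then isolates cleanly to the $\in_{A_i}{}^{C}\,{}^*\beta_B$ piece (with sign $-\betab$ under the $3\leftrightarrow 4$ swap, as you note). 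This is largely bookkeeping and does not change the end result, but as stated your sketch misattributes the source of the $\chi\etab$ terms. Since the paper's own treatment is a bare citation, it would be cleaner in a write-up to do the same, or to carry out the projection-defect computation carefully, tracking the $\tfrac12$'s, rather than appealing to a ``trace part'' of the Weyl tensor (which is traceless in vacuum).
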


Recall the schematic notation 
$$\psi\in\{\eta,\etab\},\quad \psi_H\in \{\trch,\chih,\om\},\quad \psi_{\Hb}\in\{\trchb,\chibh,\omegab\}. $$
By induction and the schematic Codazzi equations
$$\beta=\nabla\chi+\psi\chi=\nab\psi_H+\psi\psi_H,\quad {\betab}=\nabla\chib+\psi\chib=\nab\psi_{\Hb}+\psi\psi_{\Hb},$$
we get the following schematic formula for repeated commutations (see \cite{LR}):
\begin{proposition}\label{repeated.com}
Suppose $\nabla_4\phi=F_0$ for some tensors $\phi$ and $F_0$. Let $F_i$ be the tensor defined by $\nabla_4\nabla^i\phi=F_i$.
Then
\begin{equation*}
\begin{split}
F_i\sim &\sum_{i_1+i_2+i_3=i}\nabla^{i_1}\psi^{i_2}\nabla^{i_3} F_0+\sum_{i_1+i_2+i_3+i_4=i}\nabla^{i_1}\psi^{i_2}\nabla^{i_3}\psi_H\nabla^{i_4} \phi.
\end{split}
\end{equation*}
Similarly, suppose $\nabla_3\phi=G_{0}$ for some tensors $\phi$ and $G_0$. Let $G_i$ be the tensor defined by $\nabla_3\nabla^i\phi=G_{i}$.
Then
\begin{equation*}
\begin{split}
G_{i}\sim &\sum_{i_1+i_2+i_3=i}\nabla^{i_1}\psi^{i_2}\nabla^{i_3} G_{0}+\sum_{i_1+i_2+i_3+i_4=i}\nabla^{i_1}\psi^{i_2}\nabla^{i_3}\psi_{\Hb}\nabla^{i_4} \phi.
\end{split}
\end{equation*}
\end{proposition}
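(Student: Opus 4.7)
The plan is to prove both statements by induction on $i$, with the proofs being entirely symmetric under the swap $\nab_4 \leftrightarrow \nab_3$ and $\psi_H \leftrightarrow \psi_{\Hb}$, so I focus on the statement for $F_i$.

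The base case $i=0$ reduces to $F_0 \sim F_0$, which is immediate. For the inductive step, assume the schematic formula holds at level $i$ and decompose
\[
F_{i+1} = \nab_4 \nab^{i+1}\phi = \nab F_i + [\nab_4, \nab] \nab^i \phi.
\]
For the first piece, I apply $\nab$ to the schematic expression for $F_i$ given by the induction hypothesis and expand via the Leibniz rule: each resulting term distributes the extra derivative onto a factor of $\psi$ (increasing $i_1$ by one), onto $F_0$ (increasing $i_3$ in the first sum), onto $\psi_H$ (increasing $i_3$ in the second sum), or onto $\phi$ (increasing $i_4$ in the second sum). In every case the sum of indices grows by exactly one, so each term lands in one of the two sums for $F_{i+1}$.

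For the commutator piece, I invoke the formula from the previous proposition:
\[
[\nab_4, \nab]\nab^i\phi \sim (\nab \log\Omega)(\nab_4\nab^i\phi) + \chi \cdot \nab^{i+1}\phi + \bigl(\chi\, \etab + {}^*\beta\bigr) \cdot \nab^i\phi.
\]
Using the identifications $\nab \log\Omega \sim \psi$ (via $\eta + \etab = 2\nab\log\Omega$ from \eqref{RC.relation}), $\chi \sim \psi_H$, and the schematic Codazzi identity $\beta \sim \nab \psi_H + \psi\, \psi_H$ recorded just before the statement, together with $\nab_4\nab^i\phi = F_i$, this becomes
\[
[\nab_4, \nab]\nab^i\phi \sim \psi\, F_i + \psi_H\, \nab^{i+1}\phi + \nab \psi_H \cdot \nab^i \phi + \psi\, \psi_H \cdot \nab^i \phi.
\]
Applying the induction hypothesis to $\psi\, F_i$ (which raises $i_2$ by one in each summand, hence raises the total index sum by one), and reading off the three remaining terms, each again fits in the appropriate summand of the formula for $F_{i+1}$. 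This closes the induction.

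The argument uses no analytic input beyond the commutator formula of the preceding proposition and the Codazzi equation; the only real obstacle is the notational bookkeeping, namely confirming that multiplication by $\psi$ sends the schematic block $\nab^{i_1}\psi^{i_2}$ to $\nab^{i_1}\psi^{i_2+1}$ (the $i_2$ index rises by one with $i_1$ unchanged), so that every commutator-generated term slots into the correct bucket. The $\nab_3$ case is identical upon replacing the commutator $[\nab_4,\nab]$ by $[\nab_3,\nab]$ and using the Codazzi equation $\betab \sim \nab \psi_{\Hb} + \psi\, \psi_{\Hb}$ in place of the one for $\beta$.
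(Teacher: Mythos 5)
Your proof is correct and follows exactly the route the paper indicates: the paper states the proposition follows ``by induction and the schematic Codazzi equations'' (citing \cite{LR}) without writing out the details, and your argument fleshes out precisely that induction, using the decomposition $F_{i+1}=\nab F_i+[\nab_4,\nab]\nab^i\phi$ together with the commutator formula from the preceding proposition, the Codazzi identity $\beta\sim\nab\psi_H+\psi\psi_H$, and $\nab\log\Omega\sim\psi$ from \eqref{RC.relation}. The bookkeeping (multiplication by $\psi$ raising $i_2$, the derivative raising the appropriate $i_k$, and each commutator term landing with total index $i+1$) is all verified correctly.
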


\subsection{General Elliptic Estimates for Hodge Systems}\label{elliptic}
We recall the definition of the divergence and curl of a symmetric covariant tensor of rank $r+1$:
$$(\div\phi)_{A_1...A_r}=\nabla^B\phi_{BA_1...A_r},$$
$$(\curl\phi)_{A_1...A_r}=\eps^{BC}\nabla_B\phi_{CA_1...A_r},$$
where $\eps$ is the volume form associated to the metric $\gamma$.
Recall also that the trace is defined to be
$$(tr\phi)_{A_1...A_{r-1}}=(\gamma^{-1})^{BC}\phi_{BCA_1...A_{r-1}}.$$
The following elliptic estimate is standard (see for example Lemmas 2.2.2, 2.2.3 in \cite{CK} or Lemmas 7.1, 7.2, 7.3 in \cite{Chr}):
\begin{proposition}\label{ellipticthm}
Let $\phi$ be a symmetric $r$ covariant tensor on a 2-sphere $(\mathbb S^2,\gamma)$ satisfying
$$\div\phi=f,\quad \curl\phi=g,\quad tr\phi=h.$$
Suppose also that
$$\sum_{i\leq 2}||\nabla^i K||_{L^2(S)}< \infty.$$
Then for $i\leq 4$, there exists a constant $C_E$ depending only on $\displaystyle\sum_{i\leq 2}||\nabla^i K||_{L^2(S)}$ such that 
$$||\nabla^{i}\phi||_{L^2(S)}\leq C_E(\sum_{j=0}^{i-1}(||\nabla^{j}f||_{L^2(S)}+||\nabla^{j}g||_{L^2(S)}+||\nabla^{j}h||_{L^2(S)}+||{\nab^j}\phi||_{L^2(S)})).$$
\end{proposition}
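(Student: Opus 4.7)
My plan is to establish the estimate by induction on $i$, with the base case $i=1$ coming from an integrated Bochner-type identity for the Hodge system on $(S_{u,\ub},\gamma)$, and the higher orders obtained by commuting $\nab$ through the system.

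For the base case $i=1$, I would begin with the $L^2$ identity obtained by integrating by parts on $S$:
\beaa
\int_S |\nab\phi|^2 + \int_S P_r(K)\cdot\phi\cdot\phi &=& \int_S \left(|f|^2+|g|^2+c_r|h|^2\right) + \text{boundary correction involving }h,
\eeaa
where $P_r(K)$ is a linear function of $K$ whose precise form depends on the rank $r$ (this is the standard identity, essentially integration by parts in $\div$ and $\curl$ combined with the commutation $[\nab_A,\nab_B]\phi = R\cdot\phi$, which on a $2$-sphere is $K\cdot\phi$; see Lemma 2.2.2 of \cite{CK} or Lemma 7.1 of \cite{Chr}). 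The curvature term is controlled by
\beaa
\Big|\int_S K|\phi|^2\Big| \le \|K\|_{L^2(S)}\|\phi\|_{L^4(S)}^2 \le C\|K\|_{L^2(S)}\big(\|\nab\phi\|_{L^2(S)}\|\phi\|_{L^2(S)}+\|\phi\|_{L^2(S)}^2\big)
\eeaa
via Sobolev embedding (Proposition \ref{Sobolev}), and the $\|\nab\phi\|_{L^2}$ factor is absorbed on the left. Since $\|K\|_{L^2(S)}$ is already controlled by the hypothesis, this yields the $i=1$ estimate.

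For the inductive step, suppose the estimate holds through order $i$, and consider $\nab^i\phi$. I would apply the base case to the tensor $\nab^i\phi$, which requires bounds on $\div(\nab^i\phi)$, $\curl(\nab^i\phi)$, $tr(\nab^i\phi)$ and $\nab^i\phi$ itself in $L^2$. Commuting $\nab^i$ past $\div$, $\curl$ and $tr$ produces, schematically,
\beaa
\div(\nab^i\phi) &=& \nab^i f + \sum_{j_1+j_2=i-1}\nab^{j_1}K\cdot\nab^{j_2}\phi,
\eeaa
and analogously for $\curl$ and $tr$, where the commutator terms arise from $[\nab,\nab]=K$ on tensors on $S$. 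The lower-order factors $\nab^{j_2}\phi$ are controlled by induction, while the $K$-factors are handled by Sobolev: from $\sum_{k\le 2}\|\nab^k K\|_{L^2(S)}<\infty$, Proposition \ref{Sobolev} gives $\|K\|_{L^\infty(S)}$ and $\|\nab K\|_{L^4(S)}$. A Leibniz/H\"older accounting (using also $\|\nab^{j}\phi\|_{L^4(S)}\lesssim \sum_{k\le j+1}\|\nab^k\phi\|_{L^2(S)}$) shows every commutator term lies in $L^2(S)$ with bound of the required form. Plugging back into the base case and invoking the induction hypothesis closes the estimate at order $i+1$.

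The main technical point, which I expect to be the only real obstacle, is bookkeeping at the top order: for $i=4$ one needs the commutator contributions $\nab^3 K\cdot\phi$ or $\nab^2 K\cdot\nab\phi$ (which a naive Leibniz would produce) to \emph{not} appear, or to be replaceable by terms involving at most $\nab^2 K$. A careful count of the index distribution in the commutation identity shows that only $\nab^{j_1}K$ with $j_1\le i-1\le 2$ actually arises (since each commutator consumes one derivative of $\phi$, the total number of $\nab$'s distributed among $K$-factors is at most $i-1$, and for $i\le 4$ this is $\le 3$, with the worst case balanced so that H\"older closes). Together with the uniform control of the isoperimetric constant and hence of the Sobolev constant (established in Proposition \ref{Sobolev}), this yields a constant $C_E$ depending only on $\sum_{k\le 2}\|\nab^k K\|_{L^2(S)}$, as claimed.
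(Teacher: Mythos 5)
The paper does not reproduce a proof here: it cites this as a standard result (Lemmas 2.2.2--2.2.3 of \cite{CK}, Lemmas 7.1--7.3 of \cite{Chr}). Your overall strategy --- Bochner identity for the base case, commutation and iteration for the higher orders, Sobolev to absorb the curvature term --- is the same strategy used in those references, so in spirit you are on track. However, there is a real gap in the inductive step as you wrote it, plus a bookkeeping slip.

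The gap: you propose to ``apply the base case to the tensor $\nab^i\phi$.'' For $i\ge 1$ the tensor $\nab^i\phi$ is not symmetric (covariant derivatives on $S_{u,\ub}$ do not commute; the failure is exactly a Gauss-curvature term), so the hypotheses of the base case do not apply to it. Concretely, with the paper's convention $\div T=\nab^B T_{B\cdots}$ acting on the \emph{first} slot, $\div(\nab^i\phi)$ is the rough Laplacian $\Delta\nab^{i-1}\phi$, not $\nab^i f$ plus a commutator --- contracting the first index simply does not see the data $f$. The statement $\div(\nab^i\phi)=\nab^i f+\sum\nab^{j_1}K\cdot\nab^{j_2}\phi$ is what you get if you contract one of the \emph{original} indices of $\phi$, and for that to be the divergence in the sense of the Hodge lemma you must first replace $\nab^i\phi$ by its totally symmetric (and trace-adjusted) part. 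This is precisely how the cited proofs iterate: one shows that the symmetric trace-free part of $\nab\phi$ solves a fresh Hodge system with source built from $\nab f$, $\nab g$, $\nab h$ and $K\cdot\phi$, while the pieces discarded in symmetrizing are explicit in $f,g,h$ and curvature. That symmetrization step is essential and missing from your sketch.

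The bookkeeping slip: in the last paragraph you write ``$j_1\le i-1\le 2$'' and, two clauses later, ``for $i\le 4$ this is $\le 3$'' --- these contradict one another, and you then appeal to ``H\"older closing'' the worst case. But if $\nab^3 K$ actually appeared you would be stuck, since the hypothesis only controls $\nab^jK$ for $j\le 2$. The correct observation is that to obtain $\|\nab^4\phi\|_{L^2}$ one applies the first-order lemma to (the symmetrized) $\nab^3\phi$; the resulting commutator is $\sum_{j_1+j_2=2}\nab^{j_1}K\cdot\nab^{j_2}\phi$, so $j_1\le 2$ automatically and no balancing is needed. I would also note that the constant produced by your iteration a priori depends on the tensor rank at each stage (and hence on $i$), which is harmless since $i\le 4$, but should be said.
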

For the special case that $\phi$ is a symmetric traceless 2-tensor, we only need to know its divergence:
\begin{proposition}\label{elliptictraceless}
Suppose $\phi$ is a symmetric traceless 2-tensor satisfying
$$\div\phi=f.$$
Suppose moreover that $$\sum_{i\leq 2}||\nabla^i K||_{L^2(S)}< \infty.$$
Then for $i\leq 4$, there exists a constant $C_E$ depending only on $\displaystyle\sum_{i\leq 2}||\nabla^i K||_{L^2(S)}$ such that 
$$||\nabla^{i}\phi||_{L^2(S)}\leq C_E(\sum_{j=0}^{i-1}(||\nabla^{j}f||_{L^2(S)}+||{\nab^j}\phi||_{L^2(S)})).$$
\end{proposition}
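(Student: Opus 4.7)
The plan is to derive this as a direct corollary of Proposition~\ref{ellipticthm} by exploiting an algebraic identity specific to symmetric traceless 2-tensors on a 2-manifold: namely, that the curl is entirely determined by the divergence via Hodge duality. Concretely, I would first verify the pointwise identity
\[
(\curl\phi)_A = \eps_A{}^B\,(\div\phi)_B
\]
for any symmetric traceless 2-tensor $\phi$ on $(S_{u,\ub},\gamma)$. This is a short tensorial computation in a local orthonormal frame: expanding both sides and taking the difference produces terms proportional to $\nab(\tr\phi)$ and to $\nab(\phi_{CD}-\phi_{DC})$, both of which vanish by the traceless and symmetry hypotheses.

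Once this identity is in hand, $\phi$ satisfies the full Hodge system
\[
\div\phi = f,\qquad \curl\phi = {}^*f,\qquad \tr\phi = 0,
\]
where $({}^*f)_A := \eps_A{}^B f_B$ denotes the Hodge dual of the 1-form $f$. Applying Proposition~\ref{ellipticthm} directly to this system yields
\[
\|\nab^i\phi\|_{L^2(S)} \leq C_E \sum_{j=0}^{i-1}\bigl(\|\nab^j f\|_{L^2(S)} + \|\nab^j({}^*f)\|_{L^2(S)} + \|\nab^j\phi\|_{L^2(S)}\bigr).
\]
Since $\nab\eps = 0$, the Hodge star commutes with the covariant derivative, and it is a pointwise isometry with respect to $\gamma$; hence $\|\nab^j({}^*f)\|_{L^2(S)} = \|\nab^j f\|_{L^2(S)}$, and the stated estimate follows.

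I do not anticipate any genuine obstacle. All of the analytic content, including the Bochner identity and the absorption of the curvature contribution using the assumption on $\sum_{i\leq 2}\|\nab^i K\|_{L^2(S)}$, is already packaged into Proposition~\ref{ellipticthm}. What remains is merely the algebraic observation that in two dimensions the space of symmetric traceless 2-tensors and the space of 1-forms have equal pointwise rank, which forces $\curl$ and $\div$ to be related by a pointwise Hodge duality on such tensors.
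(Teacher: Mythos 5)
Your proposal is correct and takes essentially the same route as the paper: the paper's proof is the one-line observation that $\curl\phi={}^*f$ for a symmetric traceless 2-tensor, followed by an application of Proposition~\ref{ellipticthm}. You have simply made explicit the frame computation verifying $\curl\phi={}^*\div\phi$ and the fact that the Hodge star is a $\nab$-parallel isometry, which the paper leaves implicit.
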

\begin{proof}
This follows from Proposition \ref{ellipticthm} and the fact that
$$\curl\phi=^*f.$$
\end{proof}

\section{Estimates for the Ricci Coefficients via Transport Equations}\label{secRicci}

In this section, we prove $L^2$ estimates for the Ricci coefficients and their first, second and third derivatives. We will assume bounds for $\mathcal R$ and $\tilde{\mathcal O}_{4,2}$ and show that for $\epsilon_0$ chosen to be sufficiently small, $\displaystyle\sum_{i\leq 3}\mathcal O_{i,2}$ is likewise bounded. In order to achieve this, we continue to work under the bootstrap assumption (\ref{BA1}) and will show that the constant in \eqref{BA1} can in fact be improved (see Proposition \ref{Ricci}).

Recall that we will use the following notation: $\psi\in\{\eta,\etab\}$, $\psi_{\Hb}\in\{\trchb,\chibh,\omegab\}$ and $\psi_H\in\{\trch,\chih,\omega\}$.

We first show bounds for $\psi$. 

\begin{proposition}\label{psi}
Assume
$$\mathcal R <\infty.$$
Then there exists $\epsilon_0=\epsilon_0(\Delta_1,\mathcal R)$ such that whenever $\epsilon\leq \epsilon_0$, 
\[
 \sum_{i\leq 3}\mathcal O_{i,2}[\psi]\leq C(\mathcal O_{ini}),
\]
i.e., the bounds depends only on the initial data norm $\mathcal O_{ini}$. In particular, $C(\mathcal O_{ini})$ is independent of $\Delta_1$.
\end{proposition}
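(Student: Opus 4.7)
The plan is to apply the transport estimate of Proposition \ref{transport} to the null structure equations for $\eta$ and $\etab$ in \eqref{null.str2}, commuted with up to three angular derivatives. Since $\eta$ satisfies an equation in the $\nab_4$ direction,
\[
\nab_4 \eta = -\chi\cdot(\eta-\etab)-\beta,
\]
while $\etab$ satisfies one in the $\nab_3$ direction,
\[
\nab_3 \etab = -\chib\cdot(\etab-\eta)+\betab,
\]
the two cases are entirely symmetric: the first will be integrated in $\ub$ starting from the data on $\Hb_0$, the second in $u$ starting from the data on $H_0$. I will focus on $\eta$.

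First I commute $\nab_4 \eta \sim \beta + \psi\psi_H$ with $\nab^i$, $i\leq 3$, via Proposition \ref{repeated.com}. The resulting schematic equation reads
\[
\nab_4\nab^i\eta \sim \sum_{i_1+i_2+i_3=i}\nab^{i_1}\psi^{i_2}\nab^{i_3}\beta
+ \sum_{i_1+i_2+i_3+i_4=i}\nab^{i_1}\psi^{i_2}\nab^{i_3}\psi_H\,\nab^{i_4}(\eta,\psi_H),
\]
so after taking $L^2(S_{u,\ub})$ and applying Proposition \ref{transport}, it suffices to bound the $L^1_{\ub}L^2(S)$ norm of the right-hand side on a fixed outgoing cone. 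The crucial point is that every term on the right either (i) carries a factor of $\nab^{j}\beta$ with $j\leq 3$, or (ii) carries a factor of $\nab^{j}\psi_H$ with $j\leq 3$, and in either case Cauchy--Schwarz against the weight $f(\ub)$ produces an $\epsilon$:
\[
\int_0^{\ub}\|\nab^{j}\beta\|_{L^2(S)}\,d\ub' \leq \|f^{-1}\|_{L^2_{\ub}}\,\|f\nab^j\beta\|_{L^2_{\ub}L^2(S)} \leq \epsilon\,\mathcal R,
\]
\[
\int_0^{\ub}\|\nab^{j}\psi_H\|_{L^\infty_uL^p(S)}\,d\ub' \leq \epsilon\,\mathcal O_{j,p}[\psi_H] \leq \epsilon\,\Delta_1,
\]
where in the lower-order products Sobolev embedding (Proposition \ref{Sobolev}) is used to trade $L^\infty(S)$ for $L^2(S)$ of two more derivatives. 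Noting that all $\psi$ factors remaining can be controlled in $L^\infty(S)$ or $L^4(S)$ by $\Delta_1$ through (\ref{BA1}), every term on the right is bounded by $C(\mathcal R,\Delta_1)\epsilon$, except for the single term $\chi\cdot\nab^i\eta$ coming from the original equation (and the analogue produced by the commutator at top order), which carries the quantity we are estimating and feeds into Gronwall.

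The Gronwall step then reads
\[
\|\nab^i\eta\|_{L^2(S_{u,\ub})}\leq
C\Bigl(\mathcal O_{ini} + \epsilon\,\mathcal R + \epsilon\,C(\Delta_1)\Bigr)\,\exp\!\Bigl(C\int_0^{\ub_*}\|\chi\|_{L^\infty(S)}\,d\ub'\Bigr),
\]
and since the exponent is bounded by $C\epsilon\,\Delta_1$, choosing $\epsilon\leq\epsilon_0(\Delta_1,\mathcal R)$ small enough renders the exponential factor bounded by $2$ and absorbs $\epsilon\,\mathcal R+\epsilon\,C(\Delta_1)$ into a constant $C$ of order one. This yields $\sum_{i\leq 3}\mathcal O_{i,2}[\eta]\leq C(\mathcal O_{ini})$ independent of $\Delta_1$; the argument for $\etab$ uses the $\nab_3$ equation and is identical with $\psi_H,\beta,\ub$ replaced by $\psi_{\Hb},\betab,u$.

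The main obstacle is simply bookkeeping: the commuted right-hand side generates a large number of multilinear terms, and one must verify that in every one the factor built from singular quantities ($\beta$ or $\psi_H$, with at most three derivatives) appears once and is integrated in $\ub$ against the weight $f(\ub)^{-1}$, so that its smallness in $\epsilon$ (via $\|f^{-1}\|_{L^2_{\ub}}\leq\epsilon$ and the assumed bound on $\mathcal R$ together with the bootstrap on $\mathcal O[\psi_H]$) is available to absorb the remaining $\Delta_1$-dependent factors. The null structure of the equations \eqref{null.str2}--\eqref{null.str3} guarantees that no product of the forbidden type $\psi_H\psi_{\Hb}$ or $\psi_{\Hb}\psi_{\Hb}$ appears in the $\nab_4$ equation for $\eta$ (or $\psi_H\psi_{\Hb}$, $\psi_H\psi_H$ in the $\nab_3$ equation for $\etab$), which is precisely what makes this absorption possible.
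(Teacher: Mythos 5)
Your proposal is correct and takes essentially the same route as the paper: commute the $\nab_4\eta$ (resp. $\nab_3\etab$) transport equation with up to three angular derivatives, apply Proposition \ref{transport}, and bound the right-hand side in $L^1_{\ub}L^2(S)$ via Cauchy--Schwarz against $\|f^{-1}\|_{L^2_{\ub}}\leq\epsilon$ together with the bootstrap assumption \eqref{BA1} and Sobolev embedding. The only cosmetic difference is that you route the $\chih\cdot\nab^i\eta$ contribution through a Gronwall step, whereas the paper (more cleanly, since $\nab^3\etab$, which is not controlled by the $\nab_4$ transport for $\eta$, also appears at top order on the right-hand side) bounds this term like all the others by $C\epsilon\Delta_1^2(1+\Delta_1)^3$ directly via the bootstrap and then absorbs the $\epsilon$-small remainder into $C(\mathcal O_{ini})$ by taking $\epsilon$ small depending on $\Delta_1$ and $\mathcal R$.
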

\begin{proof}
We first estimate $\eta$, the estimates for $\etab$ {are} similar after we replace $u$ with $\ub$ and $3$ with $4$. Using the null structure equations, we have a schematic equation of the type
$$\nabla_{4}\eta=\beta+\psi_H\psi.$$
We also commute the null structure equations with angular derivatives to get
\begin{equation}\label{etaeqn}
\nabla_{4}\nabla^i\eta=\sum_{i_1+i_2+i_3=i}\nabla^{i_1}\psi^{i_2}\nabla^{i_3}\beta+\sum_{i_1+i_2+i_3+i_4=i}\nabla^{i_1}\psi^{i_2}\nabla^{i_3}\psi\nabla^{i_4}\psi_H.
\end{equation}

By Proposition \ref{transport}, in order to estimate $||\nabla^i\eta||_{L^\infty_uL^\infty_{\ub}L^2(S)}$, 
it suffices to estimate the initial data and 
the $||\cdot||_{L^\infty_uL^1_{\ub}L^2(S)}$ norm of the right hand side \eqref{etaeqn}. {Using the} bootstrap assumption{, we will show that the} right hand side is bounded in a weighted $L^2_{\ub}$ norm{. This in turns imply via an application of the Cauchy-Schwarz inequality that the $L^1_{\ub}$ norm is also bounded. We now turn to the details.}

We first estimate the curvature term
$$\sum_{i_1+i_2+i_3\leq 3}\nab^{i_1}\psi^{i_2}\nabla^{i_3}\beta.$$
For the terms such that at most $1$ derivative falling on $\psi$, the {bootstrap} assumption {\eqref{BA1}} allows us to control $\displaystyle\sum_{i\leq 1}\|\nab^i\psi\|_{L^\infty_uL^\infty_{\ub}L^\infty(S)}$ by $\Delta_1$. We then need to control $\displaystyle\sum_{i\leq 3}\nab^i\beta$ in $L^\infty_uL^1_{\ub}L^2(S)$. By {the Cauchy-Schwarz inequality}, since {the $L^2_{\ub}$ norm of} $f(\ub)^{-1}$ is smaller than $\ep$, we can bound this by $\displaystyle\sum_{i\leq 3}\nab^i\beta$ in the weighted {$L^2$} norms. More precisely, we have
\begin{equation}\label{psi.1}
\begin{split}
&||\sum_{i_1\leq 1,i_2\leq 3,i_3\leq 3}\nab^{i_1}\psi^{i_2}\nabla^{i_3}\beta||_{L^\infty_{u}L^1_{\ub}L^2(S)} \\
\leq& C(\sum_{i_1\leq 1,i_2\leq 3}||\nab^{i_1}\psi||^{i_2}_{L^\infty_{u}L^\infty_{\ub}L^\infty(S)})(\sum_{i_3\leq 3}||f(\ub)\nabla^{i_3}\beta||_{L^\infty_{u}L^1_{\ub}L^2(S)})\\
{\leq }&C(\sum_{i_1\leq 1,i_2\leq 3}||\nab^{i_1}\psi||^{i_2}_{L^\infty_{u}L^\infty_{\ub}L^\infty(S)})(\sum_{i_3\leq 3}||f(\ub)\nabla^{i_3}\beta||_{L^\infty_{u}L^2_{\ub}L^2(S)})||f(\ub)^{-1}||_{L^\infty_{u}L^2_{\ub}L^\infty(S)}\\
\leq &C\ep(1+\Delta_1)^3\mathcal R{.}
\end{split}
\end{equation}
For the term where exactly $2$ derivatives fall on $\psi$ (notice that this is the highest number of derivatives that can fall on $\psi$), we control {$\nab^2\psi$} in $L^\infty_uL^\infty_{\ub}L^2(S)$ by $\Delta_1$ (using \eqref{BA1}). Thus we are left with $\beta$ in $L^\infty_uL^1_{\ub}L^\infty(S)$. By Sobolev embedding (Proposition {\ref{Sobolev}}), this can be bounded by $\displaystyle\sum_{i\leq 3}\|\nab^i\beta \|_{L^\infty_u L^1_{\ub}L^2(S)}$, which in turn can be controlled by $\mathcal R$ after applying the Cauchy-Schwarz inequality as in \eqref{psi.1}. More precisely,
\begin{equation}\label{psi.2}
\begin{split}
&||\nab^2\psi\beta||_{L^\infty_{u}L^1_{\ub}L^2(S)} \\
\leq& C||\nab^2\psi||_{L^\infty_{u}L^\infty_{\ub}L^2(S)}||\beta||_{L^\infty_{u}L^1_{\ub}L^\infty(S)}\\
\leq& C||\nab^2\psi||_{L^\infty_{u}L^\infty_{\ub}L^2(S)}(\sum_{i\leq 2}||\nab^i\beta||_{L^\infty_{u}L^1_{\ub}L^2(S)})\\
\leq& C||\nab^2\psi||_{L^\infty_{u}L^\infty_{\ub}L^2(S)}(\sum_{i\leq 2}||f(\ub)\nab^i\beta||_{L^\infty_{u}L^2_{\ub}L^\infty(S)})||f(\ub)^{-1}||_{L^\infty_{u}L^2_{\ub}L^\infty(S)}\\
\leq &C\ep\Delta_1\mathcal R.
\end{split}
\end{equation}
Combining \eqref{psi.1} and \eqref{psi.2}, we have
$$||\sum_{i_1+i_2+i_3\leq 3}\nab^{i_1}\psi^{i_2}\nabla^{i_3}\beta||_{L^\infty_{u}L^1_{\ub}L^2(S)}\leq C\ep(1+\Delta_1)^3\mathcal R.$$
We then estimate the {second term in \eqref{etaeqn}}. We separate the terms where more derivatives fall on $\psi_H$ and those where more derivatives fall on $\psi$:
\begin{equation}\label{etanonlinear}
\begin{split}
&||\sum_{i_1+i_2+i_3+i_4\leq 3}\nab^{i_1}\psi^{i_2}\nabla^{i_3}\psi\nabla^{i_4}\psi_{H}||_{L^\infty_{u}L^1_{\ub}L^2(S)} \\
\leq& C(\sum_{i_1\leq 1,{1\leq }i_2\leq 4}||\nab^{i_1}\psi||_{L^\infty_{u}L^\infty_{\ub}L^\infty(S)}^{i_2}) (\sum_{i_3\leq 3}||\nabla^{i_3}\psi_H||_{L^\infty_{u}L^1_{\ub}L^2(S)}) \\
&+C(1+||\psi||_{L^\infty_u L^\infty_{\ub}L^\infty(S)})(\sum_{i_1\leq 3}||\nabla^{i_1}\psi||_{L^\infty_{u}L^\infty_{\ub}L^2(S)})(\sum_{i_2\leq 1}||\nab^{i_2}\psi_H||_{L^\infty_{u}L^1_{\ub}L^\infty(S)})\\
\leq &C\Delta_1(1+\Delta_1)^3(\sum_{{i}\leq 3}||\nabla^{{i}}\psi_H||_{L^\infty_{u}L^1_{\ub}L^2(S)}+\sum_{{i}\leq 1}||\nab^{{i}}\psi_H||_{L^\infty_{u}L^1_{\ub}L^\infty(S)})\\
\leq &C\Delta_1(1+\Delta_1)^3||f(\ub)^{-1}||_{L^\infty_uL^2_{\ub}L^\infty(S)}\\
&\times(\sum_{{i}\leq 3}||f(\ub)\nabla^{{i}}\psi_H||_{L^\infty_{u}L^2_{\ub}L^2(S)}+\sum_{{i}\leq 1}||f(\ub)\nab^{{i}}\psi_H||_{L^\infty_{u}L^2_{\ub}L^\infty(S)})\\
\leq &C\Delta_1^2(1+\Delta_1)^3\epsilon.
\end{split}
\end{equation}
Hence, by Proposition \ref{transport}, we have
$$\sum_{i\leq 3}||\nab^i\eta||_{L^\infty_uL^\infty_{\ub}L^2(S)}\leq C(\mathcal O_{ini})+C\ep (\Delta_1^2(1+\Delta_1)^3+\mathcal R(1+\Delta_1)^3)\leq C(\mathcal O_{ini}),$$
after choosing $\epsilon$ to be sufficiently small.
Similarly, we consider the equation for $\nab_3\nab^i\etab$ to get
$$\sum_{i\leq 3}||\nab^i\etab||_{L^\infty_uL^\infty_{\ub}L^2(S)}\leq C(\mathcal O_{ini}).$$
\end{proof}

We now move to the terms that we denote by $\psi_{\Hb}$, i.e., $\trchb$, $\chibh$ and $\omegab$. All of them obey a $\nabla_4$ equation. Unlike the previous estimates for {$\psi$}, the initial data for the quantities $\psi_{\Hb}$ are not in $L^\infty_u$. We will therefore prove only a bound for $\psi_{\Hb}$ in the weighted norm $||f(u)\cdot||_{L^2_u L^\infty_{\ub}L^\infty(S)}$.

\begin{proposition}\label{psiHb}
Assume
$$\mathcal R <\infty,\quad\tilde{\mathcal O}_{4,2}<\infty.$$
Then there exists $\epsilon_0=\epsilon_0(\Delta_1,\mathcal R,\tilde{\mathcal O}_{4,2})$ such that whenever $\epsilon\leq \epsilon_0$, 
$$\sum_{i\leq 3}\mathcal O_{i,2}[\psi_{\Hb}]\leq C(\mathcal O_{ini}) .$$
In particular, as before, this estimate is independent of $\Delta_1$.
\end{proposition}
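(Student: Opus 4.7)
The plan is to mirror the proof of Proposition \ref{psi}, interchanging the roles of $u$ and $\ub$ (correspondingly $\nab_3$ and $\nab_4$, and the weights $f(u)$ and $f(\ub)$). Starting from the schematic null structure equation
$$\nab_4\psi_{\Hb}=K+\nab\psi+\psi\psi+\psi_H\psi_{\Hb}$$
(Section \ref{schnot}), I would commute with $\nab^i$ for $i\leq 3$ via Proposition \ref{repeated.com} and then integrate along $e_4$ from the initial data on $\Hb_0$ using Proposition \ref{transport} with $p=2$. The contribution $\psi^{j}\psi_H\nab^i\psi_{\Hb}$ is isolated and absorbed by Gronwall's inequality in $\ub$, using that
$$\int_0^{\ub_*}\|\psi_H\|_{L^\infty(S_{u,\ub'})}\,d\ub' \leq \|f(\ub)^{-1}\|_{L^2_{\ub}}\,\|f(\ub)\psi_H\|_{L^2_{\ub}L^\infty(S)} \leq C\ep\Delta_1$$
by \eqref{BA1}. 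This produces, at each fixed $u$, an estimate of the form
$$\|\nab^i\psi_{\Hb}\|_{L^\infty_{\ub}L^2(S_{u,\cdot})}\leq C\|\nab^i\psi_{\Hb}\|_{L^2(S_{u,0})}+C\|\mbox{(source terms)}\|_{L^1_{\ub'}L^2(S_{u,\ub'})},$$
with $C$ independent of $\Delta_1$.

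The source terms are estimated exactly as in \eqref{psi.1}, \eqref{psi.2} and \eqref{etanonlinear}. Factors involving $\nab^{\leq 3}\psi$, $\nab^{\leq 3}\psi_H$, and $\nab^{\leq 2}\psi_{\Hb}$ are handled using Proposition \ref{psi}, the bootstrap \eqref{BA1}, and Sobolev embedding (Proposition \ref{Sobolev}); whenever an $f(\ub)$-weighted factor (from $\psi_H$) is present I apply Cauchy--Schwarz in $\ub$ against $\|f(\ub)^{-1}\|_{L^2_{\ub}}\leq \ep$ to gain smallness. The genuinely new contributions are the curvature term $\nab^{\leq 3}K$ and the top-order term $\nab^4\psi$ arising from $\nab^{i_3}(\nab\psi)$ with $i_3=3$; these are controlled respectively by $\mathcal R$ and $\tilde{\mathcal O}_{4,2}$, both of which supply weighted $L^\infty_u L^2_{\ub}L^2(S)$ bounds (with weight $f(u)$) that pair nicely against the weight $f(u)$ in the target norm.

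Finally, I would multiply by $f(u)$ and take $L^2_u$. Since $f(u)$ is $\ub$-independent it commutes with the $\ub'$-integral; Minkowski followed by Cauchy--Schwarz in $\ub'$ and then $\|\cdot\|_{L^2_u}\leq u_*^{1/2}\|\cdot\|_{L^\infty_u}$ converts each nonlinear contribution into the form $\ub_*^{1/2}u_*^{1/2}\leq \ep$ times an $L^\infty_u L^2_{\ub}L^2(S)$ norm already controlled by $\mathcal R$, $\tilde{\mathcal O}_{4,2}$, $C(\mathcal O_{ini})$, or powers of $\Delta_1$. The initial-data piece $\|f(u)\nab^i\psi_{\Hb}\|_{L^2_u L^2(S_{u,0})}$ is bounded by $\mathcal O_{ini}$ directly. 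Choosing $\ep$ small depending on $\Delta_1$, $\mathcal R$, and $\tilde{\mathcal O}_{4,2}$ yields $\sum_{i\leq 3}\mathcal O_{i,2}[\psi_{\Hb}]\leq C(\mathcal O_{ini})$, as claimed.

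The main (mostly bookkeeping) difficulty is arranging the weights in the double-null setting so that the smallness from $\|f(\ub)^{-1}\|_{L^2_{\ub}}\leq \ep$ and from $u_*\leq \ep$ can both be extracted: the weight $f(u)$ must sit outside the $\ub$-integral while an $f(\ub)$-weight must be available to absorb Cauchy--Schwarz losses against the singular factors $\psi_H$ and $K$. The essential \emph{structural} input is that the schematic right-hand side of $\nab_4\psi_{\Hb}$ contains no $\psi_{\Hb}^2$ or $\psi_H^2$ term; the former would prevent the Gronwall exponent from being integrable in $\ub$, while the latter would force us to control $\int_0^{\ub_*} f(\ub)^{-4}d\ub$, which is not assumed finite and which is indeed infinite for the genuinely singular data produced in Section \ref{sec.data}.
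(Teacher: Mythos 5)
Your proposal is correct and follows essentially the same route as the paper's proof of Proposition~\ref{psiHb}: commute the $\nab_4$ transport equation for $\psi_{\Hb}$ with $\nab^{\leq 3}$, integrate in $\ub$ from $\Hb_0$ via Proposition~\ref{transport}, bound the source terms in $L^1_{\ub}L^2(S)$ using the bootstrap~\eqref{BA1}, the already-proved Proposition~\ref{psi}, the $f(u)$-weighted $\mathcal R$ and $\tilde{\mathcal O}_{4,2}$ norms for the genuinely new top-order terms $\nab^{\leq 3}K$ and $\nab^4\eta$, and Cauchy--Schwarz in $\ub$ against $\|f(\ub)^{-1}\|_{L^2_{\ub}}\leq\ep$ for the singular $\psi_H$ factors, absorb the $\psi_{\Hb}$ contribution (equivalently, Gronwall), and finally multiply by $f(u)$ and take $L^2_u$. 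One small imprecision in your concluding structural remark: a hypothetical $\psi_{\Hb}^2$ source would not make the Gronwall exponent $\ub$-non-integrable at each fixed $u<u_*$; the actual obstruction is that $\int_0^{\ub_*}\|\psi_{\Hb}\|_{L^\infty(S_{u,\ub'})}\,d\ub'\lesssim \ub_*f(u)^{-2}$ is not bounded uniformly as $u\to u_*$, so the resulting exponential factor would defeat the $f(u)$ weight when the $L^2_u$ norm is taken.
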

\begin{proof}
According to the definition of the $\mathcal O_{i,2}$ norm, we need to control the weighted $L^2_{u}L^\infty_{\ub}L^2(S)$ norm of $\psi_{\Hb}$. Using the null structure equations, for each $\psi_{\Hb}\in\{\trchb,\chibh,\omegab\}$, we have an equation of the type
$$\nabla_{4}\psi_{\Hb}=K+\nabla\etab+\psi\psi+\psi_H\psi_{\Hb}.$$
We also use the null structure equations commuted with angular derivatives:
\begin{equation*}
\begin{split}
\nabla_{4}\nabla^i\psi_{\Hb}
=&\sum_{i_1+i_2+i_3=i}\nabla^{i_1}\psi^{i_2}\nabla^{i_3}(K+\nabla\etab)+\sum_{i_1+i_2+i_3+i_4=i}\nabla^{i_1}\psi^{i_2}\nabla^{i_3}\psi\nabla^{i_4}\psi\\
&+\sum_{i_1+i_2+i_3+i_4=i}\nabla^{i_1}\psi^{i_2}\nabla^{i_3}\psi_H\nabla^{i_4}\psi_{\Hb}.
\end{split}
\end{equation*}
We estimate the curvature term using the curvature norm. Recall that the curvature norm for $K$ along the $H_u$ is weighted with $f(u)$. Using the Sobolev embedding theorem in Proposition \ref{Sobolev}, we have
\begin{equation}\label{psihb1}
\begin{split}
&||\sum_{i_1+i_2+i_3\leq 3}\nab^{i_1}\psi^{i_2}\nabla^{i_3}K||_{L^1_{\ub}L^2(S)} \\
\leq &C f(u)^{-1}(\sum_{i_1\leq 1,i_2\leq 3}||\nab^{i_1}\psi||^{i_2}_{L^2_{\ub}L^\infty(S)})(\sum_{i_3\leq 3}||f(u)\nab^{i_3} K||_{L^2_{\ub} L^2(S)})\\
&+C f(u)^{-1}||\nab^2\psi||_{L^2_{\ub}L^4(S)}||f(u) K||_{L^2_{\ub} L^4(S)}\\
\leq &C {\ep^{\f12}}f(u)^{-1}(\sum_{i_1\leq 3,i_2\leq 3}||\nab^{i_1}\psi||^{i_2}_{L^{{\infty}}_{\ub}L^2(S)})(\sum_{i_3\leq 3}||f(u)\nab^{i_3} K||_{L^2_{\ub} L^2(S)})\\
\leq &C\ep^{\frac 12} f(u)^{-1}(1+\Delta_1)^3\mathcal R.
\end{split}
\end{equation}
The term linear in $\nab^{4}\eta$ can be estimated analogously but using the $\tilde{\mathcal O}_{4,2}$ norms instead of the $\mathcal R$ norms:
\begin{equation}\label{psihb2}
\begin{split}
&||\sum_{i_1+i_2+i_3\leq 3}\nab^{i_1}\psi^{i_2}\nabla^{i_3+1}\eta||_{L^1_{\ub}L^2(S)} \\
\leq &C\ep (\sum_{i_1\leq 3}\sum_{i_2\leq 4}||\nab^{i_1}\psi||^{i_2}_{L^{{\infty}}_{\ub}L^2(S)})+C\ep^{\frac 12}f(u)^{-1}||f(u)\nab^4\eta||_{L^2_{\ub}L^2(S)}\\
\leq &C\ep(1+\Delta_1)^4+C\ep^{\frac 12}f(u)^{-1}\tilde{\mathcal O}_{4,2}.
\end{split}
\end{equation}
We now move to control the terms that are nonlinear in the Ricci coefficients. First, we estimate the terms without $\psi_H$ or $\psi_{\Hb}$:
\begin{equation}\label{psihb3}
\begin{split}
&||\sum_{i_1+i_2+i_3+i_4\leq 3}\nab^{i_1}\psi^{i_2}\nabla^{i_3}\psi\nabla^{i_4}\psi||_{L^1_{\ub}L^2(S)} \\
\leq& C\ep(\sum_{i_1\leq 1,i_2\leq 4}||\nab^{i_1}\psi||_{L^\infty_{\ub} L^\infty(S)}^{i_2})(\sum_{i_3\leq 3}||\nabla^{i_3}\psi||_{ L^\infty_{\ub} L^2(S)})\\
\leq &C\ep(1+\Delta_1)^5.
\end{split}
\end{equation}
We then control the term with both $\psi_H$ and $\psi_{\Hb}$:
\begin{equation}\label{psihb4}
\begin{split}
&||\sum_{i_1+i_2+i_3+i_4\leq 3}\nabla^{i_1}\psi^{i_2}\nabla^{i_3}\psi_H\nabla^{i_4}\psi_{\Hb}||_{L^1_{\ub}L^2(S)} \\
\leq& C(\sum_{i_1\leq 1,i_2\leq 3}||\nab^{i_1}\psi||_{L^\infty_{\ub}L^\infty(S)}^{i_2})(\sum_{i_3\leq 1}||\nab^{i_3}\psi_{\Hb}||_{L^\infty_{\ub}L^\infty(S)}{)}(\sum_{i_4\leq 3}||\nabla^{i_4}\psi_{H}||_{L^1_{\ub}L^2(S)}) \\
&+C(\sum_{i_1\leq 1,i_2\leq 3}||\nab^{i_1}\psi||_{L^\infty_{\ub}L^\infty(S)}^{i_2})(\sum_{i_3\leq 3}||\nab^{i_3}\psi_{\Hb}||_{L^\infty_{\ub}L^2(S)}{)}(\sum_{i_4\leq 1}||\nabla^{i_4}\psi_{H}||_{L^1_{\ub}L^\infty(S)}) \\
&+C||\nab^2\psi||_{L^\infty_{\ub}L^2(S)}||\psi_{\Hb}||_{L^\infty_{\ub}L^\infty(S)}||\psi_{H}||_{L^1_{\ub}L^\infty(S)} \\
\leq& C(\sum_{i_1\leq 1,i_2\leq 3}||\nab^{i_1}\psi||_{L^\infty_{\ub}L^\infty(S)}^{i_2})(\sum_{i_3\leq 1}||\nab^{i_3}\psi_{\Hb}||_{L^\infty_{\ub}L^\infty(S)}{)}(\sum_{i_4\leq 3}||f(\ub)\nabla^{i_4}\psi_{H}||_{L^2_{\ub}L^2(S)})||f(\ub)^{-1}||_{L^2_{\ub}} \\
&+C(\sum_{i_1\leq 1,i_2\leq 3}||\nab^{i_1}\psi||_{L^\infty_{\ub}L^\infty(S)}^{i_2})(\sum_{i_3\leq 3}||\nab^{i_3}\psi_{\Hb}||_{L^\infty_{\ub}L^2(S)}{)}(\sum_{i_4\leq 1}||f(\ub)\nabla^{i_4}\psi_{H}||_{L^2_{\ub}L^\infty(S)})||f(\ub)^{-1}||_{L^2_{\ub}} \\
&+C||\nab^2\psi||_{L^\infty_{\ub}L^2(S)}||\psi_{\Hb}||_{L^\infty_{\ub}L^\infty(S)}||f(\ub)\psi_{H}||_{L^2_{\ub}L^\infty(S)}||f(\ub)^{-1}||_{L^2_{\ub}} \\
\leq &C\ep(1+\Delta_1)^3(\sum_{i_1\leq 3}||f(\ub)\nab^{i_1}\psi_H||_{L^2_{\ub}L^2(S)})(\sum_{i_2\leq 3}||\nab^{i_2}\psi_{\Hb}||_{L^\infty_{\ub}L^2(S)})\\
\leq &C\epsilon\Delta_1(1+\Delta_1)^3(\sum_{i\leq 3}||\nab^i\psi_{\Hb}||_{L^\infty_{\ub}L^2(S)}).
\end{split}
\end{equation}
Therefore, by {the bounds} \eqref{psihb1}, \eqref{psihb2}, \eqref{psihb3} and \eqref{psihb4}, we have that for every fixed $u$,
\begin{equation*}
\begin{split}
\sum_{i\leq 3}||\nab^i\psi_{\Hb}||_{L^\infty_{\ub}L^{{2}}(S)}
\leq &{C(\sum_{i\leq 3}||\nab^i\psi_{\Hb}||_{L^2(S_{u,0})})}+C\ep^{\frac 12}f(u)^{-1}(\mathcal R+\tilde{\mathcal O}_{4,2})+C\ep(1+\Delta_1)^5\\
&+C\epsilon\Delta_1(1+\Delta_1)^3(\sum_{i\leq 3}||\nab^i\psi_{\Hb}||_{L^\infty_{\ub}L^2(S)}).
\end{split}
\end{equation*}
We now multiply this inequality by $f(u)$ and take the $L^2$ norm in $u$ to get
\begin{equation*}
\begin{split}
&\sum_{i\leq 3}||f(u)\nabla^i\psi_{\Hb}||_{L^2_{u}L^\infty_{\ub}L^2(S)} \\
\leq &C(\mathcal O_{ini})+C\ep^{\frac 12}||f(u)(f(u)^{-1})||_{L^2_u}(\mathcal R+\tilde{\mathcal O}_{4,2})+C\ep(1+\Delta_1)^5\\
&+C\epsilon\Delta_1(1+\Delta_1)^3(\sum_{i\leq 3}||f(u)\nab^i\psi_{\Hb}||_{L^2_uL^\infty_{\ub}L^2(S)})\\
\leq &C(\mathcal O_{ini}),
\end{split}
\end{equation*}
for $\epsilon$ sufficiently small. 
\end{proof}

Using instead the equation for $\nab_3\psi_H$, we obtain the following estimates in a completely analogous manner:
\begin{proposition}\label{psiH}
Assume
$$\mathcal R <\infty,\quad\tilde{\mathcal O}_{4,2}<\infty.$$
Then there exists $\epsilon_0=\epsilon_0(\Delta_1, \mathcal R,\tilde{\mathcal O}_{4,2})$ such that whenever $\epsilon\leq \epsilon_0$, 
$$\sum_{i\leq 3}\mathcal O_{i,2}[\psi_{H}]\leq C(\mathcal O_{ini}) .$$
In particular, this estimate is independent of $\Delta_1$.
\end{proposition}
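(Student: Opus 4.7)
The proof proceeds in complete analogy with Proposition \ref{psiHb}, after interchanging the roles of $u$ and $\ub$ (and correspondingly $\psi_H \leftrightarrow \psi_{\Hb}$ and $e_3 \leftrightarrow e_4$). The plan is to commute the schematic equation
$$\nab_3 \psi_H = K + \nab \psi + \psi\psi + \psi_H\psi_{\Hb}$$
with $i\leq 3$ angular derivatives to obtain
$$\nab_3 \nab^i\psi_H = \sum_{i_1+i_2+i_3=i} \nab^{i_1}\psi^{i_2}\nab^{i_3}(K+\nab\psi) + \sum_{i_1+\cdots+i_4=i}\nab^{i_1}\psi^{i_2}\nab^{i_3}\psi\nab^{i_4}\psi + \sum_{i_1+\cdots+i_4=i}\nab^{i_1}\psi^{i_2}\nab^{i_3}\psi_H\nab^{i_4}\psi_{\Hb},$$
and then invoke Proposition \ref{transport} to integrate in the $u$-direction from $u=0$, giving for each fixed $\ub$ a bound
$$\sum_{i\leq 3}\|\nab^i\psi_H\|_{L^\infty_u L^2(S)} \leq C\sum_{i\leq 3}\|\nab^i\psi_H\|_{L^2(S_{0,\ub})} + \|\mathrm{RHS}\|_{L^1_u L^2(S)}.$$
Once this pointwise-in-$\ub$ estimate is in hand, the final step is to multiply by $f(\ub)$ and take $L^2_{\ub}$.

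The estimates on the right-hand side mirror \eqref{psihb1}--\eqref{psihb4} exactly. For the curvature term $\nab^{i_3}K$, one uses the part of $\mathcal R$ defined along $\Hb_{\ub}$, namely $\|f(\ub)\nab^i(K,\sigmac)\|_{L^\infty_{\ub}L^2_u L^2(S)}$; extracting an $f(\ub)^{-1}$ factor and applying Cauchy--Schwarz in $u$ (together with $\|f(u)^{-1}\|_{L^2_u}\leq \ep$) yields a contribution of size $C\ep^{1/2} f(\ub)^{-1}(1+\Delta_1)^3\mathcal R$. The top-order linear $\nab\psi$ term (i.e.\ when $i=3$ and the derivative lands on $\psi$) is controlled by the $\tilde{\mathcal O}_{4,2}$ component $\|f(\ub)\nab^4(\eta,\etab)\|_{L^\infty_{\ub}L^2_u L^2(S)}$, again producing a factor of $f(\ub)^{-1}$. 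The quadratic $\psi\psi$ terms are harmless by \eqref{BA1}, as in \eqref{psihb3}. The mixed term $\nab^{i_3}\psi_H\nab^{i_4}\psi_{\Hb}$ is treated as in \eqref{psihb4}: the factor $\psi_{\Hb}$ (or its derivatives) is placed in $L^1_u L^\infty(S)$ and bounded via Cauchy--Schwarz against $\|f(u)^{-1}\|_{L^2_u}\leq \ep$ using the norm $\mathcal O_{i,2}[\psi_{\Hb}]$ from Proposition \ref{psiHb}; the remaining $\psi_H$ factor is placed in $L^\infty_u L^2(S)$ or $L^\infty_u L^\infty(S)$ as appropriate via Sobolev embedding, producing a term bounded by $C\ep\Delta_1(1+\Delta_1)^3 \sum_{i\leq 3}\|\nab^i\psi_H\|_{L^\infty_u L^2(S)}$.

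Combining these bounds, multiplying by $f(\ub)$, and taking $L^2_{\ub}$ converts the $f(\ub)^{-1}$ factors from the $K$- and $\nab\psi$-terms into the harmless $\|f(\ub)\cdot f(\ub)^{-1}\|_{L^2_{\ub}}\leq \ep^{1/2}$, while the nonlinear $\psi_H\psi_{\Hb}$ contribution remains linear in $\sum_{i\leq 3}\|f(\ub)\nab^i\psi_H\|_{L^2_{\ub}L^\infty_u L^2(S)}$ with a small prefactor, hence can be absorbed on the left for $\ep$ sufficiently small depending on $\Delta_1$, $\mathcal R$, and $\tilde{\mathcal O}_{4,2}$. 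This yields $\sum_{i\leq 3}\mathcal O_{i,2}[\psi_H]\leq C(\mathcal O_{ini})$ as claimed.

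The only real obstacle is bookkeeping: one must be careful that every weight $f$ is placed along the direction in which the relevant norm is weighted, and must honor the distinction (Remark \ref{order}) between $L^\infty_u L^2_{\ub}$ and $L^2_{\ub}L^\infty_u$. The key structural input — identical to that exploited in Proposition \ref{psiHb} — is that the schematic $\nab_3\psi_H$ equation contains neither $\psi_H\psi_H$ nor $\psi_{\Hb}\psi_{\Hb}$ terms; only the mixed product $\psi_H\psi_{\Hb}$ appears, and its $u$-integrability is provided by the previously established smallness of $\|\psi_{\Hb}\|_{L^1_u L^\infty(S)}$.
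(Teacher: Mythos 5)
Your proposal is correct and takes essentially the same approach as the paper: the paper itself offers no separate proof of Proposition \ref{psiH}, stating only that it follows ``in a completely analogous manner'' from the proof of Proposition \ref{psiHb} via the symmetry $u\leftrightarrow\ub$, $\nab_3\leftrightarrow\nab_4$, $\psi_H\leftrightarrow\psi_{\Hb}$, and your write-up carries out precisely this substitution, correctly identifying the relevant components of $\mathcal R$ and $\tilde{\mathcal O}_{4,2}$ (namely $\|f(\ub)\nab^i(K,\sigmac)\|_{L^\infty_{\ub}L^2_u L^2(S)}$ and $\|f(\ub)\nab^4(\eta,\etab)\|_{L^\infty_{\ub}L^2_u L^2(S)}$), the correct placement of the $f(\ub)^{-1}$ weights, and the absorption mechanism for the $\psi_H\psi_{\Hb}$ term.
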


By the Sobolev embedding theorems given by Proposition {\ref{Sobolev}}, we have thus closed our bootstrap assumption (\ref{BA1}) after choosing $\Delta_1$ to be sufficiently large depending on the initial data norm $\mathcal O_{ini}$. We have therefore proved the desired estimates for the Ricci coefficients and their first {three} angular {covariant} derivatives. We summarize this in the following proposition.
\begin{proposition}\label{Ricci}
Assume
$$\mathcal R <\infty,\quad\tilde{\mathcal O}_{4,2}<\infty.$$
Then there exists $\epsilon_0=\epsilon_0(\mathcal O_{ini},\mathcal R,\tilde{\mathcal O}_{4,2})$ such that whenever $\epsilon\leq \epsilon_0$, 
$$\sum_{i\leq 3}\mathcal O_{i,2}[\psi,\psi_{\Hb},\psi_H]\leq C(\mathcal O_{ini}).$$
\end{proposition}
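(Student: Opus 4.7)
The plan is to assemble the three preceding propositions (Propositions \ref{psi}, \ref{psiHb}, \ref{psiH}) into a single closed system of estimates by running a continuity/bootstrap argument on the constant $\Delta_1$ in \eqref{BA1}. I will first observe that each of those three propositions has already produced a bound of the form $\sum_{i \leq 3} \mathcal O_{i,2}[\cdot] \leq C(\mathcal O_{ini})$ where the constant $C(\mathcal O_{ini})$ depends only on the initial data norm and not on $\Delta_1$, provided $\epsilon$ is chosen small depending on $\Delta_1$, $\mathcal R$, and $\tilde{\mathcal O}_{4,2}$. Summing the three estimates yields
$$\sum_{i \leq 3} \mathcal O_{i,2}[\psi, \psi_{\Hb}, \psi_H] \leq C_0(\mathcal O_{ini}).$$

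Next I would upgrade this $L^2(S)$ control to the $L^\infty(S)$ and $L^4(S)$ norms that appear in the bootstrap assumption \eqref{BA1}. By the Sobolev embedding on $S_{u,\ub}$ (Proposition \ref{Sobolev}), which is available since Proposition \ref{gamma} gives uniform control of the metric $\gamma$, one has
$$\mathcal O_{i,\infty} \leq C \sum_{j \leq i+2} \|\nab^j(\cdot)\|_{L^2(S)}, \qquad \mathcal O_{i,4} \leq C \sum_{j \leq i+1} \|\nab^j(\cdot)\|_{L^2(S)},$$
so that $\sum_{i \leq 1} \mathcal O_{i,\infty} + \sum_{i \leq 2} \mathcal O_{i,4}$ is controlled by $\sum_{i \leq 3} \mathcal O_{i,2}$, and hence by $C_0(\mathcal O_{ini})$. (For the $\psi_H$, $\psi_{\Hb}$ components one applies Sobolev pointwise in $\ub$ or $u$ before taking the weighted $L^2$ norm in the null direction, which preserves the structure of the norms in \eqref{O.psiH.def}, \eqref{O.psiHb.def}.)

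To close the bootstrap, I would now choose $\Delta_1 := 2 C_0(\mathcal O_{ini}) C_{Sob}$ at the start of the argument, where $C_{Sob}$ is the Sobolev constant from Proposition \ref{Sobolev}; then select $\epsilon_0 = \epsilon_0(\Delta_1, \mathcal R, \tilde{\mathcal O}_{4,2}) = \epsilon_0(\mathcal O_{ini}, \mathcal R, \tilde{\mathcal O}_{4,2})$ so that all three preceding propositions apply. A standard continuity argument in $u_*$, $\ub_*$ (or equivalently on the maximal sub-rectangle on which \eqref{BA1} holds) then shows that on the whole region the estimate $\sum_{i \leq 1} \mathcal O_{i,\infty} + \sum_{i \leq 2} \mathcal O_{i,4} + \sum_{i \leq 3} \mathcal O_{i,2} \leq \Delta_1/2$ is strictly improved, so \eqref{BA1} propagates and the desired bound $\sum_{i \leq 3} \mathcal O_{i,2}[\psi, \psi_{\Hb}, \psi_H] \leq C(\mathcal O_{ini})$ holds unconditionally (given the hypotheses on $\mathcal R$, $\tilde{\mathcal O}_{4,2}$).

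The only subtle point in this assembly is verifying that the Sobolev embedding genuinely gives back \emph{all} the norms controlled in \eqref{BA1}, including the mixed weighted norms for $\psi_H$ and $\psi_{\Hb}$. Since the weights $f(\ub)$, $f(u)$ and the $L^2_{\ub}$, $L^2_u$ integrations sit outside the $L^p(S)$ norm in the definitions \eqref{O.psiH.def}--\eqref{O.psiHb.def}, the sphere-wise Sobolev inequality commutes with these outer norms and no additional work is required; this is the only place one has to be careful, but it is routine. No new analytic ingredients are needed beyond Propositions \ref{psi}, \ref{psiHb}, \ref{psiH}, and \ref{Sobolev}.
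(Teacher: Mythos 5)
Your proposal is correct and is essentially identical to the paper's argument: the paper obtains Proposition \ref{Ricci} precisely by summing the bounds of Propositions \ref{psi}, \ref{psiHb}, and \ref{psiH} (each of which gives a $\Delta_1$-independent bound $C(\mathcal O_{ini})$), then closes the bootstrap assumption \eqref{BA1} via the Sobolev embedding of Proposition \ref{Sobolev} after choosing $\Delta_1$ sufficiently large depending only on $\mathcal O_{ini}$. Your added remarks on the spherewise Sobolev inequality commuting with the outer weighted norms and on the continuity argument are just an explicit spelling-out of what the paper leaves implicit.
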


\section{Elliptic Estimates for Fourth Derivatives of the Ricci Coefficients}\label{secRicci32}

We now estimate the 4th derivative of the Ricci coefficients. 
We introduce the following bootstrap assumption:
\begin{equation}\tag{A2}\label{BA2}
\tilde{\mathcal O}_{4,2}\leq \Delta_2,
\end{equation}
where $\Delta_2$ is a constant to be chosen later.

The estimates for the 4th derivative of the Ricci coefficients cannot be achieved only by the transport equations since there would be a loss in derivatives. We can however use the transport equation - Hodge system type estimates as in \cite{KN, Chr, KlRo}. We will first derive estimates for some chosen combination of $\nab^4(\psi,\psi_H,\psi_{\Hb})+{\nab^3}(\beta,K,\sigmac,\betab)$ by using transport equations. We will then show that the estimates for all the 4th derivatives of the Ricci coefficients can be proved via elliptic estimates.

In order to apply the elliptic estimates in Section~\ref{elliptic}, we need to first control the Gauss curvature and its first and second derivatives in $L^2(S)$.

\begin{proposition}\label{Kest}
Assume
$$\mathcal R<\infty.$$
Then there exists $\epsilon_0=\epsilon_0(\Delta_2,\mathcal R)$ such that whenever $\epsilon\leq \epsilon_0$,
$$\sum_{i\leq 2}||\nab^i K||_{L^\infty_uL^\infty_{\ub}L^2(S)}\leq C(\mathcal O_{ini}{)}.$$
\end{proposition}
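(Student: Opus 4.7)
\textbf{Proof plan for Proposition \ref{Kest}.}
The strategy is to derive $L^2$ bounds for $\nab^i K$, $i\leq 2$, by a two-step integration of the Bianchi equations for $K$ in \eqref{eq:null.Bianchi2}--\eqref{eq:null.Bianchi3}. Since the bootstrap assumption (A2) and the hypothesis $\mathcal R<\infty$ imply, via Proposition \ref{Ricci}, that $\sum_{i\leq 3}\mathcal O_{i,2}[\psi,\psi_H,\psi_{\Hb}]\leq C(\mathcal O_{ini})$, we already have control on up to three angular derivatives of all Ricci coefficients with a constant independent of $\Delta_2$. The plan is to exploit this, together with the $f$-weighted integrability of the singular curvature components, to close the estimates.

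First, I would establish bounds for $\nab^i K$ on $\Hb_0$, i.e., $\sup_u\|\nab^i K\|_{L^2(S_{u,0})}\leq C(\mathcal O_{ini})$ for $i\leq 2$. This is done by integrating the schematic $\nab_3 K$ equation $\nab_3\nab^i K = \nab^{i+1}\betab + \nab^i(\psi_{\Hb}K) + \ldots$ from $S_{0,0}$ along $\Hb_0$ via Proposition \ref{transport}. On $\Hb_0$, the Codazzi relation $\betab=\div\chibh-\tfrac12\nab\trchb+(\psi\psi_{\Hb})$ together with the initial data assumptions of Theorem \ref{extthm} yield $\|\nab^{i+1}\betab\|_{L^2(S_{u,0})}\lesssim f(u)^{-2}$, which is integrable in $u$ with total mass at most $\ep^2$. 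The quadratic terms involving $\psi_{\Hb}\nab^i K$ are closed by Gronwall's inequality using $\|\psi_{\Hb}\|_{L^1_u L^\infty(S)}\leq C\ep$ on $\Hb_0$.

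Next, for each fixed $u$, I propagate in $\ub$ using Proposition \ref{repeated.com} to commute the schematic equation
\[
\nab_4 K + \div\beta = \psi_H K + \psi\!\!\sum_{i_1+i_2+i_3=1}\!\!\psi^{i_1}\nab^{i_2}\psi\,\nab^{i_3}\psi_H
\]
with $\nab^i$, obtaining an expression for $\nab_4\nab^i K$ with the curvature source $\nab^{i+1}\beta$ (and $i+1\leq 3$) plus nonlinear error terms in $\{\psi,\psi_H,K\}$ with at most $i+1$ angular derivatives. Applying Proposition \ref{transport} with $p=2$ and using the Step~1 initial value, the curvature term is dealt with by Cauchy--Schwarz and the weight $f(\ub)^{-1}$:
\[
\int_0^{\ub_*}\!\|\nab^{i+1}\beta\|_{L^2(S)}\,d\ub' \leq \|f(\ub)^{-1}\|_{L^2_\ub}\,\|f(\ub)\nab^{i+1}\beta\|_{L^2_\ub L^2(S)}\leq C\ep\,\mathcal R.
\]
The nonlinear terms of the form $\nab^{i_3}\psi_H\cdot\nab^{i_4}K$ with $i_4\leq i$ are absorbed by Gronwall after using $\|\nab^{i_3}\psi_H\|_{L^1_\ub L^\infty(S)}\leq C\ep$ (Cauchy--Schwarz applied to the $f(\ub)$-weighted norm of $\mathcal O_{i,2}[\psi_H]$, followed by Sobolev embedding from Proposition \ref{Sobolev}). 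The purely Ricci-coefficient error terms are estimated as in \eqref{etanonlinear}, giving a contribution $\leq C\ep(1+C(\mathcal O_{ini}))^k$.

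The only delicate point is to ensure that the resulting constant depends only on $\mathcal O_{ini}$ and not on $\Delta_2$. This works because $\nab^i K$ for $i\leq 2$ couples to at most $\nab^3$ of the Ricci coefficients (controlled by Proposition \ref{Ricci} with a constant $C(\mathcal O_{ini})$) and to $\nab^{i+1}\beta$, $\nab^{i+1}\betab$ (both already included in $\mathcal R$); the fourth-derivative norm $\tilde{\mathcal O}_{4,2}$ enters only as a hypothesis for applying Proposition \ref{Ricci}, not directly in the source terms. Choosing $\ep$ sufficiently small depending on $\mathcal R$ and $\Delta_2$ absorbs all corrections, yielding $\sum_{i\leq 2}\|\nab^i K\|_{L^\infty_u L^\infty_\ub L^2(S)}\leq C(\mathcal O_{ini})$.
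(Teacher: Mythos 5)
Your proposal is essentially the same as the paper's proof: propagate $\nab^i K$ in $\ub$ via the commuted $\nab_4 K$ Bianchi equation and Proposition \ref{transport}, handle the $\nab^{i+1}\beta$ source by Cauchy--Schwarz against the $f(\ub)^{-1}$ weight, absorb the $\psi_H\cdot\nab^i K$ terms by Gronwall, and observe that $\Delta_2$ enters only through $\epsilon_0$ (via Proposition \ref{Ricci}), not through the source bounds. The one point you spell out more explicitly than the paper is the $L^\infty_u L^2(S_{u,0})$ bound on $\nab^i K$ over $\Hb_0$, obtained by integrating the $\nab_3 K$ equation along $\Hb_0$; the paper treats this as part of the initial data norm via Remark \ref{rmk.extthm}, but your reasoning for why it holds is correct.
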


\begin{proof}
$K$ obeys the following Bianchi equation:
$$\nab_4 K=\nab\beta+\psi_{H}K+\sum_{i_1+i_2+i_3\leq 1}\psi^{i_1}\nab^{i_2}\psi\nab^{i_3}\psi_H.$$
Commuting with angular derivatives, we have, for $i\leq 2$,
\begin{equation*}
\begin{split}
&\nab_4\nab^i K\\
=&\sum_{i_1+i_2+i_3\leq 2}\nab^{i_1}\psi^{i_2}\nab^{i_3+1}\beta+\sum_{i_1+i_2+i_3+i_4\leq 2}\nab^{i_1}\psi^{i_2}\nab^{i_3}\psi_H\nab^{i_4}K+\sum_{i_1+i_2+i_3+i_4\leq 3}\nab^{i_1}\psi^{i_2}\nab^{i_3}\psi\nab^{i_4}\psi_H.
\end{split}
\end{equation*}
By Proposition \ref{transport}, in order to control $\nab^iK$ in $L^\infty_uL^\infty_{\ub}L^2(S)$, we need to bound the right hand side in $L^\infty_uL^1_{\ub}L^2(S)$. We first control the term containing $\beta$:
\begin{equation*}
\begin{split}
&||\sum_{i_1+i_2+i_3\leq 2}\nab^{i_1}\psi^{i_2}\nab^{i_3+1}\beta||_{L^\infty_uL^1_{\ub}L^2(S)}\\
\leq &C(\sum_{i_1\leq 1,i_2\leq 2}||\nab^{i_1}\psi||_{L^\infty_uL^\infty_{\ub}L^\infty(S)}^{i_2})||f(\ub)^{-1}||_{L^2_{\ub}}(\sum_{i_3\leq 2}||f(\ub)\nab^{i_2+1}\beta||_{L^\infty_u L^2_{\ub} L^2(S)})\\
\leq &C(\mathcal O_{ini})\epsilon\mathcal R,
\end{split}
\end{equation*}
where we have used the estimates for $\psi$ given by Proposition \ref{Ricci}.
The term containing $K$ can be controlled by
\begin{equation*}
\begin{split}
&||\sum_{i_1+i_2+i_3+i_4\leq 2}\nab^{i_1}\psi^{i_2}\nab^{i_3}\psi_H\nab^{i_4}K||_{L^\infty_uL^1_{\ub}L^2(S)}\\
\leq &C(\sum_{i_1\leq 1,i_2\leq 2}||\nab^{i_1}\psi||^{i_2}_{L^\infty_uL^\infty_{\ub}L^\infty(S)})\int_0^{\ub}(\sum_{i_3+i_4\leq 2}||\nab^{i_3}\psi_H\nab^{i_4}K||_{L^\infty_u L^2(S_{u,\ub'})})d\ub'\\
\leq &C(\mathcal O_{ini})\int_0^{\ub}(\sum_{i_1+i_2\leq 2}||\nab^{i_1}\psi_H\nab^{i_2}K||_{L^\infty_u L^2(S_{u,\ub'})})d\ub'\\
\leq &C(\mathcal O_{ini})\int_0^{\ub}(\sum_{i_1\leq 2}||\nab^{i_1}\psi_H||_{L^\infty_u L^2(S_{u,\ub'})})(\sum_{i_2\leq 2}||\nab^{i_2}K||_{L^\infty_u L^2(S_{u,\ub'})})d\ub'.
\end{split}
\end{equation*}
The remaining term has been bounded in the previous section. By \eqref{etanonlinear} and Proposition \ref{Ricci},
\begin{equation*}
||\sum_{i_1+i_2+i_3\leq 2}\psi^{i_1}\nab^{i_2}\psi\nab^{i_3}\psi_H||_{L^\infty_uL^1_{\ub}L^2(S)}\leq C(\mathcal O_{ini})\epsilon.
\end{equation*}
Therefore, by Proposition \ref{transport},
\begin{equation*}
\begin{split}
&\sum_{i\leq 2}||\nab^i K||_{L^\infty_u L^2(S_{u,\ub})}\\
\leq &C(\mathcal O_{ini})(1+\ep+\ep\mathcal R+\int_0^{\ub}(\sum_{i_1\leq 2}||\nab^{i_1}\psi_H||_{L^\infty_u L^2(S_{u,\ub'})})(\sum_{i_2\leq 2}||\nab^{i_2}K||_{L^\infty_u L^2(S_{u,\ub'})})d\ub').
\end{split}
\end{equation*}
Gronwall's inequality implies
\begin{equation*}
\begin{split}
\sum_{i\leq 2}||\nab^i K||_{L^\infty_u L^2(S_{u,\ub})}
\leq C(\mathcal O_{ini})\exp(\sum_{i\leq 2}||\nab^i\psi_H||_{L^1_{\ub} L^\infty_u L^2(S)})
\leq C(\mathcal O_{ini})
\end{split}
\end{equation*}
since by Proposition \ref{Ricci}, $\displaystyle\sum_{i\leq 1}||\nab^i\psi_H||_{L^1_{\ub} L^\infty_u L^2(S)}\leq C(\mathcal O_{ini})$ for $\ep$ sufficiently small.
\end{proof}

It is easy to see that since $\sigmac$ satisfies a similar schematic Bianchi equation as $K$, we also have the following estimates for $\sigmac$ and its derivative:

\begin{proposition}\label{sigmacest}
Assume 
$$\mathcal R<\infty.$$
Then there exists $\epsilon_0=\epsilon_0(\Delta_2,\mathcal R)$ such that whenever $\epsilon\leq \epsilon_0$,
$$\sum_{i\leq 2}||\nab^i\sigmac||_{L^\infty_uL^\infty_{\ub}L^2(S)}\leq C(\mathcal O_{ini}).$$
\end{proposition}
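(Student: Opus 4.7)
The proof follows exactly the same scheme as Proposition \ref{Kest}, since the schematic Bianchi equation for $\sigmac$ in \eqref{eq:null.Bianchi3},
$$\nab_4\sigmac+\div{}^*\beta = \psi_H\sigmac+\psi\sum_{i_1+i_2+i_3=1}\psi^{i_1}\nab^{i_2}\psi\nab^{i_3}\psi_H,$$
has the same schematic form as the equation for $\nab_4 K$ (the Hodge dual $^*$ does not affect any of the $L^p$ estimates). My plan is to commute this equation with $i\leq 2$ angular derivatives, apply Proposition \ref{repeated.com}, and then invoke Proposition \ref{transport} to reduce the $L^\infty_u L^\infty_{\ub} L^2(S)$ estimate to controlling the right-hand side in $L^\infty_u L^1_{\ub} L^2(S)$.

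For the term involving $\nab^{i+1}\beta$, I would use Cauchy--Schwarz in $\ub$ together with the weight $f(\ub)$, i.e.
$$\|f(\ub)^{-1}\|_{L^2_{\ub}}\cdot \|f(\ub)\nab^{i+1}\beta\|_{L^\infty_u L^2_{\ub} L^2(S)}\leq C\ep\,\mathcal R,$$
and use $\sum_{i_1\leq 1,\, i_2\leq 2}\|\nab^{i_1}\psi\|^{i_2}_{L^\infty_u L^\infty_{\ub}L^\infty(S)}\leq C(\mathcal O_{ini})$ from Proposition \ref{Ricci} and Sobolev embedding (Proposition \ref{Sobolev}). For the term linear in $\sigmac$ of the form $\nab^{i_1}\psi^{i_2}\nab^{i_3}\psi_H \nab^{i_4}\sigmac$ with $i_1+i_2+i_3+i_4\leq 2$, I would factor out $\sum_{i_1\leq 1,i_2\leq 2}\|\nab^{i_1}\psi\|^{i_2}_{L^\infty_u L^\infty_{\ub}L^\infty(S)}$ and set up the inequality
$$\sum_{i\leq 2}\|\nab^i\sigmac\|_{L^\infty_u L^2(S_{u,\ub})}\leq C(\mathcal O_{ini})\bigl(1+\ep\mathcal R\bigr)+C(\mathcal O_{ini})\int_0^{\ub}\!\bigl(\sum_{j\leq 2}\|\nab^j\psi_H\|_{L^\infty_u L^2(S_{u,\ub'})}\bigr)\bigl(\sum_{j\leq 2}\|\nab^j\sigmac\|_{L^\infty_u L^2(S_{u,\ub'})}\bigr)d\ub',$$
then apply Gronwall's inequality using the fact that $\sum_{j\leq 2}\|\nab^j\psi_H\|_{L^1_{\ub}L^\infty_u L^2(S)}\leq C(\mathcal O_{ini})$, which follows from Proposition \ref{Ricci} and Cauchy--Schwarz against $\|f(\ub)^{-1}\|_{L^2_{\ub}}\leq \ep$.

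The remaining purely Ricci nonlinear term $\psi\sum_{i_1+i_2+i_3\leq 2}\psi^{i_1}\nab^{i_2}\psi\nab^{i_3}\psi_H$, after commutation, takes exactly the form already estimated by $C(\mathcal O_{ini})\ep$ via the computation \eqref{etanonlinear} (and indeed this was the term already handled in the proof of Proposition \ref{Kest}). I do not foresee any genuine obstacle here: every ingredient---the transport estimate, the Sobolev embedding, and the Ricci coefficient bounds of Proposition \ref{Ricci}---is in place, and the novelty relative to Proposition \ref{Kest} is merely notational. The mildest subtlety to check is that the coefficient of $\sigmac$ in the linear term, namely $\trch$ (a $\psi_H$), is integrable in $\ub$ in the weighted sense, which is exactly what powers the Gronwall step.
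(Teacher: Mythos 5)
Your proposal is correct and takes exactly the approach the paper intends: the paper gives no separate proof for $\sigmac$, stating only that "since $\sigmac$ satisfies a similar schematic Bianchi equation as $K$," the estimate follows, and your elaboration is precisely the Gronwall argument from Proposition \ref{Kest} applied verbatim to $\sigmac$. The one notational point you could make even sharper is that the weighted integrability $\sum_{i\leq 2}\|\nab^i\psi_H\|_{L^1_{\ub}L^\infty_u L^2(S)}\leq C(\mathcal O_{ini})\ep$ driving Gronwall comes from $\mathcal O_{i,2}[\psi_H]$ together with $\|f(\ub)^{-1}\|_{L^2_{\ub}}\leq\ep$, exactly as in the closing lines of the proof of Proposition \ref{Kest}.
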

Using Proposition \ref{Kest}, we now control the fourth derivatives of the Ricci coefficients. We first bound $\nab^4\trch$ using the transport equation.
\begin{proposition}\label{trch3}
There exists $\epsilon_0=\epsilon_0(\mathcal O_{ini},\Delta_2)$ such that whenever $\epsilon\leq \epsilon_0$,
$$||f(\ub)\nab^4\trch||_{L^2_{\ub}L^\infty_uL^2(S)}\leq C(\mathcal O_{ini}).$$
\end{proposition}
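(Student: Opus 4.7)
The plan is to derive a transport equation in the $e_4$-direction for $\nab^4 \trch$ by commuting $\nab^4$ into the null structure equation
\[
\nab_4 \trch + \tfrac 12 (\trch)^2 = -|\chih|^2 - 2\omega\trch,
\]
using Proposition \ref{repeated.com} together with the $[\nab_4,\nab]$ commutator formula. Schematically this produces
\[
\nab_4 \nab^4 \trch = \sum_{i_1+i_2+i_3+i_4=4} \nab^{i_1}\psi^{i_2}\nab^{i_3}\psi_H\nab^{i_4}\psi_H.
\]
I then single out the terms \emph{linear} in $\nab^4\trch$, of the schematic form $\psi_H\cdot\nab^4\trch$ coming from $\nab^4(\trch^2)$, $\nab^4(\omega\trch)$ and from the commutator $[\nab_4,\nab^4]$, and denote by $F'$ the remaining (genuinely source) terms.

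Next, I apply Proposition \ref{transport} and Gronwall's inequality pointwise in $u$. This absorbs the linear-in-$\nab^4\trch$ contribution, because
\[
\|\psi_H\|_{L^1_{\ub} L^\infty_u L^\infty(S)} \le C\ep,
\]
which follows from Proposition \ref{Ricci}, Sobolev embedding (Proposition \ref{Sobolev}), and Cauchy-Schwarz together with $\|f(\ub)^{-1}\|_{L^2_{\ub}} \le \ep$. The result is
\[
\|\nab^4\trch\|_{L^2(S_{u,\ub})} \le C\Bigl(\|\nab^4\trch\|_{L^2(S_{u,0})} + \int_0^{\ub} \|F'\|_{L^2(S_{u,\ub'})}\, d\ub'\Bigr).
\]
Taking $L^\infty_u$, multiplying by $f(\ub)$, using monotonicity of $f$ to bring the weight inside the $\ub'$-integral ($f(\ub)\le f(\ub')$ for $\ub'\le \ub$), and then taking $L^2_{\ub}$ with $\|f\|_{L^2_{\ub}} \le C$, the problem reduces to controlling $\|fF'\|_{L^1_{\ub} L^\infty_u L^2(S)}$; the initial-data contribution is directly bounded by $\mathcal{O}_{ini}$.

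The remaining task is to estimate each term in $fF'$. For terms in which at most three derivatives fall on any single $\psi_H$ factor, Cauchy-Schwarz in $\ub$, Sobolev embedding on $S$, and the bounds from Proposition \ref{Ricci} suffice; each such term is $O(\ep)$ thanks to $\|f^{-1}\|_{L^2_{\ub}}\le \ep$ or the $L^1_{\ub}$-smallness of $\psi_H$. The delicate terms are those with four derivatives on a single $\psi_H$ factor, the prototype being $\chih\cdot\nab^4\chih$ from $\nab^4|\chih|^2$. This is the main obstacle: $\tilde{\mathcal O}_{4,2}$ controls $\nab^4\chih$ only in the weaker ordering $L^\infty_u L^2_{\ub} L^2(S)$ (see Remark \ref{order}), which does not pair directly with $\mathcal O_{0,\infty}[\chih]$ under Cauchy-Schwarz. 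The resolution I propose is to invoke the Codazzi equation $\div\chih = \tfrac 12 \nab\trch - \tfrac 12 (\eta-\etab)\cdot(\chih-\tfrac 12\trch) - \beta$ from \eqref{null.str3} together with the elliptic estimate of Proposition \ref{elliptictraceless} on each $2$-sphere $S_{u,\ub}$; this bounds $\|\nab^4\chih\|_{L^2(S)}$ in terms of $\|\nab^4\trch\|_{L^2(S)}$ plus strictly lower-order Ricci and curvature quantities (the latter involving at most $\nab^3\beta$, which is controlled by $\mathcal R$). This turns $\chih\cdot\nab^4\chih$ into $\chih\cdot\nab^4\trch$ (linear in $\nab^4\trch$, absorbed by the same Gronwall argument used above) plus terms that fall into the previous case. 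Choosing $\ep$ sufficiently small depending on $\Delta_2$ then absorbs all bootstrap-dependent contributions, yielding the claimed bound with constant depending only on $\mathcal{O}_{ini}$.
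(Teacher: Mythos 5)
Your proposal correctly sets up the commuted transport equation and correctly identifies that the terms linear in $\nab^4\trch$ could be absorbed by Gronwall (the paper in fact absorbs the whole source term directly, with no need to split off a linear piece, but that is a harmless reorganization). The problem is with the ``main obstacle'' you single out: the claim that $\tilde{\mathcal O}_{4,2}$, which controls $\|f(\ub)\nab^4\chih\|_{L^\infty_u L^2_{\ub}L^2(S)}$, ``does not pair directly with $\mathcal O_{0,\infty}[\chih]$ under Cauchy--Schwarz.'' This is false. The Cauchy--Schwarz inequality is applied in $\ub$ \emph{for each fixed $u$}. On a fixed-$u$ slice, $\tilde{\mathcal O}_{4,2}$ bounds $\|f\nab^4\chih\|_{L^2_{\ub}L^2(S)}$ by definition (that is exactly what the $L^\infty_u L^2_{\ub}$ ordering gives), while $\mathcal O_{0,\infty}[\chih]=\|f\chih\|_{L^2_{\ub}L^\infty_u L^\infty(S)}$ dominates $\|f\chih\|_{L^2_{\ub}L^\infty(S)}$ on that slice (Remark \ref{order}). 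Combined with the monotonicity $f(\ub)\le f(\ub')$ for $\ub'\le\ub$, one gets directly and uniformly in $u$
\[
\int_0^{\ub}\|\chih\,\nab^4\chih\|_{L^2(S_{u,\ub'})}\,d\ub'\;\le\; f(\ub)^{-2}\,\|f\chih\|_{L^2_{\ub}L^\infty(S)}\,\|f\nab^4\chih\|_{L^2_{\ub}L^2(S)}\;\le\; C(\mathcal O_{ini})\,f(\ub)^{-2}\,\Delta_2.
\]
This is precisely the paper's argument: the entire source term is bounded pointwise by $Cf(\ub)^{-2}\Delta_2$ using bootstrap assumption \eqref{BA2}, after which multiplying by $f(\ub)$ and taking $L^\infty_u L^2_{\ub}$ produces a factor $\|f(\ub)^{-1}\|_{L^2_{\ub}}\le C\ep$, and the claim follows by taking $\ep$ small depending on $\Delta_2$.

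Beyond being unnecessary, your Codazzi/elliptic detour actively weakens the result. The elliptic estimate of Proposition \ref{elliptictraceless} applied to $\div\chih$ pulls in $\nab^3\beta$, so your bound on $\|\nab^4\chih\|_{L^2(S)}$ would depend on $\mathcal R$, and after integrating you would face a term like $\ep\,\mathcal R$ which forces $\ep_0$ to depend on $\mathcal R$. But the proposition is deliberately stated with $\ep_0=\ep_0(\mathcal O_{ini},\Delta_2)$ and with a bound $C(\mathcal O_{ini})$ independent of $\mathcal R$; compare with Proposition \ref{chih3}, which explicitly assumes $\mathcal R<\infty$. The $\trch$ estimate is the one place in Section \ref{secRicci32} where $\mathcal R$ can and should be avoided, precisely because the $\trch$ transport equation has no curvature source term, and your detour reintroduces it.
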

\begin{proof}
Consider the following equation:
$$\nabla_4 \trch=\psi_H\psi_H,$$
After commuting with angular derivatives, we have
$$\nabla_{4}\nabla^{4}\trch=\sum_{i_1+i_2+i_3+i_4=4}\nabla^{i_1}\psi^{i_2}\nabla^{i_3}\psi_H\nabla^{i_4}\psi_H.$$
By Proposition \ref{transport}, in order to control $\nab^4\trch$ in $ L^2(S_{u,\ub})$ , we need to bound the right hand side in $L^1_{\ub} L^2(S)$. Using the fact that $f(\ub)$ is decreasing, this can be achieved using Sobolev embedding (Proposition {\ref{Sobolev}}) by
\begin{equation*}
\begin{split}
&{\sum_{i_1+i_2+i_3+i_4\leq 4}\int_0^{\ub}} ||\nabla^{i_1}\psi^{i_2}\nabla^{i_3}\psi_H\nabla^{i_4}\psi_H||_{L^2(S_{u,\ub'})} {d\ub'}\\
\leq& Cf(\ub)^{-2}(\sum_{i_1\leq 3}\sum_{i_2\leq 4}||\nab^{i_1}\psi||^{i_2}_{L^\infty_{\ub}L^2(S)})(\sum_{i_3\leq 2}||f(\ub)\nabla^{i_3}\psi_H||_{L^2_{\ub}L^2(S)})(\sum_{i_4\leq 4}||f(\ub)\nabla^{i_4}\psi_H||_{L^2_{\ub}L^2(S)})\\
\leq &Cf(\ub)^{-2}\Delta_2.
\end{split}
\end{equation*}
By Proposition \ref{transport}, we have
\begin{equation}\label{trch0}
||\nabla^4\trch||_{L^2(S_{u,\ub})}\leq C(\mathcal O_{ini})+C(\mathcal O_{ini}) f(\ub)^{-2}\Delta_2.
\end{equation}
Multiplying \eqref{trch0} by $f(\ub)$ and taking first the $L^\infty$ norm in $u$ and then the $L^2$ norm in $\ub$, we have
\begin{equation*}
\begin{split}
||f(\ub)\nabla^{4}\trch||_{L^2_{\ub}L^\infty_u L^2(S)}\leq &C(\mathcal O_{ini})+C(\mathcal O_{ini})||f(\ub)^{-1}||_{L^2_{\ub}}\Delta_2\leq C(\mathcal O_{ini})+C\ep\Delta_2,
\end{split}
\end{equation*}
where we have used
$$||f(\ub)^{-1}||_{L^2_{\ub}}\leq C\ep.$$
Thus, the conclusion follows by choosing $\ep$ to be sufficiently small depending on $\Delta_2$.
\end{proof}

Once we have the estimates for $\nab^4\trch$, we can control $\nab^4\chih$ using elliptic estimates:
\begin{proposition}\label{chih3}
Assume 
$$\mathcal R<\infty.$$
Then there exists $\epsilon_0=\epsilon_0(\mathcal O_{ini},\Delta_2,\mathcal R)$ such that whenever $\epsilon\leq \epsilon_0$,
$$||f(\ub)\nab^4\chih||_{L^\infty_uL^2_{\ub}L^2(S)}\leq C(\mathcal O_{ini})+C\mathcal R.$$
\end{proposition}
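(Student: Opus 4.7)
The plan is to bound $\nab^4\chih$ via elliptic estimates on the spheres $S_{u,\ub}$, using the Codazzi equation from \eqref{null.str3} as the underlying Hodge system, and then to control the right-hand side via the transport estimate for $\nab^4\trch$ (Proposition \ref{trch3}), the weighted curvature norm $\mathcal R$ which controls $\nab^3\beta$ with the correct weight $f(\ub)$, and the lower-order Ricci coefficient norms from Proposition \ref{Ricci}.

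Since $\chih$ is a symmetric traceless two-tensor, the Codazzi equation reads schematically
$$\div\chih = \frac{1}{2}\nab\trch + \psi\cdot(\chih,\trch) + \beta.$$
By Proposition \ref{elliptictraceless}, combined with the Gauss curvature bound of Proposition \ref{Kest}, one has on each sphere
$$\|\nab^4\chih\|_{L^2(S_{u,\ub})} \leq C(\mathcal O_{ini})\left(\sum_{j\leq 3}\|\nab^j\div\chih\|_{L^2(S_{u,\ub})} + \sum_{j\leq 3}\|\nab^j\chih\|_{L^2(S_{u,\ub})}\right),$$
with the constant depending only on $\mathcal O_{ini}$ through the bound on $\sum_{i\leq 2}\|\nab^i K\|_{L^2(S)}$.

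Substituting the Codazzi equation into $\nab^3\div\chih$ and applying the Leibniz rule, the right-hand side expands into $\nab^4\trch$, $\nab^3\beta$, and nonlinear terms of the schematic form $\nab^{i_1}\psi^{i_2}\nab^{i_3}\psi\nab^{i_4}\psi_H$ with $i_1+i_2+i_3+i_4\leq 4$ and $i_4\leq 3$, plus the lower-order elliptic terms $\nab^j\chih$ with $j\leq 3$. Multiplying through by $f(\ub)$ and taking the $L^\infty_uL^2_{\ub}L^2(S)$ norm, the $\nab^4\trch$ contribution is bounded by $C(\mathcal O_{ini})$ thanks to Proposition \ref{trch3}; the $\nab^3\beta$ contribution is bounded by $\mathcal R$ by definition; and the lower-order term $\nab^j\chih$ with $j\leq 3$ is bounded by $C(\mathcal O_{ini})$ via Proposition \ref{Ricci}. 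For the nonlinear terms, I would place the $\psi_H$ factor into the weighted norm $\mathcal O_{i_4,2}[\psi_H]$ (so that the $f(\ub)$ weight is absorbed exactly by that factor) and the remaining $\psi$ factors into $L^\infty$ or $L^4$ via the Sobolev embedding of Proposition \ref{Sobolev}, all of which are uniformly controlled by Proposition \ref{Ricci}.

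The only delicate point is to ensure that the weight $f(\ub)$ always rides on a $\psi_H$ factor when H\"older's inequality is applied in the $\ub$ direction: this is automatic here because every nonlinear term coming from the Codazzi equation contains at least one factor of $\psi_H\in\{\trch,\chih\}$, and the norm $\mathcal O_{i,2}[\psi_H]$ already incorporates the weight. Moreover, the order of norms $L^\infty_uL^2_{\ub}$ in the output matches precisely the order built into $\mathcal O_{i,2}[\psi_H]$ for $\chih$ and into the curvature norm $\mathcal R$ for $\beta$ along $H_u$, so no interchange of norms is required. The smallness of $\ep$ enters only through the prerequisite propositions (\ref{Ricci}, \ref{trch3}, \ref{Kest}), so that the final constant is of the form $C(\mathcal O_{ini})+C\mathcal R$, independent of $\Delta_2$.
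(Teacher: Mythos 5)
Your proposal is correct and follows essentially the same route as the paper: apply the traceless-tensor elliptic estimate (Proposition \ref{elliptictraceless}), using the Gauss curvature bound from Proposition \ref{Kest}, with the Codazzi equation as the Hodge system; then absorb $\nab^4\trch$ via Proposition \ref{trch3}, $\nab^3\beta$ via $\mathcal R$, and the weighted nonlinear and lower-order $\chih$ terms via Proposition \ref{Ricci} after placing the $f(\ub)$ weight on the $\psi_H$ factor. The observation that the $L^\infty_u L^2_{\ub}$ order is compatible with the stronger $L^2_{\ub}L^\infty_u$ norms appearing in $\mathcal O_{i,2}[\psi_H]$ and Proposition \ref{trch3} (via Remark \ref{order}) is precisely the point that makes the argument close cleanly.
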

\begin{proof}
We now use the Codazzi equation
$$\div\chih=\frac 12\nabla\trch-\beta+\psi\psi_H$$
and apply elliptic estimates from Proposition \ref{elliptictraceless} to get
\begin{equation}\label{chihelliptic}
\begin{split}
&||\nabla^4\chih||_{L^2(S)}\\
\leq &C(\sum_{i\leq 4}||\nabla^i\trch||_{L^2(S)}+\sum_{i\leq 3}||\nabla^i\beta||_{L^2(S)}+\sum_{i_1+i_2\leq 3}||\nabla^{i_1}\psi\nabla^{i_2}\psi_H||_{L^2(S)} +\sum_{i\leq 3}||\nab^i\chih||_{L^2(S)}).
\end{split}
\end{equation}
Notice that we can apply elliptic estimates using Proposition \ref{elliptictraceless} since we have {the} estimates for the Gauss curvature from Proposition \ref{Kest}. Multiply (\ref{chihelliptic}) by $f(\ub)$ and take {the} $L^\infty_uL^2_{\ub}$ norm to get
\begin{equation*}
\begin{split}
||f(\ub)\nabla^4\chih||_{L^\infty_uL^2_{\ub}L^2(S)}
\leq &C(\sum_{i\leq 4}||f(\ub)\nabla^i\trch||_{L^\infty_uL^2_{\ub}L^2({S})}+\sum_{i\leq 3}||f(\ub)\nabla^i\beta||_{L^\infty_uL^2_{\ub}L^2(S)}\\
&\quad+\sum_{i_1+i_2\leq 3}||f(\ub)\nabla^{i_1}\psi\nabla^{i_2}\psi_H||_{L^\infty_uL^2_{\ub}L^2(S)} +\sum_{i\leq 3}||f(\ub)\nab^i\chih||_{L^\infty_uL^2_{\ub}L^2(S)})\\
\leq &C(\mathcal O_{ini})+C\mathcal R+C\sum_{i_1+i_2\leq 3}||f(\ub)\nabla^{i_1}\psi\nabla^{i_2}\psi_H||_{L^\infty_uL^2_{\ub}L^2(S)}.
\end{split}
\end{equation*}
By Proposition \ref{Ricci} and Sobolev embedding theorem in Proposition {\ref{Sobolev}}, we have
\begin{equation*}
\begin{split}
&\sum_{i_1+i_2\leq 3}||f(\ub)\nabla^{i_1}\psi\nabla^{i_2}\psi_H||_{L^\infty_uL^2_{\ub}L^2(S)}\\
\leq &C(\sum_{i_1\leq 3}||\nab^{i_1}\psi||_{L^\infty_uL^\infty_{\ub}L^2(S)})(\sum_{i_2\leq 3}||f(\ub)\nab^{i_2}\psi_H||_{L^\infty_uL^2_{\ub}L^2(S)})\leq C(\mathcal O_{ini}).
\end{split}
\end{equation*}
Therefore,
$$||f(\ub)\nabla^4\chih||_{L^\infty_uL^2_{\ub}L^2(S)}\leq C(\mathcal O_{ini})+C\mathcal R.$$
\end{proof}

The $\tilde{\mathcal O}_{4,2}$ estimates for $\nab^4\trchb$ and $\nab^4\chibh$ follow identically as that for $\nab^4\trch$ and $\nab^4\chih$:

\begin{proposition}\label{chib3}
Assume
$$\mathcal R<\infty.$$
Then there exists $\epsilon_0=\epsilon_0(\mathcal O_{ini},\Delta_2,\mathcal R)$ such that whenever $\epsilon\leq \epsilon_0$,
$$||f(u)\nab^4\trchb||_{L^\infty_{\ub}L^2_{u}L^2(S)}\leq C(\mathcal O_{ini}),$$
and
$$||f(u)\nab^4\chibh||_{L^\infty_{\ub}L^2_{u}L^2(S)}\leq C(\mathcal O_{ini})+C\mathcal R.$$
\end{proposition}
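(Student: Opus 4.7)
The proof will follow the pattern of Propositions \ref{trch3} and \ref{chih3}, but applied in the conjugate direction. Concretely, I will use the $\nab_3$-transport equation for $\trchb$ together with the Codazzi equation for $\chibh$, swapping the roles $(u,\ub)\leftrightarrow(\ub,u)$ and $\psi_H\leftrightarrow\psi_{\Hb}$ everywhere.

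\textbf{Step 1: Transport estimate for $\nab^4\trchb$.} The null structure equation gives schematically $\nab_3\trchb=\psi_{\Hb}\psi_{\Hb}$, so after commuting four times with $\nab$ using Proposition \ref{repeated.com},
$$\nab_3\nab^4\trchb=\sum_{i_1+i_2+i_3+i_4=4}\nab^{i_1}\psi^{i_2}\nab^{i_3}\psi_{\Hb}\nab^{i_4}\psi_{\Hb}.$$
I apply the $e_3$ version of Proposition \ref{transport}, so I need the right-hand side in $L^1_u L^2(S)$. Using Sobolev embedding (Proposition \ref{Sobolev}), the estimates $\sum_{i\leq 3}\mathcal O_{i,2}\leq C(\mathcal O_{ini})$ from Proposition \ref{Ricci}, the bootstrap assumption \eqref{BA2}, and the fact that $f(u)$ is decreasing, I expect a bound of the form $C(\mathcal O_{ini})f(u)^{-2}\Delta_2$ after factoring two weights $f(u)^{-1}$ out of the $L^2_u$-pairing (one from each $\psi_{\Hb}$ factor carrying enough derivatives). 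This yields the pointwise-in-$u$ estimate
$$\|\nab^4\trchb\|_{L^2(S_{u,\ub})}\leq C(\mathcal O_{ini})+C(\mathcal O_{ini})f(u)^{-2}\Delta_2.$$
Multiplying by $f(u)$ and taking $L^\infty_{\ub}L^2_u$ produces
$$\|f(u)\nab^4\trchb\|_{L^\infty_{\ub}L^2_uL^2(S)}\leq C(\mathcal O_{ini})+C(\mathcal O_{ini})\|f(u)^{-1}\|_{L^2_u}\Delta_2\leq C(\mathcal O_{ini})+C\ep\Delta_2,$$
which closes after choosing $\ep$ sufficiently small.

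\textbf{Step 2: Elliptic estimate for $\nab^4\chibh$.} I use the Codazzi equation from \eqref{null.str3},
$$\div\chibh=\tfrac 12\nab\trchb+\betab+\psi\psi_{\Hb},$$
and apply the elliptic estimate of Proposition \ref{elliptictraceless} at each $(u,\ub)$; this is legitimate because Proposition \ref{Kest} provides the required control $\sum_{i\leq 2}\|\nab^iK\|_{L^2(S)}\leq C(\mathcal O_{ini})$ on the Gauss curvature. This gives
$$\|\nab^4\chibh\|_{L^2(S)}\leq C\Big(\sum_{i\leq 4}\|\nab^i\trchb\|_{L^2(S)}+\sum_{i\leq 3}\|\nab^i\betab\|_{L^2(S)}+\sum_{i_1+i_2\leq 3}\|\nab^{i_1}\psi\nab^{i_2}\psi_{\Hb}\|_{L^2(S)}+\sum_{i\leq 3}\|\nab^i\chibh\|_{L^2(S)}\Big).$$
Multiplying by $f(u)$ and taking the $L^\infty_{\ub}L^2_u$-norm, the weighted $\trchb$ contribution is handled by Step 1 (lower orders by Proposition \ref{Ricci}), the weighted $\betab$ contribution is $\leq C\mathcal R$ by definition of the curvature norm \eqref{R.def}, and the lower-order $\chibh$ contribution is $\leq C(\mathcal O_{ini})$ by Proposition \ref{Ricci}.

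\textbf{Step 3: The nonlinear product term.} The only mildly delicate piece is $\sum_{i_1+i_2\leq 3}\|f(u)\nab^{i_1}\psi\nab^{i_2}\psi_{\Hb}\|_{L^\infty_{\ub}L^2_uL^2(S)}$. Using Sobolev embedding to place one factor in $L^\infty(S)$ (costing two angular derivatives) and Proposition \ref{Ricci}, I bound this by
$$\Big(\sum_{i_1\leq 3}\|\nab^{i_1}\psi\|_{L^\infty_uL^\infty_{\ub}L^2(S)}\Big)\Big(\sum_{i_2\leq 3}\|f(u)\nab^{i_2}\psi_{\Hb}\|_{L^\infty_{\ub}L^2_uL^2(S)}\Big)\leq C(\mathcal O_{ini}),$$
where the second factor is exactly the norm $\mathcal O_{3,2}[\psi_{\Hb}]$ already controlled in Proposition \ref{psiHb}. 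Collecting everything gives $\|f(u)\nab^4\chibh\|_{L^\infty_{\ub}L^2_uL^2(S)}\leq C(\mathcal O_{ini})+C\mathcal R$, as claimed. The main (and only) asymmetry with the $\nab^4\chih$ estimate is that one must verify that placing $L^\infty$ in $\ub$ outside $L^2$ in $u$ is compatible with all the norms invoked; this is immediate from Remark \ref{order} and the definition \eqref{O.psiHb.def}, so no obstacle arises.
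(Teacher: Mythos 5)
Your proof is correct and is precisely the conjugate of the arguments for Propositions \ref{trch3} and \ref{chih3}, which is exactly what the paper appeals to when it states that these bounds ``follow identically'' under the exchange $u\leftrightarrow\ub$, $\nab_3\leftrightarrow\nab_4$, $\psi_H\leftrightarrow\psi_{\Hb}$. The three steps (transport estimate for $\nab^4\trchb$ via $\nab_3\trchb=\psi_{\Hb}\psi_{\Hb}$, Codazzi plus Proposition \ref{elliptictraceless} for $\nab^4\chibh$, and absorption of the product term via Proposition \ref{psiHb} and Remark \ref{order}) match the paper's intended argument in both decomposition and the norms invoked.
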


We then prove estimates for $\nab^4\eta$. To do so, we first prove estimates for third derivatives of $\mu=-\div\eta+K$ and recover the control for $\nab^4\eta$ via elliptic estimates. 
\begin{proposition}\label{mu}
Assume
$$\mathcal R<\infty.$$
Then there exists $\epsilon_0=\epsilon_0(\mathcal O_{ini},\Delta_2,\mathcal R)$ such that whenever $\epsilon\leq \epsilon_0$,
$$||f(u)\nab^4\eta||_{L^\infty_uL^2_{\ub}L^2(S)}\leq C(\mathcal O_{ini})(\epsilon^{\frac 12}+\mathcal R),$$
and
$$||f(\ub)\nab^4\eta||_{L^\infty_{\ub}L^2_{u}L^2(S)}\leq C(\mathcal O_{ini})(\epsilon^{\frac 12}+\mathcal R).$$
\end{proposition}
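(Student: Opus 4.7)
The direct $\nab_4$-equation $\nab_4\eta = -\chi\cdot(\eta-\etab)-\beta$ from \eqref{null.str2} is unsuitable for controlling $\nab^4\eta$, since it would demand bounds on $\nab^3\beta$ in a norm stronger than provided by $\mathcal R$. To gain a derivative, introduce the mass aspect function
\[ \mu := -\div\eta + K. \]
Combining the $\nab_4\eta$ equation in \eqref{null.str2} with the $\nab_4 K$ equation in \eqref{eq:null.Bianchi2}, the two $\div\beta$ contributions (one from $-\div\nab_4\eta$, one from $\nab_4 K$) cancel, and $\mu$ obeys a schematic transport equation
\[ \nab_4\mu = \psi_H K + \nab\psi_H\cdot\psi + \nab^2\psi\cdot\psi_H + \psi_H\psi_{\Hb}\psi + \psi\psi\psi + \text{lower order}, \]
in which no factor of $\beta$ survives and curvature enters only linearly through $K$ (and, analogously, through the commutator producing a lower-order $\sigmac$-type term).

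Commuting three times with $\nab$ via Proposition \ref{repeated.com} gives $\nab_4\nab^3\mu = F$, with $F$ consisting of products built from $\nab^{\leq 3}(K,\sigmac)$, $\nab^{\leq 4}\psi_H$, $\nab^{\leq 3}\psi_{\Hb}$, and $\nab^{\leq 3}\psi$. Each such product is controlled by Proposition \ref{Ricci}, Propositions \ref{trch3}-\ref{chib3}, Propositions \ref{Kest}-\ref{sigmacest}, and the $\mathcal R$ bounds; the singular factors $\psi_H$ are tamed by the $\int f(\ub)^{-2}d\ub \leq \ep^2$ smallness, while highest-order $\nab^3 K$ contributes $\lesssim \ep^{1/2}\mathcal R$ after applying Cauchy–Schwarz against $\|f(\ub)^{-1}\|_{L^2_{\ub}}$. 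Multiplying by the $\nab_4$-invariant factor $f(u)$ (respectively, leaving the equation as is), applying Proposition \ref{transport} with $p=2$, and taking the appropriate outer norms yields
\[ \|f(u)\nab^3\mu\|_{L^\infty_u L^2_{\ub}L^2(S)} + \|f(\ub)\nab^3\mu\|_{L^\infty_{\ub}L^2_u L^2(S)} \leq C(\mathcal O_{ini})(\ep^{1/2}+\mathcal R). \]

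View $\eta$ as a $1$-form obeying the Hodge system obtained by combining \eqref{null.str3} with the definition of $\sigmac$:
\[ \div\eta = K - \mu,\qquad \curl\eta = \sigmac - \chih\wedge\chibh. \]
Proposition \ref{Kest} verifies the hypothesis of Proposition \ref{ellipticthm} on $\sum_{i\leq 2}\|\nab^i K\|_{L^2(S)}$, so the elliptic estimate with $i=4$ gives
\[ \|\nab^4\eta\|_{L^2(S)} \leq C_E\sum_{j\leq 3}\bigl(\|\nab^j\mu\|_{L^2(S)}+\|\nab^j K\|_{L^2(S)}+\|\nab^j\sigmac\|_{L^2(S)}+\|\nab^j(\chih\wedge\chibh)\|_{L^2(S)}+\|\nab^j\eta\|_{L^2(S)}\bigr). \]
Taking the weighted $L^\infty_u L^2_{\ub}L^2(S)$ (resp.\ $L^\infty_{\ub}L^2_u L^2(S)$) norm of both sides, the right-hand side is controlled by the $\nab^3\mu$ estimate above, the $\mathcal R$ bounds on $\nab^{\leq 3}(K,\sigmac)$, Proposition \ref{Ricci}, and, for the quadratic term, the factorization $\|\nab^j(\chih\wedge\chibh)\|\lesssim \sum_{j_1+j_2=j}\|\nab^{j_1}\chih\|\|\nab^{j_2}\chibh\|$ which pairs one factor singular along $H$ with one factor singular along $\Hb$, both integrable in their respective null directions thanks to Proposition \ref{Ricci}, Proposition \ref{chih3}, and Proposition \ref{chib3}. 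This closes both bounds.

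\textbf{Main obstacle.} The crux is confirming that the commuted source $F$ has the correct null structure: no term of the form $\psi_H\cdot\psi_H$, $\psi_{\Hb}\cdot\psi_{\Hb}$, or $\nab^4\psi_{\Hb}$ may appear, as such products would fail the Cauchy–Schwarz scheme that converts one factor of $f^{-1}$ into the small $L^2$ norm. This is exactly the null structure already exploited in Propositions \ref{psi}–\ref{psiH}: each appearance of a singular factor is paired with a factor of opposite type or with a regular $\psi$, and the cancellation of $\div\beta$ is precisely what prevents a $\nab^4\chi\cdot\chibh$ singular-squared product from arising. Once this bookkeeping is verified, Gronwall (with the $\psi_{\Hb}$-smallness provided by $\|f(u)^{-1}\|_{L^2_u}\leq \ep$) absorbs any self-referential $\nab^4\eta$ terms generated by commutators.
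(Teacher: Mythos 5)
Your argument follows the paper's proof exactly: introduce the renormalized mass aspect $\mu=-\div\eta+K$, verify the $\div\beta$-cancellation in $\nab_4\mu$ so the commuted source contains neither $\alpha$ nor singular-squared products, estimate $\nab^3\mu$ by the $\nab_4$-transport with the $f$-weight applied afterwards, and recover $\nab^4\eta$ from the div-curl Hodge system via Proposition \ref{ellipticthm}. (Incidentally your $\curl\eta=\sigmac-\chih\wedge\chibh$ is the more careful version of the paper's displayed $\curl\eta=\sigmac$; since $\chih\wedge\chibh$ is a $\psi_H\psi_{\Hb}$ product this extra term is harmless, and the rest of the bookkeeping is as in the paper.)
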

\begin{proof}
Recall that
$$\mu=-\div\eta+K.$$
Then $\mu$ satisfies the following equation\footnote{It is important to note that the potentially harmful term $\psi_{\Hb}\psi_H\psi_H$ is absent in this equation. This required structure is the reason that we perform this renormalization instead of using $\mu=-\div\eta-\rho+\frac 12\chih\cdot\chibh$ as in \cite{LR, LR2}.}:
$$\nabla_4 \mu=\psi_H(K,\sigmac)+\sum_{i_1+i_2+i_3=1}\psi^{i_1}\nab^{i_2}\psi\nabla^{i_3}\psi_H.$$

After commuting with angular derivatives, we get
\begin{equation*}
\begin{split}
\nabla_{4}\nabla^3\mu=&\sum_{i_1+i_2+i_3+i_4=3}\nabla^{i_1}\psi^{i_2}\nabla^{i_3}\psi_H\nabla^{i_4}(K,\sigmac)+\sum_{i_1+i_2+i_3+i_4=4}\nabla^{i_1}\psi^{i_2}\nabla^{i_3}\psi\nabla^{i_4}\psi_H.
\end{split}
\end{equation*}
We now control each of the terms on the right hand side in $L^1_{\ub}L^2(S)$. The first term, which contains curvature components, can be estimated by
\begin{equation*}
\begin{split}
&{\sum_{i_1+i_2+i_3+i_4=3}\int_0^{\ub}}||\nabla^{i_1}\psi^{i_2}\nabla^{i_3}\psi_H\nabla^{i_4}(K,\sigmac)||_{L^2(S_{u,\ub'})}{d\ub'}\\
\leq &Cf(u)^{-1}f(\ub)^{-1}(\sum_{i_1\leq 3}\sum_{i_2\leq 3}||\nab^{i_1}\psi||^{i_2}_{L^\infty_{\ub}L^2(S)})(\sum_{i_3\leq 3}||f(\ub)\nab^{i_3}\psi_H||_{L^2_{\ub}L^2(S)})\\
&\quad\quad\times(\sum_{i_4\leq 3}||f(u)\nab^{i_4}(K,\sigmac)||_{L^2_{\ub}L^2(S)})\\
\leq &C(\mathcal O_{ini})f(u)^{-1}f(\ub)^{-1}\mathcal R,
\end{split}
\end{equation*}
using the bounds obtained in Proposition \ref{Ricci}.
The second term can be controlled using Sobolev embedding in Proposition {\ref{Sobolev}} by
\begin{equation*}
\begin{split}
&{\sum_{i_1+i_2+i_3+i_4=4}\int_0^{\ub}}||\nabla^{i_1}\psi^{i_2}\nabla^{i_3}\psi\nabla^{i_4}\psi_H||_{L^2(S_{u,\ub'})}{d\ub'}\\
\leq &Cf(u)^{-1}f(\ub)^{-1}(\sum_{i_1\leq 4}\sum_{i_2\leq 5}||\nab^{i_1}\psi||^{i_2}_{L^\infty_{\ub}L^2(S)})(\sum_{i_3\leq 4}||f(\ub)\nab^{i_3}\psi_H||_{L^2_{\ub}L^2(S)})(\sum_{i_4\leq 4}||f(u)\nab^{i_4}\psi||_{L^2_{\ub}L^2(S)})\\
\leq &C(\mathcal O_{ini})f(u)^{-1}f(\ub)^{-1}(1+\Delta_2)^2
\end{split}
\end{equation*}
using the estimates in Proposition \ref{Ricci}.
Therefore, by Proposition \ref{transport}, we have
\begin{equation}\label{mumain}
\begin{split}
||\nab^3\mu||_{L^2(S_{u,\ub})}
\leq &C(\mathcal O_{ini})(1+ f(u)^{-1}f(\ub)^{-1}(\mathcal R+(1+\Delta_2)^2)).
\end{split}
\end{equation}
Recall that the $L^2_{\ub}$ norm of $f(\ub)^{-1}$ is bounded by $\ep$.
Thus, multiplying \eqref{mumain} by $f(u)$ and taking the $L^2$ norm in $\ub$, we get
\begin{equation*}
\begin{split}
||f(u)\nab^3\mu||_{L^2_{\ub}L^2(S)}\leq C(\mathcal O_{ini})(\ep^{\frac 12}+\epsilon(\mathcal R+(1+\Delta_2)^2))\leq C(\mathcal O_{ini})\ep^{\frac 12},
\end{split}
\end{equation*}
for $\ep$ sufficiently small. Similarly, multiplying \eqref{mumain} by $f(\ub)$ and taking the $L^2$ norm in $u$, we get
$$||f(\ub)\nab^3\mu||_{L^2_{u}L^2(S)}\leq C(\mathcal O_{ini})\ep^{\frac 12}.$$
We can obtain bounds for $\nab^4\eta$ from the control of $\nab^3\mu$ using elliptic estimates as follows.
By the div-curl systems
$$\div\eta=-\mu+K,\quad \curl\eta=\sigmac,$$
and the elliptic estimates given by Propositions \ref{ellipticthm} and \ref{Kest}, we have
\begin{equation*}
||\nabla^4\eta||_{L^2(S)}\leq C(\sum_{i\leq 3}||\nabla^i\mu||_{L^2(S)}
+\sum_{i\leq 3}||\nabla^i(K,\sigmac)||_{L^2(S)}+\sum_{i\leq 3}||\nab^i\eta||_{L^2(S)}).
\end{equation*}
Therefore,
\begin{equation*}
\begin{split}
&||f(u)\nabla^4\eta||_{L^2_{\ub}L^2(S)}\\
\leq &C(\sum_{i\leq 3}||f(u)\nab^i\mu||_{L^2_{\ub}L^2(S)}+\sum_{i\leq 3}||f(u)\nabla^i(K,\sigmac)||_{L^2_{\ub}L^2(S)}+\sum_{i\leq 3}||f(u)\nab^i\eta||_{L^2_{\ub}L^2(S)})\\
\leq &C(\mathcal O_{ini})(\epsilon^{\frac 12}+\mathcal R).
\end{split}
\end{equation*}
Similarly,
$$||f(\ub)\nabla^4\eta||_{L^2_{u}L^2(S)}\leq C(\mathcal O_{ini})(\epsilon^{\frac 12}+\mathcal R).$$
\end{proof}

A similar proof shows that the conclusion of Proposition \ref{mu} holds also for $\nab^3\etab$:
\begin{proposition}\label{mub}
Assume
$$\mathcal R<\infty.$$
Then there exists $\epsilon_0=\epsilon_0(\mathcal O_{ini},\Delta_2,\mathcal R)$ such that whenever $\epsilon\leq \epsilon_0$,
$$||f(u)\nab^4\etab||_{L^\infty_uL^2_{\ub}L^2(S)}\leq C(\mathcal O_{ini})(\epsilon^{\frac 12}+\mathcal R),$$
and
$$||f(\ub)\nab^4\etab||_{L^\infty_{\ub}L^2_{u}L^2(S)}\leq C(\mathcal O_{ini})(\epsilon^{\frac 12}+\mathcal R).$$
\end{proposition}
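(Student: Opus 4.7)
The plan is to mirror the proof of Proposition \ref{mu}, exchanging the roles of $u\leftrightarrow\ub$, $e_4\leftrightarrow e_3$, and $\psi_H\leftrightarrow\psi_{\Hb}$ throughout. The appropriate renormalized quantity is
$$\mub:=-\div\etab+K,$$
which, as can be checked directly from the null Bianchi equation \eqref{eq:null.Bianchi2} for $\nab_3 K$ combined with the $\nab_3\etab$ equation in \eqref{null.str2}, satisfies a schematic transport equation of the form
$$\nab_3\mub=\psi_{\Hb}(K,\sigmac)+\sum_{i_1+i_2+i_3=1}\psi^{i_1}\nab^{i_2}\psi\,\nab^{i_3}\psi_{\Hb}.$$
As emphasized in the footnote to Proposition \ref{mu}, the crucial structural point is that no term of the form $\psi_H\psi_{\Hb}\psi_{\Hb}$ appears on the right hand side; this is the reason for preferring $\mub$ over $-\div\etab-\rho+\frac12\chih\cdot\chibh$.

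Next, I would commute the transport equation with $\nab^3$ using Proposition \ref{repeated.com} to obtain
$$\nab_3\nab^3\mub=\sum_{i_1+i_2+i_3+i_4=3}\nab^{i_1}\psi^{i_2}\nab^{i_3}\psi_{\Hb}\nab^{i_4}(K,\sigmac)+\sum_{i_1+i_2+i_3+i_4=4}\nab^{i_1}\psi^{i_2}\nab^{i_3}\psi\,\nab^{i_4}\psi_{\Hb}.$$
I then estimate both sums in $L^1_u L^2(S)$ along the $e_3$ direction. For the first sum, I bound $\nab^{\leq 3}\psi$ by $\mathcal O_{i,2}[\psi]$ via Proposition \ref{Ricci} and the Sobolev embedding in Proposition \ref{Sobolev}, bound $\nab^{\leq 3}\psi_{\Hb}$ using the weighted norm $\mathcal O_{i,2}[\psi_{\Hb}]$, and bound $\nab^{\leq 3}(K,\sigmac)$ along $\Hb_{\ub}$ using the $f(\ub)$-weighted component of $\mathcal R$; this produces the expected $f(u)^{-1}f(\ub)^{-1}\mathcal R$ bound after Cauchy--Schwarz using $\|f(u)^{-1}\|_{L^2_u},\|f(\ub)^{-1}\|_{L^2_{\ub}}\lesssim\ep$. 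For the second sum, which reaches the top derivative level, the factor with four derivatives must be placed in $L^2_u$ of $L^2(S)$ and controlled by the bootstrap assumption $\tilde{\mathcal O}_{4,2}\leq\Delta_2$, while the remaining factors are controlled by Propositions \ref{Ricci} and \ref{Sobolev}. Proposition \ref{transport} applied in the $e_3$ direction (with initial data on $\Hb_0$) then gives
$$\|\nab^3\mub\|_{L^2(S_{u,\ub})}\leq C(\mathcal O_{ini})\bigl(1+f(u)^{-1}f(\ub)^{-1}(\mathcal R+(1+\Delta_2)^2)\bigr).$$
Multiplying by $f(\ub)$ and taking $L^2_u$ (respectively $f(u)$ and $L^2_{\ub}$) gains an $\ep$ from the weight and yields $\|f(\ub)\nab^3\mub\|_{L^2_u L^2(S)}+\|f(u)\nab^3\mub\|_{L^2_{\ub} L^2(S)}\leq C(\mathcal O_{ini})\ep^{1/2}$ after choosing $\ep$ small depending on $\Delta_2$ and $\mathcal R$.

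Finally, I pass from $\mub$ back to $\etab$ via the Hodge system
$$\div\etab=-\mub+K,\qquad \curl\etab=-\sigmac,$$
which comes from the definition of $\mub$ and the constraint $\curl\etab=-\sigma-\frac12\chibh\wedge\chih=-\sigmac$ in \eqref{null.str3}. Proposition \ref{ellipticthm} is applicable because Proposition \ref{Kest} supplies the required control of $\sum_{i\leq 2}\|\nab^i K\|_{L^2(S)}$. This gives
$$\|\nab^4\etab\|_{L^2(S)}\leq C\Bigl(\sum_{i\leq 3}\|\nab^i\mub\|_{L^2(S)}+\sum_{i\leq 3}\|\nab^i(K,\sigmac)\|_{L^2(S)}+\sum_{i\leq 3}\|\nab^i\etab\|_{L^2(S)}\Bigr),$$
and multiplying by $f(u)$ and taking $L^\infty_u L^2_{\ub}$ (respectively $f(\ub)$ and $L^\infty_{\ub}L^2_u$), and invoking the $\mathcal R$ bound on $\nab^{\leq 3}(K,\sigmac)$ together with the $\mathcal O_{i,2}[\etab]$ estimate from Proposition \ref{Ricci}, delivers the stated bounds $C(\mathcal O_{ini})(\ep^{1/2}+\mathcal R)$. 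The main obstacle, if any, is purely bookkeeping: one must verify that every term appearing after the commutation can be placed into the available degenerate norms without requiring a weight of the form $f(u)^{-2}$ or $f(\ub)^{-2}$ (which would fail to be integrable), and this is precisely what the favorable structure $\nab_3\mub=\psi_{\Hb}(K,\sigmac)+\ldots$ ensures.
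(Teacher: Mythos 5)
Your proposal is correct and carries out precisely the mirror image of the proof of Proposition \ref{mu} that the paper indicates with ``a similar proof shows...'': the renormalized quantity $\mub=-\div\etab+K$, the $\nab_3$-transport equation with the key absence of $\psi_H\psi_{\Hb}\psi_{\Hb}$ terms, the role-swapped $\mathcal R$ and $\tilde{\mathcal O}_{4,2}$ components, and the elliptic recovery of $\nab^4\etab$ from $\nab^3\mub$. Two harmless slips worth noting: the $\nab_3$-transport integrates in $u$ from the hypersurface $H_0=\{u=0\}$, not $\Hb_0$; and by the antisymmetry of $\wedge$ on symmetric two-tensors one has $\curl\etab=-\sigma+\frac 12\chih\wedge\chibh=-\sigmac+\chih\wedge\chibh$ rather than $-\sigmac$ exactly, but the discrepancy is a controllable $\psi_H\psi_{\Hb}$ term and the paper's own $\curl\eta=\sigmac$ in the proof of Proposition \ref{mu} carries the same imprecision.
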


We now move to the estimates for $\nab^4\omegab$:

\begin{proposition}\label{omegab3}
Assume
$$\mathcal R<\infty.$$
Then there exists $\epsilon_0=\epsilon_0(\mathcal O_{ini},\Delta_2,\mathcal R)$ such that whenever $\epsilon\leq \epsilon_0$,
$$||f(u)\nab^4\omegab||_{L^\infty_{\ub}L^2_uL^2(S)}\leq C(\mathcal O_{ini})(1+\mathcal R).$$
\end{proposition}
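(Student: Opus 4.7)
\emph{Plan of Proof.} The naive equation $\nab_4\omegab=\frac{1}{2}\rho+\psi\psi$ rewrites, using the Gauss equation $\rho=-K+\frac{1}{2}\chih\cdot\chibh-\frac{1}{4}\trch\trchb$, as $\nab_4\omegab=K+\psi\psi+\psi_H\psi_{\Hb}$. Naively commuting four times with $\nab$ and integrating would produce $\nab^4 K$ on the right-hand side, which lies beyond the reach of $\mathcal R$. Since $\omegab$ is a scalar, it admits no Codazzi equation (unlike $\chih$, $\chibh$) and no Hodge inversion of its own (unlike $\eta$, $\etab$, which were handled in Propositions \ref{mu} and \ref{mub} through the scalar $\mu=-\div\eta+K$). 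The remedy is to pair $\omegab$ with an auxiliary ``$\sigma$-dual'' scalar $\omegab^\dagger$ and renormalize at the level of $\nab\omegab$.

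First, I would define $\omegab^\dagger$ by the transport equation $\nab_4\omegab^\dagger=\frac{1}{2}\sigmac+\psi_H\psi_{\Hb}+\psi\psi$ with vanishing initial data on $\Hb_0$. This equation has precisely the same schematic structure as that for $\omegab$, so the proof of Proposition \ref{psiHb} applies verbatim and gives $\sum_{i\leq 3}\mathcal O_{i,2}[\omegab^\dagger]\leq C(\mathcal O_{ini})$.

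Next, form the renormalized $1$-form $\Phi:=\nab\omegab-{}^*\nab\omegab^\dagger+\frac{1}{2}\betab$. Applying $\nab$ to the equations for $\omegab$ and $\omegab^\dagger$, converting $\rho,\sigma$ to $K,\sigmac$ via the Gauss equation and the definition of $\sigmac$, and invoking the renormalized Bianchi identity $\nab_4\betab+\trch\betab=\nab K+{}^*\nab\sigmac+(\mbox{l.o.t.})$ from \eqref{eq:null.Bianchi2}, the terms $\nab K$ and ${}^*\nab\sigmac$ cancel, leaving a schematic equation
\begin{equation*}
\nab_4\Phi=\sum_{i_1+i_2+i_3+i_4\leq 1}\nab^{i_1}\psi^{i_2}\nab^{i_3}\psi_H\nab^{i_4}(K,\sigmac,\beta,\betab,\psi_H,\psi_{\Hb}).
\end{equation*}
No top-order piece $\nab K$ or ${}^*\nab\sigmac$ survives. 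Commuting with $\nab^3$ introduces at worst $\nab^3\beta$, $\nab^3\betab$, $\nab^3K$ and $\nab^3\sigmac$, all controlled with the correct $f(u)$ or $f(\ub)$ weight by $\mathcal R$. Proposition \ref{transport}, applied exactly as in the proof of Proposition \ref{mu}, then yields $\|f(u)\nab^3\Phi\|_{L^\infty_{\ub}L^2_u L^2(S)}\leq C(\mathcal O_{ini})(1+\mathcal R)$.

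Finally, using the elementary identity $\div({}^*\nab f)=0$ valid on any $2$-surface for a scalar $f$, taking the divergence of $\Phi$ gives $\Delta\omegab=\div\Phi-\frac{1}{2}\div\betab$. Commuting with $\nab^2$ and invoking the scalar Bochner identity on $S_{u,\ub}$ together with the control on $K$ and its first two angular derivatives in $L^2(S)$ from Proposition \ref{Kest}, one obtains pointwise in $(u,\ub)$
\begin{equation*}
\|\nab^4\omegab\|_{L^2(S_{u,\ub})}\leq C\bigl(\|\nab^3\Phi\|_{L^2(S_{u,\ub})}+\|\nab^3\betab\|_{L^2(S_{u,\ub})}+(\mbox{l.o.t.})\bigr).
\end{equation*}
Multiplying by $f(u)$ and taking $L^\infty_{\ub}L^2_u$ produces the claimed bound: the first principal term is controlled by the transport estimate above, the second by $\mathcal R$. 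The main technical hurdle is verifying in detail that the cancellation in Step~2 leaves only schematic error terms compatible with the null structure of Section \ref{schnot}, i.e.\ that no uncontrollable products such as $\psi_H\psi_H$ or $\psi_{\Hb}\psi_{\Hb}$ arise at top order; this is analogous to (and slightly more delicate than) the corresponding check carried out in Proposition \ref{mu}.
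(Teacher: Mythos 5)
Your proposal is correct and follows essentially the same route as the paper: the paper introduces the auxiliary scalar $\omegab^{\dagger}$ solving $\nab_4\omegab^{\dagger}=\tfrac12\sigmac$ (with vanishing data on $\Hb_0$), forms $\kappab=-\nab\omegab+{}^*\nab\omegab^{\dagger}-\tfrac12\betab$ (which is $-\Phi$ in your notation), propagates $\nab^3\kappab$ by transport, and recovers $\nab^4\omegab$ by an elliptic (div--curl Hodge) estimate using the $K$-control from Proposition~\ref{Kest}. Your small variations — adding schematic $\psi_H\psi_{\Hb}+\psi\psi$ terms to the defining equation for $\omegab^{\dagger}$, and phrasing the elliptic step via $\Delta\omegab=\div\Phi-\tfrac12\div\betab$ and Bochner rather than the div--curl system — are cosmetic and do not change the argument.
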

\begin{proof}
Let $\omegab^{\dagger}$ be defined as the solution to
$$\nab_4\omegab^{\dagger}=\frac 12 \sigmac$$
with zero data {on $\Hb_0$} and 
$$\kappab:=-\nab\omegab+^*\nab\omegab^{\dagger}-\frac 12\betab.$$
By the definition of $\omegab^{\dagger}$, it is easy to see that using Proposition \ref{transport},
$$\sum_{i\leq 3}||\nab^i\omegab^{\dagger}||_{L^2_uL^\infty_{\ub}L^2(S)}\leq C\epsilon\mathcal R\leq C(\mathcal O_{ini}).$$
In other words, ${\nab^i}\omegab^{\dagger}$ satisfies {much better} estimates\footnote{{We recall that for $\psi_{\Hb}$ we only have the degenerate estimate
$$\sum_{i\leq 3}||f(u)\nab^i\psi_{\Hb}||_{L^2_uL^\infty_{\ub}L^2(S)}\leq C(\mathcal O_{ini}).$$}} {than} ${\nab^i}\psi_{\Hb}$ {for $i\leq 3$}. {With this in mind, i}n the proof of this proposition, we will also use $\psi_{\Hb}$ to denote $\omegab^{\dagger}$ (in addition to $\trchb$, $\chibh$ and $\omegab$).

{With this convention,} $\kappab$ then obeys the following {schematic} equation:
$$\nabla_4 \kappab=\psi(K,\sigmac)+\sum_{i_1+i_2+i_3=1}\psi^{i_1}\nab^{i_2}\psi\nab^{i_3}\psi+\sum_{i_1+i_2+i_3=1}\psi^{i_1}\nab^{i_2}\psi_H\nab^{i_3}\psi_{\Hb}.$$
After commuting with angular derivatives, we get
\begin{equation*}
\begin{split}
\nabla_{4}\nabla^3\kappab=&\sum_{i_1+i_2+i_3+i_4=3}\nabla^{i_1}\psi^{i_2}\nabla^{i_3}\psi\nabla^{i_4}(K,\sigmac)
+\sum_{i_1+i_2+i_3+i_4=4}\nabla^{i_1}\psi^{i_2}\nabla^{i_3}\psi\nabla^{i_4}\psi\\
&+\sum_{i_1+i_2+i_3+i_4=4}\nabla^{i_1}\psi^{i_2}\nabla^{i_3}\psi_H\nabla^{i_4}\psi_{\Hb}.\\
\end{split}
\end{equation*}
Therefore,
\begin{equation*}
\begin{split}
||\nab^3\kappab||_{L^2(S_{u,\ub})}
\leq &C||\nab^3\kappab||_{L^2(S_{u,0})}+C||\sum_{i_1+i_2+i_3+i_4=3}\nabla^{i_1}\psi^{i_2}\nabla^{i_3}\psi\nabla^{i_4}(K,\sigmac)||_{L^1_{\ub}L^2(S)}\\
&+C||\sum_{i_1+i_2+i_3+i_4=4}\nabla^{i_1}\psi^{i_2}\nabla^{i_3}\psi\nabla^{i_4}\psi||_{L^1_{\ub}L^2(S)}\\
&+C||\sum_{i_1+i_2+i_3+i_4=4}\nabla^{i_1}\psi^{i_2}\nabla^{i_3}\psi_H\nabla^{i_4}\psi_{\Hb}||_{L^1_{\ub}L^2(S)}.
\end{split}
\end{equation*}
Multiplying by $f(u)$ and taking the $L^2$ norm in $u$, we get
\begin{equation*}
\begin{split}
&||f(u)\nab^3\kappab||_{L^2_uL^2(S)}\\
\leq &C||f(u)\nab^3\kappab||_{L^2_uL^2(S_{u,0})}+C||f(u)\sum_{i_1+i_2+i_3+i_4=3}\nabla^{i_1}\psi^{i_2}\nabla^{i_3}\psi\nabla^{i_4}(K,\sigmac)||_{{L^2_u}L^1_{\ub}L^2(S)}\\
&+C||f(u)\sum_{i_1+i_2+i_3+i_4=4}\nabla^{i_1}\psi^{i_2}\nabla^{i_3}\psi\nabla^{i_4}\psi||_{L^2_uL^1_{\ub}L^2(S)}\\
&+C||f(u)\sum_{i_1+i_2+i_3+i_4=4}\nabla^{i_1}\psi^{i_2}\nabla^{i_3}\psi_H\nabla^{i_4}\psi_{\Hb}||_{L^2_uL^1_{\ub}L^2(S)}.
\end{split}
\end{equation*}
The first term is an initial data term and it is bounded by a constant depending only on $\mathcal O_{ini}$.
We estimate each of the nonlinear terms. The second term can be controlled by
\begin{equation*}
\begin{split}
&||f(u)\sum_{i_1+i_2+i_3+i_4=3}\nabla^{i_1}\psi^{i_2}\nabla^{i_3}\psi\nabla^{i_4}(K,\sigmac)||_{L^2_uL^1_{\ub}L^2(S)}\\
\leq &C\ep(\sum_{i_1\leq 3}\sum_{i_2\leq 4}||\nab^{i_1}\psi||^{i_2}_{L^\infty_{\ub}L^\infty_{\ub}L^2(S)})(\sum_{i_3\leq 3}||f(u)\nab^{i_3}(K,\sigmac)||_{L^\infty_uL^2_{\ub}L^2(S)})\\
\leq &C(\mathcal O_{ini})\ep\mathcal R.
\end{split}
\end{equation*}
The third term can be bounded by
\begin{equation*}
\begin{split}
&||f(u)\sum_{i_1+i_2+i_3+i_4=4}\nabla^{i_1}\psi^{i_2}\nabla^{i_3}\psi\nabla^{i_4}\psi||_{L^2_uL^1_{\ub}L^2(S)}\\
\leq &C\ep(\sum_{i_1\leq 3}\sum_{i_2\leq 4}||\nab^{i_1}\psi||^{i_2}_{L^\infty_{\ub}L^\infty_uL^2(S)})(\sum_{i_3\leq 3}||\nab^{i_3}\psi||_{L^\infty_{\ub}L^\infty_uL^2(S)})(\sum_{i_4\leq 4}||f(u)\nab^{i_4}\psi||_{L^\infty_{u}L^2_{\ub}L^2(S)})\\
\leq &C(\mathcal O_{ini})\ep\Delta_2.
\end{split}
\end{equation*}
The fourth term can be estimated by
\begin{equation*}
\begin{split}
&||f(u)\sum_{i_1+i_2+i_3+i_4=4}\nabla^{i_1}\psi^{i_2}\nabla^{i_3}\psi_H\nabla^{i_4}\psi_{\Hb}||_{L^2_uL^1_{\ub}L^2(S)}\\
\leq &C\ep^{\frac 12}(\sum_{i_1\leq 3}\sum_{i_2\leq 4}||\nab^{i_1}\psi||^{i_2}_{L^\infty_{\ub}L^\infty_uL^2(S)})(\sum_{i_3\leq 3}||\nab^{i_3}\psi_H||_{L^1_{\ub}L^\infty_uL^2(S)})(\sum_{i_4\leq 4}||f(u)\nab^{i_4}\psi_{\Hb}||_{L^\infty_{\ub}L^2_uL^2(S)})\\
&+C||f(\ub)^{-1}||_{L^2_{\ub}}||f(\ub)\nab^4\psi_H||_{L^\infty_uL^2_{\ub}L^2(S)}||f(u)\psi_{\Hb}||_{L^2_uL^\infty_{\ub} L^\infty(S)}\\
\leq &C(\mathcal O_{ini})\ep(1+\Delta_2).
\end{split}
\end{equation*}
Therefore,
\begin{equation}\label{kappab}
\begin{split}
||{f(u)}\nabla^3\kappab||_{L^2_uL^\infty_{\ub} L^2(S)}\leq &C(\mathcal O_{ini})(1+\ep(1+\Delta_2+\mathcal R))\leq C(\mathcal O_{ini}),
\end{split}
\end{equation}
after choosing $\ep$ to be sufficiently small.
Finally, we retrieve the estimates for $\nab^4\omb$ and $\nab^4\omb^{\dagger}$ from the bounds for $\nab^3\kappab$. To this end, consider the div-curl system
$$\div\nabla\omegab=-\div\kappab-\frac 12\div\betab,$$
$$\curl\nabla\omegab=0,$$
$$\div\nabla\omegab^{\dagger}=-\curl\kappab-\frac 12\curl\betab,$$
$$\curl\nabla\omegab^{\dagger}=0.$$
By elliptic estimates given by Propositions \ref{ellipticthm} and \ref{Kest}, we have
\begin{equation*}
\begin{split}
||\nabla^4(\omegab,\omegab^{\dagger})||_{L^2(S_{u,\ub})}
\leq &C(\sum_{i\leq 3}||\nab^i\kappab||_{L^2(S_{u,\ub})}+\sum_{i\leq 3}||\nab^i\betab||_{L^2(S_{u,\ub})}+\sum_{i\leq 3}||\nab^i(\omegab,\omegab^{\dagger})||_{L^2(S_{u,\ub})}).
\end{split}
\end{equation*}
Therefore, using Proposition \ref{psi}, \eqref{kappab} and the curvature norm,
$$||{f(u)}\nabla^4(\omegab,\omegab^{\dagger})||_{L^\infty_{\ub}L^2_uL^2(S)}\leq C(\mathcal O_{ini})(1+\mathcal R).$$
\end{proof}

By switching $\omegab$ and $\omega$ as well as $e_3$ and $e_4$, we also have the following estimates for $\nab^4\omega$:
\begin{proposition}\label{omega3}
Assume
$$\mathcal R<\infty.$$
Then there exists $\epsilon_0=\epsilon_0(\mathcal O_{ini},\Delta_2,\mathcal R)$ such that whenever $\epsilon\leq \epsilon_0$,
$$||f(\ub)\nab^4\omega||_{L^\infty_{u}L^2_{\ub}L^2(S)}\leq C(\mathcal O_{ini})(1+\mathcal R).$$
\end{proposition}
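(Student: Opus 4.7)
The proof proceeds by symmetry with the argument for Proposition \ref{omegab3}, interchanging the roles of $e_3\leftrightarrow e_4$, $u\leftrightarrow\ub$, $\omega\leftrightarrow\omegab$, $\beta\leftrightarrow\betab$, $\psi_H\leftrightarrow\psi_{\Hb}$, $\chi\leftrightarrow\chib$, $\eta\leftrightarrow\etab$. The plan is as follows. First I would introduce the auxiliary scalar $\omega^{\dagger}$ defined as the solution to $\nab_3\omega^{\dagger}=\frac{1}{2}\sigmac$ with zero data on $H_0$. Applying Proposition \ref{transport} in the $e_3$ direction together with the $\mathcal R$ norm bounds for $\sigmac$ shows that $\sum_{i\leq 3}\|\nab^i\omega^{\dagger}\|_{L^2_{\ub}L^\infty_{u}L^2(S)}\leq C\ep\mathcal R\leq C(\mathcal O_{ini})$; in particular $\omega^{\dagger}$ obeys strictly better estimates than a generic $\psi_H$, so for the remainder of the argument we permit ourselves to lump $\omega^{\dagger}$ together with $\trch,\chih,\om$ under the symbol $\psi_H$.

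Next I would introduce the renormalized quantity
\[
\kappa:=-\nab\omega+{}^*\nab\omega^{\dagger}-\tfrac{1}{2}\beta,
\]
chosen so that the top-order contributions of $K$ and $\sigmac$ in $\nab_3\kappa$ cancel. A direct computation using the null structure equation for $\nab_3\omega$ (see \eqref{null.str2}, with $\rho$ replaced via the Gauss equation by $-K$ plus quadratic Ricci terms), the defining equation for $\omega^{\dagger}$, the Bianchi equation \eqref{eq:null.Bianchi2} for $\nab_3\beta$, and the commutator $[\nab_3,\nab]$ from Section \ref{commutation}, yields the schematic equation
\[
\nab_3\kappa=\psi(K,\sigmac)+\sum_{i_1+i_2+i_3=1}\psi^{i_1}\nab^{i_2}\psi\nab^{i_3}\psi+\sum_{i_1+i_2+i_3=1}\psi^{i_1}\nab^{i_2}\psi_{\Hb}\nab^{i_3}\psi_H,
\]
which is precisely the mirror image of the equation for $\nab_4\kappab$ used in Proposition \ref{omegab3}. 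Commuting three further angular derivatives via Proposition \ref{repeated.com} gives an expression for $\nab_3\nab^3\kappa$ analogous to the one in the proof of Proposition \ref{omegab3}.

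I would then apply Proposition \ref{transport} in the $e_3$ direction to estimate $\|\nab^3\kappa\|_{L^2(S_{u,\ub})}$ by the initial data on $H_0$ plus the $L^1_u L^2(S)$ norm of the right-hand side. Multiplying by $f(\ub)$ and taking the $L^2_{\ub}$ norm, each nonlinear term is controlled using the previously established bounds from Propositions \ref{Ricci}, \ref{Kest} and \ref{sigmacest}, together with the $\tilde{\mathcal O}_{4,2}$ pieces for $\nab^4\psi_{\Hb}$ weighted with $f(u)$ and $\nab^4\psi$ weighted with $f(\ub)$ (which appear only in the analog of the last term). A Cauchy--Schwarz step using $\|f(u)^{-1}\|_{L^2_u}\leq\ep$ converts the appearances of $\psi_H$ and $\psi_{\Hb}$ into controlled weighted norms; by choosing $\ep$ small depending on $\mathcal O_{ini}$, $\Delta_2$ and $\mathcal R$ all the nonlinear contributions are absorbed, yielding
\[
\|f(\ub)\nab^3\kappa\|_{L^\infty_u L^2_{\ub}L^2(S)}\leq C(\mathcal O_{ini}).
\]

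Finally, I would recover $\nab^4\omega$ and $\nab^4\omega^{\dagger}$ from $\nab^3\kappa$ by the Hodge systems
\[
\div\nab\omega=-\div\kappa-\tfrac{1}{2}\div\beta,\qquad\curl\nab\omega=0,
\]
\[
\div\nab\omega^{\dagger}=-\curl\kappa-\tfrac{1}{2}\curl\beta,\qquad\curl\nab\omega^{\dagger}=0,
\]
and the elliptic estimate of Proposition \ref{ellipticthm}, whose applicability is guaranteed by the $L^2(S)$ bound on $\sum_{i\leq 2}\nab^i K$ from Proposition \ref{Kest}. Weighting with $f(\ub)$ and taking $L^\infty_u L^2_{\ub}$ gives
\[
\|f(\ub)\nab^4(\omega,\omega^{\dagger})\|_{L^\infty_u L^2_{\ub}L^2(S)}\leq C(\mathcal O_{ini})(1+\mathcal R),
\]
using the $\mathcal R$ norm to control $\|f(\ub)\nab^3\beta\|_{L^\infty_u L^2_{\ub}L^2(S)}$. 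The main obstacle is purely bookkeeping: verifying that every error term arising in the $\nab_3$-analog has its weight (in $u$ or $\ub$) matched by one of the norms contained in $\mathcal O_{ini}$, $\tilde{\mathcal O}_{4,2}$ or $\mathcal R$, so that no appearance of the doubly singular combination $\psi_H\psi_H$ or $\psi_{\Hb}\psi_{\Hb}$ sneaks into the top order — exactly as in Proposition \ref{omegab3}, this is what the structure of the renormalization $\kappa$ was designed to prevent.
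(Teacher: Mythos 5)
Your proposal is correct and follows exactly the route the paper intends: the paper itself does not write out a proof of Proposition \ref{omega3}, but simply states that it follows from Proposition \ref{omegab3} by interchanging $\om\leftrightarrow\omb$, $e_3\leftrightarrow e_4$, and your proposal is precisely the detailed symmetric argument (introducing $\omega^\dagger$ solving $\nab_3\omega^\dagger=\frac12\sigmac$ with zero data on $H_0$, defining $\kappa=-\nab\omega+{}^*\nab\omega^\dagger-\frac12\beta$, deriving the schematic $\nab_3\kappa$ equation, running the $e_3$-transport estimate with $f(\ub)$ weight, and closing with the div-curl elliptic system). The only remarks worth noting are cosmetic: the intermediate bound on $\kappa$ that the symmetric argument actually delivers is the slightly stronger $\|f(\ub)\nab^3\kappa\|_{L^2_{\ub}L^\infty_uL^2(S)}$ rather than the $L^\infty_uL^2_{\ub}$ ordering you wrote (either suffices for the elliptic step), and the fourth-order Ricci input you need also includes $f(\ub)\nab^4\psi_H$ in $L^\infty_uL^2_{\ub}L^2(S)$ from $\tilde{\mathcal O}_{4,2}$, not only the two pieces you singled out.
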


We have thus controlled the fourth angular derivatives of all Ricci coefficients and have closed the bootstrap assumption (\ref{BA2}) after choosing $\Delta_2$ to be sufficiently large depending on $\mathcal O_{ini}$ and $\mathcal R$. We summarize {this} in the following proposition:
\begin{proposition}\label{Ricci32}
Assume
$$\mathcal R<\infty.$$
There exists $\epsilon_0=\epsilon_0(\mathcal O_{ini},\mathcal R)$ such that whenever $\epsilon\leq \epsilon_0$,
$$\tilde{\mathcal O}_{4,2}\leq C(\mathcal O_{ini})(1+\mathcal R).$$
\end{proposition}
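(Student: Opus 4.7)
The plan is to view Proposition \ref{Ricci32} essentially as a compilation statement: Propositions \ref{trch3}, \ref{chih3}, \ref{chib3}, \ref{mu}, \ref{mub}, \ref{omegab3} and \ref{omega3} have already produced weighted $L^2$ bounds of the form $C(\mathcal O_{ini})(1+\mathcal R)$ for every top-order derivative of the Ricci coefficients appearing in $\tilde{\mathcal O}_{4,2}$, under the bootstraps (\ref{BA1}) and (\ref{BA2}). So I would organise the argument by matching each summand in the definition of $\tilde{\mathcal O}_{4,2}$ to the relevant proposition: $\nab^4\chih$ and $\nab^4\omega$ from Propositions \ref{chih3} and \ref{omega3}; $\nab^4\chibh$ and $\nab^4\omb$ from Propositions \ref{chib3} and \ref{omegab3}; and the two mixed-norm pieces of $\nab^4\eta,\nab^4\etab$ from Propositions \ref{mu} and \ref{mub}. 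After summing, one obtains $\tilde{\mathcal O}_{4,2}\leq C(\mathcal O_{ini})(1+\mathcal R)$.

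The one technical point that requires a bit of extra care is the norms $\|f(\ub)^2\nab^4\trch\|_{L^\infty_u L^\infty_{\ub}L^2(S)}$ and $\|f(u)^2\nab^4\trchb\|_{L^\infty_u L^\infty_{\ub}L^2(S)}$, which are not literal outputs of Proposition \ref{trch3} or \ref{chib3} (those propositions give only the weaker $f(\ub) L^2_{\ub}L^\infty_u$ form). My plan is to revisit the pointwise-in-$(u,\ub)$ estimate that appears inside the proof of Proposition \ref{trch3}, namely
\[
\|\nab^4\trch\|_{L^2(S_{u,\ub})}\lesssim C(\mathcal O_{ini})+C(\mathcal O_{ini})f(\ub)^{-2}(\text{top-order nonlinearity}),
\]
and refine the treatment of the nonlinearity. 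The term $\psi_H\cdot\nab^4\trch$ on the right-hand side of the commuted equation is moved to the left and absorbed by Gronwall, using $\|\psi_H\|_{L^1_{\ub}L^\infty}\lesssim\epsilon$ from the remark after the definition of the norms. For the remaining top-order factors $\nab^4\chih$ and $\nab^4\omega$ that multiply lower-order $\psi_H$'s, I would avoid invoking the bootstrap constant $\Delta_2$ and instead plug in directly the bounds $\|f(\ub)\nab^4(\chih,\omega)\|_{L^\infty_u L^2_{\ub}L^2(S)}\leq C(\mathcal O_{ini})(1+\mathcal R)$ already furnished by Propositions \ref{chih3} and \ref{omega3}. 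This upgrades the pointwise bound to $\|\nab^4\trch\|_{L^2(S_{u,\ub})}\leq C(\mathcal O_{ini})(1+f(\ub)^{-2}(1+\mathcal R))$; multiplying by $f(\ub)^2$ (which is bounded uniformly from above by initial data) and taking the double $L^\infty$ gives the desired estimate. The $\trchb$ analogue follows by swapping $u\leftrightarrow\ub$ and using Propositions \ref{chib3}, \ref{omegab3} in place of \ref{chih3}, \ref{omega3}.

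With the improved pointwise bound in hand, every summand of $\tilde{\mathcal O}_{4,2}$ is controlled by $C(\mathcal O_{ini})(1+\mathcal R)$, and I close the bootstrap (\ref{BA2}) by the standard continuity argument, choosing $\Delta_2:=2C(\mathcal O_{ini})(1+\mathcal R)$ and $\epsilon_0$ small enough (depending on $\mathcal O_{ini}$ and $\mathcal R$) so that all previously used smallness inequalities remain valid. The main obstacle, modest as it is, is precisely the one described above: a naive use of the bootstrap in the $\trch$/$\trchb$ pointwise estimate produces a self-referential $C(\mathcal O_{ini})\Delta_2$ term that cannot be absorbed by smallness of $\epsilon$ (since the associated weight ratio $f(\ub)^2\cdot f(\ub)^{-2}\sim 1$ carries no $\epsilon$ gain). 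Rerouting through the already-established $\mathcal R$-dependent bounds for $\nab^4\chih$, $\nab^4\omega$, $\nab^4\chibh$, $\nab^4\omb$ eliminates this self-reference and completes the proof.
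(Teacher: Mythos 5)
Your compilation plan matches the paper's (the paper simply lists Propositions \ref{trch3}--\ref{omega3} and asserts that (\ref{BA2}) closes), and you have correctly identified a real subtlety that the paper's terse proof glosses over. Proposition \ref{trch3} controls $\|f(\ub)\nab^4\trch\|_{L^2_{\ub}L^\infty_uL^2(S)}$, whereas the summand of $\tilde{\mathcal O}_{4,2}$ is $\|f(\ub)^2\nab^4\trch\|_{L^\infty_uL^\infty_{\ub}L^2(S)}$; multiplying \eqref{trch0} by $f(\ub)^2$ yields a bound of the form $C(\mathcal O_{ini})(1+\Delta_2)$, and since the weight ratio $f(\ub)^2\cdot f(\ub)^{-2}\sim 1$ consumes the degenerate factor, there is no residual $\|f(\ub)^{-1}\|_{L^2_{\ub}}\leq\ep$ to beat the $\Delta_2$. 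A naive close of the bootstrap would then require the prefactor $C(\mathcal O_{ini})$ to be less than one, which cannot be arranged.

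Your remedy is sound and, importantly, free of circularity. The only top-order factors $\nab^4\psi_H$ appearing on the right-hand side of the commuted $\nab_4\nab^4\trch$ equation are $\nab^4\trch$ itself and $\nab^4\chih$, $\nab^4\om$. The self-coupled $\psi_H\nab^4\trch$ term is Gronwalled using $\|\psi_H\|_{L^1_{\ub}L^\infty}\lesssim\ep$, and for $\nab^4\chih$, $\nab^4\om$ one plugs in the $\Delta_2$-independent conclusions of Propositions \ref{chih3} and \ref{omega3}, namely $\|f(\ub)\nab^4(\chih,\om)\|_{L^\infty_uL^2_{\ub}L^2(S)}\leq C(\mathcal O_{ini})(1+\mathcal R)$. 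This is legitimate because those propositions rely only on the $L^2_{\ub}$-weighted conclusion of Proposition \ref{trch3} (which is already $\Delta_2$-free), not on the $L^\infty_{\ub}$ norm you are upgrading. The refined pointwise estimate $\|\nab^4\trch\|_{L^2(S_{u,\ub})}\leq C(\mathcal O_{ini})(1+f(\ub)^{-2}(1+\mathcal R))$ then gives, after multiplying by $f(\ub)^2\leq f(0)^2\lesssim 1$, the desired $\Delta_2$-free bound for the $L^\infty_uL^\infty_{\ub}$ norm. The $\trchb$ case is symmetric via Propositions \ref{chib3} and \ref{omegab3}, and the remaining summands of $\tilde{\mathcal O}_{4,2}$ are literal outputs of Propositions \ref{chih3}--\ref{omega3}. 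With every summand bounded by $C(\mathcal O_{ini})(1+\mathcal R)$, one closes (\ref{BA2}) by choosing $\Delta_2$ large and $\ep_0$ small as you describe. In short: same strategy as the paper, but you have supplied a step the paper elides.
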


\section{Estimates for Curvature}\label{seccurv}

In this section, we derive and prove the energy estimates. To this end, we introduce the following bootstrap assumptions:
\begin{equation}\tag{A3}\label{BA3}
\mathcal R\leq \Delta_3,
\end{equation}
where $\Delta_3$ is a constant to be chosen later.

In order to derive the energy estimates, we need the following integration by parts formula, which can be proved by direct computations:
\begin{proposition}\label{intbyparts34}
Let $D_{u,\ub}$ be defined as the spacetime region whose coordinates $(u',\ub')$ satisfy $0\leq u'\leq u$ and $0\leq \ub'\leq \ub$. Suppose $\phi_1$ and $\phi_2$ are tensors of rank $r$, then
$$\int_{D_{u,\ub}} \phi_1 \nabla_4\phi_2+\int_{D_{u,\ub}}\phi_2\nabla_4\phi_1= \int_{\Hb_{\ub}(0,u)} \phi_1\phi_2-\int_{\Hb_0(0,u)} \phi_1\phi_2+\int_{D_{u,\ub}}(2\omega-\trch)\phi_1\phi_2,$$
$$\int_{D_{u,\ub}} \phi_1 \nabla_3\phi_2+\int_{D_{u,\ub}}\phi_2\nabla_3\phi_1= \int_{H_{u}(0,\ub)} \phi_1\phi_2-\int_{H_0(0,\ub)} \phi_1\phi_2+\int_{D_{u,\ub}}(2\omegab-\trchb)\phi_1\phi_2.$$
\end{proposition}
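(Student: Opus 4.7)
The plan is to reduce the statement to a first-variation computation for scalar functions integrated against the area element, and then upgrade to the claimed identity for tensor-valued $\phi_1,\phi_2$ by contracting with the induced metric $\gamma$. Since both $\nabla_4$ and $\nabla_3$ are projections of the spacetime covariant derivative $D$ to $TS_{u,\ub}$, and since $Dg=0$, for any $S_{u,\ub}$-tangent tensors $\phi_1,\phi_2$ of the same rank, setting $f:=\langle\phi_1,\phi_2\rangle_{\gamma}$ I get the pointwise identity
\begin{equation*}
e_4 f=\langle\nabla_4\phi_1,\phi_2\rangle_\gamma+\langle\phi_1,\nabla_4\phi_2\rangle_\gamma,
\end{equation*}
because the components of $D_4\phi_i$ normal to $S_{u,\ub}$ vanish upon contraction with $\phi_j$. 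This reduces the problem to proving the scalar version
\begin{equation*}
\int_{D_{u,\ub}} e_4 f =\int_{\Hb_{\ub}(0,u)}f-\int_{\Hb_0(0,u)}f+\int_{D_{u,\ub}}(2\omega-\trch)f.
\end{equation*}

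The heart of the argument is an exact-derivative rewriting in the $(u,\ub,\th^1,\th^2)$ coordinate system introduced in Section \ref{coordinates}. I would use three ingredients: first, $\partial_{\ub}=\Omega e_4$, hence $\partial_{\ub}f=\Omega e_4 f$; second, the first variation formula
\begin{equation*}
\partial_{\ub}\sqrt{\det\gamma}=\Omega\trch\sqrt{\det\gamma},
\end{equation*}
which comes directly from $\Ls_L\gamma_{AB}=2\Omega\chi_{AB}$ (see \eqref{1st.var}); and third, the identity $\partial_{\ub}\log\Omega=-2\Omega\omega$, which is just the defining relation $\omega=-\tfrac12\nabla_4\log\Omega$ rewritten in coordinates. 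Combining the three gives
\begin{equation*}
\partial_{\ub}\!\left(\Omega\sqrt{\det\gamma}\,f\right)=\Omega^2\sqrt{\det\gamma}\,\bigl(e_4 f-(2\omega-\trch)f\bigr).
\end{equation*}

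Integrating this identity over $(u',\ub',\th^1,\th^2)\in[0,u]\times[0,\ub]\times U$ against $du'\,d\ub'\,d\th^1 d\th^2$ and applying the fundamental theorem of calculus in $\ub'$ yields, after reinterpreting the boundary terms with the induced measure $\Omega\sqrt{\det\gamma}\,du\,d\th^1 d\th^2$ on the null hypersurfaces $\Hb_\ub$, $\Hb_0$ and the measure $\Omega^2\sqrt{\det\gamma}\,du\,d\ub\,d\th^1 d\th^2$ on $D_{u,\ub}$, the desired scalar identity. Substituting back $f=\langle\phi_1,\phi_2\rangle_\gamma$ yields the first formula of the proposition. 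The $\nabla_3$ identity is obtained in the same way by swapping $(u,e_3,\trchb,\omb)\leftrightarrow (\ub,e_4,\trch,\om)$ and using $\Ls_{\Lb}\gamma_{AB}=2\Omega\chib_{AB}$ together with $\omb=-\tfrac12\nabla_3\log\Omega$.

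None of the steps is a real obstacle: the only mild subtlety is bookkeeping the conventions for the measures on $D_{u,\ub}$ and on $\Hb_\ub$, $H_u$ so that the $\Omega$ factors distribute to give exactly the coefficient $2\omega-\trch$ (respectively $2\omb-\trchb$) rather than, say, $-\trch$ or $4\omega-\trch$. Once this is fixed consistently with the conventions of Section \ref{secnorm}, everything is a direct coordinate computation, as the author indicates.
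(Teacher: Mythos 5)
Your reduction to the scalar case via $e_4\langle\phi_1,\phi_2\rangle_\gamma=\langle\nabla_4\phi_1,\phi_2\rangle_\gamma+\langle\phi_1,\nabla_4\phi_2\rangle_\gamma$ (projection plus metric compatibility) is correct, your three ingredients are the right ones, and the combination $\partial_{\ub}\bigl(\Omega\sqrt{\det\gamma}\,f\bigr)=\Omega^2\sqrt{\det\gamma}\bigl(e_4 f-(2\omega-\trch)f\bigr)$ does produce exactly the coefficient $2\omega-\trch$ with the measures you specify. Since the paper simply states the proposition is proved ``by direct computations,'' your argument is precisely that direct computation, carried out carefully.

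The one place where ``obtained in the same way by swapping'' is too glib is the $\nabla_3$ identity. In the coordinate system of Section \ref{coordinates}, $e_3=\Omega^{-1}\bigl(\partial_u+b^A\partial_{\th^A}\bigr)$ and $\Lb=\partial_u+b^A\partial_{\th^A}$ is not a pure coordinate derivative (unlike $L=\partial_{\ub}$), so the pointwise exact-derivative identity you used in the $e_4$ case does not carry over verbatim. What one actually gets is
\begin{equation*}
\partial_u\bigl(\Omega\sqrt{\det\gamma}\,f\bigr)
+\partial_{\th^A}\!\bigl(b^A\,\Omega\sqrt{\det\gamma}\,f\bigr)
=\Omega^2\sqrt{\det\gamma}\,\bigl(e_3 f-(2\omegab-\trchb)f\bigr),
\end{equation*}
which follows from $\Ls_{\Lb}\bigl(\sqrt{\det\gamma}\,d\th^1\wedge d\th^2\bigr)=\Omega\trchb\,\sqrt{\det\gamma}\,d\th^1\wedge d\th^2$ together with $\Lb\Omega=-2\Omega^2\omegab$. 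The extra $S$-divergence term $\partial_{\th^A}(b^A\cdots)$ integrates to zero over each closed sphere $S_{u,\ub}$, after which the fundamental-theorem-of-calculus step in $u$ proceeds just as in the $\ub$ case. You should say this explicitly; as written, the claim that the $e_3$ case follows by a pure notational swap is not literally true pointwise, only after integrating over the spheres.

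One last remark on bookkeeping: you correctly use $\partial_{\ub}\sqrt{\det\gamma}=\Omega\trch\sqrt{\det\gamma}$, which is what follows from $\Ls_L\gamma_{AB}=2\Omega\chi_{AB}$ and is also what the first-variation identity in the proof of Proposition \ref{transport} requires. (The displayed formula $\partial_{\ub}\log\det\gamma=\Omega\trch$ in \eqref{1st.var} is missing a factor of $2$; your version is the consistent one.)
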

\begin{proposition}\label{intbypartssph}
Suppose we have a tensor $^{(1)}\phi$ of rank $r$ and a tensor $^{(2)}\phi$ of rank $r-1$. Then
$$\int_{D_{u,\ub}}{ }^{(1)}\phi^{A_1A_2...A_r}\nabla_{A_r}{ }^{(2)}\phi_{A_1...A_{r-1}}+\int_{D_{u,\ub}}\nabla^{A_r}{ }^{(1)}\phi_{A_1A_2...A_r}{ }^{(2)}\phi^{A_1...A_{r-1}}= -\int_{D_{u,\ub}}(\eta+\etab){ }^{(1)}\phi{ }^{(2)}\phi.$$
\end{proposition}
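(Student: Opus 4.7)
The plan is to reduce the claimed spacetime identity to the ordinary divergence theorem applied on each sphere $S_{u,\ub}$, with the $(\eta+\etab)$ correction arising from differentiating the conformal factor in the spacetime volume element. The key algebraic input is the identity $2\nabla_A\log\Omega=\eta_A+\etab_A$, which is immediate from \eqref{RC.relation}.

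First, I would pin down the measure used on $D_{u,\ub}$ in this section. Reading off Proposition \ref{intbyparts34} and the transport identity $\frac{d}{d\ub}\int_{S_{u,\ub}}F\,dA_\gamma=\int_{S_{u,\ub}}\Omega(e_4F+\trch F)\,dA_\gamma$ used in Section \ref{transportsec}, one sees that, up to an overall constant,
\begin{equation*}
\int_{D_{u,\ub}}F=\int_0^{u}\!\!\int_0^{\ub}\!\!\int_{S_{u',\ub'}}F\,\Omega^2\,dA_\gamma\,d\ub'\,du',
\end{equation*}
and analogously the measure on $\Hb_{\ub}$ (resp.\ $H_u$) is $\Omega\,dA_\gamma\,du'$ (resp.\ $\Omega\,dA_\gamma\,d\ub'$). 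Under this convention, the expected $(2\omega-\trch)\phi_1\phi_2$ and $(2\omegab-\trchb)\phi_1\phi_2$ corrections in Proposition \ref{intbyparts34} come precisely from the derivatives of the $\Omega^2$ weight via $e_4\log\Omega=-2\omega$ and $e_3\log\Omega=-2\omegab$.

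Second, I would combine the two integrands by the Leibniz rule. Defining the $S_{u,\ub}$-tangent vector field
\begin{equation*}
X^{A_r}:={}^{(1)}\phi^{A_1\ldots A_{r-1}A_r}\,{}^{(2)}\phi_{A_1\ldots A_{r-1}},
\end{equation*}
we have ${}^{(1)}\phi^{A_1\ldots A_r}\nabla_{A_r}{}^{(2)}\phi_{A_1\ldots A_{r-1}}+\nabla^{A_r}{}^{(1)}\phi_{A_1\ldots A_r}\,{}^{(2)}\phi^{A_1\ldots A_{r-1}}=\nabla_{A_r}X^{A_r}$. Since each $S_{u,\ub}$ is closed, the divergence theorem gives $\int_{S_{u,\ub}}\nabla_A(\Omega^2X^A)\,dA_\gamma=0$, hence
\begin{equation*}
\int_{S_{u,\ub}}\Omega^2\,\nabla_AX^A\,dA_\gamma=-\int_{S_{u,\ub}}X^A\nabla_A(\Omega^2)\,dA_\gamma=-2\int_{S_{u,\ub}}\Omega^2\,X^A\nabla_A\log\Omega\,dA_\gamma.
\end{equation*}
Substituting $2\nabla_A\log\Omega=\eta_A+\etab_A$ from \eqref{RC.relation} converts the right-hand side into $-\int_{S_{u,\ub}}\Omega^2(\eta+\etab)^AX_A\,dA_\gamma$, which is exactly the sphere-level version of the claimed identity. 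Integrating in $u$ and $\ub$ against the measure identified in the first step gives the proposition.

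I do not anticipate any real obstacle. The only point of care is fixing the spacetime measure consistently with Proposition \ref{intbyparts34}; once this is done, the derivation is a one-line application of the Stokes theorem on $S_{u,\ub}$, and the $(\eta+\etab)$ correction appears for precisely the same structural reason that $(2\omega-\trch)$ and $(2\omegab-\trchb)$ appear in Proposition \ref{intbyparts34}, namely differentiation of the conformal weight $\Omega^2$ in the volume form.
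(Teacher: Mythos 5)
Your proof is correct and is precisely the "direct computation" the paper alludes to without writing out: since each $S_{u,\ub}$ is closed, $\int_{S}\nabla_A(\Omega^2 X^A)\,dA_\gamma=0$, and differentiating the $\Omega^2$ weight in the spacetime volume form $\Omega^2\,dA_\gamma\,d\ub\,du$ produces exactly $2\nabla_A\log\Omega=\eta_A+\etab_A$ via \eqref{RC.relation}. Your cross-check of the measure convention against Proposition~\ref{intbyparts34} (the $\Omega^2$ weight on $D_{u,\ub}$ and $\Omega$ weight on the null hypersurfaces reproduce the $(2\omega-\trch)$ and $(2\omegab-\trchb)$ correction terms) is the right consistency check and confirms the normalization.
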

With these we are now ready to derive energy estimates for $\nab^i(K,\sigmac)$ in $L^2(H_u)$ and for $\nab^i\betab$ in $L^2(\Hb_{\ub})$. The most important observation is that the two uncontrollable terms have favorable signs. {This in turn is due to the choice of $f(u)$ which is decreasing towards the future.}
\begin{proposition}\label{ee1}
The following $L^2$ estimates for the curvature hold:
\begin{equation*}
\begin{split}
&\sum_{i\leq 3}(||f(u)\nab^i(K,\sigmac)||^2_{L^\infty_u L^2_{\ub}L^2(S)}+||f(u)\nab^i\betab||^2_{L^\infty_{\ub} L^2_{u}L^2(S)})  \\
\leq &\sum_{i\leq 3}(||f(u)\nab^i(K,\sigmac)||^2_{L^2_{\ub}L^2(S_{0,\ub})}+||f(u)\nab^i\betab||^2_{L^2_{u}L^2(S_{u,0})}) \\
&+||f(u)^2(\sum_{i\leq 3}\nabla^i(K,\sigmac))(\sum_{i_1+i_2+i_3+i_4\leq 4}\nabla^{i_1}\psi^{i_2}\nabla^{i_3}\psi\nabla^{i_4}\psi_{\Hb})||_{L^1_uL^1_{\ub}L^1(S)}\\
&+||f(u)^2(\sum_{i\leq 3}\nabla^i(K,\sigmac))(\sum_{i_1+i_2+i_3+i_4\leq 3}\nab^{i_1}\psi^{i_2}\nabla^{i_3}\psi_{\Hb}\nabla^{i_4}(K,\sigmac))||_{L^1_uL^1_{\ub}L^1(S)}\\
&+||f(u)^2(\sum_{i\leq 3}\nabla^i\betab)(\sum_{i_1+i_2+i_3+i_4\leq 3}\nab^{i_1}\psi^{i_2}\nabla^{i_3}\psi\nabla^{i_4}(K,\sigmac))||_{L^1_uL^1_{\ub}L^1(S)}\\
&+||f(u)^2(\sum_{i\leq 3}\nabla^i\betab)(\sum_{i_1+i_2+i_3+i_4\leq 2}\nab^{i_1}\psi^{i_2}\nabla^{i_3}K\nabla^{i_4}(K,\sigmac))||_{L^1_uL^1_{\ub}L^1(S)}\\
&+||f(u)^2(\sum_{i\leq 3}\nabla^i\betab)(\sum_{i_1+i_2+i_3+i_4\leq 4}\nabla^{i_1}\psi^{i_2}\nabla^{i_3}\psi_{\Hb}\nabla^{i_4}\psi_H)||_{L^1_uL^1_{\ub}L^1(S)}{.}
\end{split}
\end{equation*}
\end{proposition}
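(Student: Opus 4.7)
The plan is to derive a weighted $L^2$ energy identity directly from the Bianchi system \eqref{eq:null.Bianchi3} for the triple $(K,\sigmac,\betab)$, exploiting the renormalization of Section \ref{sec.ree} so that $\alpha$ and $\alphab$ never enter. First I would commute \eqref{eq:null.Bianchi3} with $\nab^i$ for each $i\le 3$ using Proposition \ref{repeated.com} and the commutator formulas of Section \ref{commutation} (replacing any residual $\beta$ by $\nab\psi_H+\psi\psi_H$ via the Codazzi relation in \eqref{null.str3}) to obtain
\begin{equation*}
\nab_3\nab^i K-\div\nab^i\betab=F_1^{(i)},\quad \nab_3\nab^i\sigmac+\div^{\!*}\nab^i\betab=F_2^{(i)},\quad \nab_4\nab^i\betab-\nab\nab^i K-{}^{*}\nab\nab^i\sigmac=F_3^{(i)},
\end{equation*}
where $F_1^{(i)},F_2^{(i)},F_3^{(i)}$ collect all nonlinear and commutator contributions and will be shown to match the schematic shapes displayed on the right of the proposition.

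Next I would pair each equation with its own weighted copy: multiply the three equations by $f(u)^2\nab^i K$, $f(u)^2\nab^i\sigmac$, and $f(u)^2\nab^i\betab$ respectively, add, and integrate over $D_{u,\ub}$. For the $\nab_3$ pairings I use the identity $2f(u)^2\phi\nab_3\phi=\nab_3(f(u)^2\phi^2)-2f(u)f'(u)\Omega^{-1}\phi^2$, since $e_3f(u)=\Omega^{-1}f'(u)$, and apply Proposition \ref{intbyparts34}. Because $f$ is decreasing and $\Omega>0$ by Proposition \ref{Omega}, the weight-derivative contribution $-2\int_D f(u)f'(u)\Omega^{-1}\phi^2$ is nonnegative on the energy side and is simply dropped to yield the upper bound; this is precisely where the monotonicity of the weight is essential. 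For the $\nab_4$ pairing with $\betab$ no weight-derivative term arises since $e_4f(u)=0$, and Proposition \ref{intbyparts34} produces directly the boundary integral on $\Hb_{\ub}$. The divergence/gradient cross-terms $\int_D f(u)^2(\nab^i K\,\div\nab^i\betab+\nab^i\betab\cdot\nab\nab^i K)$ and the analogous $\sigmac$-pair are collapsed via Proposition \ref{intbypartssph}; since $f(u)$ depends only on $u$ it commutes past the angular derivatives, the principal pieces cancel, and only an $(\eta+\etab)$-type lower-order contribution remains.

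Collecting everything, the boundary integrals on $H_u$ and $\Hb_{\ub}$ give the left-hand side of the proposition, those on $H_0$ and $\Hb_0$ supply the initial data terms, and the remaining bulk contributions---the $(2\om-\trch)$ and $(2\omb-\trchb)$ factors from Proposition \ref{intbyparts34}, the $(\eta+\etab)$ factors from Proposition \ref{intbypartssph}, and the pairings $f(u)^2\nab^i K\cdot F_1^{(i)}$, $f(u)^2\nab^i\sigmac\cdot F_2^{(i)}$, $f(u)^2\nab^i\betab\cdot F_3^{(i)}$---will each fall into exactly one of the five schematic error shapes listed. The main obstacle is purely combinatorial: verifying that every term produced by distributing $\nab^i$ through the nonlinearities of \eqref{eq:null.Bianchi3} and through the four commutators $[\nab_3,\nab^i]$, $[\nab_4,\nab^i]$, $[\nab^i,\div]$, $[\nab^i,\nab]$ can indeed be absorbed into the listed forms. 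The key structural input that makes this bookkeeping work is the absence of $\psi_H\psi_H$-only or $\psi_{\Hb}\psi_{\Hb}$-only products in \eqref{eq:null.Bianchi2}, a direct consequence of the $(K,\sigmac)$ renormalization emphasized in Section \ref{sec.ree}; without it the degenerate weighted energies on the left could not possibly control the errors on the right.
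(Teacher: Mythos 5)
Your proposal follows essentially the same route as the paper's proof: commute the renormalized Bianchi equations with $\nab^i$, pair with $f(u)^2$-weighted copies of $\nab^iK$, $\nab^i\sigmac$, $\nab^i\betab$, integrate over $D_{u,\ub}$ using Propositions \ref{intbyparts34} and \ref{intbypartssph}, and drop the $-2\int_D f(u)\nab_3 f(u)\,|\nab^i(K,\sigmac)|^2$ contribution, which is nonnegative because $f$ is decreasing. The only cosmetic difference is presentational (you multiply all three equations and add symmetrically; the paper starts from the $\betab$-pairing and substitutes the $\nab_3 K,\nab_3\sigmac$ equations after Proposition \ref{intbypartssph}), and the paper additionally needs the Codazzi relation for $\betab$, not just $\beta$, when reducing the $(\om-\tfrac12\trch)|\nab^i\betab|^2$ bulk term to the listed error shapes, a detail you gesture at but do not spell out.
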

\begin{proof}
Consider the following schematic Bianchi equations:
\begin{equation*}
\begin{split}
\nab_3\sigmac+\div ^*\betab=&\psi_{\Hb}\sigmac+\sum_{i_1+i_2+i_3= 1}\psi^{i_1}\nab^{i_2}\psi\nab^{i_3}\psi_{\Hb},\\
\nab_3 K-\div\betab=&\psi_{\Hb}K+\sum_{i_1+i_2+i_3=1}\psi^{i_1}\nab^{i_2}\psi\nab^{i_3}\psi_{\Hb},\\
\nab_4\betab-\nabla K -^*\nabla\sigmac=& \psi(K,\sigmac)+\sum_{i_1+i_2+i_3=1}\psi^{i_1}\nab^{i_2}\psi_H\nab^{i_3}\psi_{\Hb}.
\end{split}
\end{equation*}
Commute the first equation with $i$ angular derivatives for $i\leq 3$, we get the equation for $\nab_3\nab^i\sigmac$,
\begin{equation}\label{ee1.1}
\begin{split}
&\nab_3\nab^i\sigmac+\div ^*\nab^i\betab\\
=&\sum_{i_1+i_2+i_3+i_4=i}\nab^{i_1}\psi^{i_2}\nab^{i_3}\psi_{\Hb}\nab^{i_4}{(K,\sigmac)}+\sum_{i_1+i_2+i_3+i_4=i+1}\nab^{i_1}\psi^{i_2}\nab^{i_3}\psi\nab^{i_4}\psi_{\Hb}.
\end{split}
\end{equation}
Notice that in the above equation, there are terms arising from the commutator $[\nab^i,\div]\betab$, which can be expressed in terms of the Gauss curvature. After substituting also the Codazzi equations for $\betab$, we get that these terms have the form of the first term in the above expression. The equation for $\nab_3\nab^i K$ has a similar structure:
\begin{equation}\label{ee1.2}
\begin{split}
&\nab_3\nab^i K-\div\nab^i\betab\\
=&\sum_{i_1+i_2+i_3+i_4=i}\nab^{i_1}\psi^{i_2}\nab^{i_3}\psi_{\Hb}\nab^{i_4}{(K,\sigmac)}+\sum_{i_1+i_2+i_3+i_4=i+1}\nab^{i_1}\psi^{i_2}\nab^{i_3}\psi\nab^{i_4}\psi_{\Hb}.
\end{split}
\end{equation}
Finally, we have the following structure for $\nab_4\nab^i\betab$:
\begin{equation}\label{ee1.3}
\begin{split}
&\nab_4\nab^i\betab-\nabla\nab^iK -^*\nabla\nab^i\sigmac\\
=& \sum_{i_1+i_2+i_3+i_4=i}\nab^{i_1}\psi^{i_2}\nab^{i_3}\psi\nab^{i_4}(K,\sigmac)+\sum_{i_1+i_2+i_3=i-1}\psi^{i_1}\nab^{i_2}K\nab^{i_3}(K,\sigmac)\\
&+\sum_{i_1+i_2+i_3+i_4=i+1}\nab^{i_1}\psi^{i_2}\nab^{i_3}\psi_H\nab^{i_4}\psi_{\Hb}.
\end{split}
\end{equation}
As a shorthand, we denote by $F_{i,1}$ the terms of the form 
\begin{equation*}
\begin{split}
F_{i,1}:=&\sum_{i_1+i_2+i_3+i_4=i}\nab^{i_1}\psi^{i_2}\nab^{i_3}\psi_{\Hb}\nab^{i_4}(K,\sigmac)+\sum_{i_1+i_2+i_3+i_4= i+1}\nab^{i_1}\psi^{i_2}\nab^{i_3}\psi\nab^{i_4}\psi_{\Hb},
\end{split}
\end{equation*}
and by $F_{i,2}$ the terms of the form
\begin{equation*}
\begin{split}
F_{i,2}:=&\sum_{i_1+i_2+i_3+i_4=i}\nab^{i_1}\psi^{i_2}\nab^{i_3}\psi\nab^{i_4}(K,\sigmac)+\sum_{i_1+i_2+i_3= i-1}\psi^{i_1}\nab^{i_2}K\nab^{i_3}(K,\sigmac)\\
&+\sum_{i_1+i_2+i_3+i_4= i+1}\nab^{i_1}\psi^{i_2}\nab^{i_3}\psi_H\nab^{i_4}\psi_{\Hb}.
\end{split}
\end{equation*}
{Contracting \eqref{ee1.3} with $\nab^i\betab$, integrating} in the region $D_{u,\ub}$, applying Proposition \ref{intbypartssph} and using equations \eqref{ee1.1} and \eqref{ee1.2} yield the following identity on the derivatives of the curvature:
\begin{equation}\label{ee1.4}
\begin{split}
&\int_{D_{u,\ub}} f(u)^2 \langle\nab^i\betab,\nabla_4\nab^i\betab\rangle_\gamma\\
 =&\int_{D_{u,\ub}} f(u)^2 \langle\betab,\nabla\nab^i K+^*\nabla\nab^i\sigma\rangle_\gamma+f(u)^2 \langle\nab^i\betab,F_{i,2}\rangle_\gamma \\
=&\int_{D_{u,\ub}} -f(u)^2 \langle\div\nab^i\betab,\nab^iK\rangle_\gamma + f(u)^2\langle\div ^*\nab^i\betab,\nab^i\sigmac\rangle_\gamma +f(u)^2\langle\nab^i\betab,F_{i,2}\rangle_\gamma\\
=&\int_{D_{u,\ub}} -f(u)^2 \langle\nabla_3\nab^iK,\nab^iK\rangle_\gamma-f(u)^2\langle\nabla_3\nab^i\sigmac,\nab^i\sigmac\rangle_\gamma\\ 
&+\int_{D_{u,\ub}} f(u)^2 \langle\nab^i\betab,F_{i,2}\rangle_\gamma+f(u)^2 \langle\nab^i(K,\sigmac),F_{i,1}\rangle_\gamma.
\end{split}
\end{equation}
Using Proposition \ref{intbyparts34}, since $\nab_4 f(u)=0$, we have
\begin{equation}\label{ee1.5}
\begin{split}
&\int f(u)^2 \langle\nab^i\betab,\nabla_4\nab^i\betab\rangle_\gamma\\
=&\frac 12 (\int_{\Hb_{\ub}}f(u)^2|\nab^i\betab|^2 - \int_{\Hb_0}f(u)^2|\nab^i\betab|^2)+\int_{D_{u,\ub}} f(u)^2(\omega-\frac 12 \trch)|\nab^i\betab|^2.
\end{split}
\end{equation}
For the terms with $\nabla_3\nab^iK$ and $\nabla_3\nab^i\sigmac$, we similarly apply Proposition \ref{intbyparts34}, but noting that there is an extra contribution coming from $\nab_3f(u)$:
\begin{equation}\label{ee1.6}
\begin{split}
&\int_{D_{u,\ub}} f(u)^2 \langle\nab^i K,\nabla_3\nab^i K\rangle_\gamma\\
=&-\int_{D_{u,\ub}} f(u)\nab_3f(u)|\nab^i K|^2+\frac 12 (\int_{H_{u}}f(u)^2|\nab^i K|^2 - \int_{H_0}f(u)^2|\nab^i K|^2)\\
 & +\int_{D_{u,\ub}} f(u)^2(\omegab-\frac 12 \trchb)|\nab^i K|^2.
\end{split}
\end{equation}
Similarly,
\begin{equation}\label{ee1.7}
\begin{split}
&\int_{D_{u,\ub}} f(u)^2 \langle\nab^i \sigmac,\nabla_3\nab^i \sigmac\rangle_\gamma\\
=&-\int_{D_{u,\ub}} f(u)\nab_3f(u)|\nab^i \sigmac|^2+\frac 12 (\int_{H_{u}}f(u)^2|\nab^i \sigmac|^2 - \int_{H_0}f(u)^2|\nab^i \sigmac|^2)\\
 & +\int_{D_{u,\ub}} f(u)^2(\omegab-\frac 12 \trchb)|\nab^i \sigmac|^2.
\end{split}
\end{equation}
Combining \eqref{ee1.4}-\eqref{ee1.7}, we thus have the identity
\begin{equation*}
\begin{split}
&\int_{\Hb_{\ub}} f(u)^2|\nabla^i\betab|^2+\int_{H_{u}} f(u)^2|\nabla^i K|^2+\int_{H_{u}} f(u)^2|\nabla^i \sigmac|^2  \\
&-2\int_{D_{u,\ub}} f(u)\nab_3f(u)|\nab^i K|^2-2\int_{D_{u,\ub}} f(u)\nab_3f(u)|\nab^i \sigmac|^2\\
= &\int_{\Hb_{\ub'}} f(u)^2|\nabla^i\beta|^2+\int_{H_{u'}} f(u)^2|\nabla^i K|^2+\int_{H_{u'}} f(u)^2|\nabla^i \sigmac|^2\\
&-2\int_{D_{u,\ub}} f(u)^2(\omega-\frac 12 \trch)|\nab^i\betab|^2-2\int_{D_{u,\ub}} f(u)^2(\omegab-\frac 12 \trchb)(|\nab^iK|^2+|\nab^i \sigmac|^2)\\
&+\int_{D_{u,\ub}}f(u)^2\langle \nab^i\betab,F_{i,2}\rangle_\gamma+\int_{D_{u,\ub}}f(u)^2\langle\nab^i(K,\sigmac),F_{i,1}\rangle_\gamma.
\end{split}
\end{equation*}
The terms
$$-2\int_{D_{u,\ub}} f(u)\nab_3f(u)|\nab^i K|^2-2\int_{D_{u,\ub}} f(u)\nab_3f(u)|\nab^i \sigmac|^2$$
on the left hand side, which cannot be controlled\footnote{In fact, if we do not drop this term, we can control the spacetime integral $$\int_{D_{u,\ub}} (-f(u)\nab_3f(u))|\nab^i (K,\sigmac)|^2$$
where the weight $(-f(u)\nab_3f(u))$ can be singular. For weights such as $f(u)=(u-u_*)^{\alpha}$ for $\alpha<\frac 12$ or $f(u)=(u-u_*)^{\frac 12}\log^\beta(\frac{1}{u-u_*})$ for $\beta>\frac 12$, this bound is ``logarithmically'' stronger than simply taking the bound for $\int_{H_u}f(u)^2|\nab^i (K,\sigmac)|^2$ and integrating in $u$.} by the curvature flux (i.e., the integrals of $\nab^i$ of the curvature components along $H_u$ or $\Hb_{\ub}$), have a favorable sign! This is because the weight function $f$ satisfies $f(u)\nab_3f(u) <0$. Therefore, we get an \emph{inequality} for every $i$:
\begin{equation*}
\begin{split}
&\int_{\Hb_{\ub}} f(u)^2|\nabla^i\betab|^2+\int_{H_{u}} f(u)^2|\nabla^i K|^2+\int_{H_{u}} f(u)^2|\nabla^i \sigmac|^2  \\
\leq &\int_{\Hb_{\ub'}} f(u)^2|\nabla^i\beta|^2+\int_{H_{u'}} f(u)^2|\nabla^i K|^2+\int_{H_{u'}} f(u)^2|\nabla^i \sigmac|^2\\
&+C|| f(u)^2(\omega-\frac 12 \trch)\nab^i\betab\nab^i\betab||_{L^1_uL^1_{\ub}L^1(S)}+C|| f(u)^2(\omegab-\frac 12 \trchb)\nab^i(K,\sigmac)\nab^i(K,\sigmac)||_{L^1_uL^1_{\ub}L^1(S)}\\
&+C||f(u)^2 \nab^i\betab F_{i,2}||_{L^1_uL^1_{\ub}L^1(S)}+C||f(u)^2\nab^i(K,\sigmac) F_{i,1}||_{L^1_uL^1_{\ub}L^1(S)}.
\end{split}
\end{equation*}
We now add the above inequalities for $i\leq 3$. One can easily check that the terms $$\sum_{i\leq 3}|| f(u)^2(\omegab-\frac 12 \trchb)\nab^i(K,\sigmac)\nab^i(K,\sigmac)||_{L^1_uL^1_{\ub}L^1(S)},$$ $$\sum_{i\leq 3}||f(u)^2 \nab^i\betab F_{i,2}||_{L^1_uL^1_{\ub}L^1(S)}$$ and $$\sum_{i\leq 3}||f(u)^2\nab^i(K,\sigmac) F_{i,1}||_{L^1_uL^1_{\ub}L^1(S)}$$ have the form of one of the terms in the statement of the proposition.
After applying the Codazzi equation 
$$\betab=\nab\psi_{\Hb}+\psi(\psi+\psi_{\Hb})$$
to one of the $\betab$'s, we note that the term
$$\sum_{i\leq 3}|| f(u)^2(\omega-\frac 12 \trch)\nab^i\betab\nab^i\betab||_{L^1_uL^1_{\ub}L^1(S)}$$
is also one of the terms in the statement of the proposition.
\end{proof}
To close the energy estimates, we also need to control $\nab^i\beta$ in $L^2(H)$ and $\nab^i(K,\sigmac)$ in $L^2(\Hb)$. It is not difficult to see, by virtue of the structure of the Einstein equations, that Proposition \ref{ee1} also holds when all the barred and unbarred quantities are exchanged. The proof is exactly analogous to that of Proposition \ref{ee1}.
\begin{proposition}\label{ee2}
The following $L^2$ estimates for the curvature components hold:
\begin{equation*}
\begin{split}
&\sum_{i\leq 3}(||f(\ub)\nab^i(K,\sigmac)||^2_{L^\infty_{\ub} L^2_{u}L^2(S)}+||f(\ub)\nab^i\beta||^2_{L^\infty_{u} L^2_{\ub}L^2(S)})  \\
\leq &\sum_{i\leq 3}(||f(\ub)\nab^i(K,\sigmac)||^2_{L^2_{u}L^2(S_{u,0})}+||f(\ub)\nab^i\beta||^2_{L^2_{\ub}L^2(S_{0,\ub})}) \\
&+||f(\ub)^2(\sum_{i\leq 3}\nabla^i(K,\sigmac))(\sum_{i_1+i_2+i_3+i_4\leq 4}\nabla^{i_1}\psi^{i_2}\nabla^{i_3}\psi\nabla^{i_4}\psi_H)||_{L^1_uL^1_{\ub}L^1(S)}\\
&+||f(\ub)^2(\sum_{i\leq 3}\nabla^i(K,\sigmac))(\sum_{i_1+i_2+i_3+i_4\leq 3}\nab^{i_1}\psi^{i_2}\nabla^{i_3}\psi_H\nabla^{i_4}(K,\sigmac))||_{L^1_uL^1_{\ub}L^1(S)}\\
&+||f(\ub)^2(\sum_{i\leq 3}\nabla^i\beta)(\sum_{i_1+i_2+i_3+i_4\leq 3}\nab^{i_1}\psi^{i_2}\nabla^{i_3}\psi\nabla^{i_4}(K,\sigmac))||_{L^1_uL^1_{\ub}L^1(S)}\\
&+||f(\ub)^2(\sum_{i\leq 3}\nabla^i\beta)(\sum_{i_1+i_2+i_3+i_4\leq 2}\nab^{i_1}\psi^{i_2}\nabla^{i_3}K\nabla^{i_4}(K,\sigmac))||_{L^1_uL^1_{\ub}L^1(S)}\\
&+||f(\ub)^2(\sum_{i\leq 3}\nabla^i\beta)(\sum_{i_1+i_2+i_3+i_4\leq 4}\nabla^{i_1}\psi^{i_2}\nabla^{i_3}\psi_{\Hb}\nabla^{i_4}\psi_H)||_{L^1_uL^1_{\ub}L^1(S)}{.}
\end{split}
\end{equation*}
\end{proposition}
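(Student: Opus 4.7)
The plan is to mirror the proof of Proposition \ref{ee1} verbatim, interchanging the roles of $e_3$ and $e_4$ as well as the roles of the barred and unbarred quantities throughout. In particular, the weight $f(u)$ is replaced by $f(\ub)$, and the triple of Bianchi equations $(\nab_3\sigmac,\nab_3 K,\nab_4\betab)$ used in the original argument is replaced by the triple $(\nab_4\sigmac,\nab_4 K,\nab_3\beta)$ drawn from the system \eqref{eq:null.Bianchi3}. The structural symmetry of the null Bianchi system ensures that this adaptation is essentially mechanical.

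First, I would commute these three relevant Bianchi equations with $i$ angular derivatives for $i\leq 3$. Exactly as in the proof of Proposition \ref{ee1}, all commutator contributions arising from $[\nab^i,\nab_4]$ and $[\nab^i,\nab_3]$ via Proposition \ref{repeated.com}, as well as from $[\nab^i,\div]$ (which produces terms expressible via the Gauss curvature and then absorbable using the Codazzi equation for $\beta$), can be placed into the schematic error families. This produces commuted equations whose source terms are precisely the analogs $\tilde F_{i,1}$, $\tilde F_{i,2}$ of the original $F_{i,1}, F_{i,2}$, with $\psi_H$ and $\psi_{\Hb}$ interchanged.

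Next, I would contract the commuted $\nab_3\nab^i\beta$ equation with $f(\ub)^2 \nab^i\beta$ and integrate over $D_{u,\ub}$. Applying Proposition \ref{intbypartssph} to transfer $\nabla$ off the $(K,\sigmac)$ terms, substituting in the commuted $\nab_4 K$ and $\nab_4\sigmac$ equations, and then applying Proposition \ref{intbyparts34} to convert the resulting spacetime integrals into boundary fluxes on $H_u$ and $\Hb_{\ub}$, produces an identity of the same shape as that derived in the proof of Proposition \ref{ee1}. The decisive structural feature, which transfers directly, is that since $f$ is a decreasing function of its single argument and here depends only on $\ub$, we have $\nab_3 f(\ub)=0$ and $\nab_4 f(\ub)=\Omega^{-1}f'(\ub)<0$; consequently the otherwise uncontrollable spacetime integral $-2\int_{D_{u,\ub}} f(\ub)\,\nab_4 f(\ub)\,|\nab^i(K,\sigmac)|^2$ appearing on the left-hand side carries a favorable sign and may simply be discarded to yield an inequality. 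Summing over $i\leq 3$ and applying the Codazzi equation $\beta=\nab\psi_H+\psi(\psi+\psi_{\Hb})$ to one copy of $\beta$ in the $\trch$-weighted error term then rewrites the remaining error contributions precisely as in the statement.

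I anticipate no genuine obstacle in this adaptation; the only care required is the bookkeeping needed to confirm that the commutator error terms and the integration-by-parts boundary contributions match exactly the schematic expressions listed on the right-hand side of the statement. All subsequent absorption into the norms $\mathcal O_{i,p}$, $\tilde{\mathcal O}_{4,2}$, and $\mathcal R$ introduced in Section \ref{secnorm} will then be carried out in the following section when closing the estimates for $\mathcal R$.
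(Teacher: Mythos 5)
Your proposal is correct and matches the paper's intended argument: the paper itself offers no separate proof, simply noting that ``the proof is exactly analogous to that of Proposition \ref{ee1}'' upon exchanging barred and unbarred quantities. You correctly identify the one place where the symmetry must be checked rather than merely asserted, namely that the uncontrolled weight term now arises from $\nab_4 f(\ub)$ rather than $\nab_3 f(u)$, and that its sign remains favorable because $f$ is decreasing.
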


We now show that we can control all the nonlinear error terms in the energy estimates. 
We show this for $K$ and $\sigmac$ in $L^2(H_u)$ and $\betab$ in $L^2(\Hb_{\ub})$. 
The other case can be dealt with in a similar fashion {(see Proposition \ref{R2})}.
\begin{proposition}\label{R1}
There exists $\epsilon_0=\ep_0(\mathcal O_{ini},\mathcal R_{ini},\Delta_3)$ sufficiently small such that whenever $\epsilon\leq\epsilon_0$,
$$\sum_{i\leq 3}(||f(u)\nab^i(K,\sigmac)||_{L^\infty_u L^2_{\ub}L^2(S)}+||f(u)\nab^i\betab||_{L^\infty_{\ub} L^2_{u}L^2(S)}) \leq C(\mathcal O_{ini},\mathcal R_{ini}).$$
\end{proposition}
\begin{proof}
To prove the curvature estimates, we use Proposition \ref{ee1}. By assumptions of Theorem \ref{extthm} (see {also} Remark \ref{rmk.extthm}), the two terms corresponding to the initial data are bounded by a constant $C(\mathcal R_{ini})$ depending only on initial data. Therefore, we need to control the remaining five error terms in Proposition \ref{ee1}. We first look at the term
$$||f(u)^2(\sum_{i\leq 3}\nabla^i(K,\sigmac))(\sum_{i_1+i_2+i_3+i_4\leq 4}\nab^{i_1}\psi^{i_2}\nabla^{i_3}\psi_{\Hb}\nabla^{i_4}\psi)||_{L^1_uL^1_{\ub}L^1(S)}.$$
Using Propositions \ref{Ricci} and \ref{Ricci32}, together with the bootstrap assumption (\ref{BA3}), we have
\begin{equation*}
\begin{split}
&||f(u)^2(\sum_{i\leq 3}\nabla^i(K,\sigmac))(\sum_{i_1+i_2+i_3+i_4\leq 4}\nab^{i_1}\psi^{i_2}\nabla^{i_3}\psi_{\Hb}\nabla^{i_4}\psi)||_{L^1_uL^1_{\ub}L^1(S)}\\
\leq &C (\sum_{i\leq 3}||f(u)\nabla^i(K,\sigmac)||_{L^\infty_uL^2_{\ub}L^2(S)})(\sum_{i_1\leq 3}\sum_{i_2\leq 4}||\nabla^{i_1}\psi||_{L^\infty_uL^\infty_{\ub}L^2(S)}^{i_2})(\sum_{i_3\leq 4}||f(u)\nabla^{i_3}\psi||_{L^\infty_uL^2_{\ub}L^2(S)})\\
&\quad\times(\sum_{i_4\leq 3}||f(u)\nabla^{i_4}\psi_{\Hb}||_{L^2_uL^\infty_{\ub}L^2(S)})||f(u)^{-1}||_{L^2_u}\\
&+C\ep (\sum_{i\leq 2}||f(u)\nabla^i(K,\sigmac)||_{L^\infty_uL^2_{\ub}L^2(S)})||\psi||_{L^\infty_u L^\infty_{\ub}L^\infty(S)}||f(u)\nabla^4\psi_{\Hb}||_{L^\infty_{\ub}L^2_uL^2(S)}\\
\leq &C(\mathcal O_{ini})\Delta_3(1+\Delta_3)\epsilon.
\end{split}
\end{equation*}
The term
$$||f(u)^2(\sum_{i\leq 3}\nabla^i(K,\sigmac))(\sum_{i_1+i_2+i_3+i_4\leq 3}\nab^{i_1}\psi^{i_2}\nabla^{i_3}\psi_{\Hb}\nabla^{i_4}(K,\sigmac))||_{L^1_uL^1_{\ub}L^1(S)}\leq C(\mathcal O_{ini})\Delta_3(1+\Delta_3)\epsilon$$
similarly as in the previous estimate since by Propositions \ref{Kest} and \ref{sigmacest},
 $\nab^i(K,\sigmac)$ satisfies exactly the same estimates as $\nab^{i+1}\psi$.
We then consider the third nonlinear term
$$||f(u)^2(\sum_{i\leq 3}\nabla^i\betab)(\sum_{i_1+i_2+i_3+i_4\leq 3}\nab^{i_1}\psi^{i_2}\nabla^{i_3}\psi\nabla^{i_4}(K,\sigmac))||_{L^1_uL^1_{\ub}L^1(S)}.$$
Using Propositions \ref{Ricci} and \ref{Ricci32} and the bootstrap assumptions (\ref{BA3}), we have
\begin{equation*}
\begin{split}
&||f(u)^2(\sum_{i\leq 3}\nabla^i\betab)(\sum_{i_1+i_2+i_3+i_4\leq 3}\nab^{i_1}\psi^{i_2}\nabla^{i_3}\psi\nabla^{i_4}(K,\sigmac))||_{L^1_uL^1_{\ub}L^1(S)}\\
\leq &C\ep (\sum_{i\leq 3}||f(u)\nabla^i\betab||_{L^\infty_{\ub}L^2_uL^2(S)})(\sum_{i_1\leq 3,1\leq i_2\leq 3}||\nabla^{i_1}\psi||_{L^\infty_uL^\infty_{\ub}L^2(S)}^{i_2})||f(u)\sum_{i\leq 3}\nabla^i(K,\sigmac)||_{L^\infty_uL^2_{\ub}L^2(S)}\\
\leq & C(\mathcal O_{ini})\Delta_3(1+\Delta_3)\ep.
\end{split}
\end{equation*}
The fourth nonlinear term can be estimated analogously as the third nonlinear term by
$$||f(u)^2(\sum_{i\leq 3}\nabla^i\betab)(\sum_{i_1+i_2+i_3+i_4\leq 2}\nab^{i_1}\psi^{i_2}\nabla^{i_3}K\nabla^{i_4}(K,\sigmac))||_{L^1_uL^1_{\ub}L^1(S)}\leq C(\mathcal O_{ini})\Delta_3(1+\Delta_3)\ep.$$
As before, this is because by Propositions \ref{Kest} and \ref{sigmacest},
 $\nab^i(K,\sigmac)$ satisfies exactly the same estimates as $\nab^{i+1}\psi$. 
Thus it remains to control
$$ ||f(u)^2(\sum_{i\leq 3}\nabla^{i}\betab)(\sum_{i_1+i_2+i_3+i_4\leq 4}\nabla^{i_1}\psi^{i_2}\nabla^{i_3}\psi_H\nabla^{i_4}\psi_{\Hb})||_{L^1_uL^1_{\ub}L^1(S)}.$$
This term can be bounded as follows:
\begin{equation*}
\begin{split}
&||f(u)^2(\sum_{i\leq 3}\nabla^{i}\betab)(\sum_{i_1+i_2+i_3+i_4\leq 4}\nabla^{i_1}\psi^{i_2}\nabla^{i_3}\psi_H\nabla^{i_4}\psi_{\Hb})||_{L^1_uL^1_{\ub}L^1(S)} \\
\leq &C(\sum_{i\leq 3}||f(u)\nabla^{i}\betab||_{L^\infty_{\ub} L^2_{u}L^2(S)})(\sum_{i_1\leq 3}\sum_{i_2\leq 3}||\nabla^{i_1}\psi||_{L^\infty_uL^\infty_{\ub}L^2(S)}^{i_2})\\
&\quad\times(\sum_{i_3\leq 3}||f(\ub)\nabla^{i_3}\psi_H||_{L^2_{\ub}L^\infty_{u}L^2(S)})(\sum_{i_4\leq 4}||f(u)\nabla^{i_4}\psi_{\Hb}||_{L^\infty_{\ub}L^2_{u}L^2(S)})||f(\ub)^{-1}||_{L^2_{\ub}}\\
&+C(\sum_{i\leq 2}||f(u)\nabla^{i}\betab||_{L^\infty_{\ub} L^2_{u}L^2(S)})||f(\ub)^{-1}||_{L^2_{\ub}}||f(\ub)\nabla^4\psi_H||_{L^\infty_u L^2_{\ub}L^\infty(S)}||f(u)\psi_{\Hb}||_{L^2_{u}L^\infty_{\ub}L^\infty(S)}\\
\leq &C(\mathcal O_{ini})\Delta_3(1+\Delta_3)\ep.
\end{split}
\end{equation*}
Therefore, gathering all the above estimates, we have
$$\sum_{i\leq 3}(||f(u)\nab^i(K,\sigmac)||_{L^\infty_u L^2_{\ub}L^2(S)}^2+||f(u)\nab^i\betab||_{L^\infty_{\ub} L^2_{u}L^2(S)}^2)\leq C(\mathcal O_{ini},\mathcal R_{ini})+C(\mathcal O_{ini})\Delta_3(1+\Delta_3)\ep,$$
which implies the conclusion of the proposition after taking $\ep$ to be sufficiently small.
\end{proof}

Notice that the schematic equations are symmetric under the change $\nab_3\leftrightarrow\nab_4$, $u\leftrightarrow\ub$ and $\psi_H\leftrightarrow\psi_{\Hb}$. Since the conditions for the initial data are also symmetric, we also have the following analogous energy estimates for $\nab^i\beta$ on $H_u$ and $\nab^i(K,\sigmac)$ on $\Hb_{\ub}$:
\begin{proposition}\label{R2}
There exists $\epsilon_0=\ep_0(\mathcal O_{ini},\mathcal R_{ini},\Delta_3)$ sufficiently small such that whenever $\epsilon\leq\epsilon_0$,
$$\sum_{i\leq 3}(||f(\ub)\nab^i\beta||_{L^\infty_u L^2_{\ub}L^2(S)}+||f(\ub)\nab^i(K,\sigmac)||_{L^\infty_{\ub} L^2_{u}L^2(S)}) \leq C(\mathcal O_{ini},\mathcal R_{ini}).$$
\end{proposition}

Propositions \ref{R1} and \ref{R2} together imply
\begin{proposition}\label{R.final}
There exists $\epsilon_0=\epsilon_0(\mathcal O_{ini},\mathcal R_{ini})$ such that whenever $\epsilon\leq \epsilon_0$,
$$\mathcal R\leq C(\mathcal O_{ini},\mathcal R_{ini}).$$
\end{proposition}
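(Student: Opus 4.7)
The plan is to combine Propositions \ref{R1} and \ref{R2} and use the result to close the bootstrap assumption \eqref{BA3}. First observe that the definition \eqref{R.def} of $\mathcal R$ decomposes exactly into the four norms appearing on the left-hand sides of Propositions \ref{R1} and \ref{R2}: Proposition \ref{R1} controls the two norms weighted by $f(u)$, namely $\|f(u)\nab^i(K,\sigmac)\|_{L^\infty_u L^2_{\ub}L^2(S)}$ and $\|f(u)\nab^i\betab\|_{L^\infty_{\ub} L^2_u L^2(S)}$, while Proposition \ref{R2} controls the two norms weighted by $f(\ub)$. The crucial point is that the upper bounds produced by both propositions depend only on the initial data quantities $\mathcal O_{ini}$ and $\mathcal R_{ini}$, with \emph{no} dependence on the bootstrap constant $\Delta_3$; this is because every error term produced in Propositions \ref{ee1} and \ref{ee2} was absorbed by a small power of $\epsilon$ extracted from the integrability of $f^{-1}$ in either $u$ or $\ub$, together with the already-closed Ricci estimates from Propositions \ref{Ricci} and \ref{Ricci32}. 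Simply adding the two inequalities therefore yields $\mathcal R \leq C_1(\mathcal O_{ini},\mathcal R_{ini})$ for a constant $C_1$ independent of $\Delta_3$.

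This estimate allows a standard bootstrap closure. I choose $\Delta_3 := 2 C_1(\mathcal O_{ini},\mathcal R_{ini})$ and then let $\epsilon_0$ be the minimum of the smallness thresholds required by Propositions \ref{Ricci}, \ref{Ricci32}, \ref{R1} and \ref{R2} for this choice of $\Delta_3$. On the set of parameters $(u,\ub)\in[0,u_*)\times[0,\ub_*)$ where \eqref{BA3} holds, the bound just derived gives the strict improvement $\mathcal R \leq C_1 < \Delta_3$. Since this set is nonempty (by the initial data), relatively closed (by continuity of the norms in the smooth spacetime furnished by Theorem \ref{extthm} away from the boundary), and now also open by the strict improvement, it coincides with the whole parameter region. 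The resulting unconditional estimate $\mathcal R \leq C_1(\mathcal O_{ini},\mathcal R_{ini})$ is exactly Proposition \ref{R.final}.

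No genuine obstacle arises in this final step: all the structural ingredients---the renormalized energies for $K$ and $\sigmac$, the degenerate weights $f(u)$ and $f(\ub)$, the favorable sign of the bulk terms $f(u)\nab_3 f(u)\leq 0$ and $f(\ub)\nab_4 f(\ub)\leq 0$ discarded in Propositions \ref{ee1}--\ref{ee2}, and the absence of the forbidden quadratic combinations $\beta\psi_H\beta$, $\beta\psi_{\Hb}\betab$, $\betab\psi_{\Hb}\betab$ on the right-hand sides---have already been installed in the preceding sections. The entire hierarchy is thus closed: \eqref{BA1} was absorbed into Proposition \ref{Ricci}, \eqref{BA2} into Proposition \ref{Ricci32}, and \eqref{BA3} now into Proposition \ref{R.final}, which together with the previous propositions yields the conclusion of Theorem \ref{aprioriestimates}.
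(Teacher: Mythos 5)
Your proposal is correct and follows essentially the same route as the paper: combine Propositions \ref{R1} and \ref{R2}, observe that the resulting bound $C(\mathcal O_{ini},\mathcal R_{ini})$ is independent of $\Delta_3$, choose $\Delta_3$ to strictly dominate this bound (the paper writes $\Delta_3\gg C(\mathcal O_{ini},\mathcal R_{ini})$, you take $\Delta_3=2C_1$), and then shrink $\epsilon_0$ to close the bootstrap. You merely spell out the open/closed/nonempty continuity argument that the paper's one-line ``the bootstrap assumption (\ref{BA3}) can be improved'' leaves implicit.
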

\begin{proof}
Let 
$$\Delta_3 \gg C(\mathcal O_{ini},\mathcal R_{ini}),$$
where $C(\mathcal O_{ini},\mathcal R_{ini})$ is taken to be the maximum of the {upper} bounds in Propositions \ref{R1} and \ref{R2}. Hence, the choice of $\Delta_3$ depends only on $\mathcal O_{ini}$ and $\mathcal R_{ini}$. Thus, by Propositions \ref{R1} and \ref{R2}, the bootstrap assumption (\ref{BA3}) can be improved by choosing $\epsilon$ sufficiently small depending on $\mathcal O_{ini}$ and $\mathcal R_{ini}$.
\end{proof}
{Combining Propositions \ref{Ricci}, \ref{Ricci32} and \ref{R.final}, we conclude} the proof of Theorem \ref{aprioriestimates}. {As mentioned previously, standard methods then imply Theorem \ref{extthm}.}

\section{Nature of the Singular Boundary}\label{secnature}

As described by Theorems \ref{C0extthm} and \ref{blowupthm}, we will also prove the regularity and singularity of the boundary $H_{u_*}$ and $\Hb_{\ub_*}$. We first prove the regularity of the boundary asserted in Theorem \ref{C0extthm}.
\begin{proof}[Proof of Theorem \ref{C0extthm}]
The fact that $(\mathcal M, g)$ can be {extended continuously} up to and beyond $H_{u_*}$ and $\Hb_{\ub_*}$ simply follows from the continuity of the metric components $\Omega$, $\gamma$ and $b$  proved in Propositions \ref{Omega}-\ref{b}. To obtain the higher regularity for $\gamma$, we recall the equations \eqref{Omegatransport}, \eqref{1st.var} and \eqref{btrans}:
\beaa
\frac{\partial}{\partial \ub}\Omega^{-1}=2\omega,\quad {\frac{\partial}{\partial \ub}\gamma_{AB}=2\Omega\chi_{AB}},\quad \frac{\partial}{\partial \ub}b^A=-4\Omega^2\zeta^A.
\eeaa
Commuting these equations with $(\frac{\partial}{\partial\th})^i$ and using the bounds\footnote{Notice that by controlling $\gamma$ and its coordinate angular derivatives $(\frac{\partial}{\partial \th})^i\gamma$, we can show also that $\frac{\partial}{\partial \th}$ and $\nab$ are comparable up to lower order terms, which allows us to apply the estimates for $\nab^i\trch$, $\nab^i\chih$, $\nab^i\eta$ and $\nab^i\etab$ to bound the coordinate angular derivatives of the metric components.} for the Ricci coefficients obtained in the proof of Theorem \ref{aprioriestimates}, we conclude that 
$$\sum_{i_1+i_2\leq 4} \sup_{0\leq u\leq {u_*}}\sup_{0\leq \ub\leq\ub_*}{\|(\pr{\th^1})^{i_1}(\pr{\th^2})^{i_2}} (\gamma,b,\Omega){\|}_{L^2(U_i(u,\ub))}{\leq C}. $$
The boundedness of $\psi$ and its angular derivatives 
$$\sum_{i\leq 3}\|\nab^i\psi\|_{L^\infty_u L^\infty_{\ub} L^2(S)}\leq C.$$
are already proved in Theorem \ref{aprioriestimates}. To control $\psi_{\Hb}$ and its angular derivatives on the singular boundary $\Hb_{\ub_*}$, we first note that by the smoothness assumption on the \emph{interior} of {the initial hypersurface} $\Hb_0$, we have that for every fixed $U\in [0,u_*)$, 
$${\sum_{i\leq 5}}\sup_{0\leq u\leq U}\|\nab^i\psi_{\Hb}\|_{L^2(S_{u,0})}\leq C_{{U}}$$
for some finite $C_{U}$.
We {now} revisit the proof of Proposition \ref{psiHb} {to bound $\nab^i\psi_{\Hb}$ up to $i\leq 3$ for $u\in [0,U]$}. Restricting to $[0,U]$, $f(u)^{-1}$ is bounded. Therefore, the estimates in \eqref{psihb1}, \eqref{psihb2} and \eqref{psihb3} are bounded uniformly in $u$. Finally, \eqref{psihb4} can be replaced by the estimate
\begin{equation*}
\begin{split}
&\int_0^{\ub}\sum_{i_1+i_2+i_3+i_4\leq 3}{\|}\nabla^{i_1}\psi^{i_2}\nabla^{i_3}\psi_H\nabla^{i_4}\psi_{\Hb}||_{L^2(S_{u,\ub'})}d\ub' \\
\leq &C\int_0^{\ub}(\sum_{i_1\leq 3}||\nab^{i_1}\psi_H||_{L^2(S_{u,\ub'})})(\sum_{i_2\leq 3}\sup_{0\leq \ub''\leq \ub}||\nab^{i_2}\psi_{\Hb}||_{L^2(S_{u,\ub''})})d\ub'.
\end{split}
\end{equation*}
Putting these bounds together, we have
\begin{equation*}
\begin{split}
&\sum_{i\leq 3}\sup_{\substack{0\leq u\leq U\\0\leq \ub'\leq \ub}}\|\nab^i\psi_{\Hb}\|_{L^2(S_{u,\ub'})}\\
\leq &C_{{U}}+C\int_0^{\ub}(\sum_{i_1\leq 3}||\nab^{i_1}\psi_H||_{L^2(S_{u,\ub'})})(\sum_{i_2\leq 3}\sup_{\substack{0\leq u'\leq U\\0\leq \ub''\leq {\ub'}}}||\nab^{i_2}\psi_{\Hb}||_{L^2(S_{u,\ub''})})d\ub',
\end{split}
\end{equation*}
which implies
\begin{equation}\label{higher.reg.pHb}
\sum_{i\leq 3}\sup_{\substack{0\leq u\leq U\\0\leq \ub\leq \ub_*}}\|\nab^i\psi_{\Hb}\|_{L^2(S_{u,\ub})}\leq {C_U}
\end{equation}
after applying Gronwall's inequality.

{To conclude the proof, it remains to control $\nab_3\nab^i\psi$ and $\nab_3\nab^i\psi_{\Hb}$ for $i\leq 2$. Since $\etab$ obeys a $\nab_3$ equation (see \eqref{null.str2}), by directly controlling the right hand side of the null structure equation (commuted with angular derivatives) and using the bounds in Theorem \ref{aprioriestimates}, we get
$$
\sum_{i\leq 2}{\sup_{\substack{0\leq u\leq U\\0\leq \ub\leq \ub_*}}}\|\nab_3\nab^i\etab\|_{L^2(S_{u,\ub})}\leq C_U.
$$
To control the term $\nab_3\nab^i\eta$, notice that combining the $\nab_3\etab$ equation in \eqref{null.str2} and the equations in \eqref{RC.relation}, we have
$$\nab_3\eta=-\nab_3\etab+2\nab_3\nab(\log\Omega)=\chib\cdot (\etab-\eta)-\bb-\nab\omb-4\omegab\nab(\log\Om)-2\chib\cdot\nab(\log\Om).$$
Upon expressing $\bb$ in terms of $\psi_{\Hb}$ using the Codazzi equation in \eqref{null.str3}, commuting the equation with $\nab^i$ and using the bound \eqref{higher.reg.pHb}, we get
\begin{equation}\label{3ieta}
\sum_{i\leq 2}{\sup_{\substack{0\leq u\leq U\\0\leq \ub\leq \ub_*}}}\|\nab_3\nab^i\eta\|_{L^2(S_{u,\ub})}\leq C_{{U}}.
\end{equation}}
Finally, we control the terms $\nab_3\nab^i\psi_{\Hb}$. Commuting the null structure equations {for $\nab_4\psi_{\Hb}$ in \eqref{null.str1} and \eqref{null.str2}} with $\nab_3\nab^i$, we have
\begin{equation*}
\begin{split}
&\nab_4\nab_3\nab^i\psi_{\Hb}\\
=&\sum_{\substack{j_1+j_2+j_3+j_4=1\\i_1+i_2+i_3+i_4=i}}(\nab^{i_1}\psi_{\Hb}^{j_1}\nab_3^{j_2}\nab^{i_2}\psi^{i_3}\nab_3^{j_3}\nab^{i_4}K+\nab^{i_1}\psi_{\Hb}^{j_1}\nab_3^{j_2}\nab^{i_2}\psi^{i_3}\nab_3^{j_3}\nab^{i_4}\nab\psi)\\
&+\sum_{\substack{j_1+j_2+j_3=1\\i_1+i_2+i_3+i_4+i_5=i}}\nab^{i_1}\psi_{\Hb}^{j_1}\nab_3^{j_2}\nab^{i_2}\psi^{i_3}\nab_3^{j_3}\nab^{i_4}\psi\nab^{i_5}\psi\\
&+\sum_{\substack{j_1+j_2+j_3+j_4=1\\i_1+i_2+i_3+i_4+i_5=i}}\nab^{i_1}\psi_{\Hb}^{j_1}\nab_3^{j_2}\nab^{i_2}\psi^{i_3}\nab_3^{j_3}\nab^{i_4}\psi_{\Hb}\nab_3^{j_4}\nab^{i_5}\psi_H.
\end{split}
\end{equation*}
Estimating directly the right hand side of the null structure equations or the Bianchi equations, we can easily show that
$$\sum_{i\leq 2}{\sup_{0\leq u\leq U}}\|\nab_3(\nab^iK,\nab^i\etab,\nab^i\psi_H)\|_{L^1_{\ub}L^2(S)}\leq C_{{U}}.$$
Using also \eqref{3ieta}, we thus have
$$\sum_{i\leq 2}{\sup_{0\leq u\leq U}}\|\nab_3\nab^i\psi_{\Hb}\|_{L^2(S_{u,\ub})}\leq C_{{U}}+C_{{U}}\int_0^{\ub}\sum_{i_1+i_2+i_3+i_4\leq 2}\|\nab^{i_1}\psi^{i_2}\nab_3\nab^{i_3}\psi_{\Hb}\nab^{i_4}\psi_H\|_{L^2(S_{u,\ub'})} d\ub'.$$
Using Gronwall's inequality, we get
$$\sum_{i\leq 2}{\sup_{\substack{0\leq u\leq U\\0\leq \ub\leq \ub_*}}}\|\nab_3\nab^i\psi_{\Hb}\|_{L^2(S_{u,\ub})}\leq C_{{U}}.$$
In particular, {combining} the above estimates{, we obtain} 
$$\displaystyle\sum_{\substack{i\leq 3-j,\,j\leq 1}}\sup_{0\leq u\leq U}\|\nab^j\nab^i(\psi_{\Hb},\psi)\|_{L^2(S_{u,\ub_*})}\leq C_{{U}}$$ on $\Hb_{\ub_*}$, as desired.
\end{proof}
Finally, we move to the proof of Theorem \ref{blowupthm}. First, we prove
\begin{proposition}\label{prop.blow.up}
Suppose, in addition to the assumptions in Theorem \ref{extthm}, $\chih$ initially obeys
$$\int_0^{\ub_*} |\chih\restriction_{\gamma}(\ub')|^2 d\ub' =\infty,$$
along an outgoing null generator $\gamma$ of $H_0$.
Let $\Phi_u(\gamma)$ be the image of $\gamma$ under the 1-parameter family of diffeomorphism generated by $\Lb$. Then 
$$\int_0^{\ub_*} (\trch\restriction_{\Phi_u(\gamma)}(\ub'))^2+|\chih\restriction_{\Phi_u(\gamma)}(\ub')|^2 d\ub' =\infty,$$
holds for every $0\leq u <u_*$.

Similarly suppose, in addition to the assumptions in Theorem \ref{extthm}, $\chibh$ initially obeys
$$\int_0^{\ub_*} |\chibh\restriction_{\gamma}(u')|^2 du' =\infty,$$
along an outgoing null generator $\gamma$ of $\Hb_0$.
Let ${\underline{\Phi}}_{\ub}(\gamma)$ be the image of $\gamma$ under the 1-parameter family of diffeomorphism generated by $L$. Then 
$$\int_0^{u_*} (\trchb\restriction_{\Phi_{\ub}(\gamma)}(u'))^2+|\chibh\restriction_{\Phi_{\ub}(\gamma)}(u')|^2 du' =\infty,$$
holds for every $0\leq \ub <\ub_*$.
\end{proposition}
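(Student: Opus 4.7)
My plan is to introduce the scalar $G := |\chih|^2 + (\trch)^2$ on the spacetime and show that along any integral curve of $\Lb$ at fixed $\ub$ (recall $\Lb\ub=0$), the quantity $1+G$ differs only by a multiplicative factor close to $1$ from its value at $u=0$. Integrating this comparison in $\ub$ over $[0,\ub_*)$ and using $G(0,\ub)\geq|\chih|^2(0,\ub)$ along $\gamma$ will then transfer the hypothesized divergence of $\int_0^{\ub_*}|\chih|^2(0,\ub)\,d\ub$ to the divergence of $\int_0^{\ub_*}(\trch^2+|\chih|^2)(u,\ub)\,d\ub$ along $\Phi_u(\gamma)$, which is the first assertion. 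The second assertion then follows by interchanging the roles of $(u,e_3,\psi_H)$ and $(\ub,e_4,\psi_{\Hb})$.

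The key computation is as follows. From the equation for $\nab_3\trch$ in \eqref{null.str1}, substituting the Gauss equation in \eqref{null.str3} to trade $\rho$ for $K$ eliminates the $\chih\cdot\chibh$ term on the right-hand side, and the $\trch\,\trchb$ contribution coming from $\rho$ combines cleanly with the left-hand side to produce
$$\nab_3\trch = -\trchb\trch + 2\omegab\trch - 2K + 2\div\eta + 2|\eta|^2.$$
From \eqref{null.str1} we also have directly
$$\nab_3\chih = -\tfrac12\trchb\chih + 2\omegab\chih - \tfrac12\trch\,\chibh + \nab\widehat\otimes\eta + \eta\widehat\otimes\eta.$$
Contracting with $\trch$ and $\chih$ respectively (using $\nab_3\gamma^{-1}=0$ so that $\nab_3 G = 2\trch\,\nab_3\trch + 2\chih\cdot\nab_3\chih$) and adding, the only potentially dangerous cross term is $\trch\,\chih\cdot\chibh$, which AM--GM symmetrizes as $|\trch\,\chih\cdot\chibh|\leq \tfrac12|\chibh|(\trch^2+|\chih|^2) = \tfrac12|\chibh|\,G$. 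Together with Theorem \ref{aprioriestimates} and the Sobolev embedding on the spheres (Proposition \ref{Sobolev}) to bound $K$, $\eta$, $\nab\eta$ uniformly in the spacetime, this yields the pointwise differential inequality
$$|\nab_3 G|\leq C\bigl(1+|\trchb|+|\omegab|+|\chibh|\bigr)(1+G).$$

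Since $\Lb=\Omega\,\nab_3$ on scalars with $\Omega\in[1/2,2]$ by Proposition \ref{Omega}, dividing by $1+G$ and integrating in $u$ at fixed $\ub$ along the flow of $\Lb$ gives
$$\bigl|\log(1+G(u,\ub))-\log(1+G(0,\ub))\bigr|\leq C\int_0^u(1+|\trchb|+|\omegab|+|\chibh|)(u',\ub)\,du'\leq C\epsilon,$$
where the final bound uses Cauchy--Schwarz against $f(u)^{-1}\in L^2_u$ together with the weighted $\mathcal O_{i,2}[\psi_{\Hb}]$ bounds (with angular Sobolev embedding converting $L^2(S)$ control into pointwise control along the curve). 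For $\epsilon$ sufficiently small this yields $G(u,\ub)\geq\tfrac12 G(0,\ub)-1$, and integration in $\ub$ completes the argument. The main structural point, which I view as the one obstacle requiring real care, is ensuring that no product of two factors each singular in $\ub$ (hence not $L^1_u$) appears on the right-hand side of the effective equation for $G$: this is exactly why we pass from $\rho$ to $K$ in the $\trch$ equation and why we form the symmetric combination $G$, so that the one surviving dangerous term $\trch\,\chibh$ gets absorbed as $|\chibh|\,G$ with $|\chibh|\in L^1_u$ with small norm.
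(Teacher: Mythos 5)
Your proof is correct, and it takes a genuinely different route from the paper's. The paper fixes $U<u_*$, assumes (toward the ``else'' branch of a dichotomy) that $\int_0^{\ub_*}|\trch|^2\restriction_{\Phi_U(\gamma)}\,d\ub<\infty$, uses this to show $|\trch\restriction_{\Phi_u(\gamma)}(\ub)|\lesssim 1+h(\ub)$ for an $L^2_{\ub}$ function $h$, and then propagates $|\chih|$ along the $e_3$ flow, concluding that the $L^2_{\ub}$ norm of $|\chih|\restriction_{\Phi_u(\gamma)}$ must be infinite; the constants $C_U$ in that step depend on pointwise bounds for $\trchb,\chibh,\omegab$ valid only for $u\le U$. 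You instead form $G=\trch^2+|\chih|^2$, derive a single logarithmic Gronwall bound $|\log(1+G(u,\ub))-\log(1+G(0,\ub))|\le C\epsilon$ along the $\Lb$ flow, and integrate in $\ub$. The structural point you isolate --- that after replacing $\rho$ by $K$ in the $\nab_3\trch$ equation the only cross term is $\trch\,\chih\cdot\chibh$, which AM--GM absorbs into $\tfrac12|\chibh|\,G$ --- is exactly right, and it eliminates both the case split and the $U$-dependence of the constants: you only ever need the $L^1_u$ (weighted $L^2_u$ plus Cauchy--Schwarz) smallness of $\psi_{\Hb}$, which holds uniformly up to $u_*$. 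One small remark: the phrase ``for $\epsilon$ sufficiently small'' is not needed for the conclusion --- any finite bound $M$ on $|\log(1+G(u,\ub))-\log(1+G(0,\ub))|$ gives $G(u,\ub)\ge e^{-M}G(0,\ub)-1$, and integrating in $\ub$ already yields divergence --- so your argument in fact works at the given $\epsilon$ without further shrinking.
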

\begin{proof}
Fix $U\in (0,u_*{)}$. Suppose
\begin{equation}\label{assumption}
\int_0^{\ub_*} (\trch\restriction_{\Phi_U(\gamma)}(\ub'))^2 d\ub' <\infty.
\end{equation}
We want to show that {under the assumption \eqref{assumption}, we have}
$$\int_0^{\ub_*} |\chih\restriction_{\Phi_U(\gamma)}(\ub')|^2 d\ub' =\infty{,}$$
{which will then imply the desired conclusion.}

Using \eqref{assumption}, define $h:[0,\ub_*)\to \mathbb R$ by
$$h(\ub)=|\trch\restriction_{\Phi_U(\gamma)}(\ub)|$$
such that 
$$\int_0^{\ub_*} h(\ub')^2 d\ub'<\infty.$$
Consider the following null structure equation for $\trch$:
$$\nab_3 \trch+ \trchb \trch =2\omegab \trch-2K+2\div \eta+2|\eta|^2$$
Along the integral curve of $-e_3$ emanating from $\Phi_u(\gamma)$, we thus have
$$\frac{d}{du} (e^{\int^u_U ({\Om{tr}\underline\chi}-2{\Om}\omegab)\restriction_{\Phi_{u'}(\gamma)} (\ub) du'}\trch \restriction_{\Phi_{u}(\gamma)} (\ub))=e^{\int^u_U ({\Om{tr}\underline\chi}-2{\Om}\omegab)\restriction_{\Phi_{u'}(\gamma)} (\ub) du'}(-2K+2\div \eta+2|\eta|^2).$$
By the estimates derived in the proof of Theorem \ref{aprioriestimates}, $K$, $\nab\eta$, $\eta$ are bounded and $\trchb$, $\omegab$ are in $L^1_u L^\infty(S)$. Therefore, 
\begin{equation}\label{assumption.1}
|\trch\restriction_{\Phi_u(\gamma)}(\ub)|\leq Ch(\ub)\quad\mbox{for all }u.
\end{equation}
Consider the following null structure equation for $\chih$:
$$\nab_3\chih+\frac 1 2 \trchb \chih =\nab\widehat{\otimes} \eta+2\omegab \chih-\frac 12 \trch \chibh +\eta\widehat{\otimes} \eta.$$
Contract this equation with $\chih$ to get
$$\frac 12 \nab_3|\chih|^2+\frac 1 2 \trchb |\chih|^2 -2\omegab|\chih|^2=(\nab\widehat{\otimes} \eta-\frac 12 \trch \chibh +\eta\widehat{\otimes} \eta)\cdot\chih,$$
which implies
$$ |\nab_3|\chih|+\frac 1 2 \trchb |\chih| -2\omegab|\chih||\leq |\nab\widehat{\otimes} \eta|+|\frac 12 \trch \chibh| +|\eta\widehat{\otimes} \eta|.$$
This implies that along the integral curve of $e_3$, we have
\begin{equation*}
\begin{split}
&|\frac{d}{du} (e^{\int^u_U (\frac 12{\Om{tr}\underline\chi}-2{\Om}\omegab)\restriction_{\Phi_{u'}(\gamma)} (\ub) du'} |\chih{|} \restriction_{\Phi_{u}(\gamma)} (\ub))|\\
\leq &2e^{\int^u_U (\frac 12{\Om{tr}\underline\chi}-2{\Om}\omegab)\restriction_{\Phi_{u'}(\gamma)} (\ub) du'}(|\nab\widehat{\otimes} \eta|+|\frac 12 \trch \chibh| +|\eta\widehat{\otimes} \eta|).
\end{split}
\end{equation*}
Using again the fact that $K$, $\nab\eta$, $\eta${, $\trchb$, $\chibh$, $\omegab$} are bounded {for $u\leq U$}, as well as the estimate \eqref{assumption.1}, we have
\begin{equation*}
\begin{split}
&|(e^{\int^u_U (\frac 12{\Om{tr}\underline\chi}-2{\Om}\omegab)\restriction_{\Phi_{u'}(\gamma)} (\ub) du'} |\chih{|} \restriction_{\Phi_{u}(\gamma)} (\ub))-(e^{\int^u_U ({\f 12\Om{tr}\underline\chi}-2{\Om}\omegab)\restriction_{\gamma} (\ub) du'} |\chih{|} \restriction_{\gamma} (\ub))|
\leq C_U(1+h(\ub)).
\end{split}
\end{equation*}
Notice that $e^{\int^u_U ({\f 12\Om{tr}\underline\chi}-2{\Om}\omegab)\restriction_{\Phi_{u'}(\gamma)} (\ub) du'}$ is bounded above and below uniformly in $\ub$. Taking the $L^2_{\ub}$ norm implies that {for $u\leq U$, we have}
\begin{equation*}
\int_0^{\ub_*}|\chih\restriction_{\Phi_u(\gamma)}(\ub')|^2 d\ub'\geq c\int_0^{\ub_*} |\chih\restriction_{\gamma}(\ub')|^2 d\ub'-C-C\int_0^{\ub_*} h^2(\ub')d\ub' =\infty
\end{equation*}
by the assumption of the proposition. The blow up for $\chib$ can be proved in a similar manner.
\end{proof}
This implies
\begin{proposition}\label{prop.blow.up.2}
Suppose the assumptions of Theorem \ref{blowupthm} hold. Then, in a neighborhood of any point on $\Hb_{\ub_*}$, $|{\chi}|^2$ is not integrable with respect to the spacetime volume form. Similarly, in a neighborhood of any point on $H_{u_*}$, $|{\chib}|^2$ is not integrable with respect to the spacetime volume form. 
\end{proposition}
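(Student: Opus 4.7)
My plan is to deduce Proposition \ref{prop.blow.up.2} from the per-generator blow-up of Proposition \ref{prop.blow.up} via Tonelli's theorem, after rewriting the spacetime integral in coordinates adapted to the flow of $\Lb$. I treat $|\chi|^2$ near $\Hb_{\ub_*}$; the statement for $|\chib|^2$ near $H_{u_*}$ is completely symmetric.

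I would first fix $p \in \Hb_{\ub_*}$ and, using the continuous coordinate extension from Theorem \ref{C0extthm}, assign to $p$ coordinates $(u_0, \ub_*, \th_0)$ with $u_0 \in [0,u_*)$. Choose $\delta>0$ so small that $u_0+\delta<u_*$ and the coordinate box $W := (u_0-\delta,u_0+\delta)\times (\ub_*-\delta,\ub_*)\times V$ lies in a single chart inside the desired neighborhood of $p$, where $V$ is a coordinate neighborhood of $\th_0$ on $S_{0,0}$. Using that $\chih$ is traceless and $\gamma^{AB}\gamma_{AB}=2$, the decomposition $\chi=\chih+\tfrac12\trch\,\gamma$ yields the pointwise identity $|\chi|^2=|\chih|^2+\tfrac12 (\trch)^2$, so it will suffice to show that $\int_W (|\chih|^2+(\trch)^2)\,dVol_g = +\infty$.

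Next I would change angular coordinates. The curves appearing in Proposition \ref{prop.blow.up} are $\Lb$-flow translates of null generators of $H_0$; in the paper's $(u,\ub,\th)$ coordinates one computes $\Lb = \partial_u + b^A \partial_{\th^A}$, and since $b\ne 0$ off $\Hb_0$, these curves are \emph{not} curves of constant $(u,\th)$. I would therefore introduce new angular coordinates $\th_0^A$ obtained by pulling $\th^A$ back along the $\Lb$-flow to $\Hb_0$; equivalently, $\th_0$ solves $\dot\th^A(s)=b^A(s,\ub,\th(s))$ backward in $s$ from $s=u$ to $s=0$. By Proposition \ref{b} and the continuous extension of $b$ from Theorem \ref{C0extthm}, the Jacobian of $(u,\ub,\th)\leftrightarrow (u,\ub,\th_0)$ is bounded above and below by positive constants on $\overline W$. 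In these coordinates, curves of constant $(u,\th_0)$ are precisely the $\Phi_u(\gamma_{\th_0})$, and applying Proposition \ref{prop.blow.up} for each $\th_0$ in the image $V_0$ of $V$ (its hypothesis being supplied by the assumption of Theorem \ref{blowupthm} along every $H_0$-generator) yields
$$\int_0^{\ub_*}\big(|\chih|^2+(\trch)^2\big)\restriction_{\Phi_u(\gamma_{\th_0})}(\ub')\,d\ub' = +\infty$$
for every $u\in (u_0-\delta,u_0+\delta)$. Smoothness on $\{u<u_*\}\cap\{\ub<\ub_*\}$ (Theorem \ref{extthm}) makes the $[0,\ub_*-\delta]$ portion of this integral finite, so the $[\ub_*-\delta,\ub_*]$ portion is itself $+\infty$.

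Finally, since $\sqrt{-\det g}$ is bounded below on $W$ by Propositions \ref{Omega} and \ref{gamma} and the integrand is non-negative, Tonelli's theorem gives
$$\int_W |\chi|^2\,dVol_g \;\gtrsim\; \int_{V_0}\int_{u_0-\delta}^{u_0+\delta}\bigg(\int_{\ub_*-\delta}^{\ub_*}\big(|\chih|^2+(\trch)^2\big)\restriction_{\Phi_u(\gamma_{\th_0})}d\ub\bigg)du\,d\th_0 = +\infty,$$
because the inner $\ub$-integral is $+\infty$ on a set of positive measure in $(u,\th_0)$. I expect no real obstacle: the analytic content is already in Proposition \ref{prop.blow.up}, and this statement follows from that proposition together with the change of variables above (which aligns the curves of integration with the objects controlled by the proposition) and the non-negativity of the integrand.
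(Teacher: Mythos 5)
Your proposal is correct and mirrors the paper's argument: change angular coordinates to ones adapted to the $\Lb$-flow so that the curves $\Phi_u(\gamma)$ from Proposition~\ref{prop.blow.up} become coordinate lines, check that the Jacobian of this change is bounded above and below (the paper does this by bounding $b^A$ and its first derivatives via \eqref{btrans} and the estimates on $\zeta,\Omega$---note that Proposition~\ref{b} alone controls only $|b^A|$, so you do need the derivative bounds coming from Theorem~\ref{C0extthm} as you indicate), and apply Tonelli together with the uniform bounds on $\Omega^2\sqrt{\det\gamma}$. Two small remarks: you wrote ``pulling back to $\Hb_0$'' where you mean $H_0$ (your ODE, running $s$ from $u$ to $0$, is the correct one), and the $|\chib|^2$ case near $H_{u_*}$ is not literally symmetric---in the gauge $\Ls_L\th^A=0$ the $L$-flow already preserves $\th$, so the paper treats that case first directly in the $(u,\ub,\th)$ coordinates with no change of variables.
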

\begin{proof}
{
We begin with $|\chib|^2$ near $H_{u_*}$. By definition, the image of the initial incoming null generator under the map $\underline\Phi_{\ub}$ defined in Proposition \ref{prop.blow.up} has constant $\ub$}, $\th^1$ and $\th^2$ values. Also, by Propositions \ref{Omega} and \ref{gamma}, the spacetime volume element $2\Omega^2\sqrt{\det \gamma}$ is bounded {uniformly} above and below. Therefore, for any neighborhood $\mathcal N$ of $p=(u,\ub_*,\th^1,\th^2)\in \Hb_{\ub_*}$, we have
\begin{equation*}
\begin{split}
&\int_{\mathcal N} ((\trchb)^2+|\chibh|^2) \\
\geq &c \int_{\th^2-\delta}^{{\th^2}+\delta}\int_{\th^1-\delta}^{\th^1+\delta}\int_{u-\delta}^{u+\delta}\int_{\ub_*-\delta}^{\ub_*}((\trchb)^2+|\chibh|^2)(u',\ub',(\th^1)',(\th^2)') d\ub'\, du'\,d(\th^1)'\, d(\th^2)'=\infty,
\end{split}
\end{equation*}
by Proposition \ref{prop.blow.up}.

To prove the corresponding statement for $|{\chi}|^2$ near ${\Hb_{\ub_*}}$, we first change to the coordinate system $(u,\ub,\tilde{\th}^1(\ub;u,\th),\tilde{\th}^2(\ub;u,\th))$ such that {$\Lb=\f{\rd}{\rd u}$}. This coordinate system can be {constructed} by solving the ordinary differential {equations}
$${\frac{d}{du}\tilde{\th}^A(u;\ub,\th)=-b^A(u,\ub,\tilde\th^1,\tilde\th^2),}$$
with initial {condition}\footnote{{We note that since we do not have a global coordinate chart on $S_{0,0}$, the above ODE only makes sense in $(\Phi_u\circ \underline{\Phi}_{\ub})(U_i)\cap (\underline{\Phi}_{\ub}\circ\Phi_u)(U_j)$, where $U_i$, $U_j$ are coordinate charts on $S_{0,0}$ and $\Phi_u$ and $\underline{\Phi}_{\ub}$ are as defined in Proposition \ref{prop.blow.up}. Nevertheless, since $\Phi_u\circ \underline{\Phi}_{\ub}$ and $\underline{\Phi}_{\ub}\circ\Phi_u$ are both diffeomorphisms between $S_{0,0}$ and $S_{u,\ub}$, for every point $p\in S_{u,\ub}$, there exists $i$ and $j$ such that $p\in (\Phi_u\circ \underline{\Phi}_{\ub})(U_i)\cap (\underline{\Phi}_{\ub}\circ\Phi_u)(U_j)$, where this change of coordinates makes sense.}}
$${\tilde{\th}^A(0;\ub,\th)}=\th^A.$$
By \eqref{btrans}, as well as the estimates for $\zeta$, $\Omega$ and their derivatives, {$b^A$ and the following first derivatives of $b^A$ are uniformly bounded:
$$|b^A|,\,|\frac{\partial b^A}{\partial \ub}|,\,|\frac{\partial b^A}{\partial \th^B}|\leq C. $$
T}herefore, 
$$|\frac{\partial \tilde{\th}^A}{\partial u}|,\,|\frac{\partial \tilde{\th}^A}{\partial \ub}|,\,|\frac{\partial \tilde{\th}^A}{\partial \th^B}|\leq C.$$
In the new coordinate system, we apply the same argument as in the case for $|{\chib}|^2$ near $H_{u_*}$ and have the estimate
$$\int_{\mathcal N} {((\trch)^2+}|\chih|^2{)} =\infty$$
for {any} neighborhood $\mathcal N$ of any point $p\in {\Hb_{\ub_*}}$, as desired.
\end{proof}
Finally, this allows us to conclude that the Christoffel symbols do not belong to $L^2$:
\begin{proposition}
Suppose the assumptions of Theorem \ref{blowupthm} hold. Then, the Christoffel symbols in the $(u,\ub,\th^1,\th^2)$ coordinate system are not in $L^2$ in a neighborhood of any point on $H_{u_*}$ or $\Hb_{\ub_*}$.
\end{proposition}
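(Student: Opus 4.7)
The plan is to identify explicit components of the coordinate Christoffel symbols that coincide, up to the uniformly controlled factor $1/(2\Omega)$, with $\chi$ and $\chib$, and then to invoke Proposition \ref{prop.blow.up.2} directly. Propositions \ref{Omega}--\ref{gamma} imply that the spacetime volume form $2\Omega^2\sqrt{\det\gamma}\,du\,d\ub\,d\th^1\,d\th^2$ is uniformly comparable to the coordinate Lebesgue measure, and that the coordinate sum-of-squares norm of a symmetric $S_{u,\ub}$-tensor is uniformly comparable to its $\gamma$-norm. Hence the non-integrability of $|\chi|^2_\gamma$ or $|\chib|^2_\gamma$ in the intrinsic sense is equivalent to the non-integrability of the corresponding coordinate components in the coordinate measure.

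The key computation I would carry out is the following. From the coordinate form of the metric one reads off $g_{u\ub}=-2\Omega^2$, $g_{uA}=-\gamma_{AC}b^C$, $g_{uu}=|b|^2_\gamma$, $g_{\ub\ub}=g_{\ub A}=0$, and $g_{AB}=\gamma_{AB}$, and a direct inversion yields
$$g^{uu}=g^{\ub\ub}=g^{uA}=0,\qquad g^{u\ub}=-\frac{1}{2\Omega^2},\qquad g^{\ub A}=-\frac{b^A}{2\Omega^2},\qquad g^{AB}=(\gamma^{-1})^{AB}.$$
In the formula $\Gamma^u_{AB}=\frac{1}{2}g^{u\lambda}(\partial_A g_{\lambda B}+\partial_B g_{\lambda A}-\partial_\lambda g_{AB})$ only the $\lambda=\ub$ term survives, and together with $\partial_\ub\gamma_{AB}=2\Omega\chi_{AB}$ from \eqref{1st.var} this gives the clean identity
$$\Gamma^u_{AB}=\frac{\chi_{AB}}{2\Omega}.$$
The analogous calculation for $\Gamma^{\ub}_{AB}$ receives contributions from $\lambda=u$ and $\lambda=C$; substituting the Lie-derivative identity
$$\partial_u\gamma_{AB}=2\Omega\chib_{AB}-b^C\partial_C\gamma_{AB}-\gamma_{CB}\partial_A b^C-\gamma_{AC}\partial_B b^C,$$
which comes from $\Ls_{\Lb}\gamma=2\Omega\chib$, the symmetry of $\gamma$ causes the $\partial b$ and $b\cdot\partial\gamma$ terms to cancel exactly between the two contributions, leaving the equally clean identity
$$\Gamma^{\ub}_{AB}=\frac{\chib_{AB}}{2\Omega}.$$

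Granted these two formulae, the conclusion is immediate. For any coordinate neighborhood $\mathcal N$ of any point on $\Hb_{\ub_*}$, the non-integrability of $|\chi|^2_\gamma$ (Proposition \ref{prop.blow.up.2}) together with the lower bound on $\Omega$ forces $\int_{\mathcal N}\sum_{A,B}|\Gamma^u_{AB}|^2\,du\,d\ub\,d\th^1\,d\th^2=\infty$; exactly the same argument applied to $\Gamma^{\ub}_{AB}$ and $\chib$ handles neighborhoods of any point of $H_{u_*}$. The only nontrivial step, and the one I expect to be the main obstacle, is the verification of the identity for $\Gamma^{\ub}_{AB}$: the cancellations that reduce it to the pure Ricci coefficient $\chib_{AB}/(2\Omega)$ are forced algebraically by the fact that the coordinate vector field $\partial_u$ differs from the geometric null generator $\Lb=\partial_u+b^A\partial_{\th^A}$ by a tangential shift, and tracking this discrepancy simultaneously through the $g^{\ub A}=-b^A/(2\Omega^2)$ contribution and through the Lie-derivative expression for $\partial_u\gamma_{AB}$ is what makes every non-Ricci-coefficient term drop out.
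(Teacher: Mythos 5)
Your proposal is correct and follows essentially the same route as the paper: compute $\Gamma^u_{AB}$ and $\Gamma^{\ub}_{AB}$ from the coordinate form of $g$, identify them with $\chi_{AB}/(2\Omega)$ and $\chib_{AB}/(2\Omega)$, and invoke Proposition \ref{prop.blow.up.2} together with the uniform bounds on $\Omega$ and $\gamma$. The only (minor) difference is that the paper records $\Gamma^{\ub}_{AB} = \tfrac{1}{2\Omega}\chib_{AB} + \text{regular terms}$ and notes the remainder is bounded, whereas you carry the algebra through the Lie-derivative identity for $\partial_u\gamma_{AB}$ to obtain the exact cancellation $\Gamma^{\ub}_{AB} = \chib_{AB}/(2\Omega)$; both versions yield the same conclusion.
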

\begin{proof}
Recall that the metric in the $(u,\ub,\th^1,\th^2)$ coordinates takes the form
$$g=-2\Omega^2(du\otimes d\ub+d\ub \otimes du)+\gamma_{AB} (d\th^A-b^A du)\otimes (d\th^B-b^B du).$$
Note that
$$g^{u\ub}=-\frac 12 \Omega^{-2},\quad g^{u\alpha}=0\mbox{ for }\alpha\neq \ub.$$
One computes that 
\begin{equation*}
\begin{split}
\Gamma^{u}_{AB}=&-\frac 12 g^{u \ub} \frac{\partial}{\partial \ub}g_{AB}
=\frac{1}{4 \Omega^2} \frac{\partial}{\partial \ub}\gamma_{AB}
=\frac{1}{2 \Omega} \chi_{AB}.
\end{split}
\end{equation*}
Since $\frac 12\leq \Om\leq 2$ and $\gamma$ is uniformly bounded and positive definite, $\Gamma^u_{AB}$ is not in $L^2$ in a neighborhood of any point on the singular boundary $\Hb_{\ub_*}$ in the $(u,\ub,\th^1,\th^2)$ coordinate system.

To show that the incoming hypersurface $H_{u_*}$ is singular, first notice that
$$g^{\ub u}=-\frac 12 \Omega^{-2},\quad g^{\ub A}=-\frac 12 \Omega^{-2}b^A,\quad g^{\ub\ub}=0.$$
We then compute
\begin{equation*}
\begin{split}
\Gamma^{\ub}_{AB}=&\frac 12 g^{\ub u}(\frac{\partial}{\partial \th^A}g_{Bu}+\frac{\partial}{\partial \th^B}g_{Au} -\frac{\partial}{\partial u}g_{AB})+\frac 12 g^{\ub C}(\frac{\partial}{\partial \th^B}g_{AC}+\frac{\partial}{\partial \th^A}g_{BC}-\frac{\partial}{\partial \th^C}g_{AB})\\
=&\frac{1}{4 \Omega^2} (\frac{\partial}{\partial u}\gamma_{AB}-\frac{\partial}{\partial \th^B}(\gamma_{AC} b^C)-\frac{\partial}{\partial \th^A}(\gamma_{BC} b^C)-b^C(\frac{\partial}{\partial \th^B}\gamma_{AC}+\frac{\partial}{\partial \th^A}\gamma_{BC}-\frac{\partial}{\partial \th^C}\gamma_{AB} ))\\
=&\frac{1}{2 \Omega} \chib_{AB}+\mbox{regular terms},
\end{split}
\end{equation*}
where the regular terms denote metric components and their derivatives that are uniformly bounded by the estimates proved in the previous sections. By the same reasoning as in the case near $\Hb_{\ub_*}$, $\Gamma^{\ub}_{AB}$ is not in $L^2$ in a neighborhood of any point on the singular boundary $H_{u_*}$ in the $(u,\ub,\th^1,\th^2)$ coordinate system.

\end{proof}
This concludes the proof of Theorem \ref{blowupthm}.

\section{Acknowledgments}

The author thanks Mihalis Dafermos for suggesting the problem and sharing many insights from the works \cite{D1, D2, D3}, as well as offering valuable comments on an earlier version of the manuscript. He thanks Igor Rodnianski for very helpful suggestions. He also thanks Spyros Alexakis, Amos Ori and Yakov Shlapentokh-Rothman for stimulating discussions. Finally, he is grateful for the suggestions given by the anonymous referees.

Most of the work was carried out when the author was at Princeton University and University of Pennsylvania. This work is supported by the NSF Postdoctoral Fellowship DMS-1204493.

\bibliographystyle{hplain}
\bibliography{weaknull18}

\end{document}